\documentclass[a4paper,onecolumn]{elsarticle}
\usepackage{exscale}
\usepackage{graphicx}
\usepackage{amsmath}
\usepackage{latexsym}
\usepackage{amsfonts}
\usepackage{amssymb}
\usepackage{bbm}
\usepackage{times}
\usepackage[T1]{fontenc}
\usepackage{lipsum}
\usepackage{amsthm}
\usepackage{fancyhdr,txfonts,bbm}


\usepackage[skip=0.333\baselineskip]{caption}
\usepackage{graphics}
\usepackage{bbold}
\usepackage{bm}
\usepackage{color}

\usepackage{subcaption}
\usepackage{graphicx}

\usepackage{tikz}
\usetikzlibrary{matrix,calc}

\usepackage{epsfig,pstricks}
\usepackage{changes}
\definechangesauthor[name={Markus}, color=red]{M}
\definechangesauthor[name={Dario}, color=magenta]{D}

\newtheorem{theorem}{Theorem}

\newtheorem*{ptheorem}{Pauli's Fundamental Theorem}

\newcommand{{\Cd}}{{\mathbb{C}^d}}
\newcommand{{\C}}{{\mathbb{C}}}

\begin{document}

\title{Low degree Lorentz invariant polynomials as potential entanglement invariants for multiple Dirac spinors}

\author{Markus Johansson}
\affiliation{organization={ICFO-Institut de Ciencies Fotoniques, The Barcelona Institute of Science and Technology},
postcode={08860}, city={ Castelldefels (Barcelona)}, country={Spain}}
\date{\today}
\begin{abstract}
A system of multiple spacelike separated Dirac particles is considered and a method for constructing polynomial invariants under the spinor representations of the local proper orthochronous Lorentz groups is described. The method is a generalization of the method used in [Phys. Rev. A {\bf 105}, 032402 (2022), arXiv:2103.07784] for the case of two Dirac particles. All polynomials constructed by this method are identically zero for product states.
The behaviour of the polynomials under local unitary evolution that acts unitarily on any subspace defined by fixed particle momenta is described.
By design all of the polynomials have invariant absolute values on this kind of subspaces if the evolution is locally generated by zero-mass Dirac Hamiltonians. Depending on construction some polynomials have invariant absolute values also for the case of nonzero-mass or additional couplings.
Because of these properties the polynomials are considered potential candidates for describing the spinor entanglement of multiple Dirac particles, with either zero or arbitrary mass or additional couplings. Polynomials of degree 2 and 4 are derived for the cases of three and four Dirac spinors. For three spinors no non-zero degree 2 polynomials are found but 67 linearly independent polynomials of degree 4 are identified. For four spinors 16 linearly independent polynomials of degree 2 are constructed as well as 26 polynomials of degree 4 selected from a much larger number.
The relations of these polynomials to the polynomial spin entanglement invariants of three and four non-relativistic spin-$\frac{1}{2}$ particles are described.
Moreover, it is described how degree 4 polynomials for five spinors can be constructed and how degree 2 polynomials can be constructed for any even number of spinors.

\end{abstract}
\begin{keyword}Dirac particles \sep%
    multipartite entanglement \sep%
    polynomial invariants
\end{keyword}

\maketitle
\section{Introduction}
The Dirac equation was originally introduced as a relativistic description of the electron \cite{dirac2,dirac}. As such it is used in relativistic quantum mechanics \cite{bjorken}, quantum electrodynamics \cite{peskin,schwartz} and in relativistic quantum chemistry \cite{pykk}. It has subsequently also been used in the Standard Model to describe other leptons and quarks \cite{schwartz} and in the Yukawa model of hadrons to describe baryons \cite{yukawa}. For the case of zero mass the Dirac equation admits solutions with definite chirality, so called Weyl particles \cite{weyl}. Dirac-like equations are used also to describe Dirac and Weyl quasi-particles in graphene and other solid state and molecular systems as well as in photonic crystals \cite{semenoff,novos,lu,bohm,liu,Xu,pirie}.

Quantum entanglement is a feature of quantum mechanics that permits action at a distance \cite{epr,bell,chsh,bell2}, i.e., nonlocal causation between spacelike separated systems. The state of a composite quantum system with multiple spacelike separated subsystems is entangled if it is a superposition where some property of one subsystem is conditioned on properties of one or several other subsystems. In this case the state cannot be fully described by only local variables.
The presence of entanglement allows for phenomena that are impossible without nonlocal causation such as quantum teleportation \cite{bennett} or violation of a Bell inequality \cite{bell,chsh,svet}. 
Entanglement in a system of multiple particles can in general exist in multiple forms that are not mutually convertible using local operations.
One often difficult problem in the theory of entanglement is the characterization of these different ways in which a multipartite system can be entangled \cite{popescu,carteret,coffman,toni,higuchi, dur,sud,wong,tarrach,verstraete2,luque,moor,toumazet}.

In non-relativistic quantum mechanics an extensively studied type of entanglement is that between the spins of spacelike separated spin-$\frac{1}{2}$ particles \cite{bell,bennett,popescu,carteret,coffman,toni,higuchi, dur,sud,wong,tarrach,verstraete2,luque,moor,toumazet,ghz,ekert,wootters,wootters2}.
A system of three or more such particles can be spin entangled in qualitatively different ways \cite{toni,higuchi,dur,tarrach,verstraete2,luque}. If two spin entangled states can be transformed into each other by physically allowed local unitary transformations they are said to be equivalently entangled with respect to local unitary transformations \cite{toni,higuchi,sud,tarrach,toumazet, ekert}. These transformations include the local unitary evolutions generated by the physically allowed Hamiltonian operators and the local changes of reference frame, such as spatial rotations. Equivalence with respect to larger groups of local transformations such as local $\mathrm{SL}(2,\mathbb{C})$, has also been considered \cite{dur,wong,verstraete2,luque,moor}. The number of parameters needed to describe the set of equivalence classes of this kind increases rapidly with the number of spin-$\frac{1}{2}$ particles \cite{popescu,carteret}.

One way to test if two states are inequivalently spin entangled with respect to a group of local transformations is to evaluate the polynomials in the state coefficients that are invariants of the action of the group \cite{hilbert,mumford,grassl}, the so called polynomial entanglement invariants. If the ratio of two such polynomials of the same degree takes different values for the different states they are necessarily inequivalently entangled. In particular, polynomials invariant under local $\mathrm{SL}(2,\mathbb{C})$ have been constructed for different numbers of non-relativistic spin-$\frac{1}{2}$ particles \cite{coffman,wong,luque,wootters,wootters2}. The polynomial invariants of this kind can be used to partially parametrize the set of spin entanglement equivalence classes \cite{popescu,carteret}.


In relativistic quantum mechanics the spinorial degree of freedom of a spin-$\frac{1}{2}$ particle is described by a four component Dirac spinor. Therefore the tools developed to describe the spin entanglement of non-relativistic spin-$\frac{1}{2}$ particles can not be used outside special cases.
This lead to a search for new tools and concepts suited for the description of entanglement of relativistic particles.
The entanglement and non-locality between Dirac spinors and other descriptions of entanglement between Dirac particles has been investigated and discussed previously in a number of works, see e.g. \cite{czachor,alsing,terno,adami,pachos,ahn,terno2,tera,tera2,mano,won,caban3,leon,caban,tessier,geng,delgado,moradi,caban2,spinorent}. 
As in the case of non-relativistic spin entanglement one may consider the question of whether spinor entangled states are equivalent with respect to a given set of physically allowed local transformations. The issue of constructing polynomial invariants for this purpose was considered in Ref. \cite{spinorent} for the case of two Dirac spinors.

In this work we investigate the description of entanglement between the spinorial degrees of freedom of multiple Dirac particles. The conceptualization of spinor entanglement properties that has been introduced in Ref. \cite{spinorent} is considered also here.
We explore the same general idea as Ref. \cite{spinorent} for constructing polynomial Lorentz invariants but adapt it to the case of multiple Dirac spinors. 
We describe how these polynomial Lorentz invariants can be used to partially characterize equivalence classes of spinor entangled states and how some of them reduce to the already known local $\mathrm{SL}(2,\mathbb{C})$ invariants \cite{coffman,wong,luque,moor} for the case of non-relativistic free particles in an energy eigenstate and for the case of Weyl particles.

As in Ref. \cite{spinorent} we make the assumption that for any number of spacelike separated Dirac spinors the state can be expanded in a basis that is formed from the tensor products of the local basis elements used to describe the individual spinors. 
Given this assumption a method for constructing polynomials invariant under the spinor representation of the proper orthochronous Lorentz groups for any number of spinors is described. 
These polynomials are also invariant, up to a U(1) phase, under local unitary transformations on a subspace spanned by the spinorial degrees of freedom that are generated by zero-mass Dirac Hamiltonians. Depending on construction some are invariant, up to a U(1) phase, also for local unitary transformations generated by arbitrary-mass Dirac Hamiltonians, and some for zero-mass Dirac Hamiltonians with a coupling to a Yukawa pseudo-scalar boson.
Therefore these polynomials are considered potential candidates for describing multi-spinor entanglement for different physical scenarios.

The computational difficulty of deriving polynomials of this kind increases with the number of spinors but polynomials of low degree can still be found with modest effort for the case of three and four Dirac spinors.
For the case of three Dirac spinors we derive polynomial invariants of degree 4 from which a set of 67 linearly independent polynomials can be selected. For four Dirac spinors we derive 16 linearly independent degree 2 polynomials.
We describe how to construct degree 4 polynomials for four spinors, but only calculate a few due to the large number of such polynomials. Further we describe how to construct degree 4 polynomials for five spinors but do not calculate them. Finally we describe how to construct degree 2 polynomials for any even number of spinors.
We discuss the relations of the derived polynomials to the Coffmann-Kundu-Wootters 3-tangle \cite{coffman} and the so called $2\times 2\times 4$ tangle described in Refs. \cite{moor,verstraete} for the case of three spinors and the relations to the polynomial invariants found by Luque and Thibon \cite{luque} for the case of four spinors. For the case of arbitrary even number of spinors we describe the relation of the degree 2 polynomials to the so called $N$-tangle introduced by Wong and Christensen \cite{wong}.

This work is organized as follows. In sections \ref{dir}-\ref{spen} the relevant background material is reviewed, the physical assumptions made are discussed and the tools used to construct the Lorentz invariant polynomials are described.
In particular, section \ref{dir} introduces the description of Dirac and Weyl particles and describes the fundamental assumptions made in this work. In section  \ref{rep} we describe the spinor representation of the Lorentz group and the charge conjugation transformation. Section \ref{invariants}  describes how to construct skew-symmetric bilinear forms that are invariant under the spinor representation of the local proper orthochronous Lorentz transformations. In section \ref{ham} the behaviour of the bilinear forms under local unitary evolution generated by Dirac-like Hamiltonians is described. Section \ref{hinn} describes the invariance groups of the bilinear forms and their relation to the physically allowed local operations. In Section \ref{spen} the conceptual framework used for characterizing entanglement properties with polynomial invariants is described.
Sections \ref{ent}-\ref{getto} contain the results.
In particular, section \ref{ent} describes the method for constructing candidate polynomial entanglement invariants. In section \ref{three} the case of three spinors is considered, in section \ref{four} the case of four spinors is considered, and in section \ref{five} the case of five spinors is considered. Section \ref{getto} describes the degree 2 polynomials for any even number of spinors. In Section \ref{lli} some general properties of polynomial invariants are reviewed in relation to the constructed polynomials and how they can be used to characterize spinor entanglement. Section \ref{diss} is the discussion and conclusions.
\section{Dirac spinors}\label{dir}

The Dirac equation was introduced in Ref. \cite{dirac2} as a relativistic description of a spin-$\frac{1}{2}$ particle, or Dirac particle. For a particle with mass $m$ and charge $q$ coupled to an electromagnetic four-potential $A_{\mu}(x)$ it can be expressed, with natural units $\hbar=c=1$, on the form
\begin{eqnarray}\label{vanilla}
\left[\sum_{\mu}\gamma^\mu(i\partial_{\mu}-qA_{\mu}(x)) -m\right]\psi(x)=0.
\end{eqnarray}
Here $\psi(x)$ is a four component Dirac spinor 

\begin{eqnarray}\label{spinor}
\psi(x)\equiv
\begin{pmatrix}
\psi_0(x) \\
\psi_1(x)\\
\psi_2(x) \\
\psi_3(x) \\
\end{pmatrix},
\end{eqnarray}
where each component is a function of the four-vector $x$, and $\gamma^0,\gamma^1,\gamma^2,\gamma^3$ are $4\times 4$ matrices
satisfying the anticommutator relations
\begin{eqnarray}\label{anti}
\gamma^\mu\gamma^\nu+\gamma^\nu\gamma^\mu=2g^{\mu\nu}I,
\end{eqnarray}
where $g^{\mu\nu}$ is the Minkowski metric with signature $(+---)$. For a derivation of the Dirac equation see e.g. Ref. \cite{dirac2} or Ref. \cite{dirac} Ch. XI. 
The matrices $\gamma^0,\gamma^1,\gamma^2,\gamma^3$ are not uniquely defined by Eq. (\ref{anti}) and can be chosen in different physically equivalent ways.
The choice we use here is the so called Dirac matrices or { \it gamma matrices} which are defined as
\begin{align}
\gamma^0&=
\begin{pmatrix}
I & 0 \\
0 & -I  \\
\end{pmatrix},
&\gamma^1=
\begin{pmatrix}
0  & \sigma^1 \\
-\sigma^1 &  0 \\
\end{pmatrix},\nonumber\\
\gamma^2&=
\begin{pmatrix}
0  & \sigma^2 \\
-\sigma^2 &  0 \\
\end{pmatrix},
&\gamma^3=
\begin{pmatrix}
0  & \sigma^3 \\
-\sigma^3 &  0 \\
\end{pmatrix},
\end{align}
where $I$ is the $2\times 2$ identity matrix and $\sigma^1,\sigma^2,\sigma^3$ are the Pauli matrices
\begin{eqnarray}
I=
\begin{pmatrix}
1 & 0 \\
0 & 1  \\
\end{pmatrix},\phantom{o}
\sigma^1=
\begin{pmatrix}
0  & 1 \\
1 &  0 \\
\end{pmatrix},\phantom{o}
\sigma^2=
\begin{pmatrix}
0  & -i \\
i &  0 \\
\end{pmatrix},\phantom{o}
\sigma^3=
\begin{pmatrix}
1  & 0 \\
0 &  -1 \\
\end{pmatrix}.
\end{eqnarray}
Two different choices of $4\times4$ matrices $\gamma^0,\gamma^1,\gamma^2,\gamma^3$ and $\gamma'^0,\gamma'^1,\gamma'^2,\gamma'^3$ that both satisfy the anticommutation relations in Eq. (\ref{anti}) are related by a similarity transformation.
By Pauli's Fundamental Theorem this similarity transformation is unique up to a constant factor.

\begin{ptheorem}
If for two sets of $4\times4$ matrices $\gamma^0,\gamma^1,\gamma^2,\gamma^3$ and $\gamma'^0,\gamma'^1,\gamma'^2,\gamma'^3$ we have that $\{\gamma^\mu,\gamma^\nu\}=2g^{\mu\nu}I=\{\gamma'^\mu,\gamma'^\nu\}$, then there exist an $S\in \mathrm{GL}(4,\mathbb{C})$ such that $\gamma'^\mu=S\gamma^\mu S^{-1}$, and $S$ is unique up to a multiplicative constant.
\end{ptheorem}
\begin{proof}
See e.g. Ref. \cite{pauli} or Ref. \cite{messiah}.
\end{proof}

From here on we regularly suppress the four-vector dependence of the spinor $\psi(x)$ where it is not essential and write it as $\psi$. 
The Dirac equation can be written on a form where the time derivative $\partial_0\psi$ is separated from the remaining expression 
\begin{eqnarray}\label{ggh}
\left[-\sum_{\mu=1,2,3}\gamma^0\gamma^\mu(i\partial_{\mu}-qA_{\mu})+qA_0I +m\gamma^0\right]\psi=i\partial_0\psi,\nonumber\\
\end{eqnarray}
and the Dirac Hamiltonian $H_D$ can be identified in Eq. (\ref{ggh}) as 
\begin{eqnarray}
H_D=-\sum_{\mu=1,2,3}\gamma^0\gamma^\mu(i\partial_{\mu}-qA_{\mu})+qA_0I +m\gamma^0.
\end{eqnarray}

We can identify two matrices with useful properties in the algebra generated by the gamma matrices that are frequently featured in the following.
The first is the matrix
\begin{eqnarray}
C\equiv i\gamma^1\gamma^3=
\begin{pmatrix}
-\sigma^2 & 0 \\
0 & -\sigma^2  \\
\end{pmatrix},
\end{eqnarray}
which has the property that for each gamma matrix $\gamma^\mu$ and its transpose $\gamma^{\mu T}$ we have that
\begin{eqnarray}\label{uub}
 C\gamma^\mu C=\gamma^{\mu T}.
\end{eqnarray}
Moreover, $C$ is Hermitian and its own inverse, i.e., $C=C^\dagger=C^{-1}$.
The second is the matrix
\begin{eqnarray}
\gamma^5\equiv i\gamma^0\gamma^1\gamma^2\gamma^3=
\begin{pmatrix}
0 & I \\
I & 0  \\

\end{pmatrix},
\end{eqnarray}
which anticommutes with each of the $\gamma^\mu$
\begin{eqnarray}
\gamma^5\gamma^\mu+\gamma^\mu\gamma^5=0.
\end{eqnarray}
Moreover, $\gamma^5$ is real Hermitian and its own inverse, i.e., $\gamma^5=\gamma^{5\dagger}=\gamma^{5T}=(\gamma^{5})^{-1}$.

For the case of zero particle momentum and zero four-potential there is an invariant subspace defined by the projector $P_+=\frac{1}{2}(I+\gamma^0)$ and an invariant subspace defined by the projector $P_-=\frac{1}{2}(I-\gamma^0)$. The former subspace is an eigenspace of the Dirac Hamiltonian with eigenvalue $+m$ and is often identified with a non-relativistic free spin-$\frac{1}{2}$ particle. The latter subspace is an eigenspace of the Dirac Hamiltonian with eigenvalue $-m$ and is often identified with a non-relativistic free spin-$\frac{1}{2}$ antiparticle (See e.g. Ref. \cite{peskin} Ch. 3.5.). The spinors $\psi_+$ in the subspace defined by $P_+$ and the spinors $\psi_-$ in the subspace defined by $P_-$ are of the form

\begin{eqnarray}\label{weyn}
\psi_+=
\begin{pmatrix}
{\psi}_0 \\
{\psi}_1\\
0 \\
0 \\
\end{pmatrix},\phantom{o}
\psi_-=
\begin{pmatrix}
0 \\
0\\
{\psi}_2 \\
{\psi}_3 \\
\end{pmatrix}.
\end{eqnarray}
We can see in Eq. (\ref{weyn}) that spinors in both these subspaces have only two nonzero spinor components.

The Dirac equation for the case of zero mass was considered by Weyl in Ref. \cite{weyl}
\begin{eqnarray}\label{weyl}
\left[-\sum_{\mu=1,2,3}\gamma^0\gamma^\mu(i\partial_{\mu}-qA_{\mu})+qA_0I\right]\psi=i\partial_0\psi.
\end{eqnarray}
For the zero mass equation there is an invariant subspace defined by the projector $P_L=\frac{1}{2}(I-\gamma^5)$ called the left-handed chiral subspace, and an invariant subspace defined by the projector $P_R=\frac{1}{2}(I+\gamma^5)$ called the right-handed chiral subspace. Solutions to Eq. (\ref{weyl}) that belong to the right-handed subspace are called right-handed Weyl particles $\psi_R$ and solutions that belong to the left-handed subspace are called left-handed Weyl particles $\psi_L$. These have the form

\begin{eqnarray}\label{wey}
\psi_R=
\begin{pmatrix}
{\psi}_0 \\
{\psi}_1\\
{\psi}_0 \\
{\psi}_1 \\
\end{pmatrix},\phantom{o}
\psi_L=
\begin{pmatrix}
{\psi}_0 \\
{\psi}_1\\
-{\psi}_0 \\
-{\psi}_1 \\
\end{pmatrix}.
\end{eqnarray}
We can see in Eq. (\ref{wey}) that spinors in both the left- and right-handed chiral subspaces have only two independent spinor components.

Dirac or Weyl particles in solid state and molecular systems are quasiparticle excitations. As such their physical interpretation is fundamentally different from that of Dirac or Weyl particles in relativistic quantum mechanics. Even so, they can be described by four component spinors and their evolutions are generated by Dirac-like Hamiltonians.
In the 2D Dirac semimetal graphene the evolution of a Dirac particle is generated by a Hamiltonian that can be written on the form 

\begin{eqnarray}\label{2d}
H_{2D}=iv_D\gamma^0\sum_{\mu=1,2}\gamma^\mu\partial_\mu-\mu_PI,
\end{eqnarray}
where $v_D$ is the Dirac velocity and $\mu_P$ is the deviation from the half-filling value of the chemical potential (See e.g. Ref. \cite{kotov} or \cite{vass}). 
Similarly, in a 3D Dirac semimetal the Hamiltonian for a massless Dirac particle can be written on the form \cite{bohm}
\begin{eqnarray}\label{3d}
H_{3D}=iv_D\gamma^0\sum_{\mu=1,2,3}\gamma^\mu\partial_\mu.
\end{eqnarray}
Two examples of experimentally realized 3D Dirac semimetals are sodium bismuthide (Na$_3$Bi) \cite{fang,liu} and cadmium arsenide (Cd$_3$As$_2$) \cite{fang2,neupane,cava}.

\subsection{Describing solutions of the Dirac equation}

A solution to the Dirac equation or a Dirac-like equation can be expanded for any time $t$ in a set of basis modes $\phi_je^{i\bold{k}\cdot\bold{x}}$ as
\begin{eqnarray}\label{bass}
\psi(t,\bold{x})=\int_{\bold{k}}d\bold{k}\sum_{j}\psi_{j,\bold{k}}(t)\phi_je^{i\bold{k}\cdot\bold{x}},
\end{eqnarray}
where $\bold{k}$ is a wave three-vector, $\bold{x}$ is a spatial three-vector, the $\psi_{j,\bold{k}}(t)$ are complex numbers, and the $\phi_j$ are four spinors that form a basis for the spinorial degree of freedom

\begin{eqnarray}\label{basis}
{\phi_0}=
\begin{pmatrix}
1 \\
0   \\
0   \\
0   \\
\end{pmatrix},\phantom{o}
{\phi_1}=
\begin{pmatrix}
0 \\
1   \\
0   \\
0   \\
\end{pmatrix},\phantom{o}
{\phi_2}=
\begin{pmatrix}
0 \\
0   \\
1   \\
0   \\
\end{pmatrix},\phantom{o}
{\phi_3}=
\begin{pmatrix}
0 \\
0   \\
0   \\
1   \\
\end{pmatrix}.
\end{eqnarray}

On the space of these solutions Dirac introduced an inner product defined as
\begin{eqnarray}\label{inn}
(\psi(t),\varphi(t))=\int_{\bold{x}}d\bold{x}\psi^\dagger(t,\bold{x})\varphi(t,\bold{x}).
\end{eqnarray}
The Dirac inner product of two basis modes is in general not well defined. This is because the basis modes extend over all three-space, i.e., they are infinite plane waves. Since the support of these infinite plane waves is not bounded the integral in Eq. (\ref{inn}) does not converge for the inner products of basis modes with the same basis spinor $\phi_j$, i.e., inner products of the form $(\phi_je^{i\bold{k}\cdot\bold{x}},\phi_je^{i\bold{k'}\cdot\bold{x}})$ for any $\bold{k}$ and $\bold{k'}$. 
Because of this non-convergence Dirac chose to disregard the definition of the inner product and instead ad hoc impose the orthogonality relations $(\phi_je^{i\bold{k}\cdot\bold{x}},\phi_le^{i\bold{k'}\cdot\bold{x}})=\delta_{jl}\delta(\bold{k}-\bold{k'})$ where $\delta_{jl}$ is the Kronecker delta and $\delta(\bold{k}-\bold{k'})$ is the Dirac delta (See Ref. \cite{dirac} Ch. IV {\S} 23). Subsequent works tried to find a mathematically meaningful interpretation of these orthogonality relations and this gave rise to the formalism of generalized eigenfunctions and rigged Hilbert spaces \cite{gelfand,maurin}.
In the rigged Hilbert space approach the momentum eigenmodes are not allowed as physical states. Instead only Schwartz functions are physically allowed. The Schwartz functions are the functions that are infinitely differentiable and such that for sufficiently large $\bold{x}$  the absolute values of the function and its derivatives to all orders decrease more rapidly than any inverse power of $|\bold{x}|$ (See e.g. Ref. \cite{stein}). Due to this rapid decrease of a Schwartz function outside a finite spatial region the Dirac inner product of any two Schwartz functions is well defined. Moreover, a Schwartz function can be said to be "localized" in a spatial region in the sense that it has negligible value outside it. Commonly used types of Schwartz functions are the Gaussian functions (See e.g. Ref. \cite{stein}) and the bump functions (See e.g. Ref. \cite{lee}).
Furthermore, a Schwartz function in spatial three-space is a Schwartz function also in momentum three-space. If the support of a Schwartz function in momentum space is contained in a sufficiently small closed ball it cannot be experimentally distinguished from the single point support of a momentum eigenmode (See  \ref{opp} for a discussion).

A different approach to make the inner product well defined is to only consider modes in a finite spatial rectangular volume and impose periodic boundary conditions, so called box quantization (See e.g. Refs. \cite{mandl} and \cite{wightman}). Then there is only a countable set of allowed momentum modes and these satisfy the orthogonality relations $(\phi_je^{i\bold{k}\cdot\bold{x}},\phi_le^{i\bold{k'}\cdot\bold{x}})=V\delta_{\bold{k},\bold{k'}}\delta_{jl}$ where $V$ is the volume of the box and $\delta_{\bold{k},\bold{k'}}$ and $\delta_{jl}$ are Kronecker deltas. If the box is made sufficiently large the discrete set of $\bold{k}$ cannot be experimentally distinguished from a continuous set (See \ref{opp} for a discussion).
Note that box quantization is closely related to the introduction of a large length scale cutoff, i.e., an Infrared Cutoff (See e.g. Refs.   \cite{wightman} and \cite{duncan}).

In this work we consider states with definite momenta as was done in References \cite{czachor,alsing,pachos,mano,caban3,caban,moradi,caban2,spinorent}. 
Moreover, we use the momentum eigenbasis and assume that the modes are orthogonal and normalizable. As an exact description this is possible only if we assume boundary conditions that allow for definite momenta in a finite spatial volume as in the box quantization approach. In the rigged Hilbert space approach where such boundary conditions are not allowed it can only be an approximate description.
Thus we assume that the physical scenario is such that it is warranted to treat a particle as having both a definite momentum and being contained in a finite spatial volume, whether as an exact or approximate description.

With these qualifying remarks we can consider the subspace spanned by only spinorial degrees of freedom that is defined by a fixed momentum $\bold{k}$, i.e., the subspace spanned by the modes $\phi_je^{i\bold{k}\cdot\bold{x}}$ for the given fixed $\bold{k}$. On such a four-dimensional subspace we can consider the inner product $(\cdot,\cdot)_{\bold{k}}$ given by

\begin{eqnarray}
(\psi(t),\varphi(t))_{\bold{k}}=\psi^\dagger(t)\varphi(t).
\end{eqnarray}

\subsection{Describing multiple spacelike separated Dirac spinors}

In this work we consider a scenario with multiple spacelike separated indistinguishable Dirac particles.
As in Ref. \cite{spinorent} we introduce a number of laboratories and assume that each laboratory contains a single Dirac particle. Each laboratory is spacelike separated from the other laboratories and comes with its own local description of spacetime. We assume that each of the laboratories
use a Minkowski space for this purpose. Thus we describe each different particle as belonging to a different Minkowski space. These different Minkowski spaces could be either the same Minkowski space described by different spacelike separated observers or alternatively the different Minkowski tangent-spaces of different spacelike separated points in a curved spacetime described by General Relativity (See e.g. Ref. \cite{wald}).

As was done in References \cite{alsing,pachos,caban3,moradi,caban2,spinorent} we assume that for any number of spacelike separated particles that have not interacted their state can be described as a tensor product  $\psi_1(t)\otimes \psi_2(t)\otimes \psi_3(t)\otimes\dots$ of single particle states.
Each single particle state has its support contained in a single laboratory. Note that the tensor product structure used here is with respect to the laboratories and not the particles. We assume that the particles are indistinguishable and thus the state $\psi_1(t)\otimes \psi_2(t)\otimes \psi_3(t)\otimes\dots$ expresses the presence of one particle in the single particle state $\psi_1(t)$ in the first laboratory, one particle in the single particle state $\psi_2(t)$ in the second laboratory, one particle in the single particle state $\psi_3(t)$ in the third laboratory and so on. The particles are not individuated and not given labels. Instead we refer to them by the laboratory they are found in. 
Further, we assume that a basis for the multi-particle states can be constructed as the tensor products of the elements of the single particle bases $\phi_{j_1}e^{i\bold{k_1}\cdot\bold{x_1}}\otimes \phi_{j_2}e^{i\bold{k_2}\cdot\bold{x_2}}\otimes \phi_{j_3}e^{i\bold{k_3}\cdot\bold{x_3}}\otimes\dots$.

The assumption that the state of the particles can be described by such a tensor product structure is not trivial. The reason why this assumption is often made is that operations on any given particle can be made jointly with operations on any other spacelike separated particle, i.e., such operations commute. However, it is not known if a description in terms of commuting operator algebras is always equivalent to a description in terms of tensor product spaces \cite{navascues,tsirelson,werner}. This open question is called Tsirelson's Problem \cite{tsirelson}. Nevertheless, if the algebra of operations is finite dimensional for each observer however it has been shown that a tensor product structure can be assumed without loss of generality \cite{tsirelson,werner}. In particular this holds if the Hilbert space of the shared system is finite dimensional. Thus it holds for operations on a subspace defined by fixed particle momenta of a finite number of particles.

The solutions to the Dirac equation are for some purposes reinterpreted as operators. This is done for example in the context of relativistic Quantum Field Theory formalisms where the Dirac spinor is reinterpreted as an operator valued Dirac field acting on a Hilbert space. Such an operator can be made a bosonic operator by imposing equal-time canonical commutation relations

\begin{eqnarray}
&&[\psi_a(\bold{x}),\psi^{\dagger}_b(\bold{x'})]=\delta(\bold{x}-\bold{x'})\delta_{ab},\nonumber\\
&&{[}\psi_a(\bold{x}),\psi_b(\bold{x'}){]}=0,\nonumber\\
&&{[}\psi_a^{\dagger}(\bold{x}),\psi_b^{\dagger}(\bold{x'}){]}=0,
\end{eqnarray}  
where $a,b$ label the spinor components, $\delta_{ab}$ is the Kronecker delta, and $\delta(\bold{x}-\bold{x'})$ is the Dirac delta (See e.g. Ref. \cite{peskin} Ch. 3.5.). Alternatively the operator can be made a fermionic operator by imposing equal-time canonical anticommutation relations 

\begin{eqnarray}
&&\{\psi_a(\bold{x}),\psi^{\dagger}_b(\bold{x'})\}=\delta(\bold{x}-\bold{x'})\delta_{ab},\nonumber\\
&&{\{}\psi_a(\bold{x}),\psi_b(\bold{x'}){\}}=0,\nonumber\\
&&{\{}\psi_a^{\dagger}(\bold{x}),\psi_b^{\dagger}(\bold{x'}){\}}=0,
\end{eqnarray}  
(See e.g. Ref. \cite{peskin} Ch. 3.5.).
If the support of two operator valued Dirac fields are spacelike separated the fields either commute, in the bosonic case, or anticommute, in the fermionic case. The property of commutation or anticommutation at spacelike separation ensures that causality is not violated, i.e., that no signals can be sent between spacelike separated events, and is one of the so called Wightman axioms for Quantum Field Theory (See e.g. Ref. \cite{all that} Ch. 3-1). This axiom originally formulated for Minkowski spacetime is straightforwardly generalized to curved spacetimes (See e.g. Ref. \cite{holl} for a discussion).

In this work we have assumed that the particles are all spacelike separated from each other.
We therefore assume that the spinors either commute or anticommute.
The only difference between imposing the commutation relations and imposing the anticommutation relations in this scenario is an overall sign that depends on the ordering of the operators. Such an overall sign has no physical meaning and is chosen by convention. Therefore we do not impose any bosonic or fermionic nature on the particles. For the purpose of this work it is irrelevant if the particles are are fermions or bosons. Note however that given some commonly made assumptions the spin statistics connection implies that Dirac particles must be fermions (See Ref. \cite{spin} and Ref. \cite{all that} Ch. 4-4).

Finally we comment on the case of distinguishable particles. If the particles are distinguishable they can be individuated based on intrinsic properties and given physically meaningful labels. These labels constitute a further degree of freedom that enlarges the Hilbert space. In a given laboratory in place of the single particle state $\psi_1(t)$ we would have a collection of single particle states $\psi_1(t)^J$ where $J$ indicate the particle species.
However if the assumption is made that in any given laboratory only one particle species can be found we can absorb the particle labels into the laboratory labels. With this restrictive assumption the Hilbert space is of a system of distinguishable particles is equivalent to the Hilbert space of a system of indistinguishable particles in the scenario considered in this work.


\section{Spinor representation of the Lorentz group and the charge conjugation}\label{rep}

In General Relativity a spacetime is described by a four-dimensional manifold and at every non-singular point one can define a four-dimensional tangent vector space. Any such tangent space is isomorphic to the Minkowski space (See e.g. Ref. \cite{wald}).
Here, as in Ref. \cite{spinorent}, we make the assumption that it is physically motivated to neglect the local curvature of spacetime and treat a Dirac particle as belonging to a Minkowski tangent space instead of the underlying spacetime manifold. Without this assumption the Dirac equation would have to be replaced by a curved spacetime counterpart such as that introduced by Weyl \cite{weyl} and Fock \cite{fock}.

A Lorentz transformation is a coordinate transformation on the local Minkowski tangent space to a spacetime point, but it also induces an action on the Dirac spinor in the point. This action is given by the spinor representation of the Lorentz transformation. Let $\Lambda$ be a Lorentz transformation and $S(\Lambda)$ be the spinor representation of $\Lambda$. Then the spinor transforms as $\psi(x)\to \psi'(x')=S(\Lambda)\psi(x)$ where $x'=\Lambda x$ (See e.g. Ref. \cite{zuber}), and the Dirac equation transforms as
\begin{eqnarray}
&&\left[\sum_{\mu}\gamma^\mu(i\partial_{\mu}-qA_{\mu}) -m\right]\psi(x)=0\nonumber\\
\to&&\left[\sum_{\mu,\nu}\gamma^\mu(\Lambda^{-1})^{\nu}_{\mu}(i\partial_{\nu}-qA_{\nu}) -m\right]S(\Lambda)\psi(x)=0.
\end{eqnarray}
The Lorentz invariance of the Dirac equation implies that
\begin{eqnarray}
S^{-1}(\Lambda)\gamma^\mu S(\Lambda)=\sum_\nu\Lambda^{\mu}_{\nu}\gamma^\nu.
\end{eqnarray}

The Lorentz group is a Lie group with four connected components. The connected component that contains the identity element is the proper orthochronous Lorentz group.
Likewise, the spinor representation of the Lorentz group is also a Lie group with four connected components. The connected component of this group that contains the identity element, the spinor representation of the proper orthochronous Lorentz group, can be generated by the exponentials of its Lie algebra. 
This Lie algebra has six generators $S^{\rho\sigma}$ defined by

\begin{eqnarray}\label{gene}
S^{\rho\sigma}=\frac{1}{4}[\gamma^\rho,\gamma^\sigma]=\frac{1}{2}\gamma^\rho\gamma^\sigma-\frac{1}{2}g^{\rho\sigma}I,
\end{eqnarray}
where as before $g^{\rho\sigma}$ is the Minkowski metric with signature $(+---)$.
The spinor representations of spatial rotations are generated by $S^{12},S^{13}$, and $S^{23}$ while the spinor representations of the Lorentz boosts are generated by $S^{01},S^{02}$, and $S^{03}$.
By taking the exponential of an element in the Lie algebra a finite transformation can be obtained
 
\begin{eqnarray}
S(\Lambda)=\exp\left(\frac{1}{2}\sum_{\rho,\sigma} \omega_{\rho\sigma}S^{\rho\sigma}\right),
\end{eqnarray}
where the coefficients $\omega_{\rho\sigma}$ are real numbers. The spinor representation of any proper orthochronous Lorentz transformation
can be described as a product of such finite transformations. See e.g. Ref. \cite{zuber}.

The four connected components of the Lorentz group are related to each other by the parity inversion P and the time reversal T transformations. Likewise, the four connected components of the spinor representation of Lorentz group are related to each other by the spinor representations of the parity inversion P and the time reversal T transformations. These spinor representations of P and T are only defined up to a multiplicative U(1) factor that can be chosen in different physically equivalent ways.
The spinor representation of the parity inversion transformation can be chosen as

\begin{eqnarray}
S(\textrm{P})=\gamma^0.
\end{eqnarray}
The spinor representation of the time reversal transformation T involves the matrix $C$ and the complex conjugation of the spinor and can be chosen as $\psi \to C\psi^*$. 

Besides the Lorentz group we can consider the charge conjugation transformation C, as well as the charge parity CP and charge parity time CPT transformation. As with P and T, the spinor representation of C is only defined up to a multiplicative U(1) factor that can be chosen in different ways.
The spinor representation of the charge conjugation, like the time reversal, involves complex conjugation of the spinor and can be chosen as $\psi \to i\gamma^2\psi^*$. It follows that the CP transformation is $\psi \to -i\gamma^0\gamma^2\psi^*= iC\gamma^5\psi^*$ and the CPT transformation is given by the matrix $-i\gamma^5$

\begin{eqnarray}
S(\textrm{CPT})=-i\gamma^5.
\end{eqnarray}
See e.g. Ref. \cite{bjorken} Ch. 5. In the following we use these choices of the spinor representations of the P, T and C transformations.

\section{Bilinear forms invariant under the spinor representation of the proper orthochronous Lorentz group}
\label{invariants}
A physical quantity that transforms under some representation of the Lorentz group is called a Lorentz covariant.
If it is also invariant under the action it is called a Lorentz invariant. A quantity that is invariant under the action of a representation of the proper orthochronous Lorentz group may not be invariant under the representation of the full Lorentz group but in the following we still refer to such a quantity as a Lorentz invariant for convenience.

Lorentz invariants can be constructed as bilinear forms on the Dirac spinors (See e.g. Ref. \cite{pauli}).
Given the properties of the matrix $C$ described in Eq. (\ref{uub}) and the form of the generators $S^{\rho\sigma}$ of the spinor representation of the proper orthochronous Lorentz group in Eq. (\ref{gene}) we have that the transpose $S^{\rho\sigma T}$ satisfies
\begin{eqnarray}
S^{\rho\sigma T}=\frac{1}{4}[\gamma^{\sigma T},\gamma^{\rho T}]=-\frac{1}{4}C[\gamma^{\rho},\gamma^{\sigma}]C=-CS^{\rho\sigma}C.
\end{eqnarray}
Thus it holds for any finite transformation $S(\Lambda)$ that $S(\Lambda)^TC=CS(\Lambda)^{-1}$.
This allows us to construct a Lorentz invariant bilinear form as  
\begin{eqnarray}
\psi^TC\varphi,
\end{eqnarray}
where $\psi$ and $\varphi$ are Dirac spinors.
This bilinear form transforms as $\psi^TS(\Lambda)^TCS(\Lambda)\varphi=\psi^TCS(\Lambda)^{-1}S(\Lambda)\varphi=\psi^TC\varphi$ for any spinor representation $S(\Lambda)$ of a proper orthochronous Lorentz transformation. Moreover $\psi^TC\varphi$ is invariant under parity inversion P.
This follows since $\gamma^0=(\gamma^{0})^{T}=(\gamma^{0})^{-1}$ and $\gamma^0C=C\gamma^0$ and thus $\psi^TS(\textrm{P})^TCS(\textrm{P})\varphi=\psi^T\gamma^0C\gamma^0\varphi=\psi^TC\varphi$.
However, $\psi^TC\varphi$ is not invariant under the CPT transformation but changes sign.
This follows since $S(\textrm{CPT})=-i\gamma^{5}$ and $\gamma^5C=C\gamma^5$ and $\gamma^5=(\gamma^{5})^{T}=(\gamma^{5})^{-1}$ and thus $\psi^TS(\textrm{CPT})^TCS(\textrm{CPT})\varphi=-\psi^T\gamma^5C\gamma^5\varphi=-\psi^TC\varphi$.

Next we recall that the matrix $\gamma^5$ anti-commutes with all $\gamma^{\mu}$. Therefore we see that it commutes with any generator $S^{\rho\sigma}$
\begin{eqnarray}
[S^{\rho\sigma},\gamma^5]=\frac{1}{4}[\gamma^{\rho},\gamma^{\sigma}]\gamma^5-\frac{1}{4}\gamma^5[\gamma^{\rho},\gamma^{\sigma}]=0.
\end{eqnarray}
It follows that $\gamma^5$ commutes with the spinor representation of any proper orthochronous Lorentz transformation $S(\Lambda)\gamma^5=\gamma^5S(\Lambda)$.
Therefore, we can construct another Lorentz invariant bilinear form as 
\begin{eqnarray}
\psi^TC\gamma^5\varphi.
\end{eqnarray}
This bilinear form transforms as $\psi^TS(\Lambda)^TC\gamma^5S(\Lambda)\varphi=\psi^TCS(\Lambda)^{-1}\gamma^5S(\Lambda)\varphi=\psi^TC\gamma^5S(\Lambda)^{-1}S(\Lambda)\varphi=\psi^TC\gamma^5\varphi$.
Moreover, $\psi^TC\gamma^5\varphi$ is not invariant under parity inversion but changes sign.
This follows since $\gamma^0\gamma^5=-\gamma^5\gamma^0$ and thus $\psi^TS(\textrm{P})^TC\gamma^5S(\textrm{P})\varphi=\psi^T\gamma^0C\gamma^5\gamma^0\varphi=-\psi^TC\gamma^5\varphi$.
Furthermore, $\psi^TC\gamma^5\varphi$ is not invariant under the CPT transformation but changes sign. 
This follows since $S(\textrm{CPT})=-i\gamma^{5}$ and thus $\psi^TS(\textrm{CPT})^TC\gamma^5S(\textrm{CPT})\varphi=-\psi^T\gamma^5C\gamma^5\gamma^5\varphi=-\psi^TC\gamma^5\varphi$. 

The two bilinear forms $\psi^TC\varphi$ and $\psi^TC\gamma^5\varphi$ are both skew-symmetric, i.e., $\psi^TC\varphi=-\varphi^TC\psi$ and $\psi^TC\gamma^5\varphi=-\varphi^TC\gamma^5\psi$ due to the anti-symmetry of $C$ and $C\gamma^5$ respectively. Thus $\psi^TC\psi=0$ and $\psi^TC\gamma^5\psi=0$. Moreover, the two bilinear forms are both non-degenerate. 
If $\xi^TC\gamma^5\chi=0$ for all $\chi$ it follows that $\xi=0$, and if $\xi^TC\gamma^5\chi=0$ for all $\xi$ it follows that $\chi=0$. Similarly, if $\xi^TC\chi=0$ for all $\chi$ it follows that $\xi=0$, and if $\xi^TC\chi=0$ for all $\xi$ it follows that $\chi=0$.

Note that because the U(1) phases of the spinor representations of P, T and C have to be chosen the U(1) phases acquired by the two bilinear forms $\psi^TC\varphi$ and $\psi^TC\gamma^5\varphi$ under the $S(\textrm{P})$ transformation or under the $S(\textrm{CPT})$ transformation are determined by these choices. However the difference by a factor of $-1$ between the phase acquired by $\psi^TC\varphi$ under the $S(\textrm{P})$ transformation and the phase acquired by $\psi^TC\gamma^5\varphi$ under the $S(\textrm{P})$ transformation, is independent of these choices.

\section{Behaviour of the bilinear forms under unitary spinor evolution generated by Dirac-like Hamiltonians}\label{ham}
Here we consider a subspace defined by a fixed particle momentum $\bold{k}$, i.e., a subspace spanned by the four basis elements $\phi_je^{i\bold{k}\cdot\bold{x}}$ with the same $\bold{k}$. Furthermore, as was done in Ref. \cite{spinorent}, we consider an evolution that acts unitarily on such a subspace and is generated by a Hamiltonian operator $H$. For the subspace to be invariant under the evolution it is required that $(\phi_je^{i\bold{k}\cdot\bold{x}},H\phi_le^{i\bold{k'}\cdot\bold{x}})\propto\delta_{\bold{k},\bold{k'}}$. To have such unitary action on the subspace we consider evolution generated by Hamiltonians that do not depend on the spatial coordinate $\bold{x}$.

Given the restriction to evolutions generated by Hamiltonians without spatial dependence we can consider the behaviour of the bilinear forms $\psi^TC\varphi$ and $\psi^TC\gamma^5\varphi$ for such evolutions.
It can be shown \cite{spinorent} that the bilinear form $\psi^TC\varphi$ is invariant, up to a U(1) phase, under evolutions generated by any Hamiltonians on the form $H^{2,3}(t)+H^{0}(t)$ where

\begin{eqnarray}
H^{2,3}(t)=\gamma^\mu\gamma^\nu\phi_{\mu\nu}(t)+\gamma^\mu\gamma^\nu\gamma^\rho\kappa_{\mu\nu\rho}(t),
\end{eqnarray}
and $H^0(t)=f(t)I$. This follows from the relation $CH^{2,3}(t)=-(H^{2,3}(t))^TC$. Similarly it can be shown that the bilinear form $\psi^TC\gamma^5\varphi$ is invariant, up to a U(1) phase, under evolutions generated by any Hamiltonians on the form $H^{1,2}(t)+H^{0}(t)$ where

\begin{eqnarray}
H^{1,2}(t)=\gamma^\mu\eta_{\mu}(t)+\gamma^\mu\gamma^\nu\lambda_{\mu\nu}(t).
\end{eqnarray}
This follows from the relation $C\gamma^5H^{1,2}(t)=-(H^{1,2}(t))^TC\gamma^5$.
See \ref{hamm} for a derivation of how the two bilinear forms behave under these kinds of evolutions.

The Dirac Hamiltonian contains a first degree term in the gamma matrices, the mass term $m\gamma^0$, a second degree term, the generalized canonical momentum term $\sum_{\mu=1,2,3}\gamma^{0}\gamma^{\mu}(i\partial_{\mu}-qA_{\mu}(t))$, as well as a zeroth degree term, the coupling to the scalar potential $qA_0(t)I$. Thus the Dirac Hamiltonian for a massive particle is of the type $H^{1,2}(t)+H^{0}(t)$ while the Dirac Hamiltonian for a massless particle is on both the form  $H^{1,2}(t)+H^{0}(t)$ and on the form $H^{2,3}(t)+H^{0}(t)$.

Apart from the terms in the standard Dirac Hamiltonian we can consider some additional or alternative Hamiltonian terms from different physical models. One example is a coupling to a scalar potential such as a Yukawa scalar boson $g\gamma^0\phi$ \cite{yukawa}, but such terms behave analogously to a mass term. A coupling to a pseudo-scalar potential such as a Yukawa pseudo-scalar boson $gi\gamma^0\gamma^5\phi$ on the other hand is third degree in the gamma matrices. Another kind of additional term is a coupling to a pseudo-vector potential $\sum_{\mu=1,2,3}\gamma^{0}\gamma^5\gamma^{\mu}(A_{\mu}(t))$ which is second degree in gamma matrices (See e.g. Ref. \cite{thaller}). A tensor term $i\gamma^{0}\sum_{\mu\nu=1,2,3}[\gamma^\mu,\gamma^\nu]F_{\mu\nu}(t)$ is third degree in gamma matrices. An example is an anomalous magnetic moment described by a Pauli-coupling $i\gamma^{0}\sum_{\mu\nu=1,2,3}[\gamma^\mu,\gamma^\nu](\partial_\mu A_\nu(t)-\partial_\nu A_\mu(t))$ to the electromagnetic tensor (See e.g. Ref. \cite{thaller} or \cite{das}).
A pseudo-tensor term $i\gamma^{0}\gamma^5\sum_{\mu\nu=1,2,3}[\gamma^\mu,\gamma^\nu]F_{\mu\nu}(t)$ is first degree in gamma matrices (See e.g. Ref. \cite{thaller}). 
Finally we can consider a chiral coupling of electroweak type to a vector boson, e.g. $\sum_\mu g\gamma^0\gamma^\mu(I\pm\gamma^5)Z_{\mu}$ \cite{weinberg}, which has terms of degree 2 and 4 in the gamma matrices.

The bilinear forms ${\psi^T}C{\varphi}$ and ${\psi^T}C\gamma^5{\varphi}$ are not invariant under evolution generated by Hamiltonians that have both a mass term or a coupling to a scalar potential and also a coupling to a pseudo-scalar or a tensor term. Neither are they invariant under evolution generated by Hamiltonians that contain both a tensor and pseudo-tensor term. 
Moreover, the bilinear forms ${\psi^T}C{\varphi}$ and ${\psi^T}C\gamma^5{\varphi}$ are not invariant under evolution generated by
Hamiltonians with chiral coupling of electroweak type to a vector boson.

The two bilinear forms ${\psi^T}C{\varphi}$ and ${\psi^T}C\gamma^5{\varphi}$ can be considered also in the context of Dirac or Weyl particles in solid state and molecular systems.
The Hamiltonian in Eq. (\ref{2d}) for Dirac particles in the 2D Dirac semimetal graphene  \cite{kotov,vass}, and 
the Hamiltonian in Eq. (\ref{3d}) for Dirac particles in 3D Dirac semimetals \cite{bohm} have only terms that are zeroth and second degree in the gamma matrices.
Therefore, for both these cases ${\psi^T}C\gamma^5{\varphi}$ and ${\psi^T}C{\varphi}$ are invariant up to a U(1) phase.

Terms that can be added to the Hamiltonians in this context are the Semenoff mass term $M_S\gamma^0\gamma^3$ \cite{semenoff} and the Haldane mass term $M_{H}\gamma^5\gamma^0\gamma^3$ \cite{haldane}. Both the Semenoff and Haldane mass terms are second degree in the gamma matrices, and thus ${\psi^T}C{\varphi}$ and ${\psi^T}C\gamma^5{\varphi}$ are still invariant, up to a U(1) phase, with these additions. 

\section{The invariance groups of the bilinear forms}\label{hinn}

Here we consider the groups that preserve the bilinear form ${\psi^T}C\gamma^5{\varphi}$, up to a U(1) phase, the groups that preserve the bilinear form ${\psi^T}C{\varphi}$, up to a U(1) phase, and the groups that preserve both the bilinear forms up to a U(1) phase. The relations between these groups and the evolutions generated by Dirac-like Hamiltonians and the spinor representation of the proper orthochronous Lorentz group is also described. These groups and their relations to the evolutions generated by Dirac-like Hamiltonians and the spinor representation of the proper orthochronous Lorentz group were previously described in Ref. \cite{spinorent}.

As described in Section \ref{ham} the non-degenerate skew-symmetric bilinear form
${\psi^T}C\gamma^5{\varphi}$
is invariant, up to a U(1) phase, under any evolution generated by Dirac Hamiltonians. In greater generality it is invariant, up to a U(1) phase, under any evolution generated by Hamiltonians of the type $H^{1,2}(t)+H^0(t)$. These latter evolutions form a Lie group of real dimension 11 generated by the exponentials of the real Lie algebra spanned by the 11 skew-Hermitian matrices $i\gamma^0$, $\gamma^1$, $\gamma^2$, $\gamma^3$, $i\gamma^0 \gamma^1$, $i\gamma^0 \gamma^2$, $i\gamma^0 \gamma^3$, $\gamma^1 \gamma^2$, $\gamma^1 \gamma^3$, $\gamma^2 \gamma^3$ and $iI$. This group $G^{C\gamma^5}_U$ is isomorphic to $\mathrm{Sp}(2)\times\mathrm{ U}(1)$ where $\mathrm{Sp}(2)$ is the compact symplectic group of $4\times 4$ matrices (See e.g. Ref. \cite{hall} Ch. 1.2.8.). This follows since a symplectic group is defined as the set of linear transformations on a vector space over the complex numbers $\mathbb{C}$ that preserve a given non-degenerate skew-symmetric bilinear form. All such symplectic groups on the same vector space are isomorphic and their isomorphism class is defined as {\it the} symplectic group $\mathrm{Sp}(n,\mathbb{C})$, where $n$ is the dimension of the vector space. Moreover, the compact symplectic group $\mathrm{Sp}(n/2)$ is defined as the intersection of $\mathrm{Sp}(n,\mathbb{C})$ with $\mathrm{SU}(n)$, i.e., $\mathrm{Sp}(n/2)=\mathrm{Sp}(n,\mathbb{C})\cap \mathrm{SU}(n)$.
As described in Ref. \cite{spinorent} the set of unitary transformations that can be implemented by non-zero mass strongly  continuous Dirac Hamiltonians is dense in the group $G^{C\gamma^5}_U$. Thus there is no continuous function that is invariant, up to a U(1) phase, under all evolutions generated by non-zero mass Dirac Hamiltonians that is not invariant, up to a U(1) phase, under $G^{C\gamma^5}_U$.

The invariance of ${\psi^T}C\gamma^5{\varphi}$, up to a U(1) phase, extends to a Lie group of real dimension 21 that contains $G^{C\gamma^5}_U$ as a subgroup. This larger group is generated by the exponentials of the real Lie algebra spanned by the 11 skew-Hermitian matrices $i\gamma^0$, $\gamma^1$, $\gamma^2$, $\gamma^3$, $i\gamma^0 \gamma^1$, $i\gamma^0 \gamma^2$, $i\gamma^0 \gamma^3$, $\gamma^1 \gamma^2$, $\gamma^1 \gamma^3$, $\gamma^2 \gamma^3$ and $iI$ and the 10 Hermitian matrices $\gamma^0$, $i\gamma^1$, $i\gamma^2$, $i\gamma^3$, $\gamma^0 \gamma^1$, $\gamma^0 \gamma^2$, $\gamma^0 \gamma^3$, $i\gamma^1 \gamma^2$, $i\gamma^1 \gamma^3$, and $i\gamma^2 \gamma^3$. This group $G^{C\gamma^5}$ is isomorphic to the group $\mathrm{Sp}(4,\mathbb{C})\times \mathrm{U}(1)$ where $\mathrm{Sp}(4,\mathbb{C})$ is the symplectic group of $4\times 4$ matrices (See e.g. Ref. \cite{hall} Ch. 1.2.4.).  The group $G^{C\gamma^5}$ contains the spinor representation of the proper orthochronous Lorentz group. As described in Ref. \cite{spinorent} the group $G^{C\gamma^5}$ is the smallest Lie group that contains both $G^{C\gamma^5}_U$ and the spinor representation of the proper orthochronous Lorentz group as subgroups and any continuous function that is invariant, up to a U(1) phase, under both $G^{C\gamma^5}_U$ and the spinor representation of the proper orthochronous Lorentz group is invariant, up to a U(1) phase, under $G^{C\gamma^5}$.

Likewise, as described in Section \ref{ham} the non-degenerate skew-symmetric bilinear form
${\psi^T}C{\varphi}$
is invariant, up to a U(1) phase, under any evolution generated by zero-mass Dirac Hamiltonians. In greater generality it is invariant, up to a U(1) phase, under any evolution generated by Hamiltonians of the type $H^{2,3}(t)+H^0(t)$. These latter evolutions form a Lie group of real dimension 11 generated by the exponentials of the real Lie algebra spanned by the 11 skew-Hermitian matrices $\gamma^5\gamma^0$, $i\gamma^5\gamma^1$, $i\gamma^5\gamma^2$, $i\gamma^5\gamma^3$, $i\gamma^0 \gamma^1$, $i\gamma^0 \gamma^2$, $i\gamma^0 \gamma^3$, $\gamma^1 \gamma^2$, $\gamma^1 \gamma^3$, $\gamma^2 \gamma^3$ and $iI$. This group $G^C_U$ is isomorphic to $\mathrm{Sp}(2)\times\mathrm{ U}(1)$. 
As described in Ref. \cite{spinorent} the set of unitary transformations that can be implemented by strongly  continuous Dirac Hamiltonians with zero mass and a coupling to a Yukawa pseudoscalar boson is dense in the group $G^{C}_U$. Thus there is no continuous function that is invariant, up to a U(1) phase, under all evolutions generated by Dirac Hamiltonians with zero mass and a coupling to a Yukawa pseudoscalar boson that is not invariant, up to a U(1) phase, under $G^{C}_U$.

The invariance of ${\psi^T}C{\varphi}$, up to a U(1) phase, extends to a group of real dimension 21 that contains $G^{C}_U$ as a subgroup. This larger group is generated by the exponentials of the real Lie algebra spanned by the 11 skew-Hermitian matrices $\gamma^5\gamma^0$, $i\gamma^5\gamma^1$,$i\gamma^5\gamma^2$, $i\gamma^5\gamma^3$, $i\gamma^0 \gamma^1$, $i\gamma^0 \gamma^2$, $i\gamma^0 \gamma^3$, $\gamma^1 \gamma^2$, $\gamma^1 \gamma^3$, $\gamma^2 \gamma^3$ and $iI$ and the 10 Hermitian matrices $i\gamma^5\gamma^0$, $\gamma^5\gamma^1$, $\gamma^5\gamma^2$, $\gamma^5\gamma^3$, $\gamma^0 \gamma^1$, $\gamma^0 \gamma^2$, $\gamma^0 \gamma^3$, $i\gamma^1 \gamma^2$, $i\gamma^1 \gamma^3$, and $i\gamma^2 \gamma^3$. This group $G^C$ is isomorphic to the group $\mathrm{Sp}(4,\mathbb{C})\times \mathrm{U}(1)$, and contains the spinor representation of the proper orthochronous Lorentz group. As described in Ref. \cite{spinorent} the group $G^{C}$ is the smallest Lie group that contains both $G^{C}_U$ and the spinor representation of the proper orthochronous Lorentz group as subgroups and any continuous function that is invariant, up to a U(1) phase, under both $G^{C}_U$ and the spinor representation of the proper orthochronous Lorentz group is invariant, up to a U(1) phase, under $G^{C}$.

While the Lie groups $G^{C\gamma^5}_U$ and $G^C_U$ are isomorphic they are not identical and their intersection $G^{C\gamma^5}_U\cap G^C_U$ is a Lie group of real dimension 7 that contains evolutions generated by Hamiltonians with only second and zeroth degree terms in the gamma matrices. The spinor representations of the spatial rotations are also included in this group but not the spinor representations of the Lorentz boosts. As described in Ref. \cite{spinorent} the set of unitary transformations that can be implemented by zero mass Dirac Hamiltonians is dense in the group $G^{C\gamma^5}_U\cap G^C_U$. Thus there is no continuous function that is invariant, up to a U(1) phase, under all evolutions generated by zero mass Dirac Hamiltonians that is not invariant, up to a U(1) phase, under $G^{C\gamma^5}_U\cap G^C_U$.

Similarly, the Lie groups $G^{C\gamma^5}$ and $G^C$ are isomorphic but not identical and their intersection  $G^{C\gamma^5}\cap G^C$ is a Lie group of real dimension 13 that also contains the evolutions generated by Hamiltonians with only second and zeroth degree terms in the gamma matrices and the spinor representations of the spatial rotations. Beyond this it contains non-unitary elements generated by the Hermitian second degree terms in the gamma matrices, which includes the spinor representations of the Lorentz boosts. As described in Ref. \cite{spinorent} the group $G^{C\gamma^5}\cap G^C$ is the smallest Lie group that contains both $G^{C\gamma^5}_U\cap G^C_U$ and the spinor representation of the proper orthochronous Lorentz group as subgroups and any continuous function that is invariant, up to a U(1) phase, under both $G^{C\gamma^5}_U\cap G^C_U$ and the spinor representation of the proper orthochronous Lorentz group is invariant, up to a U(1) phase, under $G^{C\gamma^5}\cap G^C$.

Besides the invariance of ${\psi^T}C\gamma^5{\varphi}$, up to a U(1) phase, under the connected Lie group $G^{C\gamma^5}$ and the invariance of ${\psi^T}C{\varphi}$, up to a U(1) phase, under the connected Lie group $G^C$ both the bilinear forms are invariant, up to a sign, under the 32 element finite group generated by the gamma matrices $\gamma^0,\gamma^1,\gamma^2,\gamma^3$, the so called {\it Dirac group} (See e.g. Ref. \cite{lomont}). As can be seen from Eq. (\ref{anti}) ${\psi^T}C{\varphi}$ is invariant under action by $\gamma^0$ but changes sign under action by $\gamma^1,\gamma^2$ and $\gamma^3$.  Similarly we see from Eq. (\ref{anti}) that ${\psi^T}C\gamma^5{\varphi}$ is invariant under action by $\gamma^1,\gamma^2$ and $\gamma^3$ but changes sign under action by $\gamma^0$. 

Note that if we choose a different physically equivalent set of matrices $\gamma'^0,\gamma'^1,\gamma'^2,\gamma'^3$ related to the gamma matrices $\gamma^0,\gamma^1,\gamma^2,\gamma^3$ by a similarity transformation $\gamma'^\mu=S\gamma^\mu S^{-1}$ where $S\in \mathrm{GL}(4,\mathbb{C})$ we can construct  two skew-symmetric Lorentz invariant bilinear forms as ${\psi^T}(S^{-1})^TCS^{-1}{\varphi}$ and ${\psi^T}(S^{-1})^TC\gamma^5S^{-1}{\varphi}$. 
The linear groups preserving these bilinear forms, up to a U(1) phase, are $SG^CS^{-1}$ and $SG^{C\gamma^5}S^{-1}$ respectively. Moreover, the set of evolutions generated by nonzero mass Dirac Hamiltonians is a dense subset of $SG_U^{C\gamma^5}S^{-1}$ and the set of evolutions generated by zero mass Dirac Hamiltonians with a coupling to a Yukawa pseudoscalar boson is a dense subset of $SG_U^{C}S^{-1}$. 

\section{Describing spinor entanglement properties using polynomial invariants}\label{spen}

A general framework for describing entanglement properties of a composite system of non-relativistic particles was developed and described in References \cite{popescu} and \cite{carteret}. This general framework has been adapted to the case of Dirac spinors in Ref. \cite{spinorent} and it was described how Lorentz invariant homogeneous polynomials that are zero for product states and invariant, up to a U(1) phase, under physically allowed local unitary evolutions can be used to partially characterize inequivalent forms of spinor entanglement. Here we outline this conceptual framework previously described in Ref. \cite{spinorent} and describe the properties of polynomial invariants. We then discuss such Lorentz invariant homogeneous polynomials for the specific cases of physically allowed local unitary evolutions described in Section \ref{ham} and Section \ref{hinn}. Further, we consider the restrictions of these Lorentz invariant homogeneous polynomials to the case of zero momenta free particles in an energy eigenstate and to the case of Weyl particles. Finally we comment on the relation between the Lorentz invariant homogeneous polynomials and so called entanglement measures.

In general a system of multiple particles can be entangled in a variety of qualitatively different ways. The entanglement present in the system is typically considered to be preserved by changes of local reference frames and by local unitary evolution of the individual particles.
Therefore if two entangled states of the system are such that each of them can be transformed into the other deterministically through changes of local reference frames and local unitary evolution we may consider them to be equivalently entangled. Moreover, if two states are identical up to multiplication by a constant $c\in \mathbb{C}-\{0\}$ we may consider them to be equivalently entangled. Therefore, within the conceptual framework described in Ref. \cite{spinorent} two entangled states are considered equivalently entangled if they can be transformed into each other by changes of local reference frames, local unitary evolutions and multiplication by a constant, and inequivalently entangled otherwise.
Following the terminology of Ref. \cite{spinorent} we refer to the changes of local reference frames and the local unitary evolutions jointly as the {\it local reversible operations}.

In the general case one can identify a number of different properties of the entanglement and use these to distinguish between the inequivalent types of entanglement. By the above argument any such property describing the entanglement must be unchanged by all local reversible operations \cite{popescu,carteret,ben}. Moreover, no such entanglement property should be found in a product state, i.e., a state that can be created using only local resources.

For a system of multiple Dirac spinors we may following Ref. \cite{spinorent} identify the local reversible operations acting on the spinors as the 
set of changes of local reference frames, i.e., the local spinor representations of the proper orthochronous Lorentz transformations and the set of local unitary spinor evolutions generated by the set of physically allowed Dirac Hamiltonians. Given this we have for a system of multiple Dirac particles with definite momenta three conditions that define a spinor entanglement property for pure states of such particles.

\begin{enumerate}
  \item  [(1)] Non-existence for any product state.
  \item[(2)]Invariance under changes of local inertial reference frames, i.e., local Lorentz invariance.
  \item [(3)] Invariance under local evolutions generated by physically allowed Dirac Hamiltonians that act unitarily on any fixed-momenta subspace.
  
\end{enumerate}

Next we can consider the the question of how to describe equivalence classes of states with the same entanglement properties.
A state $\psi_{ABC\dots}$ of a multipartite system belongs to the set $\mathcal{O}_{\psi_{ABC\dots}}$ that is made up of all states that can be obtained from $\psi_{ABC\dots}$ by local reversible operations. Following the terminology of Ref. \cite{spinorent} we call such a set $\mathcal{O}_{\psi_{ABC\dots}}$, an {\it orbit} of the local reversible operations. Any two different orbits are by definition disjoint and the Hilbert space of the system can be completely decomposed into the set of all such orbits. If the orbit $\mathcal{O}_{\psi_{ABC\dots}}$ is identical to the orbit $\mathcal{O}_{\phi_{ABC\dots}}$ up to element wise multiplication by a nonzero constant $c$, i.e., a map $m_c:\psi_{ABC\dots}\to c\psi_{ABC\dots}$ for some $c\in \mathbb{C}-\{0\}$, we consider them equivalent. We denote by $\tilde{\mathcal{O}}_{\psi_{ABC\dots}}$ the equivalence class of all orbits that are related to $\mathcal{O}_{\psi_{ABC\dots}}$ by the maps $m_{c}$ for all $c\in \mathbb{C}-\{0\}$.
Two states that belong to different equivalence classes are different with regard to some physical property that cannot be changed by any local reversible operation.
Therefore two entangled states $\psi_{ABC\dots}$ and $\phi_{ABC\dots}$ for which $\tilde{\mathcal{O}}_{\psi_{ABC\dots}}\neq\tilde{\mathcal{O}}_{\phi_{ABC\dots}}$ are by definition inequivalently entangled. The description of different kinds of entanglement in terms of inequivalence under local reversible operations has been considered for various systems of non-relativistic spin-$\frac{1}{2}$ particles (See e.g. References \cite{popescu,carteret,toni,higuchi,sud,tarrach,toumazet,ekert,grassl,kempe,lind2,car2}).

A method to characterize the different inequivalent types of entanglement in a system of particles is to construct parameters that distinguishes between different equivalence classes of entangled states. If we want such parameters to only distinguish between different inequivalent forms of entanglement and not between any other physical properties, we must require that these parameters are functions that take a constant value on any given equivalence class. Thus the parameters have to be invariant under any local reversible operation and invariant under multiplication of the state by any constant $c\in \mathbb{C}-\{0\}$.
A way to construct such parameters is to find a set of functions $f_i$ on the Hilbert space of the system that are invariant under local reversible operations with determinant 1 and that are homogeneous, i.e., satisfy $f_i(c\psi_{ABC\dots})=c^{k(i)}f_i(\psi_{ABC\dots})$ for $c\in \mathbb{C}-\{0\}$ where $k(i)$ is the degree of homogeneity of $f_i$. 
Given two such homogeneous functions $f_i\neq0$ and $f_j\neq0$ the ratio $f_i^{k(j)}/f_j^{k(i)}$ has degree of homogeneity zero. This ratio is invariant under any local reversible operation as well as invariant under multiplication of a state by any constant $c\in \mathbb{C}-\{0\}$. Functions of this kind can therefore be used to parametrize the set of equivalence classes $\tilde{\mathcal{O}}_{\psi_{ABC\dots}}$.
If we also require that the homogeneous functions $f_i$ take the value zero for all product states they are witnesses of entanglement, i.e., any nonzero value of such a function implies that the state is entangled. This kind of functions are referred to here and in Ref. \cite{spinorent} as {\it entanglement invariants}.

One way to construct entanglement invariants is as homogeneous polynomials in the state coefficients.
The characterization of entanglement using such polynomial invariants has been considered previously for several different systems of non-relativistic spin-$\frac{1}{2}$ particles (See e.g. References \cite{coffman,sud,verstraete2,luque,toumazet,wootters,wootters2,grassl,verstraete,kempe,luq4,miyake1}).
If for two entangled states there exists some ratio of two polynomial entanglement invariants with degree of homogeneity zero that takes different values for the two states it follows that the two states belong to different equivalence classes and thus are inequivalently entangled. However, if all such ratios of polynomial entanglement invariants takes the same values for two states, this does not imply that the two states belong to the same equivalence class.
Thus a given set of polynomial invariants may be able to distinguish between some but not all inequivalent types of entanglement. We say that such a set of polynomial invariants only provides a {\it partial characterization} of the entanglement properties of the system.
In the general case it is impossible to construct a set of polynomial invariants that can distinguish between all equivalence classes due to the possibility of an equivalence class being in the closure of another equivalence class. Since a ratio of two polynomials with degree of homogeneity zero is a continuous function and constant on an equivalence class it cannot distinguish between the class and its closure. Thus polynomial entanglement invariants in general provide only a partial characterization of entanglement properties.

We can consider different cases of physically allowed local unitary evolutions for a system of Dirac spinors. This means we have to consider different cases of local reversible operations. In some cases the polynomial invariants are the same for two different cases of local reversible operations and in some cases they are different. 
In particular the polynomial entanglement invariants constructed for different cases of local reversible operations can have either the same or alternatively different local invariance groups. 
Here we identify four principal physically motivated cases of local invariance group acting on a single particle.

For the case where the local unitary evolution acting on a given particle is generated by Dirac Hamiltonians with a mass term and more generally by Hamiltonians of the form $H^{1,2}(t)+H^{0}(t)$ the group of local unitary evolutions acting on the particle is dense in $G_U^{C\gamma^5}$. Therefore, in this case any polynomial entanglement invariant is invariant, up to a U(1) phase, under $G_U^{C\gamma^5}$ since it is a continuous function. Moreover, any such polynomial entanglement invariant is invariant, up to a U(1) phase, also under $G^{C\gamma^5}$ acting on the particle since it is also Lorentz invariant and $G^{C\gamma^5}$ is the smallest Lie group that contains $G_U^{C\gamma^5}$ and the spinor representation of the proper orthochronous Lorentz group as subgroups, as described in Section \ref{hinn}.

For the case where the local unitary evolution acting on a given particle is generated by Dirac Hamiltonians with zero mass and a coupling to a Yukawa pseudoscalar boson and more generally by Hamiltonians of the form $H^{2,3}(t)+H^{0}(t)$ the group of local unitary evolutions acting on the particle is dense in $G^{C}_U$. Therefore, in this case any polynomial entanglement invariant is invariant, up to a U(1) phase, under $G_U^{C}$ since it is a continuous function. Moreover, any such polynomial entanglement invariant is invariant, up to a U(1) phase, also under $G^{C}$ acting on the particle since it is also Lorentz invariant.

For the case where the local unitary evolution acting on a given particle is generated by Dirac Hamiltonians with zero mass and a no additional couplings the group of local unitary evolutions acting on the particle is dense in $G^{C\gamma^5}_U\cap G^C_U$. Therefore, in this case any polynomial entanglement invariant is invariant, up to a U(1) phase, under $G^{C\gamma^5}_U\cap G^C_U$ since it is a continuous function. Moreover, any such polynomial entanglement invariant is invariant, up to a U(1) phase, also under $G^{C\gamma^5}\cap G^C$ acting on the particle since it is a Lorentz invariant.

Finally we consider case where the local unitary evolution on a given particle is generated by Dirac Hamiltonians
with both a mass term and a coupling to a Yukawa pseudoscalar boson where either the mass or the coupling to the Yukawa pseudoscalar boson is a free parameter.
For this case the group of local unitary evolutions acting on the particle is dense in the smallest Lie group that contains both $G^{C\gamma^5}_U$ and $G^C_U$ as subgroups. This group is U(4). Therefore, in this case any polynomial entanglement invariant is invariant, up to a U(1) phase, under U(4) since it is a continuous function. Moreover, any such polynomial entanglement invariant is invariant, up to a U(1) phase, also under $\mathrm{U}(1)\times\mathrm{SL}(4,\mathbb{C})$ acting on the particle since it is a Lorentz invariant.

From the above four cases we can see that a set of polynomial entanglement invariants for a system of $n$ Dirac particles is invariant, up to a U(1) phase, under a Lie group $G_1\otimes G_2\otimes\dots\otimes G_n$ where each $G_k$ is one out of $G^C$, $G^{C\gamma^5}$, $G^{C\gamma^5}\cap G^C$ and $\mathrm{U}(1)\times\mathrm{SL}(4,\mathbb{C})$. The local groups $G_k$ depend on the physically allowed local evolution.
 
\subsection{Polynomial entanglement invariants in the case of zero momenta free particles in an energy eigenstate and for the case of Weyl particles}
We here consider the restriction of a polynomial entanglement invariant to a subspace.
In particular we consider some physically motivated subspaces.
One such case is a system of Dirac particles with zero momenta that are not coupled to any four-potentials and that are in an eigenstate of the local Dirac Hamiltonians. The state of such a system  belongs to a subspace invariant under local projections by $P_+=\frac{1}{2}(I+\gamma^0)$ or alternatively $P_-=\frac{1}{2}(I-\gamma^0)$. A system of particles of this kind is often identified with a system of non-relativistic free spin-$\frac{1}{2}$ particles or antiparticles. 
Another such case is a system of Weyl particles, i.e., zero mass Dirac particles with definite chiralities. The state of a system of Weyl particles belongs to a subspace invariant under local projections by $P_R=\frac{1}{2}(I+\gamma^5)$ or alternatively $P_L=\frac{1}{2}(I-\gamma^5)$.

Any polynomial entanglement invariant restricted to a subspace reduces to a polynomial that is invariant under the subgroup of the local reversible operations with determinant one that preserve the subspace. Moreover, since
any polynomial entanglement invariant is invariant, up to a U(1) phase, under the local action of either $G^C$, $G^{C\gamma^5}$, $G^{C\gamma^5}\cap G^C$ or $\mathrm{U}(1)\times\mathrm{SL}(4,\mathbb{C}) $ the restriction of the invariant to a subspace is invariant, up to a U(1) phase, under the local action of a subgroup of one of these groups that preserve the subspace. For any of the local groups $G^C$, $G^{C\gamma^5}$, $G^{C\gamma^5}\cap G^C$ or $\mathrm{U}(1)\times\mathrm{SL}(4,\mathbb{C}) $ and any subspace invariant under $P_+$, $P_-$, $P_L$ or $P_R$ the subgroup that preserves the subspace is isomorphic to $\mathrm{U}(1)\times\mathrm{SL}(2,\mathbb{C})$.
Therefore, any polynomial entanglement invariant reduces on a subspace defined by local projectors $P_+$, $P_-$, $P_L$ or $P_R$ to a polynomial that is invariant under local action by a group isomorphic to $\mathrm{SL}(2,\mathbb{C})$ or reduces to zero.

For a system of two Dirac spinors the subspaces defined by local invariance under $P_R$ or  $P_L$ and the subspaces defined by local invariance under $P_+$ or $P_-$ were considered in \cite{spinorent}.
The polynomial Lorentz invariants constructed in \cite{spinorent} were found to reduce on these subspaces either to zero or to a local $\mathrm{SL}(2,\mathbb{C})$ invariant polynomial, the Wootters concurrence \cite{wootters,wootters2}. The Wootters concurrence is a degree 2 polynomial in the state coefficients that was constructed to describe the spin entanglement of two non-relativistic spin-$\frac{1}{2}$ particles and other two-level systems. It generates the entire algebra of local $\mathrm{SL}(2,\mathbb{C})$ invariants for two non-relativistic spin-$\frac{1}{2}$ particles.

For higher numbers of particles a number of local $\mathrm{SL}(2,\mathbb{C})$ invariant polynomials have been constructed. For three non-relativistic spin-$\frac{1}{2}$ particles a polynomial of degree 4 that is invariant under local $\mathrm{SL}(2,\mathbb{C})$ is the Coffman-Kundu-Wootters 3-tangle \cite{coffman}. This polynomial generates the entire algebra of local $\mathrm{SL}(2,\mathbb{C})$ invariants for three non-relativistic spin-$\frac{1}{2}$ particles.
For a system of four non-relativistic spin-$\frac{1}{2}$ particles four polynomials of degrees 2, 4, 4, and 6 that are invariant under local $\mathrm{SL}(2,\mathbb{C})$ have been found by Luque and Thibon \cite{luque}. These four polynomials generate the entire algebra of local $\mathrm{SL}(2,\mathbb{C})$ invariants for four non-relativistic spin-$\frac{1}{2}$ particles.
Moreover, degree 2 polynomials invariant under local $\mathrm{SL}(2,\mathbb{C})$ for arbitrary even numbers of spin-$\frac{1}{2}$ particles have been described by Wong and Christensen \cite{wong}.

\subsection{Entanglement measures}
Here we briefly comment on a commonly used tool for describing entanglement, the so called {\it entanglement measures} \cite{entmes}. These are functions that are zero for product states and satisfy the condition of non-increase on average under local operations assisted by classical communication (LOCC)\cite{vidal}. This condition is called entanglement monotonicity and entanglement properties satisfying this condition can be quantified by the entanglement measures.  The entanglement measures are related to entanglement invariants in that the condition of entanglement monotonicity implies invariance under local reversible operations.  
However since classical communication between spacelike separated laboratories is not possible a different physical scenario where timelike or null paths connect the laboratories are required for LOCC. In this kind of scenario operations made in one laboratory are not prevented from influencing other laboratories along timelike or null curves by the causal structure of the spacetime. Therefore, other assumptions regarding causal influences between laboratories must be made for the notion of local operations to be meaningful. Given such assumptions one then needs to characterize the set of LOCC to identify entanglement properties that satisfy the condition of entanglement monotonicity. Considering this kind of physical scenario and characterizing the set of LOCC goes beyond the scope of this work and is left as an open problem.

\section{Constructing candidate entanglement invariants through tensor contractions}\label{ent}

We here describe a method for constructing polynomial entanglement invariants for a system of multiple Dirac spinors.
To do this we utilize the properties of the bilinear forms described in Sections \ref{invariants}, \ref{ham} and \ref{hinn} to construct polynomial entanglement invariants of the kind described in Section \ref{spen}. 
In particular the two bilinear forms $\psi^T(x)C\gamma^5\varphi(x)$ and $\psi^T(x)C\varphi(x)$ are both pointwise Lorentz invariant. Furthermore, if $\psi$ and $\varphi$ belong to a subspace defined by fixed particle momenta and spanned by spinorial degrees of freedom both $\psi^T(x)C\gamma^5\varphi(x)$ and $\psi^T(x)C\varphi(x)$ are invariant, up to a U(1) phase, under any evolution that acts unitarily on the subspace and is generated by Dirac Hamiltonians and zero-mass Dirac Hamiltonians, respectively. Finally, both $\psi^T(x)C\psi(x)$ and $\psi^T(x)C\gamma^5\psi(x)$ are identically zero since they are skew-symmetric.

For the case where the physically allowed local unitary evolutions of the spinors are described by Dirac Hamiltonians with zero-mass or arbitrary mass, or more generally where the group of physically allowed local transformations belong to $G^C$ or to $G^{C\gamma^5}$, the bilinear forms $\psi^TC\varphi$ and $\psi^TC\gamma^5\varphi$, respectively, have the desired local invariance properties of an entanglement invariant as described in Section \ref{spen}. Below we show how polynomials with these invariance properties can be constructed for any number of Dirac spinors in a way that generalizes the method used for the case of two Dirac spinors in Ref. \cite{spinorent}.

The physical scenario as described in Section \ref{dir} is one where a number of spacelike separated observers each has their own laboratory holding a Dirac or Weyl particle. We give the first five observers the names Alice, Bob, Charlie, David, and Erin, respectively, for convenience. Then we let the particles be in a joint state and assume that operations by one observer on the shared system can be made jointly with the other observers operations, i.e., assume that operations made by different observers commute. Further we make the assumption that a tensor product structure can be used to describe the shared system and use the tensor products of local basis elements $\phi_{j_A}e^{i\bold{k_A}\cdot\bold{x_A}}\otimes \phi_{j_B}e^{i\bold{k_B}\cdot\bold{x_B}}\otimes\phi_{j_C}e^{i\bold{k_C}\cdot\bold{x_C}}\otimes\dots$ as a basis. 
Then the state can be expanded in this basis as
\begin{eqnarray}
\psi_{ABC\dots}(t)=\sum_{\substack{j_A,j_B,j_C,\dots\\\bold{k_A},\bold{k_B},\bold{k_C},\dots} }\psi_{\substack{j_A,j_B,j_C,\dots\\\bold{k_A},\bold{k_B},\bold{k_C},\dots}}(t)\phi_{j_A}e^{i\bold{k_A}\cdot\bold{x_A}}\otimes \phi_{j_B}e^{i\bold{k_B}\cdot\bold{x_B}}\otimes \dots,
\end{eqnarray}
where $\psi_{j_A,j_B,j_C,\dots\bold{k_A},\bold{k_B},\bold{k_C},\dots}(t)$ are complex numbers. 

Next we consider a state that belongs to a subspace where the momenta $\bold{k_A},\bold{k_B},\bold{k_C},\dots$ are fixed, i.e., a subspace spanned by only the spinorial degrees of freedom. Then we can suppress the indices $\bold{k_A},\bold{k_B},\bold{k_C},\dots$ in the description of the state and let $\psi_{jkl\dots}\equiv \psi_{j_A,k_B,l_C,\dots\bold{k_A},\bold{k_B},\bold{k_C},\dots}$.
We can arrange these coefficients $\psi_{jkl\dots}$ as a tensor, i.e., a multi-dimensional array, by letting the spinor basis indices $jkl\dots$ be the tensor component indices. Then the state coefficients $\psi_{jk}$ of a bipartite state form a two-dimensional tensor, i.e., a matrix, where $j$ and $k$ are the column and row indices respectively (See Ref. \cite{spinorent}). Likewise for a tripartite state the state coefficients $\psi_{jkl}$ form a three-dimensional tensor where $j$, $k$ and $l$ are the indices of the three different dimensions, respectively. See Fig. \ref{tretens} for an illustration. In this way an $n$-partite state corresponds to an $n$-dimensional tensor. Let us denote this tensor $\Psi^{ABC\dots}$, and its components $\Psi^{ABC\dots}_{jkl\dots}\equiv\psi_{jkl\dots}$.

\begin{figure}
\centering
\includegraphics[scale=1]{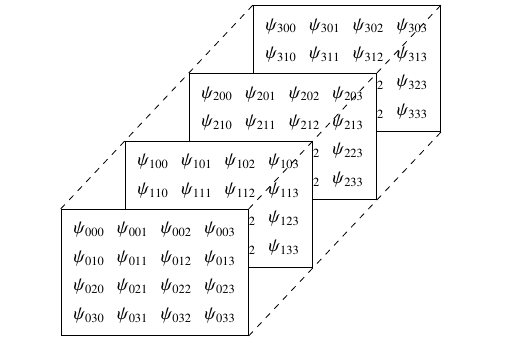}
\caption{Graphical representation of the three-dimensional tensor corresponding to the state of three Dirac spinors. \label{tretens}}
\end{figure}

Transformations $S^A$ on Alice's part of the system act on the first index of  $\Psi^{ABC\dots}$. Likewise, transformations $S^B$ on Bob's part of the system act on the second index, transformations $S^C$ on Charlies's part of the system act on the third index, and transformations on the $n$th observer's part act on the $n$th index

\begin{eqnarray}
\Psi^{ABC\dots}_{jkl\dots}\to\sum_{mno\dots} S^A_{jm}S^{B}_{kn}S^{C}_{lo}\dots\Psi^{ABC\dots}_{mno\dots} .
\end{eqnarray}

If we consider two copies of $\Psi^{ABC\dots}$ together with the matrix $C$ we can construct products of tensor components as $\Psi^{ABC\dots}_{jkl\dots}C_{jm}\Psi^{ABC\dots}_{mno\dots}$ and then take the sum over $j$ and $m$. This sum is invariant under the spinor representation of the proper orthochronous Lorentz group acting on Alices particle. This follows from the properties of the bilinear form $\psi^T C \varphi$ discussed in section \ref{invariants} since $\sum_{jm}\Psi^{ABC\dots}_{jkl\dots}C_{jm}\Psi^{ABC\dots}_{mno\dots}$ is a bilinear form for every fixed set of indices $kl\dots$ and $no\dots$.
In particular, for a transformation $S(\Lambda)$ in Alice's lab we have

\begin{eqnarray}
\sum_{jm}\Psi^{ABC\dots}_{jkl\dots}C_{jm}\Psi^{ABC\dots}_{mno\dots}\to &&\sum_{jmpq}\Psi^{ABC\dots}_{jkl\dots}S^T(\Lambda)_{jp}C_{pq}S(\Lambda)_{qm}\Psi^{ABC\dots}_{mno\dots}\nonumber\\
=&&\sum_{jm}\Psi^{ABC\dots}_{jkl\dots}C_{jm}\Psi^{ABC\dots}_{mno\dots}.
\end{eqnarray}
We refer to the construction $\sum_{jm}\Psi^{ABC\dots}_{jkl\dots}C_{jm}\Psi^{ABC\dots}_{mno\dots}$ as a {\it tensor sandwich contraction} where $C$ is being sandwiched between two copies of $\Psi^{ABC\dots}$.
Similarly we can make sandwich contractions over a pair of indices corresponding to any other observers particle with the matrix $C$ being sandwiched.
Furthermore, we can make sandwich contractions also with $C\gamma^5$ sandwiched between the copies of $\Psi^{ABC\dots}$.

Now consider an even number $n$ of copies of $\Psi^{ABC\dots}$. Next, for a given observer split the set of indices corresponding to that observer's particle into $n/2$ pairs. Then sandwich contract every such pair of indices with either $C$ or $C\gamma^5$ sandwiched inbetween the two copies of $\Psi^{ABC\dots}$, as described above. Then repeat this procedure for every other observer. The result is a scalar that is invariant under the spinor representation of the proper orthochronous Lorentz group in any lab. In particular it is a polynomial of degree $n$ in the state coefficients $\psi_{jkl\dots}$. Note that this procedure is very similar to Cayley's $\Omega$ process for constructing polynomial invariants under the special linear group \cite{omega}.

Due to the antisymmetry of $C$ and $C\gamma^5$ the polynomials constructed in this way are identically zero if $\Psi^{ABC\dots}$ can be factored as $\Psi^{A}\otimes\Psi^{BC\dots}$, i.e., if the state is a product state over the partition $A|BC\dots$. The same holds if any other observer's particle is in a product state with the rest of the particles.
On the other hand, a polynomial constructed in this way is not necessarily zero if the state is a product state over a partition that splits the particles into sets where each set has two or more particles. In this case the polynomial factorizes into polynomials defined on the different sets of particles in the partitioning. 

All polynomials constructed in this way are invariant, up to a U(1) phase, under local unitary evolutions generated by zero-mass Dirac Hamiltonians, as follows from the discussion in Section \ref{ham}. This also holds for zero-mass Dirac Hamiltonians with additional terms that are second degree in gamma matrices such as a Semenoff mass term \cite{semenoff} or Haldane mass term \cite{haldane}. More generally all such polynomial are invariant, up to a U(1) phase, under the group $G^{C}\cap G^{C\gamma^5}$ acting locally on any spinor.

If all indices corresponding to a given observer's particle are sandwich contracted with the matrix $C$ sandwiched for each pair of indices, the resulting polynomial is invariant, up to a U(1) phase, also for evolutions of that spinor generated by zero-mass Dirac Hamiltonians with additional terms that are second degree in gamma matrices or third degree in gamma matrices such as a pseudo-scalar Yukawa term or a Pauli coupling. More generally the polynomial is invariant, up to a U(1) phase, under the group $G^C$ acting on the spinor.

If all indices corresponding to a given observer's particle are sandwich contracted with the matrix $C\gamma^5$ sandwiched for each pair of indices, the resulting polynomial is invariant, up to a U(1) phase, for evolutions of that spinor generated by arbitrary-mass Dirac Hamiltonians. More generally the polynomial is invariant, up to a U(1) phase, under the group $G^{C\gamma^5}$ acting on the spinor.

If the indices corresponding to a given observer's particle are sandwich contracted with zero or an even positive number of the matrix $C\gamma^5$ sandwiched, the resulting polynomial is invariant under P in the lab of the given observer. If the number of sandwiched $C\gamma^5$ is odd the polynomial changes sign under P.

The polynomials constructed in this way satisfy the requirements described in Section \ref{spen} and are therefore considered potential candidates for describing entanglement properties of a system of multiple Dirac spinors. Note that the absolute values are invariant under the local action of the Dirac group, i.e., the finite group generated by the gamma matrices $\gamma^0,\gamma^1,\gamma^2,\gamma^3$, and in particular invariant under the local P, CT and CPT transformations.
Note also that by using $C$ and $C\gamma^5$ to construct the invariants we have in essence used the T and CP transformations. Using such inversion transformations follows the same general idea of using "state inversion" transformations \cite{wootters2,uhlmann,rungta} as the construction of the Wootters concurrence for the case of two non-relativistic spin-$\frac{1}{2}$ particles.

The computational difficulty in constructing invariant polynomials through sandwich contractions rises sharply with the number of spinors and the number of copies of $\Psi^{ABC\dots}$ involved in the contractions, i.e., the degree of the polynomials. 
However for the lowest polynomial degrees such invariants can be obtained with relatively small difficulty, in particular for three and four spinors. Furthermore, degree 2 polynomials can be obtained straightforwardly for all even numbers of spinors.

\section{The case of three spinors}\label{three}

For three Dirac spinors the state coefficients can be arranged as a $4\times 4\times 4$ tensor $\Psi^{ABC}$.
Transformations $S^A,S^B,S^C$ acting locally on Alice's, Bob's and Charlie's particle respectively are described as
\begin{eqnarray}
\Psi^{ABC}_{ijk}\to\sum_{lmn} S^A_{il}S^B_{jm}S^C_{kn}\Psi^{ABC}_{lmn}.
\end{eqnarray}

Lorentz invariants of degree 2 can be constructed as tensor sandwich contractions of the form $\sum_{ijklmn}X_{li}X_{mj}X_{nk}\Psi^{ABC}_{ijk}\Psi^{ABC}_{lmn}$ involving two copies of $\Psi^{ABC}$ where either $X=C$ or $X=C\gamma^5$ for each instance. However all the 8 degree 2 polynomials that can be constructed in this way are identically zero due to the antisymmetry of $C$ and $C\gamma^5$. 

The next lowest degree is 4.
From the above follows that invariants of degree 4 that factorize as a product of two degree 2 polynomials are identically zero.
Thus it remains to consider the tensor sandwich contractions involving four copies of $\Psi^{ABC}$ that do not factorize into two degree 2 polynomials. There are four different such ways to pair up the tensor indices of the four copies.  In writing these sandwich contractions we leave out the summation sign in the following, with the understanding that repeated indices are summed over. We also suppress the superscript $ABC$ of $\Psi$. The four different ways to contract the indices are

\begin{eqnarray}\label{tens}
I_a&=& X_{ij}X_{mk}X_{nl}\Psi_{jkl}\Psi_{qmn}X_{qr}X_{ps}X_{ut}\Psi_{rst}\Psi_{ipu},\nonumber\\
I_b&=& X_{ij}X_{mk}X_{nl}\Psi_{jkl}\Psi_{ipn}X_{qr}X_{ps}X_{ut}\Psi_{rst}\Psi_{qmu},\nonumber\\
I_c&=& X_{ij}X_{mk}X_{nl}\Psi_{jkl}\Psi_{imu}X_{qr}X_{ps}X_{ut}\Psi_{rst}\Psi_{qpn},\nonumber\\
I_d&=&X_{ij}X_{mk}X_{nl}\Psi_{jkl}\Psi_{ipt}X_{qr}X_{ps}X_{ut}\Psi_{rsn}\Psi_{qmu}.
\end{eqnarray}

\begin{figure}
\centering
\begin{subfigure}{.3\linewidth}
  \centering
  \includegraphics[scale=1]{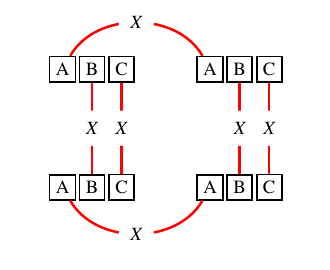}
  \caption{$I_a$}
\end{subfigure}%
\hspace{2cm}%
\begin{subfigure}{.3\linewidth}
  \centering
  \includegraphics[scale=1]{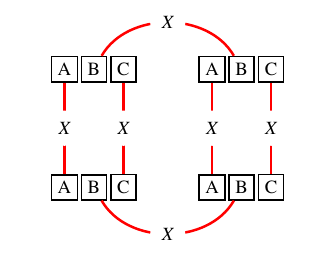}
   \caption{$I_b$}
\end{subfigure}\\
\begin{subfigure}{4cm}
  \centering
  \includegraphics[scale=1]{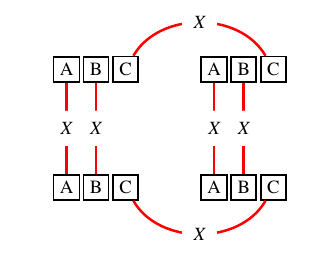}
   \caption{$I_c$}
\end{subfigure}
\hspace{2cm}%
\begin{subfigure}{4cm}
  \centering
  \includegraphics[scale=1]{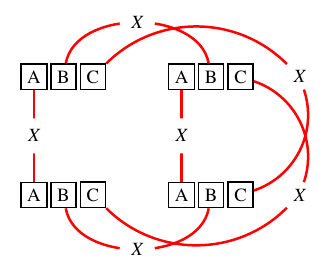}
   \caption{$I_d$}
\end{subfigure}
\caption{Graph representations of the four tensor sandwich contractions $I_a$, $I_b$, $I_c$, and $I_d$. Each of the four copies of $\Psi^{ABC}$ is represented by three boxes corresponding to the three tensor indices. The tensor sandwich contractions are represented by red lines connecting the contracted indices broken by $X$ representing the sandwiched tensor.}
\label{ris}
\end{figure}

The contraction $I_a$ is invariant with respect to a permutation of laboratories B and C. Similarly, the contraction $I_b$ is invariant with respect to a permutation of laboratories A and C and the contraction $I_c$ is invariant with respect to a permutation of laboratories A and B. The final contraction $I_d$ is invariant with respect to any permutation of the laboratories. 
Moreover, the different tensor contractions can be represented as graphs which provides an additional way to understand them.
See Fig. \ref{ris} for a graph representation of $I_a$, $I_b$, $I_c$, and $I_d$.

For each way to pair up the tensor indices we consider $(2^6-2^3)/2+2^3=36$ ways to choose the $X$s as either $C$ or $ C\gamma^5$. There is $2^3$ such choices where for each lab both $X$s corresponding to the lab are identical and there is $2^6-2^3$ choices that are not of this kind. For the latter case however each such choice for $I_a$, $I_b$, $I_c$, and $I_d$ is equivalent to at least one other choice. This can be understood in terms of graph isomorphisms of the graph representations of $I_a$, $I_b$, $I_c$, and $I_d$ (See fig. \ref{ris}), and reduces the number of such choices that need to be considered to $(2^6-2^3)/2$.
Thus we can consider a total of $36\times 4=144$ ways to construct polynomials by this method. These 144 different tensor sandwich contractions were calculated and their linear independence tested. Due to the large number of terms in the polynomials they are not all given in a fully written out form, but a few selected examples are given fully written out in \ref{polllu}.
The resulting polynomials are not all linearly independent and not all unique but one can select a set of 67 linearly independent polynomials. In the following we use the abbreviated notation $C^5\equiv C\gamma^5$ in giving the formal expressions for the polynomials. 
The polynomials can be divided into subsets based on their properties under parity inversion P in the different labs. This is useful since a polynomial cannot be linearly dependent on any polynomial with different behaviour under parity inversion P.

\subsection{Polynomials invariant under P in all labs}
There are 32 polynomials that are tensor sandwich contractions of the form in Eq. (\ref{tens}) and also invariant under P in Alice's, Bob's and in Charlie's lab. These Lorentz invariants are not all linearly independent, but a set of 23 linearly independent polynomials can be selected. Thus the polynomials span a 23 dimensional space. The 32 polynomials are

\begin{eqnarray}
I_{2a}&=& C^5_{ij}C^5_{mk}C^5_{nl}\Psi_{jkl}\Psi_{pmn}C^5_{pr}C^5_{qs}C^5_{ut}\Psi_{rst}\Psi_{iqu},\nonumber\\
I_{2b}&=& C^5_{ij}C^5_{mk}C^5_{nl}\Psi_{jkl}\Psi_{ipn}C^5_{qr}C^5_{ps}C^5_{ut}\Psi_{rst}\Psi_{qmu},\nonumber\\
I_{2c}&=& C^5_{ij}C^5_{mk}C^5_{nl}\Psi_{jkl}\Psi_{imp}C^5_{qr}C^5_{us}C^5_{pt}\Psi_{rst}\Psi_{qun},\nonumber\\
I_{2d}&=& C^5_{ij}C^5_{mk}C^5_{nl}\Psi_{jkl}\Psi_{ipt}C^5_{qr}C^5_{ps}C^5_{ut}\Psi_{rsn}\Psi_{qmu},
\end{eqnarray}
\begin{eqnarray}
I_{3a}&=&C_{ij}C_{mk}C_{nl}\Psi_{jkl}\Psi_{qmn}C_{qr}C_{ps}C_{ut}\Psi_{rst}\Psi_{ipu},\nonumber\\
I_{3b}&=&C_{ij}C_{mk}C_{nl}\Psi_{jkl}\Psi_{ipn}C_{qr}C_{ps}C_{ut}\Psi_{rst}\Psi_{qmu},\nonumber\\
I_{3c}&=&C_{ij}C_{mk}C_{nl}\Psi_{jkl}\Psi_{imu}C_{qr}C_{ps}C_{ut}\Psi_{rst}\Psi_{qpn},\nonumber\\
I_{3d}&=&C_{ij}C_{mk}C_{nl}\Psi_{jkl}\Psi_{ipt}C_{qr}C_{ps}C_{ut}\Psi_{rsn}\Psi_{qmu},
\end{eqnarray}
\begin{eqnarray}
I_{4a}&=&C^5_{ij}C_{mk}C^5_{nl}\Psi_{jkl}\Psi_{pmn}C^5_{pr}C_{qs}C^5_{ut}\Psi_{rst}\Psi_{iqu},\nonumber\\
I_{4b}&=&C^5_{ij}C_{mk}C^5_{nl}\Psi_{jkl}\Psi_{ipn}C^5_{qr}C_{ps}C^5_{ut}\Psi_{rst}\Psi_{qmu},\nonumber\\
I_{4c}&=&C^5_{ij}C_{mk}C^5_{nl}\Psi_{jkl}\Psi_{imp}C^5_{qr}C_{us}C^5_{pt}\Psi_{rst}\Psi_{qun},\nonumber\\
I_{4d}&=&C^5_{ij}C_{mk}C^5_{nl}\Psi_{jkl}\Psi_{ipt}C^5_{qr}C_{ps}C^5_{ut}\Psi_{rsn}\Psi_{qmu},
\end{eqnarray}
\begin{eqnarray}
I_{5a}&=&C^5_{ij}C^5_{mk}C_{nl}\Psi_{jkl}\Psi_{pmn}C^5_{pr}C^5_{qs}C_{ut}\Psi_{rst}\Psi_{iqu},\nonumber\\
I_{5b}&=&C^5_{ij}C^5_{mk}C_{nl}\Psi_{jkl}\Psi_{ipn}C^5_{qr}C^5_{ps}C_{ut}\Psi_{rst}\Psi_{qmu},\nonumber\\
I_{5c}&=&C^5_{ij}C^5_{mk}C_{nl}\Psi_{jkl}\Psi_{imp}C^5_{qr}C^5_{us}C_{pt}\Psi_{rst}\Psi_{qun},\nonumber\\
I_{5d}&=&C^5_{ij}C^5_{mk}C_{nl}\Psi_{jkl}\Psi_{ipt}C^5_{qr}C^5_{ps}C_{ut}\Psi_{rsn}\Psi_{qmu},
\end{eqnarray}
\begin{eqnarray}
I_{6a}&=&C_{ij}C_{mk}C^5_{nl}\Psi_{jkl}\Psi_{pmn}C_{pr}C_{qs}C^5_{ut}\Psi_{rst}\Psi_{iqu},\nonumber\\
I_{6b}&=&C_{ij}C_{mk}C^5_{nl}\Psi_{jkl}\Psi_{ipn}C_{qr}C_{ps}C^5_{ut}\Psi_{rst}\Psi_{qmu},\nonumber\\
I_{6c}&=&C_{ij}C_{mk}C^5_{nl}\Psi_{jkl}\Psi_{imp}C_{qr}C_{us}C^5_{pt}\Psi_{rst}\Psi_{qun},\nonumber\\
I_{6d}&=&C_{ij}C_{mk}C^5_{nl}\Psi_{jkl}\Psi_{ipt}C_{qr}C_{ps}C^5_{ut}\Psi_{rsn}\Psi_{qmu},
\end{eqnarray}
\begin{eqnarray}
I_{7a}&=&C_{ij}C^5_{mk}C^5_{nl}\Psi_{jkl}\Psi_{pmn}C_{pr}C^5_{qs}C^5_{ut}\Psi_{rst}\Psi_{iqu},\nonumber\\
I_{7b}&=&C_{ij}C^5_{mk}C^5_{nl}\Psi_{jkl}\Psi_{ipn}C_{qr}C^5_{ps}C^5_{ut}\Psi_{rst}\Psi_{qmu},\nonumber\\
I_{7c}&=&C_{ij}C^5_{mk}C^5_{nl}\Psi_{jkl}\Psi_{imp}C_{qr}C^5_{us}C^5_{pt}\Psi_{rst}\Psi_{qun},\nonumber\\
I_{7d}&=&C_{ij}C^5_{mk}C^5_{nl}\Psi_{jkl}\Psi_{ipt}C_{qr}C^5_{ps}C^5_{ut}\Psi_{rsn}\Psi_{qmu},
\end{eqnarray}
\begin{eqnarray}
I_{8a}&=&C_{ij}C^5_{mk}C_{nl}\Psi_{jkl}\Psi_{pmn}C_{pr}C^5_{qs}C_{ut}\Psi_{rst}\Psi_{iqu},\nonumber\\
I_{8b}&=&C_{ij}C^5_{mk}C_{nl}\Psi_{jkl}\Psi_{ipn}C_{qr}C^5_{ps}C_{ut}\Psi_{rst}\Psi_{qmu},\nonumber\\
I_{8c}&=&C_{ij}C^5_{mk}C_{nl}\Psi_{jkl}\Psi_{imp}C_{qr}C^5_{us}C_{pt}\Psi_{rst}\Psi_{qun},\nonumber\\
I_{8d}&=&C_{ij}C^5_{mk}C_{nl}\Psi_{jkl}\Psi_{ipt}C_{qr}C^5_{ps}C_{ut}\Psi_{rsn}\Psi_{qmu},
\end{eqnarray}
\begin{eqnarray}
I_{9a}&=&C^5_{ij}C_{mk}C_{nl}\Psi_{jkl}\Psi_{pmn}C^5_{pr}C_{qs}C_{ut}\Psi_{rst}\Psi_{iqu},\nonumber\\
I_{9b}&=&C^5_{ij}C_{mk}C_{nl}\Psi_{jkl}\Psi_{ipn}C^5_{qr}C_{ps}C_{ut}\Psi_{rst}\Psi_{qmu},\nonumber\\
I_{9c}&=&C^5_{ij}C_{mk}C_{nl}\Psi_{jkl}\Psi_{imp}C^5_{qr}C_{us}C_{pt}\Psi_{rst}\Psi_{qun},\nonumber\\
I_{9d}&=&C^5_{ij}C_{mk}C_{nl}\Psi_{jkl}\Psi_{ipt}C^5_{qr}C_{ps}C_{ut}\Psi_{rsn}\Psi_{qmu}.
\end{eqnarray}
The linear dependence of the polynomials can be described by the 9 equations
\begin{eqnarray}\label{grunt}
0&=& I_{5b} - I_{5c} + I_{5d} + I_{8b} - I_{8c} +I_{8d}, \nonumber\\
0&=& I_{3b} - I_{3c}+ I_{3d}  + I_{9b} -  I_{9c}  +I_{9d},\nonumber\\
0&=& I_{5a}- I_{5c} - I_{5d} + I_{9a} - I_{9c}-I_{9d}, 
\end{eqnarray}
\begin{eqnarray}
0&=&I_{8d} + I_{8a} - I_{8b} + 2 I_{7a} - I_{7b} - I_{7c} + I_{6a} + I_{6b} - 
 2 I_{6c} + I_{4b} - I_{4c} + I_{4d},
\nonumber\\
0&=&I_{2a} + I_{2b} - 2 I_{2c} + I_{4a} + I_{4b} - 2 I_{4c} + I_{6a} + I_{6b} - 
 2 I_{6c} + I_{7a} + I_{7b} - 2 I_{7c},
\nonumber\\
0&=&2I_{2a}- I_{2b} -  I_{2c} +2 I_{4a} - I_{4b} -  I_{4c}  + 
 2 I_{5a} - I_{5b} - I_{5c} + 
 2 I_{9a} - I_{9b} - I_{9c},
\nonumber\\
0&=&I_{2a}-2 I_{2b} +  I_{2c}  + I_{5a} - 2 I_{5b} +  I_{5c} + 
  I_{7a} -2I_{7b} + I_{7c} + I_{8a} - 2 I_{8b} + I_{8c},
\nonumber\\
0&=&I_{3a} + I_{3b} - 2 I_{3c} + I_{5a} + I_{5b} - 2 I_{5c} + I_{8a} + I_{8b} - 
 2 I_{8c} + I_{9a} + I_{9b}- 2 I_{9c},
\nonumber\\
0&=&2 I_{3a} - I_{3b} - I_{3c} + 2 I_{6a} - I_{6b} - I_{6c} + 
 2 I_{7a} - I_{7b} - I_{7c} + 2 I_{8a} - I_{8b} - I_{8c}.
\end{eqnarray}

\subsection{Polynomials only invariant under P in Alice's and Charlie's labs}
There is 16 polynomials that are tensor sandwich contractions of the form in Eq. (\ref{tens}) and also invariant under P in Alice's and Charlie's lab, but not in Bob's lab. These Lorentz invariants are not all linearly independent, but a set of 8 linearly independent polynomials can be selected. Thus the polynomials span an 8 dimensional space. The 16 polynomials are

\begin{eqnarray}
I_{10a}&=& C^5_{ij}C^5_{mk}C^5_{nl}\Psi_{jkl}\Psi_{pmn}C^5_{pr}C_{qs}C^5_{ut}\Psi_{rst}\Psi_{iqu},\nonumber\\
I_{10b}&=&C^5_{ij}C^5_{mk}C^5_{nl}\Psi_{jkl}\Psi_{ipn}C^5_{qr}C_{ps}C^5_{ut}\Psi_{rst}\Psi_{qmu},\nonumber\\
I_{10c}&=&C^5_{ij}C^5_{mk}C^5_{nl}\Psi_{jkl}\Psi_{imp}C^5_{qr}C_{us}C^5_{pt}\Psi_{rst}\Psi_{qun},\nonumber\\
I_{10d}&=&C^5_{ij}C^5_{mk}C^5_{nl}\Psi_{jkl}\Psi_{ipt}C^5_{qr}C_{ps}C^5_{ut}\Psi_{rsn}\Psi_{qmu},
\end{eqnarray}
\begin{eqnarray}
 I_{16a}&=&C_{ij}C_{mk}C_{nl}\Psi_{jkl}\Psi_{pmn}C_{pr}C^5_{qs}C_{ut}\Psi_{rst}\Psi_{iqu},\nonumber\\
 I_{16b}&=&C_{ij}C_{mk}C_{nl}\Psi_{jkl}\Psi_{ipn}C_{qr}C^5_{ps}C_{ut}\Psi_{rst}\Psi_{qmu},\nonumber\\
I_{16c}&=&C_{ij}C_{mk}C_{nl}\Psi_{jkl}\Psi_{imp}C_{qr}C^5_{us}C_{pt}\Psi_{rst}\Psi_{qun},\nonumber\\
I_{16d}&=&C_{ij}C_{mk}C_{nl}\Psi_{jkl}\Psi_{ipt}C_{qr}C^5_{ps}C_{ut}\Psi_{rsn}\Psi_{qmu},
\end{eqnarray}
\begin{eqnarray}
 I_{17a}&=&C_{ij}C^5_{mk}C^5_{nl}\Psi_{jkl}\Psi_{pmn}C_{pr}C_{qs}C^5_{ut}\Psi_{rst}\Psi_{iqu},\nonumber\\
I_{17b}&=&C_{ij}C^5_{mk}C^5_{nl}\Psi_{jkl}\Psi_{ipn}C_{qr}C_{ps}C^5_{ut}\Psi_{rst}\Psi_{qmu},\nonumber\\
I_{17c}&=&C_{ij}C^5_{mk}C^5_{nl}\Psi_{jkl}\Psi_{imp}C_{qr}C_{us}C^5_{pt}\Psi_{rst}\Psi_{qun},\nonumber\\
I_{17d}&=&C_{ij}C^5_{mk}C^5_{nl}\Psi_{jkl}\Psi_{ipt}C_{qr}C_{ps}C^5_{ut}\Psi_{rsn}\Psi_{qmu},
\end{eqnarray}
\begin{eqnarray}
 I_{18a}&=&C^5_{ij}C^5_{mk}C_{nl}\Psi_{jkl}\Psi_{pmn}C^5_{pr}C_{qs}C_{ut}\Psi_{rst}\Psi_{iqu},\nonumber\\
I_{18b}&=&C^5_{ij}C^5_{mk}C_{nl}\Psi_{jkl}\Psi_{ipn}C^5_{qr}C_{ps}C_{ut}\Psi_{rst}\Psi_{qmu},\nonumber\\
I_{18c}&=&C^5_{ij}C^5_{mk}C_{nl}\Psi_{jkl}\Psi_{imp}C^5_{qr}C_{us}C_{pt}\Psi_{rst}\Psi_{qun},\nonumber\\
I_{18d}&=&C^5_{ij}C^5_{mk}C_{nl}\Psi_{jkl}\Psi_{ipt}C^5_{qr}C_{ps}C_{ut}\Psi_{rsn}\Psi_{qmu}.
\end{eqnarray}
The linear dependence of the polynomials can be described by the 8 equations
 \begin{eqnarray}
I_{16a} - I_{16c} - I_{16d}&=&0,\nonumber\\
I_{10a} - I_{10c} - I_{10d}&=&0,\nonumber\\
I_{18a} - I_{18c} - I_{18d}&=&0,\nonumber\\
I_{17a} - I_{17c} - I_{17d}&=&0,
 \end{eqnarray}
 \begin{eqnarray}\label{gnu}
2 I_{10a} - I_{10b} - I_{10c} + 
 2 I_{18a} - I_{18b} - I_{18c}&=&0,\nonumber\\
I_{10a} + I_{10b} - 2 I_{10c} + I_{17a} + I_{17b}- 2I_{17c}&=&0,\nonumber\\
I_{16a} + I_{16b} - 2 I_{16c} + I_{18a} + I_{18b}- 2I_{18c}&=&0,\nonumber\\
2 I_{16a} - I_{16b} - I_{16c} + 2 I_{17a} - I_{17b} - I_{17c}&=&0.
\end{eqnarray}

\subsection{Polynomials only invariant under P in Alice's and Bob's labs}
There is 16 polynomials that are tensor sandwich contractions of the form in Eq. (\ref{tens}) and also invariant under P in Alice's and Bob's lab, but not in Charlie's lab. These Lorentz invariants are not all linearly independent, but a set of 8 linearly independent polynomials can be selected. Thus the polynomials span an 8 dimensional space. The 16 polynomials are

\begin{eqnarray}
I_{19a}&=&C_{ij}C_{mk}C_{nl}\Psi_{jkl}\Psi_{pmn}C_{pr}C_{qs}C^5_{ut}\Psi_{rst}\Psi_{iqu},\nonumber\\
I_{19b}&=&C_{ij}C_{mk}C_{nl}\Psi_{jkl}\Psi_{ipn}C_{qr}C_{ps}C^5_{ut}\Psi_{rst}\Psi_{qmu},\nonumber\\
I_{19c}&=&C_{ij}C_{mk}C_{nl}\Psi_{jkl}\Psi_{imp}C_{qr}C_{us}C^5_{pt}\Psi_{rst}\Psi_{qun},\nonumber\\
I_{19d}&=&C_{ij}C_{mk}C_{nl}\Psi_{jkl}\Psi_{ipt}C_{qr}C_{ps}C^5_{ut}\Psi_{rsn}\Psi_{qmu},
\end{eqnarray}
\begin{eqnarray}
I_{20a}&=&C^5_{ij}C^5_{mk}C^5_{nl}\Psi_{jkl}\Psi_{pmn}C^5_{pr}C^5_{qs}C_{ut}\Psi_{rst}\Psi_{iqu},\nonumber\\
I_{20b}&=&C^5_{ij}C^5_{mk}C^5_{nl}\Psi_{jkl}\Psi_{ipn}C^5_{qr}C^5_{ps}C_{ut}\Psi_{rst}\Psi_{qmu},\nonumber\\
I_{20c}&=&C^5_{ij}C^5_{mk}C^5_{nl}\Psi_{jkl}\Psi_{imp}C^5_{qr}C^5_{us}C_{pt}\Psi_{rst}\Psi_{qun},\nonumber\\
I_{20d}&=&C^5_{ij}C^5_{mk}C^5_{nl}\Psi_{jkl}\Psi_{ipt}C^5_{qr}C^5_{ps}C_{ut}\Psi_{rsn}\Psi_{qmu},
\end{eqnarray}
\begin{eqnarray}
 I_{21a}&=&C^5_{ij}C_{mk}C_{nl}\Psi_{jkl}\Psi_{pmn}C^5_{pr}C_{qs}C^5_{ut}\Psi_{rst}\Psi_{iqu},\nonumber\\
I_{21b}&=&C^5_{ij}C_{mk}C_{nl}\Psi_{jkl}\Psi_{ipn}C^5_{qr}C_{ps}C^5_{ut}\Psi_{rst}\Psi_{qmu},\nonumber\\
 I_{21c}&=&C^5_{ij}C_{mk}C_{nl}\Psi_{jkl}\Psi_{imp}C^5_{qr}C_{us}C^5_{pt}\Psi_{rst}\Psi_{qun},\nonumber\\
I_{21d}&=&C^5_{ij}C_{mk}C_{nl}\Psi_{jkl}\Psi_{ipt}C^5_{qr}C_{ps}C^5_{ut}\Psi_{rsn}\Psi_{qmu},
\end{eqnarray}
\begin{eqnarray}
I_{22a}&=&C_{ij}C^5_{mk}C^5_{nl}\Psi_{jkl}\Psi_{pmn}C_{pr}C^5_{qs}C_{ut}\Psi_{rst}\Psi_{iqu},\nonumber\\
I_{22b}&=&C_{ij}C^5_{mk}C^5_{nl}\Psi_{jkl}\Psi_{ipn}C_{qr}C^5_{ps}C_{ut}\Psi_{rst}\Psi_{qmu},\nonumber\\
I_{22c}&=&C_{ij}C^5_{mk}C^5_{nl}\Psi_{jkl}\Psi_{imp}C_{qr}C^5_{us}C_{pt}\Psi_{rst}\Psi_{qun},\nonumber\\
I_{22d}&=&C_{ij}C^5_{mk}C^5_{nl}\Psi_{jkl}\Psi_{ipt}C_{qr}C^5_{ps}C_{ut}\Psi_{rsn}\Psi_{qmu}.
\end{eqnarray}
The linear dependence of the polynomials can be described by the 8 equations
 \begin{eqnarray}
I_{20a} - I_{20b} + I_{20d}&=&0,\nonumber\\
I_{19a} - I_{19b} + I_{19d}&=&0,\nonumber\\
I_{21a} - I_{21b} + I_{21d}&=&0,\nonumber\\
I_{22a} - I_{22b} +I_{22d}&=&0,
 \end{eqnarray}
 \begin{eqnarray}\label{anteloop}
 2I_{20a}- I_{20b}- I_{20c} +2I_{21a}- I_{21b}- I_{21c}&=&0,\nonumber\\
 I_{19a}- 2I_{19b}+I_{19c} +I_{21a}-2 I_{21b}+ I_{21c}&=&0,\nonumber\\
 I_{20a}-2 I_{20b}+ I_{20c} +I_{22a}-2 I_{22b}+ I_{22c}&=&0,\nonumber\\
 2I_{19a}- I_{19b}- I_{19c} +2I_{22a}- I_{22b}- I_{22c}&=&0.
 \end{eqnarray}

\subsection{Polynomials only invariant under P in Bob's and Charlie's labs }
There are 16 polynomials that are tensor sandwich contractions of the form in Eq. (\ref{tens}) and also invariant under P in Bob's and Charlie's lab, but not in Alice's lab. These Lorentz invariants are not all linearly independent, but a set of 8 linearly independent polynomials can be selected. Thus the polynomials span an 8 dimensional space. The 16 polynomials are

\begin{eqnarray}
I_{23a}&=& C_{ij}C_{mk}C_{nl}\Psi_{jkl}\Psi_{pmn}C^5_{pr}C_{qs}C_{ut}\Psi_{rst}\Psi_{iqu},\nonumber\\
I_{23b}&=&C_{ij}C_{mk}C_{nl}\Psi_{jkl}\Psi_{ipn}C^5_{qr}C_{ps}C_{ut}\Psi_{rst}\Psi_{qmu},\nonumber\\
I_{23c}&=&C_{ij}C_{mk}C_{nl}\Psi_{jkl}\Psi_{imp}C^5_{qr}C_{us}C_{pt}\Psi_{rst}\Psi_{qun},\nonumber\\
I_{23d}&=&C_{ij}C_{mk}C_{nl}\Psi_{jkl}\Psi_{ipt}C^5_{qr}C_{ps}C_{ut}\Psi_{rsn}\Psi_{qmu},
\end{eqnarray}
\begin{eqnarray}
 I_{24a}&=&C^5_{ij}C^5_{mk}C^5_{nl}\Psi_{jkl}\Psi_{pmn}C_{pr}C^5_{qs}C^5_{ut}\Psi_{rst}\Psi_{iqu},\nonumber\\
I_{24b}&=& C^5_{ij}C^5_{mk}C^5_{nl}\Psi_{jkl}\Psi_{ipn}C_{qr}C^5_{ps}C^5_{ut}\Psi_{rst}\Psi_{qmu},\nonumber\\
 I_{24c}&=&C^5_{ij}C^5_{mk}C^5_{nl}\Psi_{jkl}\Psi_{imp}C_{qr}C^5_{us}C^5_{pt}\Psi_{rst}\Psi_{qun},\nonumber\\
I_{24d}&=&C^5_{ij}C^5_{mk}C^5_{nl}\Psi_{jkl}\Psi_{ipt}C_{qr}C^5_{ps}C^5_{ut}\Psi_{rsn}\Psi_{qmu},
\end{eqnarray}
\begin{eqnarray}
 I_{25a}&=&C^5_{ij}C_{mk}C^5_{nl}\Psi_{jkl}\Psi_{pmn}C_{pr}C_{qs}C^5_{ut}\Psi_{rst}\Psi_{iqu},\nonumber\\
I_{25b}&=&C^5_{ij}C_{mk}C^5_{nl}\Psi_{jkl}\Psi_{ipn}C_{qr}C_{ps}C^5_{ut}\Psi_{rst}\Psi_{qmu},\nonumber\\
I_{25c}&=&C^5_{ij}C_{mk}C^5_{nl}\Psi_{jkl}\Psi_{imp}C_{qr}C_{us}C^5_{pt}\Psi_{rst}\Psi_{qun},\nonumber\\
I_{25d}&=&C^5_{ij}C_{mk}C^5_{nl}\Psi_{jkl}\Psi_{ipt}C_{qr}C_{ps}C^5_{ut}\Psi_{rsn}\Psi_{qmu},
\end{eqnarray}
\begin{eqnarray}
 I_{26a}&=&C^5_{ij}C^5_{mk}C_{nl}\Psi_{jkl}\Psi_{pmn}C_{pr}C^5_{qs}C_{ut}\Psi_{rst}\Psi_{iqu},\nonumber\\
I_{26b}&=&C^5_{ij}C^5_{mk}C_{nl}\Psi_{jkl}\Psi_{ipn}C_{qr}C^5_{ps}C_{ut}\Psi_{rst}\Psi_{qmu},\nonumber\\
I_{26c}&=&C^5_{ij}C^5_{mk}C_{nl}\Psi_{jkl}\Psi_{imp}C_{qr}C^5_{us}C_{pt}\Psi_{rst}\Psi_{qun},\nonumber\\
I_{26d}&=&C^5_{ij}C^5_{mk}C_{nl}\Psi_{jkl}\Psi_{ipt}C_{qr}C^5_{ps}C_{ut}\Psi_{rsn}\Psi_{qmu}.
 \end{eqnarray}
The linear dependence of the polynomials can be described by the 8 equations
 \begin{eqnarray}
I_{23b} - I_{23c} + I_{23d}&=&0,\nonumber\\
I_{24b} - I_{24c} + I_{24d}&=&0,\nonumber\\
I_{26b} - I_{26c} + I_{26d}&=&0,\nonumber\\
I_{25b} - I_{25c} + I_{25d}&=&0,
 \end{eqnarray}
 \begin{eqnarray}\label{zebra}
I_{23a} -2 I_{23b} + I_{23c} + 
  I_{25a} -2 I_{25b} + I_{25c}&=&0,\nonumber\\
I_{24a} -2 I_{24b} + I_{24c} + 
  I_{26a} -2 I_{26b} + I_{26c}&=&0,\nonumber\\
I_{24a} + I_{24b} - 2 I_{24c} + I_{25a} + I_{25b}- 2I_{25c}&=&0,\nonumber\\
I_{23a} + I_{23b} - 2 I_{23c} + I_{26a} + I_{26b}- 2I_{26c}&=&0.
\end{eqnarray}

\subsection{Polynomials only invariant under P in Bob's lab }
There is 16 different ways to construct polynomials as tensor sandwich contractions of the form in Eq. (\ref{tens}) that are invariant under P in Bob's lab, but not in Alice's and Charlie's labs. The resulting Lorentz invariants are not all linearly independent and not all unique, but a set of 5 linearly independent polynomials can be selected. Thus the polynomials span a 5 dimensional space. The 16 polynomials are

\begin{eqnarray}
I_{27a}&=& C_{ij}C_{mk}C_{nl}\Psi_{jkl}\Psi_{pmn}C^5_{pr}C_{qs}C^5_{ut}\Psi_{rst}\Psi_{iqu},\nonumber\\
I_{27b}&=&C_{ij}C_{mk}C_{nl}\Psi_{jkl}\Psi_{ipn}C^5_{qr}C_{ps}C^5_{ut}\Psi_{rst}\Psi_{qmu},\nonumber\\
I_{27c}&=& C_{ij}C_{mk}C_{nl}\Psi_{jkl}\Psi_{imp}C^5_{qr}C_{us}C^5_{pt}\Psi_{rst}\Psi_{qun},\nonumber\\
I_{27d}&=&C_{ij}C_{mk}C_{nl}\Psi_{jkl}\Psi_{ipt}C^5_{qr}C_{ps}C^5_{ut}\Psi_{rsn}\Psi_{qmu},
\end{eqnarray}
\begin{eqnarray}
I_{28a}&=& C^5_{ij}C^5_{mk}C^5_{nl}\Psi_{jkl}\Psi_{pmn}C_{pr}C^5_{qs}C_{ut}\Psi_{rst}\Psi_{iqu},\nonumber\\
I_{28b}&=&  C^5_{ij}C^5_{mk}C^5_{nl}\Psi_{jkl}\Psi_{ipn}C_{qr}C^5_{ps}C_{ut}\Psi_{rst}\Psi_{qmu},\nonumber\\
I_{28c}&=& C^5_{ij}C^5_{mk}C^5_{nl}\Psi_{jkl}\Psi_{imp}C_{qr}C^5_{us}C_{pt}\Psi_{rst}\Psi_{qun},\nonumber\\
I_{28d}&=&C^5_{ij}C^5_{mk}C^5_{nl}\Psi_{jkl}\Psi_{ipt}C_{qr}C^5_{ps}C_{ut}\Psi_{rsn}\Psi_{qmu},
\end{eqnarray}
\begin{eqnarray}
I_{29a}&=& C^5_{ij}C_{mk}C_{nl}\Psi_{jkl}\Psi_{pmn}C_{pr}C_{qs}C^5_{ut}\Psi_{rst}\Psi_{iqu},\nonumber\\
I_{29b}&=& C^5_{ij}C_{mk}C_{nl}\Psi_{jkl}\Psi_{ipn}C_{qr}C_{ps}C^5_{ut}\Psi_{rst}\Psi_{qmu},\nonumber\\
I_{29c}&=&  C^5_{ij}C_{mk}C_{nl}\Psi_{jkl}\Psi_{imp}C_{qr}C_{us}C^5_{pt}\Psi_{rst}\Psi_{qun},\nonumber\\
I_{29d}&=&C^5_{ij}C_{mk}C_{nl}\Psi_{jkl}\Psi_{ipt}C_{qr}C_{ps}C^5_{ut}\Psi_{rsn}\Psi_{qmu},
\end{eqnarray}
\begin{eqnarray}
I_{30a}&=& C^5_{ij}C^5_{mk}C_{nl}\Psi_{jkl}\Psi_{pmn}C_{pr}C^5_{qs}C^5_{ut}\Psi_{rst}\Psi_{iqu},\nonumber\\
I_{30b}&=&C^5_{ij}C^5_{mk}C_{nl}\Psi_{jkl}\Psi_{ipn}C_{qr}C^5_{ps}C^5_{ut}\Psi_{rst}\Psi_{qmu},\nonumber\\
I_{30c}&=& C^5_{ij}C^5_{mk}C_{nl}\Psi_{jkl}\Psi_{imp}C_{qr}C^5_{us}C^5_{pt}\Psi_{rst}\Psi_{qun},\nonumber\\
I_{30d}&=&C^5_{ij}C^5_{mk}C_{nl}\Psi_{jkl}\Psi_{ipt}C_{qr}C^5_{ps}C^5_{ut}\Psi_{rsn}\Psi_{qmu}.
\end{eqnarray}
The linear dependence of the polynomials can be described by the 11 equations

\begin{eqnarray}\label{hippo}
I_{27a} &=&I_{29a},\nonumber\\
I_{28a} &=& I_{30a},\nonumber\\
I_{27c}&=&I_{29c},\nonumber\\
I_{28c} &=&I_{30c},\nonumber\\
I_{27a}-I_{27b}&=& I_{29b} - I_{29c},\nonumber\\
 I_{28a}- I_{28b}&=& I_{30b} - I_{30c},\nonumber\\
 \frac{1}{2}( I_{27c}- I_{27a})=I_{27d} =- I_{28d}&=&- I_{29d}=I_{30d}=\frac{1}{2}(I_{28a}-I_{28c} ).
\end{eqnarray}
In particular we see that all the polynomials on the form $I_d$ are equal up to a sign. Note that the equalities $I_{27a} =I_{29a}$, $I_{28a} = I_{30a}$, $I_{27c} = I_{29c}$, and $I_{28c} = I_{30c}$ can be understood as graph isomorphisms of the graphs corresponding to the tensor contractions (See Fig. \ref{ris}).

\subsection{Polynomials only invariant under P in Charlie's lab }
There is 16 different ways to construct polynomials as tensor sandwich contractions of the form in Eq. (\ref{tens}) that are invariant under P in Charlie's lab, but not in Alice's and Bob's labs. The resulting Lorentz invariants are not all linearly independent and not all unique, but a set of 5 linearly independent polynomials can be selected. Thus the polynomials span a 5 dimensional space. The 16 polynomials are

\begin{eqnarray}
 I_{31a}&=&C_{ij}C_{mk}C_{nl}\Psi_{jkl}\Psi_{pmn}C^5_{pr}C^5_{qs}C_{ut}\Psi_{rst}\Psi_{iqu},\nonumber\\
I_{31b}&=&C_{ij}C_{mk}C_{nl}\Psi_{jkl}\Psi_{ipn}C^5_{qr}C^5_{ps}C_{ut}\Psi_{rst}\Psi_{qmu},\nonumber\\
I_{31c}&=&C_{ij}C_{mk}C_{nl}\Psi_{jkl}\Psi_{imp}C^5_{qr}C^5_{us}C_{pt}\Psi_{rst}\Psi_{qun},\nonumber\\
I_{31d}&=&C_{ij}C_{mk}C_{nl}\Psi_{jkl}\Psi_{ipt}C^5_{qr}C^5_{ps}C_{ut}\Psi_{rsn}\Psi_{qmu},
\end{eqnarray}
\begin{eqnarray}
 I_{32a}&=&C^5_{ij}C^5_{mk}C^5_{nl}\Psi_{jkl}\Psi_{pmn}C_{pr}C_{qs}C^5_{ut}\Psi_{rst}\Psi_{iqu},\nonumber\\
I_{32b}&=&C^5_{ij}C^5_{mk}C^5_{nl}\Psi_{jkl}\Psi_{ipn}C_{qr}C_{ps}C^5_{ut}\Psi_{rst}\Psi_{qmu},\nonumber\\
I_{32c}&=&C^5_{ij}C^5_{mk}C^5_{nl}\Psi_{jkl}\Psi_{imp}C_{qr}C_{us}C^5_{pt}\Psi_{rst}\Psi_{qun},\nonumber\\
I_{32d}&=&C^5_{ij}C^5_{mk}C^5_{nl}\Psi_{jkl}\Psi_{ipt}C_{qr}C_{ps}C^5_{ut}\Psi_{rsn}\Psi_{qmu},
\end{eqnarray}
\begin{eqnarray}
 I_{33a}&=&C_{ij}C^5_{mk}C^5_{nl}\Psi_{jkl}\Psi_{pmn}C^5_{pr}C_{qs}C^5_{ut}\Psi_{rst}\Psi_{iqu},\nonumber\\
 I_{33b}&=&C_{ij}C^5_{mk}C^5_{nl}\Psi_{jkl}\Psi_{ipn}C^5_{qr}C_{ps}C^5_{ut}\Psi_{rst}\Psi_{qmu},\nonumber\\
I_{33c}&= &C_{ij}C^5_{mk}C^5_{nl}\Psi_{jkl}\Psi_{imp}C^5_{qr}C_{us}C^5_{pt}\Psi_{rst}\Psi_{qun},\nonumber\\
I_{33d}&=&C_{ij}C^5_{mk}C^5_{nl}\Psi_{jkl}\Psi_{ipt}C^5_{qr}C_{ps}C^5_{ut}\Psi_{rsn}\Psi_{qmu},
\end{eqnarray}
\begin{eqnarray}
I_{34a}&=&C^5_{ij}C_{mk}C_{nl}\Psi_{jkl}\Psi_{pmn}C_{pr}C^5_{qs}C_{ut}\Psi_{rst}\Psi_{iqu},\nonumber\\
I_{34b}&=&C^5_{ij}C_{mk}C_{nl}\Psi_{jkl}\Psi_{ipn}C_{qr}C^5_{ps}C_{ut}\Psi_{rst}\Psi_{qmu},\nonumber\\
I_{34c}&=&C^5_{ij}C_{mk}C_{nl}\Psi_{jkl}\Psi_{imp}C_{qr}C^5_{us}C_{pt}\Psi_{rst}\Psi_{qun},\nonumber\\
I_{34d}&=&C^5_{ij}C_{mk}C_{nl}\Psi_{jkl}\Psi_{ipt}C_{qr}C^5_{ps}C_{ut}\Psi_{rsn}\Psi_{qmu}.
\end{eqnarray}
The linear dependence of the polynomials can be described by the 11 equations
\begin{eqnarray}\label{rhino}
I_{31a} &=& I_{34a},\nonumber\\
I_{32a} &=& I_{33a},\nonumber\\
I_{31b} &=& I_{34b},\nonumber\\
I_{32b} &=& I_{33b},\nonumber\\
I_{32c} - I_{32a}&=&  I_{33b} - I_{33c},\nonumber\\
I_{31c} - I_{31a}&=&  I_{34b} - I_{34c},\nonumber\\
\frac{1}{2}(I_{31a} - I_{31b})=I_{31d}=-I_{32d}&=&-I_{33d}=I_{34d}=\frac{1}{2}( I_{32b}- I_{32a} ).
\end{eqnarray}
In particular we see that all the polynomials on the form $I_d$ are equal up to a sign. 
Note that the equalities $I_{31a} =I_{34a}$, $I_{32a} = I_{33a}$, $I_{31b} = I_{34b}$, and $I_{32b} = I_{33b}$ can be understood as graph isomorphisms of the graphs corresponding to the tensor contractions (See Fig. \ref{ris}).

\subsection{Polynomials only invariant under P in Alice's lab }
There is 16 different ways to construct polynomials as tensor sandwich contractions of the form in Eq. (\ref{tens}) that are invariant under P in Alice's lab, but not in Bob's and Charlie's labs. The resulting Lorentz invariants are not all linearly independent and not all unique, but a set of 5 linearly independent polynomials can be selected. Thus the polynomials span a 5 dimensional space. The 16 polynomials are

\begin{eqnarray}
I_{35a}&=& C_{ij}C_{mk}C_{nl}\Psi_{jkl}\Psi_{pmn}C_{pr}C^5_{qs}C^5_{ut}\Psi_{rst}\Psi_{iqu},\nonumber\\
I_{35b}&=&C_{ij}C_{mk}C_{nl}\Psi_{jkl}\Psi_{ipn}C_{qr}C^5_{ps}C^5_{ut}\Psi_{rst}\Psi_{qmu},\nonumber\\
I_{35c}&=& C_{ij}C_{mk}C_{nl}\Psi_{jkl}\Psi_{imp}C_{qr}C^5_{us}C^5_{pt}\Psi_{rst}\Psi_{qun},\nonumber\\
I_{35d}&=&C_{ij}C_{mk}C_{nl}\Psi_{jkl}\Psi_{ipt}C_{qr}C^5_{ps}C^5_{ut}\Psi_{rsn}\Psi_{qmu},
\end{eqnarray}
\begin{eqnarray}
I_{36a}&=&  C^5_{ij}C^5_{mk}C^5_{nl}\Psi_{jkl}\Psi_{pmn}C^5_{pr}C_{qs}C_{ut}\Psi_{rst}\Psi_{iqu},\nonumber\\
I_{36b}&=& C^5_{ij}C^5_{mk}C^5_{nl}\Psi_{jkl}\Psi_{ipn}C^5_{qr}C_{ps}C_{ut}\Psi_{rst}\Psi_{qmu},\nonumber\\
I_{36c}&=&  C^5_{ij}C^5_{mk}C^5_{nl}\Psi_{jkl}\Psi_{imp}C^5_{qr}C_{us}C_{pt}\Psi_{rst}\Psi_{qun},\nonumber\\
I_{36d}&=&C^5_{ij}C^5_{mk}C^5_{nl}\Psi_{jkl}\Psi_{ipt}C^5_{qr}C_{ps}C_{ut}\Psi_{rsn}\Psi_{qmu},
\end{eqnarray}
\begin{eqnarray}
 I_{37a}&=&C^5_{ij}C^5_{mk}C_{nl}\Psi_{jkl}\Psi_{pmn}C^5_{pr}C_{qs}C^5_{ut}\Psi_{rst}\Psi_{iqu},\nonumber\\
I_{37b}&=&  C^5_{ij}C^5_{mk}C_{nl}\Psi_{jkl}\Psi_{ipn}C^5_{qr}C_{ps}C^5_{ut}\Psi_{rst}\Psi_{qmu},\nonumber\\
 I_{37c}&=&C^5_{ij}C^5_{mk}C_{nl}\Psi_{jkl}\Psi_{imp}C^5_{qr}C_{us}C^5_{pt}\Psi_{rst}\Psi_{qun},\nonumber\\
I_{37d}&=&C^5_{ij}C^5_{mk}C_{nl}\Psi_{jkl}\Psi_{ipt}C^5_{qr}C_{ps}C^5_{ut}\Psi_{rsn}\Psi_{qmu},
\end{eqnarray}
\begin{eqnarray}
I_{38a}&=& C_{ij}C^5_{mk}C_{nl}\Psi_{jkl}\Psi_{pmn}C_{pr}C_{qs}C^5_{ut}\Psi_{rst}\Psi_{iqu},\nonumber\\
I_{38b}&=&  C_{ij}C^5_{mk}C_{nl}\Psi_{jkl}\Psi_{ipn}C_{qr}C_{ps}C^5_{ut}\Psi_{rst}\Psi_{qmu},\nonumber\\
I_{38c}&=& C_{ij}C^5_{mk}C_{nl}\Psi_{jkl}\Psi_{imp}C_{qr}C_{us}C^5_{pt}\Psi_{rst}\Psi_{qun},\nonumber\\
I_{38d}&=&C_{ij}C^5_{mk}C_{nl}\Psi_{jkl}\Psi_{ipt}C_{qr}C_{ps}C^5_{ut}\Psi_{rsn}\Psi_{qmu}.
\end{eqnarray}
The linear dependence of the polynomials can be described by the 11 equations

\begin{eqnarray}\label{giraffe}
I_{35b}& =& I_{38b},\nonumber\\
I_{36b} &=& I_{37b},\nonumber\\
I_{35c} &=& I_{38c},\nonumber\\
I_{36c} &=& I_{37c},\nonumber\\
I_{35a} - I_{35c}&=& I_{38b}-I_{38a},\nonumber \\
I_{36a}- I_{36c}&= & I_{37b}-I_{37a}, \nonumber\\
\frac{1}{2}(I_{35b} - I_{35c})=I_{35d} =- I_{36d}&=&I_{37d}=- I_{38d}=\frac{1}{2}(  I_{37c} -I_{37b}).
\end{eqnarray}
In particular we see that all the polynomials on the form $I_d$ are equal up to a sign. 
Note that the equalities $I_{35b} =I_{38b}$, $I_{36b} = I_{37b}$, $I_{35c} = I_{38c}$, and $I_{36c} = I_{37c}$ can be understood as graph isomorphisms of the graphs corresponding to the tensor contractions (See Fig. \ref{ris}).

\subsection{Polynomials not invariant under P in any lab }
There is 16 different ways to construct polynomials as tensor sandwich contractions of the form in Eq. (\ref{tens}) that are not invariant under P in any lab. The resulting Lorentz invariants are not all linearly independent and not all unique, but a set of 5 linearly independent polynomials can be selected. Thus the polynomials span a 5 dimensional space. The 16 polynomials are

\begin{eqnarray}
I_{11a}&=&C^5_{ij}C^5_{mk}C^5_{nl}\Psi_{jkl}\Psi_{pmn}C_{pr}C_{qs}C_{ut}\Psi_{rst}\Psi_{iqu},\nonumber\\
I_{11b}&=&  C^5_{ij}C^5_{mk}C^5_{nl}\Psi_{jkl}\Psi_{ipn}C_{qr}C_{ps}C_{ut}\Psi_{rst}\Psi_{qmu},\nonumber\\
I_{11c}&=&  C^5_{ij}C^5_{mk}C^5_{nl}\Psi_{jkl}\Psi_{imp}C_{qr}C_{us}C_{pt}\Psi_{rst}\Psi_{qun},\nonumber\\
I_{11d}&=&C^5_{ij}C^5_{mk}C^5_{nl}\Psi_{jkl}\Psi_{ipt}C_{qr}C_{ps}C_{ut}\Psi_{rsn}\Psi_{qmu},
\end{eqnarray}
\begin{eqnarray}
I_{12a}&=&   C^5_{ij}C_{mk}C^5_{nl}\Psi_{jkl}\Psi_{pmn}C_{pr}C^5_{qs}C_{ut}\Psi_{rst}\Psi_{iqu},\nonumber\\
I_{12b}&=&  C^5_{ij}C_{mk}C^5_{nl}\Psi_{jkl}\Psi_{ipn}C_{qr}C^5_{ps}C_{ut}\Psi_{rst}\Psi_{qmu},\nonumber\\
I_{12c}&=&   C^5_{ij}C_{mk}C^5_{nl}\Psi_{jkl}\Psi_{imp}C_{qr}C^5_{us}C_{pt}\Psi_{rst}\Psi_{qun},\nonumber\\
I_{12d}&=&C^5_{ij}C_{mk}C^5_{nl}\Psi_{jkl}\Psi_{ipt}C_{qr}C^5_{ps}C_{ut}\Psi_{rsn}\Psi_{qmu},
\end{eqnarray}
\begin{eqnarray}
I_{14a}&=&  C^5_{ij}C^5_{mk}C_{nl}\Psi_{jkl}\Psi_{pmn}C_{pr}C_{qs}C^5_{ut}\Psi_{rst}\Psi_{iqu},\nonumber\\
I_{14b}&=&  C^5_{ij}C^5_{mk}C_{nl}\Psi_{jkl}\Psi_{ipn}C_{qr}C_{ps}C^5_{ut}\Psi_{rst}\Psi_{qmu},\nonumber\\
I_{14c}&=&  C^5_{ij}C^5_{mk}C_{nl}\Psi_{jkl}\Psi_{imp}C_{qr}C_{us}C^5_{pt}\Psi_{rst}\Psi_{qun},\nonumber\\
I_{14d}&=&C^5_{ij}C^5_{mk}C_{nl}\Psi_{jkl}\Psi_{ipt}C_{qr}C_{ps}C^5_{ut}\Psi_{rsn}\Psi_{qmu},
\end{eqnarray}
\begin{eqnarray}
I_{15a}&=&  C^5_{ij}C_{mk}C_{nl}\Psi_{jkl}\Psi_{pmn}C_{pr}C^5_{qs}C^5_{ut}\Psi_{rst}\Psi_{iqu},\nonumber\\
I_{15b}&=&   C^5_{ij}C_{mk}C_{nl}\Psi_{jkl}\Psi_{ipn}C_{qr}C^5_{ps}C^5_{ut}\Psi_{rst}\Psi_{qmu},\nonumber\\
I_{15c}&=&  C^5_{ij}C_{mk}C_{nl}\Psi_{jkl}\Psi_{imp}C_{qr}C^5_{us}C^5_{pt}\Psi_{rst}\Psi_{qun},\nonumber\\
I_{15d}&=&C^5_{ij}C_{mk}C_{nl}\Psi_{jkl}\Psi_{ipt}C_{qr}C^5_{ps}C^5_{ut}\Psi_{rsn}\Psi_{qmu}.
\end{eqnarray}
The linear dependence of the polynomials can be described by the 11 equations
\begin{eqnarray}
I_{14a}&=&I_{12a},\nonumber\\
I_{11a}&=&I_{15a},\nonumber\\
I_{12b}&=&I_{11b},\nonumber\\
I_{14b}&=&I_{15b},\nonumber\\
I_{14c}&=&I_{11c},\nonumber\\
I_{12c}&=&I_{15c},\nonumber\\
I_{11d}=-I_{12d} &=&-I_{14d}=I_{15d}, \nonumber\\
I_{11a}+I_{12a}=I_{11c}&+&I_{12c}=I_{11b}+I_{14b}.
\end{eqnarray}
In particular we see that all the polynomials on the form $I_d$ are equal up to a sign. 
Note that the equalities $I_{14a} =I_{12a}$, $I_{11a} = I_{15a}$, $I_{12b} = I_{11b}$, $I_{14b} = I_{15b}$, $I_{14c} = I_{11c}$, $I_{12c} = I_{15c}$, $I_{12d} = I_{14d}$, and $I_{11d} = I_{15d}$   can be understood as graph isomorphisms of the graphs corresponding to the tensor contractions (See Fig. \ref{ris}).

\subsection{Polynomials invariant under $\mathrm{U}(1)\times\mathrm{SL}(4,\mathbb{C}) $ up to a U(1) phase in one of the labs}
While the polynomials constructed as tensor sandwich contractions are designed to be invariant, up to a U(1) phase, under either $G^{C\gamma^5}$, $G^{C}$ or $G^{C}\cap G^{C\gamma^5}$ in any given lab, some linear combinations of the polynomials given above can be seen to be invariant, up to a U(1) phase, under both $G^{C}$ and $G^{C\gamma^5}$ in at least one lab.
For example the polynomial $I_{5b} - I_{5c} + I_{5d}$ is by construction invariant, up to a U(1) phase, under $G^{C\gamma^5}$ in Alice's lab and the polynomial $ I_{8b} - I_{8c} +I_{8d}$ is by construction invariant, up to a U(1) phase, under $G^{C}$ in Alice's lab, but the linear dependence relation $0= I_{5b} - I_{5c} + I_{5d} + I_{8b} - I_{8c} +I_{8d}$ given in Eq. (\ref{grunt}) implies that both the polynomial $I_{5b} - I_{5c} + I_{5d}$ and the polynomial $ I_{8b} - I_{8c} +I_{8d}$ are invariant, up to a U(1) phase, under both $G^{C}$ and $G^{C\gamma^5}$ in Alice's lab. Similar relations hold for many other linear combinations of the polynomials and other labs. Since the polynomials are continuous functions it follows that
any polynomial that is invariant, up to a U(1) phase, under both $G^{C}$ and $G^{C\gamma^5}$ is invariant, up to a U(1) phase, under the smallest Lie group that contains $G^{C}$ and $G^{C\gamma^5}$ as subgroups (See e.g. Ref. \cite{spinorent} Theorem 2 or  Ref. \cite{wall} Ch. 1.3.3.). This Lie group is $\mathrm{U}(1)\times\mathrm{SL}(4,\mathbb{C})$.
For the polynomials invariant under P in all labs we have the following twelve linear dependence relations

\begin{eqnarray}
I_{5b} - I_{5c} + I_{5d} + I_{8b} - I_{8c} +I_{8d}&=&0, \nonumber\\
I_{3b} - I_{3c}+ I_{3d}  + I_{9b} -  I_{9c}  +I_{9d}&=&0,\nonumber\\
I_{5a}- I_{5c} - I_{5d} + I_{9a} - I_{9c}-I_{9d}&=&0,\nonumber\\
I_{7a} - I_{7c} - I_{7d} - I_{6d} - I_{6c} + I_{6a}&=&0,\nonumber\\
I_{6b} - I_{6c} + I_{6d} + I_{4d} - I_{4c} + I_{4b}&=&0,\nonumber\\
I_{4a} - I_{4b} + I_{4d} + I_{9d} - I_{9b} + I_{9a}&=&0,\nonumber\\
I_{7a} - I_{7b} + I_{7d} + I_{8d} - I_{8b} + I_{8a}&=&0,\nonumber\\
I_{3a} - I_{3b} + I_{3d} + I_{6d} - I_{6b} + I_{6a}&=&0,\nonumber\\
I_{3a} - I_{3c} - I_{3d} - I_{8d} - I_{8c} + I_{8a}&=&0,\nonumber\\
I_{2a} - I_{2b} + I_{2d} + I_{5d} - I_{5b} + I_{5a}&=&0,\nonumber\\
I_{2a} - I_{2c} - I_{2d} - I_{4d} - I_{4c} + I_{4a}&=&0,\nonumber\\
I_{2b} - I_{2c} + I_{2d} + I_{7d} - I_{7c} + I_{7b}&=&0.
\end{eqnarray}
From these relations follows that the polynomials $I_{5b} - I_{5c} + I_{5d}$, $I_{8b} - I_{8c} +I_{8d}$, $I_{3b} - I_{3c}+ I_{3d}$, $I_{9b} -  I_{9c}  +I_{9d}$, $I_{6b} - I_{6c} + I_{6d}$, $I_{4d} - I_{4c} + I_{4b}$,  $I_{2b} - I_{2c} + I_{2d}$, and $ I_{7d} - I_{7c} + I_{7b}$ are invariant under $\mathrm{U}(1)\times\mathrm{SL}(4,\mathbb{C})$ in Alice's lab.
Moreover, the polynomials $I_{5a}- I_{5c} - I_{5d}$, $ I_{9a} - I_{9c}-I_{9d}$, $I_{7a} - I_{7c} - I_{7d}$, $- I_{6d} - I_{6c} + I_{6a}$,
$I_{3a} - I_{3c} - I_{3d}$, $- I_{8d} - I_{8c} + I_{8a}$, 
$I_{2a} - I_{2c} - I_{2d}$, and $- I_{4d} - I_{4c} + I_{4a}$ are invariant under $\mathrm{U}(1)\times\mathrm{SL}(4,\mathbb{C})$ in Bob's lab. 
Finally, the polynomials $I_{4a} - I_{4b} + I_{4d}$, $I_{9d} - I_{9b} + I_{9a}$, $I_{7a} - I_{7b} + I_{7d}$, $ I_{8d} - I_{8b} + I_{8a}$, $I_{3a} - I_{3b} + I_{3d}$, $I_{6d} - I_{6b} + I_{6a}$, $I_{2a} - I_{2b} + I_{2d}$, and $I_{5d} - I_{5b} + I_{5a}$
are invariant under $\mathrm{U}(1)\times\mathrm{SL}(4,\mathbb{C})$ in Charlie's lab.

For the polynomials only invariant under P in Alice's and Charlie's labs we can see from Eq. (\ref{gnu}) that the polynomials $I_{10a} + I_{10b} - 2 I_{10c}$, $ I_{17a} + I_{17b}- 2I_{17c}$, $I_{16a} + I_{16b} - 2 I_{16c}$, and $I_{18a} + I_{18b}- 2I_{18c}$
are invariant under $\mathrm{U}(1)\times\mathrm{SL}(4,\mathbb{C})$ in Alice's lab.
Moreover, the polynomials $2 I_{10a} - I_{10b} - I_{10c}$, $2 I_{18a} - I_{18b} - I_{18c}$, $2 I_{16a} - I_{16b} - I_{16c}$, and $ 2 I_{17a} - I_{17b} - I_{17c}$
are invariant under $\mathrm{U}(1)\times\mathrm{SL}(4,\mathbb{C})$ in Charlie's lab.

For the polynomials only invariant under P in Alice's and Bob's labs we can see from Eq. (\ref{anteloop}) that the polynomials $I_{19a}- 2I_{19b}+I_{19c}$, $ 2I_{21a}- I_{21b}- I_{21c}$, $I_{20a}-2 I_{20b}+ I_{20c} $, and $I_{22a}-2 I_{22b}+ I_{22c}$
are invariant under $\mathrm{U}(1)\times\mathrm{SL}(4,\mathbb{C})$ in Alice's lab.
Moreover, the polynomials $2I_{20a}- I_{20b}- I_{20c}$, $2I_{21a}- I_{21b}- I_{21c}$, $2I_{19a}- I_{19b}- I_{19c}$, and $2I_{22a}- I_{22b}- I_{22c}$
are invariant under $\mathrm{U}(1)\times\mathrm{SL}(4,\mathbb{C})$ in Bob's lab.

For the polynomials only invariant under P in Bob's and Charlie's labs we can see from Eq. (\ref{zebra}) that the polynomials $I_{24a} + I_{24b} - 2 I_{24c}$, $ I_{25a} + I_{25b}- 2I_{25c}$, $I_{23a} + I_{23b} - 2 I_{23c}$, and $ I_{26a} + I_{26b}- 2I_{26c}$
are invariant under $\mathrm{U}(1)\times\mathrm{SL}(4,\mathbb{C})$ in Bob's lab.
Moreover, the polynomials $I_{23a} -2 I_{23b} + I_{23c}$, $I_{25a} -2 I_{25b} + I_{25c}$, $I_{24a} -2 I_{24b} + I_{24c}$, and $ I_{26a} -2 I_{26b} + I_{26c}$
are invariant under $\mathrm{U}(1)\times\mathrm{SL}(4,\mathbb{C})$ in Charlie's lab.

For the polynomials only invariant under P in Bob's lab we can see from Eq. (\ref{hippo}) that the polynomial $I_{27d}$
is invariant under $\mathrm{U}(1)\times\mathrm{SL}(4,\mathbb{C})$ in Bob's lab.
Similarly, for the polynomials only invariant under P in Charlie's lab we can see from Eq. (\ref{rhino}) that the polynomial $I_{31d}$
is invariant under $\mathrm{U}(1)\times\mathrm{SL}(4,\mathbb{C})$ in Charlie's lab.
Finally, for the polynomials only invariant under P in Alice's lab we can see from Eq. (\ref{giraffe}) that the polynomial $I_{35d}$
is invariant under $\mathrm{U}(1)\times\mathrm{SL}(4,\mathbb{C})$ in Alice's lab.

\subsection{Weyl particles}
We can consider the case where Alice's, Bob's and Charlie's particles are Weyl particles, i.e., have a definite chirality. Then the shared state of the spinors is invariant under some combination of projections $P_L^A$ or $P_R^A$ by Alice, $P_L^B$ or $P_R^B$ by Bob and, $P_L^C$ or $P_R^C$ by Charlie and the Lorentz invariants on the form $I_a$, $I_b$ and $I_c$ reduce, up to a sign, to $128\tau$ where

\begin{eqnarray}\label{tau}
\tau=&& (\psi_{011}\psi_{ 100} -\psi_{ 010}\psi_{ 101} -\psi_{ 001}\psi_{ 110} +\psi_{ 000}\psi_{ 111})^2 \nonumber\\
&&- 4 (\psi_{ 001} \psi_{ 010} -\psi_{ 000 }\psi_{ 011}) (\psi_{ 101}\psi_{ 110} -\psi_{ 100}\psi_{ 111}).
\end{eqnarray}
The polynomial $\tau$ is the Coffman-Kundu-Wootters 3-tangle \cite{coffman}, which is equal to the Cayley hyperdeterminant \cite{cayley} of the sub-tensor of $\Psi^{ABC}$ defined by the elements with only indices equal to $0$ or $1$.

The reduction of the polynomials to a form where all state coefficients have only two spinor basis indices $0$ and $1$ is due to the symmetry $\psi_{jkl}=(-1)^{|LA|}\psi_{(j-2) kl}=(-1)^{|LB|}\psi_{j (k-2)l}=(-1)^{|LC|}\psi_{j k(l-2)}$ of the shared state, where $|LA|=1$ if the state is invariant under $P_L^A$ and otherwise zero, $|LB|=1$ if the state is invariant under $P_L^B$ and otherwise zero, $|LC|=1$ if the state is invariant under $P_L^C$ and otherwise zero and $j$,$k$, and $l$ are defined modulo 4. Thus, for the case of Weyl particles the polynomials on the form $I_{a}$, $I_{b}$, and $I_{c}$ become essentially equivalent to the Coffman-Kundu-Wootters 3-tangle.
 
The invariants on the form $I_{d}$ do not reduce to the 3-tangle when Alice's, Bob's and Charlie's particles are Weyl particles, but instead reduce to zero.
However, we can consider also a scenario where only two of the particles are Weyl particles. For example let Alice's and Bob's particles be Weyl particles i.e., let the shared state be invariant under some combination of projections $P_L^A$ or $P_R^A$ by Alice and $P_L^B$ or $P_R^B$ by Bob but no condition is imposed on Charlie's particle. Then the polynomials on the form $I_{d}$ that are invariant under P in Charlie's lab , i.e., $I_{2d},I_{3d},I_{4d},I_{5d},I_{6d},I_{7d},I_{8d},I_{9d},I_{10d}$ $,I_{16d},I_{17d},I_{18d},I_{23d},I_{24d},
I_{25d},I_{26d},I_{31d},I_{32d},I_{33d},$ and $ I_{34d}$ reduce to $64V$, up to a sign, where

\begin{eqnarray}
V=  
 &&(\psi_{ 003} \psi_{012}-\psi_{ 002 }\psi_{ 013})(\psi_{ 101}\psi_{ 110}-\psi_{ 100}\psi_{ 111})\nonumber\\ 
  &&+(\psi_{ 003}\psi_{ 011}-\psi_{ 001}\psi_{ 013})(\psi_{ 100}\psi_{ 112} -\psi_{ 102}\psi_{ 110})\nonumber \\ 
  &&+(\psi_{ 003}\psi_{ 010}-\psi_{ 000 }\psi_{ 013})(\psi_{ 102}\psi_{ 111} -\psi_{ 101}\psi_{ 112})\nonumber\\
   &&+ (\psi_{ 000 }\psi_{ 011}-\psi_{ 000 }\psi_{  010})(\psi_{ 102}\psi_{ 113}-\psi_{ 103}\psi_{ 112 })\nonumber\\
 &&+(\psi_{ 000 }\psi_{ 012}-\psi_{ 002 }\psi_{ 010})(\psi_{ 103}\psi_{ 111}-\psi_{ 101}\psi_{ 113})\nonumber\\
   &&+(\psi_{ 001}\psi_{ 012}-\psi_{ 002 }\psi_{ 011})(\psi_{ 100}\psi_{ 113 }   -\psi_{ 103}\psi_{ 110}),  
\end{eqnarray} 
while the remaining invariants on the form $I_{d}$ that are not invariant under P in Charlie's lab reduce to zero. The polynomial $V$, called the $2\times 2\times 4$ tangle, was described in Refs. \cite{verstraete,moor} and is invariant under $\mathrm{SL}(4,\mathbb{C})$ acting on Charlie's spinor. The cases where Bob's and Charlie's or Alice's and Charlie's particles are Weyl particles are completely analogous. For these cases the invariants on the form $I_{d}$ that are invariant under P in the lab that holds the non-chiral particle reduce to multiples of polynomials obtained from $V$ by permuting the state coefficient indices accordingly.

\subsection{Eigenspaces of the local Dirac Hamiltonians in the case of zero momenta and zero four-potentials}

We can consider the case of zero momentum and zero four-potential, sometimes described as the non-relativistic limit of a free particle. If Alice's, Bob's and Charlie's particles are all in this limit and also in an eigenstate of the local Dirac Hamiltonians the shared state is invariant under some combination of projections $P_+^A$ or $P_-^A$ by Alice, $P_+^B$ or $P_-^B$ by Bob, and $P_+^C$ or $P_-^C$ by Charlie. In this case only a $2\times 2\times 2$ subtensor of $\Psi^{ABC}$ is nonzero.
As a consequence the polynomials $I_{3a}$, $I_{3b}$ and $I_{3c}$ reduce, up to a sign and a relabelling of the indices, to $2\tau$ where $\tau$ is the Coffman-Kundu-Wootters 3-tangle \cite{coffman} given in Eq. (\ref{tau}). All other polynomials of degree 4 constructed above are zero for this case. In particular any polynomial obtained through tensor sandwich contractions where one or more contractions involve $C\gamma^5$ is zero in this case since $P_+C\gamma^5P_+=P_-C\gamma^5P_-=0$.

\subsection{Examples of tripartite spinor entangled states}
Here we consider a few examples of tripartite entangled states to illustrate how inequivalent forms of spinor entanglement are distinguished by the polynomials.
We can consider analogues of the tripartite entangled Greenberger-Horne-Zeilinger (GHZ)\cite{ghz} state for non-relativistic spin-$\frac{1}{2}$ particles.
One such state is
\begin{eqnarray}
\frac{1}{\sqrt{2}}({\phi_0^A}\otimes{\phi_0^B}\otimes{\phi_0^C}+{\phi_1^A}\otimes{\phi_1^B}\otimes{\phi_1^C}).
\end{eqnarray}
For this state $|I_{3a}|=|I_{3b}|=|I_{3c}|=1/2$, but all the other degree 4 polynomials are identically zero. Similarly we can construct a state
\begin{eqnarray}
\frac{1}{\sqrt{2}}({\phi_2^A}\otimes{\phi_2^B}\otimes{\phi_2^C}+{\phi_1^A}\otimes{\phi_1^B}\otimes{\phi_1^C}).
\end{eqnarray}
for which $|I_{2a}|=|I_{2b}|=|I_{2c}|=1/2$, but all the other degree 4 polynomials are identically zero. Furthermore, one can construct GHZ-like states that involve more than two basis spinors. For example
\begin{eqnarray}
\frac{1}{\sqrt{2}}({\phi_2^A}\otimes{\phi_0^B}\otimes{\phi_2^C}+{\phi_1^A}\otimes{\phi_1^B}\otimes{\phi_1^C}).
\end{eqnarray}
for which $|I_{4a}|=|I_{4b}|=|I_{4c}|=1/2$, but all the other degree 4 polynomials are identically zero, and
\begin{eqnarray}
\frac{1}{\sqrt{2}}({\phi_2^A}\otimes{\phi_0^B}\otimes{\phi_0^C}+{\phi_1^A}\otimes{\phi_3^B}\otimes{\phi_1^C}).
\end{eqnarray}
for which $|I_{5a}|=|I_{5b}|=|I_{5c}|=1/2$, but all the other degree 4 polynomials are identically zero. As for the above four examples, for any GHZ like state of this kind only one triplet of polynomials of the types $I_a,I_b$ and $I_c$ is non-zero, and these polynomials are invariant under P in all labs.

A generalization of the GHZ state with three terms is
\begin{eqnarray}
\frac{1}{{\sqrt{3}}}({\phi_0^A}\otimes{\phi_0^B}\otimes{\phi_0^C}+{\phi_1^A}\otimes{\phi_1^B}\otimes{\phi_1^C}+{\phi_2^A}\otimes{\phi_2^B}\otimes{\phi_2^C}).
\end{eqnarray}
For this state $|I_{2a}|=|I_{2b}|=|I_{2c}|=|I_{3a}|=|I_{3b}|=|I_{3c}|=2/9$ and $|I_{11a}|=|I_{11b}|=|I_{11c}|=|I_{15a}|=|I_{12b}|=|I_{14c}|=1/9$ while all other degree 4 polynomials are identically zero. 
Another generalization of the GHZ state with four terms is
\begin{eqnarray}
\frac{1}{{2}}({\phi_0^A}\otimes{\phi_0^B}\otimes{\phi_0^C}+{\phi_1^A}\otimes{\phi_1^B}\otimes{\phi_1^C}+{\phi_2^A}\otimes{\phi_2^B}\otimes{\phi_2^C}+{\phi_3^A}\otimes{\phi_3^B}\otimes{\phi_3^C}).
\end{eqnarray}
For this state $|I_{2a}|=|I_{2b}|=|I_{2c}|=|I_{3a}|=|I_{3b}|=|I_{3c}|=|I_{4b}|=|I_{5c}|=|I_{6c}|=|I_{7a}|=|I_{8b}|=|I_{9a}|=1/4$ while all other degree 4 polynomials are identically zero. 

An example of a state that is not on the GHZ form is
\begin{eqnarray}
\frac{1}{{2}}({\phi_0^A}\otimes{\phi_0^B}\otimes{\phi_0^C}+{\phi_1^A}\otimes{\phi_0^B}\otimes{\phi_1^C}+{\phi_1^A}\otimes{\phi_1^B}\otimes{\phi_0^C}+{\phi_2^A}\otimes{\phi_1^B}\otimes{\phi_1^C}).
\end{eqnarray}
For this state $|I_{23a}|=|I_{23b}|=|I_{23c}|=1/4$ but all other degree 4 polynomials are identically zero. 
This state was considered in Refs. \cite{miyake1,moor} although not in the context of Dirac spinors.
Another state not on the GHZ form is
\begin{eqnarray}
\frac{1}{{2}}({\phi_0^A}\otimes{\phi_0^B}\otimes{\phi_0^C}+{\phi_0^A}\otimes{\phi_1^B}\otimes{\phi_1^C}+{\phi_1^A}\otimes{\phi_0^B}\otimes{\phi_2^C}+{\phi_1^A}\otimes{\phi_1^B}\otimes{\phi_3^C}).
\end{eqnarray}
for which $|I_{3a}|=|I_{6a}|=1/2$ and $|I_{3c}|=|I_{3d}|=|I_{6c}|=|I_{6d}|=1/4$ but all other degree 4 polynomials are identically zero. This state was considered in Refs. \cite{miyake,moor} although not in the context of Dirac spinors.

We can see that the eight examples above belong to eight different entanglement classes that can be discriminated by the polynomial invariants.

\section{The case of four spinors}\label{four}
For four Dirac spinors the state coefficients can be arranged as a $4\times 4\times 4\times 4$ tensor $\Psi^{ABCD}$.
Transformations $S^A,S^B,S^C$, and $S^D$ acting locally on Alice's, Bob's, Charlie's and David's particle, respectively, are described as
\begin{eqnarray}
\Psi^{ABCD}_{ijkl}\to\sum_{mnop}S^A_{im}S^B_{jn}S^C_{ko}S^D_{lp}\Psi^{ABCD}_{mnop}.
\end{eqnarray}
Lorentz invariants of degree 2 can be constructed as tensor contractions of the form 
\begin{eqnarray}
\sum_{jklmnpqr}X_{nj}X_{pk}X_{ql}X_{rm}\Psi^{ABCD}_{jklm}\Psi^{ABCD}_{npqr}, 
\end{eqnarray}
involving two copies of $\Psi^{ABCD}$ where either $X=C$ or $X=C\gamma^5$ for each instance. Unlike the case of three particles the 16 degree 2 Lorentz invariants constructed this way are non-zero.
These 16 different polynomials were calculated and their linear independence tested. 
Due to the large number of terms in the polynomials they are not all given in a fully written out form, but a few selected examples are given fully written out in \ref{polllu}.
In writing these contractions we leave out the summation sign in the following, with the understanding that repeated indices are summed over. We also suppress the superscript $ABCD$ of $\Psi$ and use the abbreviated notation $C^5\equiv C\gamma^5$.

The 16 polynomials of degree 2 are linearly independent and given by

\begin{eqnarray}\label{ghj}
H_a&=&C_{nj}C_{pk}C_{ql}C_{rm}\Psi_{jklm}\Psi_{npqr},\nonumber\\
H_b&=&C^5_{nj}C^5_{pk}C^5_{ql}C^5_{rm}\Psi_{jklm}\Psi_{npqr},\nonumber\\
H_c&=&C_{nj}C_{pk}C_{ql}C^5_{rm}\Psi_{jklm}\Psi_{npqr},\nonumber\\
H_d&=&C_{nj}C_{pk}C^5_{ql}C_{rm}\Psi_{jklm}\Psi_{npqr},\nonumber\\
H_e&=&C^5_{nj}C_{pk}C_{ql}C_{rm}\Psi_{jklm}\Psi_{npqr},\nonumber\\
H_f&=&C_{nj}C^5_{pk}C_{ql}C_{rm}\Psi_{jklm}\Psi_{npqr},\nonumber\\
H_g&=&C_{nj}C_{pk}C^5_{ql}C^5_{rm}\Psi_{jklm}\Psi_{npqr},\nonumber\\
H_h&=&C^5_{nj}C_{pk}C_{ql}C^5_{rm}\Psi_{jklm}\Psi_{npqr},\nonumber\\
H_i&=&C_{nj}C^5_{pk}C_{ql}C^5_{rm}\Psi_{jklm}\Psi_{npqr},\nonumber\\
H_j&=&C^5_{nj}C_{pk}C^5_{ql}C_{rm}\Psi_{jklm}\Psi_{npqr},\nonumber\\
H_k&=&C_{nj}C^5_{pk}C^5_{ql}C_{rm}\Psi_{jklm}\Psi_{npqr},\nonumber\\
H_l&=&C^5_{nj}C^5_{pk}C_{ql}C_{rm}\Psi_{jklm}\Psi_{npqr},\nonumber\\
H_m&=&C^5_{nj}C^5_{pk}C^5_{ql}C_{rm}\Psi_{jklm}\Psi_{npqr},\nonumber\\
H_n&=&C^5_{nj}C^5_{pk}C_{ql}C^5_{rm}\Psi_{jklm}\Psi_{npqr},\nonumber\\
H_o&=&C_{nj}C^5_{pk}C^5_{ql}C^5_{rm}\Psi_{jklm}\Psi_{npqr},\nonumber\\
H_p&=&C^5_{nj}C_{pk}C^5_{ql}C^5_{rm}\Psi_{jklm}\Psi_{npqr}.
\end{eqnarray}

The next lowest degree is 4.
There are 13 ways to pair the tensor indices of four copies of $\Psi^{ABCD}$ to create polynomials that do not factorize into two degree 2 polynomials

\begin{eqnarray}\label{cvn}
W_a&=&X_{nj}X_{pk}X_{ql}X_{rm}\Psi_{jklm}\Psi_{uvqr} X_{uw}X_{vx}X_{sy}X_{tz}\Psi_{npyz}\Psi_{wxst},\nonumber\\
W_b&=&X_{nj}X_{pk}X_{ql}X_{rm}\Psi_{jklm}\Psi_{uvqr} X_{uw}X_{vx}X_{sy}X_{tz}\Psi_{nxyz}\Psi_{wpst},\nonumber\\
W_c&=&X_{nj}X_{pk}X_{ql}X_{rm}\Psi_{jklm}\Psi_{nvqz} X_{uw}X_{vx}X_{sy}X_{tz}\Psi_{upsr}\Psi_{wxyt },\nonumber\\
W_d&=&X_{nj}X_{pk}X_{ql}X_{rm}\Psi_{jklm}\Psi_{nvqz} X_{uw}X_{vx}X_{sy}X_{tz}\Psi_{uxsr}\Psi_{wpyt },\nonumber\\
W_e&=&X_{nj}X_{pk}X_{ql}X_{rm}\Psi_{jklm}\Psi_{npst} X_{uw}X_{vx}X_{sy}X_{tz}\Psi_{uvqz}\Psi_{wxyr},\nonumber\\
W_f&=&X_{nj}X_{pk}X_{ql}X_{rm}\Psi_{jklm}\Psi_{upqt} X_{uw}X_{vx}X_{sy}X_{tz}\Psi_{nvsr}\Psi_{wxyz},\nonumber\\
W_g&=&X_{nj}X_{pk}X_{ql}X_{rm}\Psi_{jklm}\Psi_{upqt} X_{uw}X_{vx}X_{sy}X_{tz}\Psi_{nvsz}\Psi_{wxyr},\nonumber\\
W_h&=&X_{nj}X_{pk}X_{ql}X_{rm}\Psi_{jklm}\Psi_{upsr} X_{uw}X_{vx}X_{sy}X_{tz}\Psi_{nvyt}\Psi_{wxqz},\nonumber\\
W_i&=&X_{nj}X_{pk}X_{ql}X_{rm}\Psi_{jklm}\Psi_{nvsr} X_{uw}X_{vx}X_{sy}X_{tz}\Psi_{upyt}\Psi_{wxqz},\nonumber\\
W_j&=&X_{nj}X_{pk}X_{ql}X_{rm}\Psi_{jklm}\Psi_{nvqr} X_{uw}X_{vx}X_{sy}X_{tz}\Psi_{uxyt}\Psi_{wpsz},\nonumber\\
W_k&=&X_{nj}X_{pk}X_{ql}X_{rm}\Psi_{jklm}\Psi_{upqr} X_{uw}X_{vx}X_{sy}X_{tz}\Psi_{nvst}\Psi_{wxyz},\nonumber\\
W_l&=&X_{nj}X_{pk}X_{ql}X_{rm}\Psi_{jklm}\Psi_{npsr} X_{uw}X_{vx}X_{sy}X_{tz}\Psi_{uvyz}\Psi_{wxqt},\nonumber\\
W_m&=&X_{nj}X_{pk}X_{ql}X_{rm}\Psi_{jklm}\Psi_{npqt} X_{uw}X_{vx}X_{sy}X_{tz}\Psi_{uvsr}\Psi_{wxyz}.
\end{eqnarray} 

We can categorize the tensor contractions in Eq. (\ref{cvn}) in terms of their permutation symmetries. The three contractions
$W_a$, $W_c$ and $W_f$ are each invariant with respect to permutations of two disjoint pairs of laboratories. The contraction $W_a$ is invariant with respect to permutations of AB and permutations of CD. The contraction $W_c$ is invariant with respect to permutations of AC and permutations of BD. The contraction $W_f$ is invariant with respect to permutations of AD and permutations of BC.
Six of the contractions $W_b$, $W_d$, $W_e$, $W_g$, $W_h$, and $W_i$ are each invariant with respect to permutations of a single pair of laboratories. The contraction $W_b$ is invariant with respect to permutations of CD while $W_d$ is invariant with respect to permutations of AC, $W_e$ with respect to  AB, $W_g$ with respect to BC, $W_h$ with respect to BD, and $W_i$ with respect to AD. The final four contractions $W_j$, $W_k$, $W_l$, and $W_m$ are each invariant with respect to permutations of a triple of laboratories.
The contraction $W_j$ is invariant with respect to permutations of ACD, while $W_k$ is invariant with respect to permutations of BCD, $W_l$ with respect to ABD, and $W_m$ with respect to ABC. 
Moreover, the different tensor sandwich contractions can be represented as graphs.
See \ref{graphs} for the graph representations of the tensor contractions in Eq. (\ref{cvn}).

For each pairing of tensor indices we can consider $(2^{8}-2^4)/2+2^4=136$ ways to choose the $X$s as either $C$ or $C^5$. There is $2^4$ choices where for each lab both $X$s corresponding to the lab are identical and there is $2^8-2^4$ choices that are not of this kind. For the latter case however each such choice for is equivalent to at least one other choice. This can be understood in terms of graph isomorphisms of the graph representations of the tensor contraction in Eq. (\ref{cvn}), and reduces the number of such choices that need to be considered to $(2^8-2^4)/2$. 
Thus we can consider a total of $136\times 13=1768$ ways to construct polynomials by this method. 

A complete list of polynomials will not be given here. Instead only a selection of 26 polynomials were computed and tested for linear dependence. Thirteen Lorentz invariant polynomials were constructed by choosing all $X$s in Eq. (\ref{cvn}) as $C$ and 13 polynomials were constructed by choosing all $X$s in Eq. (\ref{cvn}) as $C^5$.
These polynomials are invariant under P in all labs. When evaluating the possible linear dependencies of these polynomials on products of the degree 2 polynomials in Eq. (\ref{ghj}) we note that the square of any of these degree 2 polynomials is invariant under P in all labs while any product of two different degree 2 polynomials is not. Therefore only the squares need to be considered.
Evaluating the possible linear dependencies showed that the 26 degree 4 polynomials together with the 16 squares of the degree 2 polynomials form a 41 dimensional polynomial space. Thus there is a single equation describing the linear dependence. 
Due to the large number of terms in the polynomials they are not all given in a fully written out form, but two selected examples are given fully written out in \ref{polllu}.
The 26 polynomials of degree four are

\begin{eqnarray}\label{ffr}
T_a&=&C_{nj}C_{pk}C_{ql}C_{rm}\Psi_{jklm}\Psi_{uvqr} C_{uw}C_{vx}C_{sy}C_{tz}\Psi_{npyz}\Psi_{wxst},\nonumber\\
T_b&=&C_{nj}C_{pk}C_{ql}C_{rm}\Psi_{jklm}\Psi_{uvqr} C_{uw}C_{vx}C_{sy}C_{tz}\Psi_{nxyz}\Psi_{wpst},\nonumber\\
T_c&=&C_{nj}C_{pk}C_{ql}C_{rm}\Psi_{jklm}\Psi_{nvqz} C_{uw}C_{vx}C_{sy}C_{tz}\Psi_{upsr}\Psi_{wxyt },\nonumber\\
T_d&=&C_{nj}C_{pk}C_{ql}C_{rm}\Psi_{jklm}\Psi_{nvqz} C_{uw}C_{vx}C_{sy}C_{tz}\Psi_{uxsr}\Psi_{wpyt },\nonumber\\
T_e&=&C_{nj}C_{pk}C_{ql}C_{rm}\Psi_{jklm}\Psi_{npst} C_{uw}C_{vx}C_{sy}C_{tz}\Psi_{uvqz}\Psi_{wxyr},\nonumber\\
T_f&=&C_{nj}C_{pk}C_{ql}C_{rm}\Psi_{jklm}\Psi_{upqt} C_{uw}C_{vx}C_{sy}C_{tz}\Psi_{nvsr}\Psi_{wxyz},\nonumber\\
T_g&=&C_{nj}C_{pk}C_{ql}C_{rm}\Psi_{jklm}\Psi_{upqt} C_{uw}C_{vx}C_{sy}C_{tz}\Psi_{nvsz}\Psi_{wxyr},\nonumber\\
T_h&=&C_{nj}C_{pk}C_{ql}C_{rm}\Psi_{jklm}\Psi_{upsr} C_{uw}C_{vx}C_{sy}C_{tz}\Psi_{nvyt}\Psi_{wxqz},\nonumber\\
T_i&=&C_{nj}C_{pk}C_{ql}C_{rm}\Psi_{jklm}\Psi_{nvsr} C_{uw}C_{vx}C_{sy}C_{tz}\Psi_{upyt}\Psi_{wxqz},\nonumber\\
T_j&=&C_{nj}C_{pk}C_{ql}C_{rm}\Psi_{jklm}\Psi_{nvqr} C_{uw}C_{vx}C_{sy}C_{tz}\Psi_{uxyt}\Psi_{wpsz},\nonumber\\
T_k&=&C_{nj}C_{pk}C_{ql}C_{rm}\Psi_{jklm}\Psi_{upqr} C_{uw}C_{vx}C_{sy}C_{tz}\Psi_{nvst}\Psi_{wxyz},\nonumber\\
T_l&=&C_{nj}C_{pk}C_{ql}C_{rm}\Psi_{jklm}\Psi_{npsr} C_{uw}C_{vx}C_{sy}C_{tz}\Psi_{uvyz}\Psi_{wxqt},\nonumber\\
T_m&=&C_{nj}C_{pk}C_{ql}C_{rm}\Psi_{jklm}\Psi_{npqt} C_{uw}C_{vx}C_{sy}C_{tz}\Psi_{uvsr}\Psi_{wxyz},
\end{eqnarray} 
and
\begin{eqnarray}\label{ffr2}
Y_a&=&C_{nj}^5C_{pk}^5C_{ql}^5C_{rm}^5\Psi_{jklm}\Psi_{uvqr} C_{uw}^5C_{vx}^5C_{sy}^5C_{tz}^5\Psi_{npyz}\Psi_{wxst},\nonumber\\
Y_b&=&C_{nj}^5C_{pk}^5C_{ql}^5C_{rm}^5\Psi_{jklm}\Psi_{uvqr} C_{uw}^5C_{vx}^5C_{sy}^5C_{tz}^5\Psi_{nxyz}\Psi_{wpst},\nonumber\\
Y_c&=&C_{nj}^5C_{pk}^5C_{ql}^5C_{rm}^5\Psi_{jklm}\Psi_{nvqz} C_{uw}^5C_{vx}^5C_{sy}^5C_{tz}^5\Psi_{upsr}\Psi_{wxyt },\nonumber\\
Y_d&=&C_{nj}^5C_{pk}^5C_{ql}^5C_{rm}^5\Psi_{jklm}\Psi_{nvqz} C_{uw}^5C_{vx}^5C_{sy}^5C_{tz}^5\Psi_{uxsr}\Psi_{wpyt },\nonumber\\
Y_e&=&C_{nj}^5C_{pk}^5C_{ql}^5C_{rm}^5\Psi_{jklm}\Psi_{npst} C_{uw}^5C_{vx}^5C_{sy}^5C_{tz}^5\Psi_{uvqz}\Psi_{wxyr},\nonumber\\
Y_f&=&C_{nj}^5C_{pk}^5C_{ql}^5C_{rm}^5\Psi_{jklm}\Psi_{upqt} C_{uw}^5C_{vx}^5C_{sy}^5C_{tz}^5\Psi_{nvsr}\Psi_{wxyz},\nonumber\\
Y_g&=&C_{nj}^5C_{pk}^5C_{ql}^5C_{rm}^5\Psi_{jklm}\Psi_{upqt} C_{uw}^5C_{vx}^5C_{sy}^5C_{tz}^5\Psi_{nvsz}\Psi_{wxyr},\nonumber\\
Y_h&=&C_{nj}^5C_{pk}^5C_{ql}^5C_{rm}^5\Psi_{jklm}\Psi_{upsr} C_{uw}^5C_{vx}^5C_{sy}^5C_{tz}^5\Psi_{nvyt}\Psi_{wxqz},\nonumber\\
Y_i&=&C_{nj}^5C_{pk}^5C_{ql}^5C_{rm}^5\Psi_{jklm}\Psi_{nvsr} C_{uw}^5C_{vx}^5C_{sy}^5C_{tz}^5\Psi_{upyt}\Psi_{wxqz},\nonumber\\
Y_j&=&C_{nj}^5C_{pk}^5C_{ql}^5C_{rm}^5\Psi_{jklm}\Psi_{nvqr} C_{uw}^5C_{vx}^5C_{sy}^5C_{tz}^5\Psi_{uxyt}\Psi_{wpsz},\nonumber\\
Y_k&=&C_{nj}^5C_{pk}^5C_{ql}^5C_{rm}^5\Psi_{jklm}\Psi_{upqr} C_{uw}^5C_{vx}^5C_{sy}^5C_{tz}^5\Psi_{nvst}\Psi_{wxyz},\nonumber\\
Y_l&=&C_{nj}^5C_{pk}^5C_{ql}^5C_{rm}^5\Psi_{jklm}\Psi_{npsr} C_{uw}^5C_{vx}^5C_{sy}^5C_{tz}^5\Psi_{uvyz}\Psi_{wxqt},\nonumber\\
Y_m&=&C_{nj}^5C_{pk}^5C_{ql}^5C_{rm}^5\Psi_{jklm}\Psi_{npqt} C_{uw}^5C_{vx}^5C_{sy}^5C_{tz}^5\Psi_{uvsr}\Psi_{wxyz}.
\end{eqnarray} 
The equation describing the linear dependence is
\begin{align}
H_b^2 - H_a^2=2(& T_a -  T_b - T_c-  T_d -  T_e+  T_f  -  T_g 
  -  T_h -  T_i -  T_j -  T_k -  T_l - 
  T_m \nonumber\\&  -  Y_a +  Y_b  +  Y_c+  Y_d +  Y_e  -  Y_f +  Y_g +
  Y_h + Y_i +  Y_j + Y_k +  Y_l +  Y_m).
\end{align}
Note that the polynomial $H_a^2+2( T_a -  T_b - T_c-  T_d -  T_e+  T_f  -  T_g 
  -  T_h -  T_i  -  T_j -  T_k -  T_l - 
  T_m )$ is invariant, up to a U(1) phase, under $G^{C}$ in all labs and the polynomial $H_b^2+2(  Y_a -  Y_b  -  Y_c-  Y_d -  Y_e  +  Y_f -  Y_g -
  Y_h - Y_i -  Y_j - Y_k -  Y_l -  Y_m)$ is invariant, up to a U(1) phase, under $G^{C\gamma^5}$ in all labs. Therefore this equation implies that both these two polynomials are invariant, up to a U(1) phase, in all labs under the smallest Lie group that contains both $G^{C}$ and $G^{C\gamma^5}$ as subgroups (See Ref. \cite{spinorent} Theorem 2). The smallest Lie group that contains both $G^{C}$ and $G^{C\gamma^5}$ as subgroups is $\mathrm{U}(1)\times\mathrm{SL}(4,\mathbb{C})$. 

\subsection{Weyl particles}
If Alice's, Bob's, Charlie's and David's particles are Weyl particles the shared state is invariant under some combination of projections $P_L^A$ or $P_R^A$ by Alice, $P_L^B$ or $P_R^B$ by Bob, $P_L^C$ or $P_R^C$ by Charlie, and $P_L^D$ or $P_R^D$ by David. 
Thus the shared state of the spinors has the symmetry $\psi_{jklm}=(-1)^{|LA|}\psi_{(j-2) klm}=(-1)^{|LB|}\psi_{j (k-2)lm}=(-1)^{|LC|}\psi_{j k(l-2)m}=(-1)^{|LD|}\psi_{ jkl(m-2)}$, where $|LA|=1$ if the state is invariant under $P_L^A$ and otherwise zero, $|LB|=1$ if the state is invariant under $P_L^B$ and otherwise zero, $|LC|=1$ if the state is invariant under $P_L^C$ and otherwise zero, $|LD|=1$ if the state is invariant under $P_L^D$ and otherwise zero, and $j$,$k$,$l$, and $m$ are defined modulo 4.

In this case all the Lorentz invariants of degree 2 reduce to $16H$, up to a sign, where

\begin{eqnarray}\label{hpo}
H= &&\psi_{0000}\psi_{1111}-\psi_{0111}\psi_{1000} +\psi_{0110}\psi_{1001} +\psi_{0101}\psi_{1010} \nonumber\\
&&-\psi_{0100}\psi_{1011} + \psi_{ 0011}\psi_{1100} -\psi_{0010}\psi_{1101} -\psi_{0001}\psi_{1110},
\end{eqnarray}
 is the degree 2 polynomial local $\mathrm{SL}(2,\mathbb{C})$ invariant for the case of four non-relativistic spin-$\frac{1}{2}$ particles described by Luque and Thibon in Ref. \cite{luque}, by Verstraete, Dehaene and De Moor in Ref. \cite{moor} and by Wong and Christensen in Ref. \cite{wong}. 
Reference \cite{luque} also describe two degree 4 invariants $L$ and $M$ for the same scenario.
These are given by

\begin{eqnarray}\label{lpo}
L=&&\psi_{ 0011}\psi_{ 0110}\psi_{ 1001}\psi_{ 1100} -\psi_{ 0010}\psi_{ 0111}\psi_{1001}\psi_{ 1100}\nonumber\\&& - 
\psi_{ 0011}\psi_{ 0101}\psi_{ 1010}\psi_{ 1100 }+\psi_{ 0001}\psi_{ 0111}\psi_{ 1010}\psi_{ 1100 }\nonumber\\&& + 
\psi_{ 0010}\psi_{ 0101}\psi_{ 1011}\psi_{ 1100} -\psi_{ 0001}\psi_{ 0110}\psi_{ 1011}\psi_{ 1100 }\nonumber\\&&- 
\psi_{ 0011}\psi_{ 0110}\psi_{ 1000}\psi_{ 1101} +\psi_{ 0010}\psi_{ 0111}\psi_{ 1000}\psi_{ 1101}\nonumber\\&& + 
\psi_{ 0011}\psi_{ 0100}\psi_{ 1010}\psi_{ 1101} -\psi_{ 0000}\psi_{ 0111}\psi_{ 1010}\psi_{ 1101}\nonumber\\&& - 
\psi_{ 0010}\psi_{ 0100}\psi_{ 1011}\psi_{ 1101} +\psi_{ 0000}\psi_{ 0110}\psi_{ 1011}\psi_{ 1101}\nonumber\\&& + 
\psi_{ 0011}\psi_{ 0101}\psi_{ 1000}\psi_{ 1110} -\psi_{ 0001}\psi_{ 0111}\psi_{ 1000}\psi_{ 1110}\nonumber\\&& - 
\psi_{ 0011}\psi_{ 0100}\psi_{ 1001}\psi_{ 1110} +\psi_{ 0000}\psi_{ 0111}\psi_{ 1001}\psi_{ 1110}\nonumber\\&& + 
\psi_{ 0001}\psi_{ 0100}\psi_{ 1011}\psi_{ 1110} -\psi_{ 0000}\psi_{ 0101}\psi_{ 1011}\psi_{ 1110}\nonumber\\&& - 
\psi_{ 0010}\psi_{ 0101}\psi_{ 1000}\psi_{ 1111} +\psi_{ 0001}\psi_{ 0110}\psi_{ 1000}\psi_{ 1111}\nonumber\\&& + 
\psi_{ 0010}\psi_{ 0100}\psi_{ 1001}\psi_{ 1111} -\psi_{ 0000}\psi_{ 0110}\psi_{ 1001}\psi_{ 1111}\nonumber\\&& - 
\psi_{ 0001}\psi_{ 0100}\psi_{ 1010}\psi_{ 1111} +\psi_{ 0000}\psi_{ 0101}\psi_{ 1010}\psi_{ 1111},
\end{eqnarray}
and 
 \begin{eqnarray}\label{mpo}
M =&& -\psi_{ 0101}\psi_{ 0110}\psi_{ 1001}\psi_{ 1010} +\psi_{ 0100}\psi_{ 0111}\psi_{ 1001}\psi_{ 1010}\nonumber\\&& + 
 \psi_{ 0101}\psi_{ 0110}\psi_{ 1000}\psi_{ 1011} -\psi_{ 0100}\psi_{ 0111}\psi_{ 1000}\psi_{ 1011}\nonumber\\&& + 
 \psi_{ 0011}\psi_{ 0101}\psi_{ 1010}\psi_{ 1100} -\psi_{ 0001}\psi_{ 0111}\psi_{ 1010}\psi_{ 1100}\nonumber\\&& - 
\psi_{  0010}\psi_{ 0101}\psi_{ 1011}\psi_{ 1100} +\psi_{ 0000}\psi_{ 0111}\psi_{ 1011}\psi_{ 1100}\nonumber\\&& - 
 \psi_{ 0011}\psi_{ 0100}\psi_{ 1010}\psi_{ 1101} +\psi_{ 0001}\psi_{ 0110}\psi_{ 1010}\psi_{ 1101}\nonumber\\&& + 
\psi_{  0010}\psi_{ 0100}\psi_{ 1011}\psi_{ 1101} -\psi_{ 0000}\psi_{ 0110}\psi_{ 1011}\psi_{ 1101}\nonumber\\&& - 
\psi_{  0011}\psi_{ 0101}\psi_{ 1000}\psi_{ 1110} +\psi_{ 0001}\psi_{ 0111}\psi_{ 1000}\psi_{ 1110}\nonumber\\&& + 
\psi_{  0010}\psi_{ 0101}\psi_{ 1001}\psi_{ 1110} -\psi_{ 0000}\psi_{ 0111}\psi_{ 1001}\psi_{ 1110 }\nonumber\\&&- 
\psi_{  0001}\psi_{ 0010}\psi_{ 1101}\psi_{ 1110} +\psi_{ 0000}\psi_{ 0011}\psi_{ 1101}\psi_{ 1110} \nonumber\\&&+ 
\psi_{  0011}\psi_{ 0100}\psi_{ 1000}\psi_{ 1111} -\psi_{ 0001}\psi_{ 0110}\psi_{ 1000}\psi_{ 1111 }\nonumber\\&&- 
\psi_{  0010}\psi_{ 0100}\psi_{ 1001}\psi_{ 1111} +\psi_{ 0000}\psi_{ 0110}\psi_{ 1001}\psi_{ 1111 }\nonumber\\&&+ 
\psi_{  0001}\psi_{ 0010}\psi_{ 1100}\psi_{ 1111} -\psi_{ 0000}\psi_{ 0011}\psi_{ 1100}\psi_{ 1111}. 
 \end{eqnarray}
The polynomials $L,M,$ and $H^2$ are linearly independent.

Much like the degree 2 invariants all reduce to a multiple of $H$ for the case of Weyl particles, the degree 4 invariants reduce to linear combinations of $H^2,L,$ and $M$.
The polynomials $T_a$ and $Y_a$ reduce to $512(H^2  - 2 L - 4 M)$, the polynomials $T_c$ and $Y_c$ reduce to $512( -H^2 - 4 L - 2  M)$ while $T_f$ and $Y_f$ reduce to $512 (H^2-2L +2M)$.
The polynomials $T_b,T_e,Y_b$ and $Y_e$ reduce to $1024 (-L - 2 M)$, the polynomials
$T_d,T_h,Y_d$ and $Y_h$ reduce to $1024 (2L +  M)$, while $T_g,T_i,Y_g$ and $Y_i$ reduce to $1024 (-L +M)$. The polynomials $T_j,T_k,T_l,T_m,Y_j,Y_k,Y_l$, and $Y_m$ reduce to $512H^2$.

\subsection{Eigenspaces of the local Dirac Hamiltonians in the case of zero momenta and zero four-potentials}

We can consider the case of zero momentum and zero four-potential, sometimes described as the non-relativistic limit of a free particle. If Alice's, Bob's, Charlie's and David's particles are all in this limit and also in an eigenstate of the local Dirac Hamiltonians the shared state is invariant under some combination of projections $P_+^A$ or $P_-^A$ by Alice, $P_+^B$ or $P_-^B$ by Bob, $P_+^C$ or $P_-^C$ by Charlie, and $P_+^D$ or $P_-^D$ by David. In this case only a $2\times 2\times 2 \times 2$ subtensor of $\Psi^{ABCD}$ is nonzero.

For eigenspaces of the local Dirac Hamiltonians in this limit the polynomial $H_a$ reduces up to a relabelling of the indices to $2H$ where $H$ is the polynomial given in Eq. (\ref{hpo}). All the other degree 2 polynomials given in Eq. (\ref{ghj}) reduce to zero. This follows since $P_+C\gamma^5P_+=P_-C\gamma^5P_-=0$.
The degree 4 polynomials given in Eq. (\ref{ffr}) reduce up to a relabelling of the indices to linear combinations of $H^2$, $L$ and $M$ where $L$ is given in Eq. (\ref{lpo}) and $M$ is given in Eq. (\ref{mpo}).
The polynomial $T_a$ reduces up to a relabelling of the indices to $2 H^2- 4 L - 8 M $. 
The polynomial $T_c$ reduces up to a relabelling of the indices to $- 2 H^2 - 8 L - 4 M$.
The polynomial $T_f$ reduces up to a relabelling of the indices to $2 H^2 -4 L+ 4 M  $.
The polynomials $T_b$ and $T_e$ reduce up to a relabelling of the indices to $ - 4 L- 8 M$.
The polynomials $T_d$ and $T_h$ reduce up to a relabelling of the indices to $8 L+ 4 M $.
The polynomials $T_g$ and $T_i$ reduce up to a relabelling of the indices to $-4 L+ 4 M $.
Finally, the polynomials $T_j$, $T_k$, $T_l$ and $T_m$ reduce up to a relabelling of the indices to $ 2 H^2$.
Since we have that $P_+C\gamma^5P_+=P_-C\gamma^5P_-=0$ all polynomials of degree 4 given in Eq. (\ref{ffr2}) reduce to zero. Moreover, this property of $C\gamma^5$ implies that the polynomials in Eq. (\ref{ffr}) are the only polynomials on the form given in  Eq. (\ref{cvn}) that do not reduce to zero in this case.

\subsection{Examples of fourpartite spinor entangled states}

Here we consider a few examples of fourpartite entangled states to illustrate how inequivalent forms of spinor entanglement are distinguished by the polynomials.
We can consider analogues of the four-partite entangled Greenberger-Horne-Zeilinger (GHZ)\cite{ghz} state for non-relativistic spin-$\frac{1}{2}$ particles.
One such state is
\begin{eqnarray}
\frac{1}{\sqrt{2}}({\phi_0^A}\otimes{\phi_0^B}\otimes{\phi_0^C}\otimes{\phi_0^D}+{\phi_1^A}\otimes{\phi_1^B}\otimes{\phi_1^C}\otimes{\phi_1^D}).
\end{eqnarray}
For this state $|H_{a}|=1$, but all the other degree 2 polynomials are identically zero and $|T_a|=|T_c|=|T_f|=|T_j|=|T_k|=|T_l|=|T_m|=1/2$, while the other degree 4 polynomials in Eq. (\ref{ffr}) and in  Eq. (\ref{ffr2}) are zero.

Similarly we can construct a state
\begin{eqnarray}
\frac{1}{\sqrt{2}}({\phi_2^A}\otimes{\phi_2^B}\otimes{\phi_2^C}\otimes{\phi_2^D}+{\phi_1^A}\otimes{\phi_1^B}\otimes{\phi_1^C}\otimes{\phi_1^D}).
\end{eqnarray}
for which $|H_{b}|=1$, but all the other degree 2 polynomials are identically zero and $|Y_a|=|Y_c|=|Y_f|=|Y_j|=|Y_k|=|Y_l|=|Y_m|=1/2$, while the other degree 4 polynomials in Eq. (\ref{ffr}) and in  Eq. (\ref{ffr2}) are zero.

 Furthermore, one can construct analogues of the GHZ state that involve more than two basis spinors. One example is
\begin{eqnarray}
\frac{1}{\sqrt{2}}({\phi_0^A}\otimes{\phi_1^B}\otimes{\phi_1^C}\otimes{\phi_3^D}+{\phi_1^A}\otimes{\phi_0^B}\otimes{\phi_0^C}\otimes{\phi_0^D}).
\end{eqnarray}
for which $|H_{c}|=1$, but all the other degree 2 polynomials are identically zero and all the degree 4 polynomials in Eq. (\ref{ffr}) and in  Eq. (\ref{ffr2}) are zero. Another example is
\begin{eqnarray}
\frac{1}{\sqrt{2}}({\phi_0^A}\otimes{\phi_1^B}\otimes{\phi_3^C}\otimes{\phi_3^D}+{\phi_1^A}\otimes{\phi_0^B}\otimes{\phi_0^C}\otimes{\phi_2^D}).
\end{eqnarray}
for which $|H_{d}|=1$, but all the other degree 2 polynomials are identically zero and all the degree 4 polynomials in Eq. (\ref{ffr}) and in  Eq. (\ref{ffr2}) are zero. As is the case for the four examples given here, for any GHZ like state of this kind only one of the polynomials of degree 2 is nonzero.
 
An example of a generalized GHZ state with three terms is

\begin{eqnarray}
\frac{1}{\sqrt{3}}({\phi_0^A}\otimes{\phi_0^B}\otimes{\phi_0^C}\otimes{\phi_0^D}+{\phi_1^A}\otimes{\phi_1^B}\otimes{\phi_1^C}\otimes{\phi_1^D}+{\phi_2^A}\otimes{\phi_2^B}\otimes{\phi_2^C}\otimes{\phi_2^D}).
\end{eqnarray}
for which $|H_{a}|=|H_{b}|=2/3$, but all the other degree 2 polynomials are identically zero, and $|T_a|=|T_c|=|T_f|=|T_j|=|T_k|=|T_l|=|T_m|=|Y_a|=|Y_c|=|Y_f|=|Y_j|=|Y_k|=|Y_l|=|Y_m|=2/9$ while the other degree 4 polynomials in Eq. (\ref{ffr}) and in  Eq. (\ref{ffr2}) are zero. A generalized GHZ state with four terms is

\begin{align}
\frac{1}{{2}}(&{\phi_0^A}\otimes{\phi_0^B}\otimes{\phi_0^C}\otimes{\phi_0^D}+{\phi_1^A}\otimes{\phi_1^B}\otimes{\phi_1^C}\otimes{\phi_1^D}\nonumber\\&+{\phi_2^A}\otimes{\phi_2^B}\otimes{\phi_2^C}\otimes{\phi_2^D}+{\phi_3^A}\otimes{\phi_3^B}\otimes{\phi_3^C}\otimes{\phi_3^D}).
\end{align}
for which $|H_{a}|=|H_{b}|=1$, but all the other degree 2 polynomials are identically zero, and $|T_a|=|T_c|=|T_f|=|T_j|=|T_k|=|T_l|=|T_m|=|Y_a|=|Y_c|=|Y_f|=|Y_j|=|Y_k|=|Y_l|=|Y_m|=1/4$ while the other degree 4 polynomials in Eq. (\ref{ffr}) and in  Eq. (\ref{ffr2}) are zero.

An example of a state not on the GHZ form is the analogue of the so called cluster state \cite{briegel}
\begin{align}
\frac{1}{{2}}(&{\phi_0^A}\otimes{\phi_0^B}\otimes{\phi_0^C}\otimes{\phi_0^D}-{\phi_1^A}\otimes{\phi_1^B}\otimes{\phi_1^C}\otimes{\phi_1^D}\nonumber\\&+{\phi_0^A}\otimes{\phi_0^B}\otimes{\phi_1^C}\otimes{\phi_1^D}+{\phi_1^A}\otimes{\phi_1^B}\otimes{\phi_0^C}\otimes{\phi_0^D}).
\end{align}
For this state all the degree 2 polynomials are zero and $|T_a|=|T_b|=|T_e|=1/2$, and $|T_c|=|T_d|=|T_f|=|T_g|=|T_h|=|T_i|=1/4$ while the other degree 4 polynomials in Eq. (\ref{ffr}) and in  Eq. (\ref{ffr2}) are zero.

Another analogue of the cluster state is
\begin{align}
\frac{1}{{2}}(&{\phi_0^A}\otimes{\phi_0^B}\otimes{\phi_0^C}\otimes{\phi_0^D}-{\phi_3^A}\otimes{\phi_3^B}\otimes{\phi_3^C}\otimes{\phi_3^D}\nonumber\\&+{\phi_0^A}\otimes{\phi_0^B}\otimes{\phi_3^C}\otimes{\phi_3^D}+{\phi_3^A}\otimes{\phi_3^B}\otimes{\phi_0^C}\otimes{\phi_0^D}).
\end{align}
For this state all the degree 2 polynomials are zero and $|Y_a|=|Y_b|=|Y_e|=1/2$, and $|Y_c|=|Y_d|=|Y_f|=|Y_g|=|Y_h|=|Y_i|=1/4$ while the other degree 4 polynomials in Eq. (\ref{ffr}) and in  Eq. (\ref{ffr2}) are zero.

A further example of a state for which all degree 2 polynomials are zero is the state
\begin{align}
\frac{1}{{2}}(&{\phi_0^A}\otimes{\phi_2^B}\otimes{\phi_2^C}\otimes{\phi_2^D}+{\phi_2^A}\otimes{\phi_1^B}\otimes{\phi_3^C}\otimes{\phi_3^D}\nonumber\\&+{\phi_1^A}\otimes{\phi_3^B}\otimes{\phi_1^C}\otimes{\phi_1^D}+{\phi_3^A}\otimes{\phi_0^B}\otimes{\phi_0^C}\otimes{\phi_0^D}).
\end{align}
Moreover, for this state $|T_a|=|Y_b|=1/4$ while the other degree 4 polynomials in Eq. (\ref{ffr}) and in  Eq. (\ref{ffr2})  are zero. 
Likewise, all degree 2 polynomials are zero for the state
\begin{align}
\frac{1}{{2}}(&{\phi_1^A}\otimes{\phi_1^B}\otimes{\phi_2^C}\otimes{\phi_2^D}+{\phi_2^A}\otimes{\phi_3^B}\otimes{\phi_3^C}\otimes{\phi_3^D}\nonumber\\&+{\phi_0^A}\otimes{\phi_2^B}\otimes{\phi_1^C}\otimes{\phi_1^D}+{\phi_3^A}\otimes{\phi_0^B}\otimes{\phi_0^C}\otimes{\phi_0^D}),
\end{align}
and $|T_b|=|Y_k|=1/4$ while the other degree 4 polynomials in Eq. (\ref{ffr}) and in  Eq. (\ref{ffr2}) are zero. 

We can see that the ten examples above belong to ten different entanglement classes that can be discriminated by the polynomial invariants.

\subsection{Examples of bipartite entangled four spinor states}
The polynomials constructed as tensor sandwich contractions for the case of four spinors can be non-zero also for states that are not fourpartite entangled but only bipartite entangled.
An example of a state of four Dirac spinors that is a product state over the partitioning AB|CD but bipartite entangled on AB and CD is
\begin{align}
\frac{1}{{2}}(&{\phi_0^A}\otimes{\phi_0^B}\otimes{\phi_0^C}\otimes{\phi_0^D}+{\phi_1^A}\otimes{\phi_1^B}\otimes{\phi_1^C}\otimes{\phi_1^D}\nonumber\\&+{\phi_0^A}\otimes{\phi_0^B}\otimes{\phi_1^C}\otimes{\phi_1^D}+{\phi_1^A}\otimes{\phi_1^B}\otimes{\phi_0^C}\otimes{\phi_0^D}).
\end{align}
For this state $|H_a|=1$ while the other degree 2 polynomials are zero and $|T_a|=1$, $|T_b|=|T_e|=|T_j|=|T_k|=|T_l|=|T_m|=1/2$, and $|T_c|=|T_d|=|T_f|=|T_g|=|T_h|=|T_i|=1/4$ while the other degree 4 polynomials in Eq. (\ref{ffr}) and  Eq. (\ref{ffr2}) are zero.
Similarly, the state
\begin{align}
\frac{1}{{2}}(&{\phi_0^A}\otimes{\phi_2^B}\otimes{\phi_2^C}\otimes{\phi_0^D}+{\phi_1^A}\otimes{\phi_3^B}\otimes{\phi_1^C}\otimes{\phi_1^D}\nonumber\\&+{\phi_0^A}\otimes{\phi_3^B}\otimes{\phi_1^C}\otimes{\phi_0^D}+{\phi_1^A}\otimes{\phi_2^B}\otimes{\phi_2^C}\otimes{\phi_1^D}).
\end{align}
is a product state over the partitioning AD|BC but bipartite entangled on AD and BC.
For this state $|H_d|=1$ while the other degree 2 polynomials are zero and all degree 4 polynomials in Eq. (\ref{ffr}) and  Eq. (\ref{ffr2}) are zero.

\section{The case of five spinors}\label{five}
For five Dirac spinors the state coefficients can be arranged as a $4\times 4\times 4\times 4\times 4$ tensor $\Psi^{ABCDE}$.
Transformations $S^A,S^B,S^C,S^D$, and $S^E$ acting locally on Alice's, Bob's, Charlie's, David's, and Erin's particle, respectively, are described as
\begin{eqnarray}
\Psi^{ABCDE}_{ijklm}\to\sum_{nopqr}S^A_{in}S^B_{jo}S^C_{kp}S^D_{lq}S^E_{mr}\Psi^{ABCDE}_{nopqr}.
\end{eqnarray}
Lorentz invariants of degree 2 can be constructed as tensor sandwich contractions involving two copies of $\Psi^{ABCDE}$, however all such polynomials are identically zero due to the antisymmetry of $C$ and $C\gamma^5$. 

The next lowest degree is 4.
From the above follows that invariants of degree 4 that factorize as a product of two degree 2 polynomials are identically zero.
Thus it remains to consider the tensor sandwich contractions involving four copies of $\Psi^{ABCDE}$ that do not factorize into two degree 2 polynomials. There are 40 inequivalent such ways to pair up the tensor indices of the four copies. These are given in \ref{wides}.
For each pairing of tensor indices we can consider $(2^{10}-2^5)/2+2^5=528$ ways to choose the $X$s as either $C$ or $C\gamma^5$. This gives a total of 21120 ways to construct polynomials by this method. Due to the computational difficulty in constructing the polynomials and testing their linear independence no explicit examples are given here.

\section{Degree 2 polynomials for more than five spinors}\label{getto}

For $N$ Dirac spinors the state coefficients can be arranged as a $4\times 4\times \dots\times 4$ tensor $\Psi^{ABCD\dots N}$.
Transformations $S^A,S^B,S^C,\dots S^N$ acting locally on the particles, are described as
\begin{eqnarray}
\Psi^{ABCD\dots N}_{ijk\dots l}\to\sum_{mno\dots p}S^A_{im}S^B_{jn}S^C_{ko}\dots S^N_{lp}\Psi^{ABCD\dots N}_{mno\dots p}.
\end{eqnarray}
Lorentz invariants of degree 2 can be constructed as tensor sandwich contractions of the form 
\begin{eqnarray}\label{hjl}
\sum_{jkl\dots mnpq \dots r}X_{nj}X_{pk}X_{ql}\dots X_{rm}\Psi^{ABC\dots N}_{jkl\dots m}\Psi^{ABC\dots N}_{npq\dots r},
\end{eqnarray}
involving two copies of $\Psi^{ABC\dots N}$ where either $X=C$ or $X=C\gamma^5$ for each instance. For odd $N$ these polynomials are zero due to the antisymmetry of $C$ and $C\gamma^5$. For every even $N$ on the other hand there is $2^N$ linearly independent polynomials on the form in Eq. (\ref{hjl}) corresponding to the different choices of $X$ as either $X=C$ or $X=C\gamma^5$.

If the particles of all observers are Weyl particles the shared state is invariant under some combination of projections, $P_L$ or $P_R$ for each observer. In this case the polynomials on the form in Eq. (\ref{hjl}) for even $N$ all reduce, up to a sign, to $2^N\tau_{1\dots N}$ where $\tau_{1\dots N}=\sum_{jkl\dots mnpq \dots r\in\{0,1\}}\epsilon_{nj}\epsilon_{pk}\epsilon_{ql}\dots \epsilon_{rm}\psi^{ABC\dots N}_{jkl\dots m}\psi^{ABC\dots N}_{npq\dots r}$ where the Levi-Civita antisymmetric symbol $\epsilon_{jk}$ is defined by $1=\epsilon_{10}=-\epsilon_{01},$ and $\epsilon_{00}=\epsilon_{11}=0$. The polynomial $\tau_{1\dots N}$ is the $N$-tangle introduced by Wong and Christensen in Ref. \cite{wong}.

In the case of zero momenta and zero four-potentials and with the state of the particles belonging to an eigenspace of the local Dirac Hamiltonians the polynomial where each $X=C$ reduces up to a relabelling of the indices to $\tau_{1\dots N}$ while the $2^N-1$ other degree 2 polynomials are zero.

The lowest nonzero case not already described in Section \ref{four} or in Ref. \cite{spinorent} is six Dirac spinors. For this case there is 64 linearly independent polynomials on the form in Eq. (\ref{hjl}) corresponding to the different choices of $X$ as either $X=C$ or $X=C\gamma^5$. For Weyl particles all these polynomials reduce, up to a sign, to $64\tau_{1\dots 6}$ where $\tau_{1\dots 6}$ is given by

\begin{eqnarray}
\tau_{1\dots 6}= &&\psi_{000000}\psi_{111111}-\psi_{011111}\psi_{100000} -\psi_{101111}\psi_{010000}\nonumber\\&&-\psi_{110111}\psi_{001000}-\psi_{111011}\psi_{000100}-\psi_{111101}\psi_{000010}\nonumber\\&&-\psi_{111110}\psi_{000001}+\psi_{001111}\psi_{110000}
+\psi_{010111}\psi_{101000}\nonumber\\&&+\psi_{011011}\psi_{100100}+\psi_{011101}\psi_{100010}+\psi_{011110}\psi_{100001}\nonumber\\&&+\psi_{100111}\psi_{011000}+\psi_{101011}\psi_{010100}+\psi_{101101}\psi_{010010}\nonumber\\&&+\psi_{101110}\psi_{010001}+\psi_{110011}\psi_{001100}
+\psi_{110101}\psi_{001010}\nonumber\\&&+\psi_{110110}\psi_{001001}+\psi_{111001}\psi_{000110}+\psi_{111010}\psi_{000101}\nonumber\\&&+\psi_{111100}\psi_{000011}-\psi_{111000}\psi_{000111}-\psi_{110100}\psi_{001011}\nonumber\\&&-\psi_{110010}\psi_{001101}-\psi_{110001}\psi_{001110}-\psi_{101100}\psi_{010011}\nonumber\\&&-\psi_{101010}\psi_{010101}-\psi_{101001}\psi_{010110}-\psi_{100110}\psi_{011001}\nonumber\\&&-\psi_{100101}\psi_{011010}-\psi_{100011}\psi_{011100}.
\end{eqnarray}
The polynomial $\tau_{1\dots 6}$ is the $6$-tangle of Wong and Christensen \cite{wong}.

In the case of zero momenta and zero four-potentials and with the state of the particles belonging to an eigenspace of the local Dirac Hamiltonians the polynomial where each $X=C$ reduces up to a relabelling of the indices to $\tau_{1\dots 6}$ while the 63 other degree 2 polynomials are zero.

\section{General properties of polynomial invariants of connected complex reductive Lie groups}\label{lli}
For a system of $n$ Dirac spinors any polynomial constructed by tensor sandwich contractions is invariant, up to a U(1) phase, under some Lie group $G_1\otimes G_2\otimes\dots\otimes G_n$ where each group $G_k$ is one out of $G^C$, $G^{C\gamma^5}$, $G^{C\gamma^5}\cap G^C$ and $\mathrm{U}(1)\times\mathrm{SL}(4,\mathbb{C})$. Here we consider some general properties of these groups and the polynomials invariant, up to a U(1) phase, under these groups.

For each of the groups $G^C$, $G^{C\gamma^5}$, $G^{C\gamma^5}\cap G^C$ and $\mathrm{U}(1)\times\mathrm{SL}(4,\mathbb{C})$ we can consider its determinant one subgroup. The determinant one subgroup of $\mathrm{U}(1)\times\mathrm{SL}(4,\mathbb{C})$ is clearly 
$\mathrm{SL}(4,\mathbb{C})$. Denote the determinant one subgroup of $G^C$ by $\mathrm{S}G^C$ and denote the determinant one subgroup of $G^{C\gamma^5}$ by $\mathrm{S}G^{C\gamma^5}$. 
The group $\mathrm{S}G^C$ consists of all linear transformations that preserve the skew-symmetric bilinear form $\psi^TC\varphi$. The group $\mathrm{S}G^{C\gamma^5}$ consists of all linear transformations that preserve the skew-symmetric bilinear form $\psi^TC\gamma^5\varphi$.
Both $\mathrm{S}G^C$ and $\mathrm{S}G^{C\gamma^5}$ are isomorphic to $\mathrm{Sp}(4,\mathbb{C})$.
The group $\mathrm{S}G^{C\gamma^5}\cap \mathrm{S}G^C$ consists of all linear transformations that preserve both of the skew-symmetric bilinear forms $\psi^TC\varphi$ and $\psi^TC\gamma^5\varphi$ and is isomorphic to $\mathrm{SL}(2,\mathbb{C})\times\mathrm{SL}(2,\mathbb{C})$.

Next consider $G_U^C$ and $G_U^{C\gamma^5}$ and denote their determinant one subgroups by $\mathrm{S}G_U^C$ and $\mathrm{S}G_U^{C\gamma^5}$ respectively. The groups $\mathrm{S}G_U^C$ and $\mathrm{S}G_U^{C\gamma^5}$ are the maximal compact subgroups of $\mathrm{S}G^C$ and $\mathrm{S}G^{C\gamma^5}$, respectively, i.e., $\mathrm{S}G_U^C=\mathrm{S}G^C\cap \mathrm{U}(n)$ and $\mathrm{S}G_U^{C\gamma^5}=\mathrm{S}G^{C\gamma^5}\cap \mathrm{U}(n)$. Both $\mathrm{S}G_U^C$ and $\mathrm{S}G_U^{C\gamma^5}$ are isomorphic to $\mathrm{Sp}(2)$. The group $\mathrm{S}G_U^{C\gamma^5}\cap \mathrm{S}G_U^C$ is the maximal compact subgroup of $\mathrm{S}G^{C\gamma^5}\cap \mathrm{S}G^C$ and it is isomorphic to $\mathrm{SU}(2)\times\mathrm{SU}(2)$.

Let $\mathrm{Lie}(G)$ be the Lie algebra of a Lie group $G\subset\mathrm{GL}(n,\mathbb{C})$ and let $\mathrm{Lie}(K)$ be the Lie algebra of its maximal compact subgroup $K\equiv G\cap \mathrm{U}(n)$. We say that
$G$ is the complexification of $K$ if $\mathrm{Lie}(G)=\mathrm{Lie}(K)+i\mathrm{Lie}(K)$. The group $\mathrm{S}G^C$ is the complexification of $\mathrm{S}G_U^C$, the group $\mathrm{S}G^{C\gamma^5}$ is the complexification of $\mathrm{S}G_U^{C\gamma^5}$, the group $\mathrm{S}G^{C\gamma^5}\cap \mathrm{S}G^C$ is the complexification of $\mathrm{S}G_U^{C\gamma^5}\cap \mathrm{S}G_U^C$ and $\mathrm{SL}(4,\mathbb{C})$ is the complexification of $\mathrm{SU}(4)$.

A Lie subgroup of $\mathrm{GL}(n,\mathbb{C})$ that is the complexification of a compact Lie group is here called a {\it complex reductive group} following Onishchik and Vinberg (See Ref. \cite{vinberg} Ch. 5 {\S}2. 5$^{\circ}$). Note that this property is invariant under conjugation of the group by $S\in \mathrm{GL}(n,\mathbb{C})$, i.e., if a group $G\subset \mathrm{GL}(n,\mathbb{C})$ is complex reductive so is the group $SGS^{-1}$ defined by the elements $SgS^{-1}$ for $g\in G$. Thus the groups that preserve the bilinear forms ${\psi^T}(S^{-1})^TCS^{-1}{\varphi}$ and ${\psi^T}(S^{-1})^TC\gamma^5S^{-1}{\varphi}$ are complex reductive regardless of the choice of matrices $S\gamma^0S^{-1},S\gamma^1S^{-1},S\gamma^2S^{-1},S\gamma^3S^{-1}$ satisfying Eq. (\ref{anti}). However, since every compact Lie group can be represented as a subgroup of a unitary group (See e.g. Ref. \cite{nolan} Ch. 2.1.2), we can consider the case of compact subgroups of $\mathrm{U}(n)$ and their complexifications without loss of generality.

Any polynomial constructed by tensor sandwich contractions is invariant under some group $G_1\otimes G_2\otimes\dots\otimes G_n$ acting on $n$ Dirac spinors where each group $G_k$ is one out of the groups  $\mathrm{S}G^C$, $\mathrm{S}G^{C\gamma^5}$, $\mathrm{S}G^{C\gamma^5}\cap \mathrm{S}G^C$ and $\mathrm{SL}(4,\mathbb{C})$. The tensor product of two connected complex reductive Lie groups is a connected complex reductive Lie group and thus $G_1\otimes G_2\otimes\dots\otimes G_n$ is a connected complex reductive Lie group. The algebra of polynomial invariants of a connected complex reductive Lie group have a number of useful general properties. We describe some of these properties in the following.

Let $G$ be a connected Lie subgroup of $\mathrm{GL}(n,\mathbb{C})$ that is the complexification of its maximal compact subgroup.
Then the algebra of polynomial invariants under $G$ and the algebra of polynomial invariants under its maximal compact subgroup are the same. To see this we consider the following theorem based on the so called Unitarian Trick \cite{hurw,weeyl}

\begin{theorem}\label{uii}
Let $G$ be a connected Lie subgroup of $\mathrm{GL}(n,\mathbb{C})$ and let $K\equiv G\cap \mathrm{U}(n)$. Assume that $G$ is the complexification of $K$. 
Let $\mathcal{O}(\mathbb{C}^n)$ be the algebra over $\mathbb{C}$ of polynomials on $\mathbb{C}^n$, let $\mathcal{O}(\mathbb{C}^n)^G\subset \mathcal{O}(\mathbb{C}^n)$ be the subalgebra of polynomials that are invariant under $G$, i.e., $\mathcal{O}(\mathbb{C}^n)^G\equiv\{f\in\mathcal{O}(\mathbb{C}^n)|f(gx)=f(x) \phantom{t}\textrm{for all}\phantom{t} g\in G, x\in \mathbb{C}^n\}$, and let $\mathcal{O}(\mathbb{C}^n)^K\subset \mathcal{O}(\mathbb{C}^n)$ be the subalgebra of polynomials that are invariant under $K$, i.e., $\mathcal{O}(\mathbb{C}^n)^K\equiv\{f\in\mathcal{O}(\mathbb{C}^n)|f(gx)=f(x) \phantom{t}\textrm{for all}\phantom{t} g\in K, x\in \mathbb{C}^n\}$. Then $\mathcal{O}(\mathbb{C}^n)^G=\mathcal{O}(\mathbb{C}^n)^K$.
\end{theorem}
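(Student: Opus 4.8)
The plan is to establish the two inclusions $\mathcal{O}(\mathbb{C}^n)^G\subseteq\mathcal{O}(\mathbb{C}^n)^K$ and $\mathcal{O}(\mathbb{C}^n)^K\subseteq\mathcal{O}(\mathbb{C}^n)^G$ separately. The first is immediate: since $K=G\cap\mathrm{U}(n)\subseteq G$, every polynomial fixed by all of $G$ is in particular fixed by all of $K$. All the content lies in the reverse inclusion, which is the Unitarian Trick. The strategy there is to pass to the infinitesimal level, exploit the $\mathbb{C}$-linearity of the infinitesimal action to enlarge the set of annihilating directions from $\mathrm{Lie}(K)$ to all of $\mathrm{Lie}(G)$, and then integrate back up using connectedness of $G$.

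Concretely, for $X\in\mathfrak{gl}(n,\mathbb{C})$ I would introduce the first-order differential operator
\begin{eqnarray*}
(\mathcal{D}_X f)(x)\equiv\frac{d}{dt}\Big|_{t=0}f(e^{-tX}x) = -\sum_{j,k}X_{jk}\,x_k\,\frac{\partial f}{\partial x_j}(x),
\end{eqnarray*}
the second equality being the chain rule applied to the curve $t\mapsto e^{-tX}x$. Because $G$ acts $\mathbb{C}$-linearly on $\mathbb{C}^n$ and $\mathcal{O}(\mathbb{C}^n)$ is the ring of \emph{holomorphic} polynomials, the derivatives $\partial f/\partial x_j$ are again holomorphic polynomials and the displayed expression is manifestly complex-linear in the matrix entries $X_{jk}$. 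Hence $X\mapsto\mathcal{D}_X f$ is a $\mathbb{C}$-linear map from $\mathfrak{gl}(n,\mathbb{C})$ into $\mathcal{O}(\mathbb{C}^n)$; this complex linearity is the crux of the whole argument.

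Now take $f\in\mathcal{O}(\mathbb{C}^n)^K$. Invariance under $K$ implies invariance under its identity component $K^0$, which is generated by the exponentials $e^{tX}$ with $X\in\mathrm{Lie}(K)$, so differentiating $t\mapsto f(e^{-tX}x)$ at $t=0$ gives $\mathcal{D}_X f=0$ for every $X\in\mathrm{Lie}(K)$. Applying the $\mathbb{C}$-linearity just established, $\mathcal{D}_{iX}f=i\,\mathcal{D}_X f=0$ as well, so $\mathcal{D}_Y f=0$ for every $Y\in i\,\mathrm{Lie}(K)$. By the hypothesis that $G$ is the complexification of $K$, i.e. $\mathrm{Lie}(G)=\mathrm{Lie}(K)+i\,\mathrm{Lie}(K)$, this yields $\mathcal{D}_Y f=0$ for all $Y\in\mathrm{Lie}(G)$. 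Finally I would integrate: for fixed $Y\in\mathrm{Lie}(G)$ and $x$, the function $t\mapsto f(e^{-tY}x)$ has derivative $(\mathcal{D}_Y f)(e^{-tY}x)=0$ for all $t$, hence is constant, so $f$ is fixed by the one-parameter subgroup $\{e^{tY}\}$. Since $G$ is connected it is generated by such subgroups, and therefore $f\in\mathcal{O}(\mathbb{C}^n)^G$, which completes the inclusion and the proof.

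The main obstacle I anticipate is conceptual rather than computational: the entire argument hinges on reading $\mathcal{O}(\mathbb{C}^n)$ as the \emph{holomorphic} polynomial algebra $\mathbb{C}[x_1,\dots,x_n]$, for only then is $X\mapsto\mathcal{D}_X f$ complex-linear and the enlargement from $\mathrm{Lie}(K)$ to $i\,\mathrm{Lie}(K)$ legitimate; were one to allow polynomials in both $x_j$ and $\bar x_j$, the infinitesimal action would be merely real-linear and the trick would fail. A secondary point that must be stated carefully is that infinitesimal annihilation integrates to genuine group invariance precisely because $G$ is connected, so that the exponentials of $\mathrm{Lie}(G)$ generate it — this is exactly the role played by the connectedness hypothesis in the statement.
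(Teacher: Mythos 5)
Your proof is correct and follows essentially the same route as the paper's: both reduce invariance to the infinitesimal condition $Xv\cdot\nabla f(v)=0$, use the $\mathbb{C}$-linearity of the map $X\mapsto\mathcal{D}_Xf$ (valid because $\mathcal{O}(\mathbb{C}^n)$ consists of holomorphic polynomials) to pass from $\mathrm{Lie}(K)$ to $i\,\mathrm{Lie}(K)$ and hence to $\mathrm{Lie}(G)=\mathrm{Lie}(K)+i\,\mathrm{Lie}(K)$, and then integrate back up via connectedness of $G$. Your only deviation is a minor refinement: by working with the identity component $K^0$ you avoid the paper's tacit assumption that $K$ itself is connected, which is harmless but slightly cleaner.
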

\begin{proof}
Let $F\in\mathcal{O}(\mathbb{C}^n)$ and $v\in \mathbb{C}^n$. For any $n\times n$ matrix $M$ consider the function on $\mathbb{R}\times \mathbb{C}^n$ defined by $F(e^{Mt}v)$. Then $\frac{d}{dt}F(e^{Mt}v)|_{t=0}=Mv\cdot\nabla F(v)$.
Since $F(e^{M(t+a)}v)=F(e^{Mt}e^{Ma}v)$ we see that $\frac{d}{dt}F(e^{Mt}v)|_{t=0}=0$ for all $v\in \mathbb{C}^n$  if and only if $\frac{d}{dt}F(e^{Mt}v)|_{t=a}=0$ for all $a\in \mathbb{R}$ and all $v\in \mathbb{C}^n$. This is equivalent to the condition $F(e^{Mt}v)=F(v)$ for all $t\in \mathbb{R}$ and all $v\in \mathbb{C}^n$.

Now consider the Lie algebra $\mathrm{Lie}(G)$. Since $G$ is connected every element $g\in G$ can be written as $g=e^{M_1}e^{M_2}\dots e^{M_l}$ for $M_k\in \mathrm{Lie}(G)$. Therefore the condition $\frac{d}{dt}F(e^{M_kt}v)|_{t=0}=M_kv\cdot\nabla F(v)=0$ for all $M_k\in \mathrm{Lie}(G)$ is equivalent to $F$ being invariant under $G$.
Next consider the Lie algebra $\mathrm{Lie}(K)$. Since $K$ is connected every element $g\in K$ can be written as $g=e^{M_1}e^{M_2}\dots e^{M_l}$ for $M_k\in \mathrm{Lie}(K)$. Therefore the condition $\frac{d}{dt}F(e^{M_kt}v)|_{t=0}=M_kv\cdot\nabla F(v)=0$ for all $M_k\in \mathrm{Lie}(K)$ is equivalent to $F$ being invariant under $K$.

If $\frac{d}{dt}F(e^{M_kt}v)|_{t=0}=M_kv\cdot\nabla F(v)=0$ for some $M_k\in \mathrm{Lie}(K)$ we see that $\frac{d}{dt}F(e^{iM_kt}v)|_{t=0}=iM_kv\cdot\nabla F(v)=0$. 
Since $\mathrm{Lie}(G)=\mathrm{Lie}(K)+i\mathrm{Lie}(K)$ it follows that if $F$ is invariant under $K$ it is also invariant under $G$.
Moreover, since $K$ is a subgroup of $G$ we have that if $F$ is invariant under $G$ it is invariant under $K$.
Thus $\mathcal{O}(\mathbb{C}^n)^G=\mathcal{O}(\mathbb{C}^n)^K$.
See also e.g. Ref. \cite{dolgachev} for the special case of $\mathrm{SL}(2, \mathbb{C})$ and $\mathrm{SU}(2)$.
\end{proof}

Another property of a connected complex reductive Lie group is that the algebra of polynomial invariants of the group is finitely generated.
This is the content of the so called Hilbert's Finiteness Theorem \cite{hilbert,weeyl,nagata}
\begin{theorem}\label{fin}
Let $G$ be a connected complex reductive Lie subgroup of $\mathrm{GL}(n,\mathbb{C})$. Let $\mathcal{O}(\mathbb{C}^n)$ be the algebra over $\mathbb{C}$ of polynomials on $\mathbb{C}^n$ and let $\mathcal{O}(\mathbb{C}^n)^G\subset \mathcal{O}(\mathbb{C}^n)$ be the subalgebra of polynomials that are invariant under $G$, i.e., $\mathcal{O}(\mathbb{C}^n)^G\equiv\{f\in\mathcal{O}(\mathbb{C}^n)|f(gx)=f(x) \phantom{t}\textrm{for all}\phantom{t} g\in G, x\in \mathbb{C}^n\}$. Then $\mathcal{O}(\mathbb{C}^n)^G$ is finitely generated over $\mathbb{C}$.
\end{theorem}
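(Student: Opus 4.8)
The plan is to combine the reduction to the compact subgroup already provided by Theorem~\ref{uii} with an averaging (Reynolds) operator and the Noetherian property of the polynomial ring. First I would invoke Theorem~\ref{uii} to replace $G$ by its maximal compact subgroup $K=G\cap\mathrm{U}(n)$, so that $\mathcal{O}(\mathbb{C}^n)^G=\mathcal{O}(\mathbb{C}^n)^K$ and it suffices to show the latter is finitely generated. Since $K$ is compact it carries a normalized Haar measure $dk$, and I would define the Reynolds operator $R:\mathcal{O}(\mathbb{C}^n)\to\mathcal{O}(\mathbb{C}^n)$ by
\begin{equation*}
(Rf)(x)=\int_K f(k^{-1}x)\,dk .
\end{equation*}
Because $K$ acts linearly on $\mathbb{C}^n$ it preserves the finite-dimensional space of polynomials of each fixed degree, so the integral of a polynomial is again a polynomial of the same degree; thus $R$ is a degree-preserving linear projection whose image is exactly $\mathcal{O}(\mathbb{C}^n)^K$ and which fixes every invariant. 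The essential feature I would rely on is that $R$ is a homomorphism of $\mathcal{O}(\mathbb{C}^n)^K$-modules: if $f$ is $K$-invariant and $h$ is arbitrary then $R(fh)=f\,R(h)$, which follows immediately from pulling the invariant factor $f(k^{-1}x)=f(x)$ out of the integral.

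With $R$ in hand the remainder is the classical argument of Hilbert. Let $\mathcal{O}_{+}$ denote the ideal of $\mathcal{O}(\mathbb{C}^n)$ generated by all homogeneous $K$-invariant polynomials of strictly positive degree. By the Hilbert basis theorem the ring $\mathcal{O}(\mathbb{C}^n)$ is Noetherian, so this ideal is finitely generated, and since it is generated by homogeneous invariants one may extract a finite generating set consisting of homogeneous invariants $f_1,\dots,f_r$ of positive degree. I would then claim that $f_1,\dots,f_r$ generate $\mathcal{O}(\mathbb{C}^n)^K$ as a $\mathbb{C}$-algebra and prove this by induction on degree. A homogeneous invariant of degree $0$ is a constant, settling the base case. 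Given a homogeneous invariant $f$ of degree $d>0$, membership in $\mathcal{O}_{+}$ lets me write $f=\sum_{i=1}^{r} g_i f_i$ with each $g_i$ homogeneous of degree $d-\deg f_i<d$. Applying $R$ and using $R(f)=f$ together with the module property gives
\begin{equation*}
f=R(f)=\sum_{i=1}^{r} R(g_i)\,f_i ,
\end{equation*}
where each $R(g_i)$ is a $K$-invariant homogeneous polynomial of degree strictly less than $d$. By the induction hypothesis every $R(g_i)$ is a polynomial in $f_1,\dots,f_r$, hence so is $f$, completing the induction and showing $\mathcal{O}(\mathbb{C}^n)^K=\mathbb{C}[f_1,\dots,f_r]$.

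The step I expect to require the most care is establishing the two defining features of the Reynolds operator, since everything else is formal once they are granted. Concretely, one must verify that Haar averaging sends polynomials to polynomials of the same degree — this uses compactness of $K$, so that the graded piece $\mathcal{O}(\mathbb{C}^n)_d$ is a finite-dimensional $K$-representation on which integration against $dk$ is well defined and lands back in $\mathcal{O}(\mathbb{C}^n)_d$ — and that $R$ is a projection onto the invariants as well as an $\mathcal{O}(\mathbb{C}^n)^K$-module homomorphism. Once the passage from $G$ to the compact group $K$ is made via Theorem~\ref{uii}, these are the only analytic inputs; the rest is the interplay between Noetherianity and the Reynolds operator, which is purely algebraic.
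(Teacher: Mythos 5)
Your proof is correct. The paper itself does not give an argument for this theorem; it simply cites the literature (Wallach, Goodman--Wallach, Nagata), and the argument you supply is precisely the classical Hilbert--Weyl proof contained in those references: pass to the maximal compact subgroup via Theorem~\ref{uii}, construct the Reynolds operator by Haar averaging, note it is a degree-preserving projection onto the invariants and an $\mathcal{O}(\mathbb{C}^n)^K$-module map, take finitely many homogeneous generators of the ideal of positive-degree invariants by Noetherianity, and run the induction on degree. So relative to the paper you have gained a self-contained proof where the paper only has a pointer; the trade-off is that you must supply the analytic input (well-definedness of the Haar average on each finite-dimensional graded piece), which you correctly identify as the only nonformal step.

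Two small points deserve explicit mention to make the argument airtight. First, Theorem~\ref{uii} is stated for $G$ the complexification of $K\equiv G\cap\mathrm{U}(n)$ with both groups connected; the paper's definition of complex reductive only says $G$ is the complexification of \emph{some} compact group, so you should invoke the paper's own remark that after conjugating by an element of $\mathrm{GL}(n,\mathbb{C})$ one may assume the compact form sits inside $\mathrm{U}(n)$, and observe that the maximal compact subgroup of a connected Lie group is connected, so Theorem~\ref{uii} applies. Second, your induction treats only homogeneous invariants; to conclude that $f_1,\dots,f_r$ generate all of $\mathcal{O}(\mathbb{C}^n)^K$ you should note that the linear $K$-action preserves degree, so each homogeneous component of an invariant polynomial is itself invariant, and every invariant is a sum of homogeneous ones. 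Both are one-line fixes and do not affect the soundness of the proof.
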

\begin{proof}
See e.g. Ref. \cite{nolan} Ch. 3.1.4, Ref. \cite{wall} Ch. 5.1.1 or Ref. \cite{nagata}.
\end{proof}
This theorem implies that for any number $n$ of spacelike separated Dirac spinors and any group $G_1\otimes G_2\otimes\dots\otimes G_n$ where each $G_k$ is one out of $\mathrm{S}G^C$, $\mathrm{S}G^{C\gamma^5}$, $\mathrm{S}G^{C\gamma^5}\cap \mathrm{S}G^C$ and $\mathrm{SL}(4,\mathbb{C})$, a finite number of polynomials invariant under this group generate the algebra of all such invariants. Moreover, by Theorem \ref{uii} the same finitely generated algebra is the invariants of the maximal compact subgroup of $G_1\otimes G_2\otimes\dots\otimes G_n$. Note, however that this does not imply that such a finite number of polynomials can necessarily be constructed as tensor sandwich contractions.

If two orbits of a connected complex reductive Lie group are such that their closures do not overlap there exists a polynomial invariant of the group that distinguishes between the orbit closures.
\begin{theorem}
Let $G$ be a connected complex reductive Lie subgroup of $\mathrm{GL}(n,\mathbb{C})$. Let $\mathcal{O}(\mathbb{C}^n)$ be the algebra over $\mathbb{C}$ of polynomials on $\mathbb{C}^n$ and let $\mathcal{O}(\mathbb{C}^n)^G\subset \mathcal{O}(\mathbb{C}^n)$ be the subalgebra of polynomials that are invariant under $G$, i.e., $\mathcal{O}(\mathbb{C}^n)^G\equiv\{f\in\mathcal{O}(\mathbb{C}^n)|f(gx)=f(x) \phantom{t}\textrm{for all}\phantom{t} g\in G, x\in \mathbb{C}^n\}$. Let $Gx$ and $Gy$ be $G$-orbits for $x,y\in \mathbb{C}^n$ such that their closures $\overline{Gx}$ and $\overline{Gy}$ satisfy $\overline{Gx}\cap \overline{Gy}=\emptyset$. Then there exist $f\in\mathcal{O}(\mathbb{C}^n)^G$ such that $f_{|\overline{Gx}}=1$ and $f_{|\overline{Gy}}=0$.
\end{theorem}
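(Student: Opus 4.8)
The plan is to reduce the statement to two ingredients already at hand: the elementary separation of disjoint closed affine sets by Hilbert's Nullstellensatz, and the averaging (Reynolds) operator supplied by the maximal compact subgroup, whose existence and behaviour underlie Theorem \ref{uii}. First I would note that for an algebraic group acting algebraically on $\mathbb{C}^n$ the Euclidean closure of an orbit coincides with its Zariski closure, so $\overline{Gx}$ and $\overline{Gy}$ are Zariski-closed subsets of $\mathbb{C}^n$ that are $G$-invariant, and hence also invariant under the maximal compact subgroup $K\equiv G\cap \mathrm{U}(n)$. Since $\overline{Gx}\cap\overline{Gy}=\emptyset$, the Nullstellensatz gives $I(\overline{Gx})+I(\overline{Gy})=\mathcal{O}(\mathbb{C}^n)$, where $I(\cdot)$ denotes the vanishing ideal. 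Thus there exist $h_x\in I(\overline{Gx})$ and $h_y\in I(\overline{Gy})$ with $h_x+h_y=1$, and the single polynomial $f_0\equiv h_y=1-h_x$ satisfies $f_0|_{\overline{Gx}}=1$ and $f_0|_{\overline{Gy}}=0$.

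The polynomial $f_0$ is not yet $G$-invariant, so the core of the argument is to invariantise it without disturbing these boundary values. For this I would invoke the Reynolds operator $R:\mathcal{O}(\mathbb{C}^n)\to\mathcal{O}(\mathbb{C}^n)^G$ obtained by averaging over $K$ with respect to normalised Haar measure, $R(f)(v)=\int_K f(k\cdot v)\,dk$. Because $K$ acts linearly and preserves each finite-dimensional space of homogeneous polynomials of fixed degree, $R(f)$ is again a polynomial; it is $K$-invariant by the invariance of Haar measure, and by Theorem \ref{uii} the $K$-invariants coincide with the $G$-invariants, so $R(f_0)\in\mathcal{O}(\mathbb{C}^n)^G$.

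Setting $f\equiv R(f_0)$, I would then verify the required values directly. For $v\in\overline{Gx}$ the $K$-invariance of this set gives $k\cdot v\in\overline{Gx}$, hence $f_0(k\cdot v)=1$ for every $k\in K$, so $f(v)=\int_K 1\,dk=1$; likewise, for $v\in\overline{Gy}$ one has $f_0(k\cdot v)=0$ for all $k$ and therefore $f(v)=0$. This yields $f_{|\overline{Gx}}=1$ and $f_{|\overline{Gy}}=0$ with $f$ a $G$-invariant polynomial, as claimed.

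The step requiring the most care, and the true heart of the proof, is the construction and the two defining properties of the Reynolds operator: that averaging a polynomial over the compact group $K$ produces a polynomial rather than merely a continuous function, and that the result lies in the invariants. This is precisely where complete reducibility of the $G$-representation on $\mathcal{O}(\mathbb{C}^n)$ is used — equivalently, via the unitarian trick, the compactness of $K$ together with the existence of Haar measure — and it is the point at which reductivity, as opposed to an arbitrary complex linear group, is essential. A secondary technicality worth spelling out is the identification of the Euclidean and Zariski closures of the orbits, which is what licenses the application of the Nullstellensatz to the closed sets appearing in the hypothesis.
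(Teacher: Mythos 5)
Your proof is correct and takes essentially the same route as the reference the paper cites for this theorem (the paper's own ``proof'' is only the citation to Ref.~\cite{nolan}, Ch.~3.1.4): separate the two disjoint, $G$-stable, Zariski-closed orbit closures with the Nullstellensatz, then project the separating polynomial onto $\mathcal{O}(\mathbb{C}^n)^G$ via the Reynolds operator given by averaging over the maximal compact subgroup $K$, invoking Theorem~\ref{uii} to pass from $K$-invariance to $G$-invariance. Your pointwise evaluation of the averaged polynomial on the $K$-stable closures is a harmless variant of the usual step (that the Reynolds operator preserves the $G$-stable vanishing ideals), and you correctly flag the one genuinely necessary technical point, namely that $G$ is algebraic and orbit closures in the Euclidean and Zariski topologies coincide, without which the Nullstellensatz step would not apply.
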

\begin{proof}
See e.g. Ref. \cite{nolan} Ch. 3.1.4.
\end{proof}

Consider a set $\Sigma$ of states on which all polynomial invariants take constant values and that is not a subset of any other set on which all polynomial invariants take constant values.
A property of the polynomial invariants of a connected complex reductive Lie group is that 
any such subset $\Sigma$ of states contains a unique closed orbit.
In particular this implies that every open orbit has one and only one closed orbit in its closure.

\begin{theorem}\label{tsw}
Let $G$ be a connected complex reductive Lie subgroup of $\mathrm{GL}(n,\mathbb{C})$. Let $\mathcal{O}(\mathbb{C}^n)$ be the algebra over $\mathbb{C}$ of polynomials on $\mathbb{C}^n$ and let $\mathcal{O}(\mathbb{C}^n)^G\subset \mathcal{O}(\mathbb{C}^n)$ be the subalgebra of polynomials that are invariant under $G$, i.e., $\mathcal{O}(\mathbb{C}^n)^G\equiv\{f\in\mathcal{O}(\mathbb{C}^n)|f(gx)=f(x) \phantom{t}\textrm{for all}\phantom{t} g\in G, x\in \mathbb{C}^n\}$. 
Let $S_x=\{y\in \mathbb{C}^n| f(y)=f(x)\phantom{t}\textrm{for all}\phantom{t} f\in \mathcal{O}(\mathbb{C}^n)^G\}$. 
Then each set $S_x$ contains a unique closed orbit of $G$. If $Gw$ is the unique closed orbit in $S_x$ and $v\in S_x$ then $Gw\subset \overline {Gv}$.
\end{theorem}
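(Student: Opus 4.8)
The plan is to identify $S_x$ with a fibre of the invariant quotient and to leverage two results already in hand: Hilbert's Finiteness Theorem (Theorem \ref{fin}), which gives finitely many generators $f_1,\dots,f_r$ of $\mathcal{O}(\mathbb{C}^n)^G$, and the orbit-separation theorem stated immediately above, which separates disjoint orbit closures by an invariant. First I would observe that $S_x=\{y\in\mathbb{C}^n : f_i(y)=f_i(x),\ i=1,\dots,r\}$ is a common zero set of polynomials, hence a closed subset of $\mathbb{C}^n$, and it is $G$-invariant because each $f_i$ is invariant. The single structural fact driving everything is that every $f\in\mathcal{O}(\mathbb{C}^n)^G$ is constant on $S_x$, equal to $f(x)$: both uniqueness and the closure statement reduce to this.

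For existence of a closed orbit, I would choose an orbit $Gw\subseteq S_x$ of minimal dimension among all orbits contained in $S_x$, which exists since orbit dimensions are non-negative integers. I claim $Gw$ is closed. Since $G$ is connected, $Gw$ is irreducible, so $\overline{Gw}$ is irreducible and $Gw$ is open and dense in it; consequently the boundary $\overline{Gw}\setminus Gw$ has dimension strictly less than $\dim\overline{Gw}=\dim Gw$. This boundary is closed and $G$-invariant, hence a union of orbits, each of dimension $\le\dim(\overline{Gw}\setminus Gw)<\dim Gw$. Because $S_x$ is closed and $G$-invariant we have $\overline{Gw}\subseteq S_x$, so any such boundary orbit would lie in $S_x$ with dimension smaller than $\dim Gw$, contradicting minimality. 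Therefore $\overline{Gw}\setminus Gw=\emptyset$ and $Gw$ is closed.

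For uniqueness and the final assertion, suppose $Gw$ and $Gw'$ are closed orbits in $S_x$ with $Gw\neq Gw'$. Being closed and disjoint, $\overline{Gw}\cap\overline{Gw'}=Gw\cap Gw'=\emptyset$, so the separation theorem yields $f\in\mathcal{O}(\mathbb{C}^n)^G$ with $f|_{Gw}=1$ and $f|_{Gw'}=0$; but $f$ is constant on $S_x$, a contradiction. Hence the closed orbit is unique; call it $Gw$. For the closure statement, let $Gv\subseteq S_x$ be arbitrary. Then $\overline{Gv}$ is a closed $G$-invariant subset of $S_x$, and applying the existence argument inside $\overline{Gv}$ produces a closed orbit contained in $\overline{Gv}\subseteq S_x$; by uniqueness this orbit is $Gw$, so $Gw\subseteq\overline{Gv}$, as required.

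The hard part will be the lemma underpinning existence, namely that a $G$-orbit is locally closed with boundary of strictly smaller dimension. This is exactly where the algebraic (or complex-analytic) structure of the action enters: it rests on $Gw$ being a homogeneous space $G/G_w$, hence a smooth locally closed submanifold, so that its image is constructible and its boundary, being closed and lower-dimensional, cannot contain $Gw$ itself. I would either invoke this standard structural fact directly or, in keeping with the paper's realisation of $G$ as the complexification of a compact group, use the equivalent statement that a non-closed orbit necessarily has orbits of strictly lower dimension in its closure. Every remaining step is a formal consequence of the two theorems already established.
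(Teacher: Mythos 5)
Your proof is correct, but note that there is no in-paper argument to compare it against: the paper's ``proof'' of Theorem~\ref{tsw} is the single line ``See e.g.\ Ref.~\cite{nolan} Ch.~3.4.1.'' What you have reconstructed is essentially the standard argument that the cited reference supplies: reducing everything to the fact that invariants are constant on $S_x$ (via the finite generation from Theorem~\ref{fin}), the minimal-orbit-dimension argument for existence of a closed orbit, the preceding separation theorem for uniqueness, and re-running existence inside $\overline{Gv}$ to get $Gw\subset\overline{Gv}$ --- each of these steps is sound and is the classical route. The one burden you flag but do not fully discharge is the lemma that an orbit is locally closed with boundary of strictly smaller dimension; this requires the action to be algebraic, and it does hold in the paper's setting: a connected complex reductive subgroup of $\mathrm{GL}(n,\mathbb{C})$ (the complexification of a compact subgroup of $\mathrm{U}(n)$) is Zariski closed, hence an algebraic subgroup, so the orbit map is a morphism, Chevalley's theorem makes $Gw$ constructible, homogeneity upgrades ``contains a dense open subset of $\overline{Gw}$'' to ``open in $\overline{Gw}$'', and the boundary is then a proper closed subset of the irreducible set $\overline{Gw}$, hence of strictly smaller dimension, exactly as you use. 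One further sentence would make the write-up airtight: your argument runs in the Zariski topology, whereas ``closed orbit'' in the statement can be read classically; the two readings agree because an orbit, being smooth and locally closed, is Zariski closed if and only if it is closed in the classical topology, so no ambiguity arises. With those two remarks made explicit, your proposal is a complete and correct proof of the theorem the paper only cites.
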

\begin{proof}
See e.g. Ref. \cite{nolan} Ch. 3.4.1.
\end{proof}

Theorem \ref{tsw} together with Theorem \ref{fin} implies that there exists a finite number $m$ of polynomials such that the set of $m$-tuples of values of these polynomials is in one-to-one correspondence with the set of closed orbits. Thus the polynomials provide coordinates for the set of closed orbits. In particular, for any group $G_1\otimes G_2\otimes\dots\otimes G_n$ where each $G_k$ is one out of $\mathrm{S}G^C$, $\mathrm{S}G^{C\gamma^5}$, $\mathrm{S}G^{C\gamma^5}\cap \mathrm{S}G^C$ and $\mathrm{SL}(4,\mathbb{C})$ a finite number of polynomial invariants distinguishes all closed orbits of $G_1\otimes G_2\otimes\dots\otimes G_n$. An open orbit on the other hand cannot be distinguished by such polynomials from the closed orbit in its closure. However it is an open question if a set of polynomials with these properties can be constructed as tensor sandwich contractions.

\section{Discussion and Conclusions}\label{diss}
In this work we have considered the problem of describing the spinor entanglement of a system of multiple Dirac particles with definite momenta held by spacelike separated observers. The general approach followed is the same as was considered in Ref. \cite{spinorent} for the case of two Dirac particles. We reviewed some properties of the Dirac equation, the spinor representation of the Lorentz group and the charge conjugation transformation, as well as some properties of Lorentz invariant skew-symmetric bilinear forms.
The assumption was made that the local curvature of spacetime can be neglected and each particle described as belonging to a Minkowski space. Beyond this, we assumed that the physical scenario is such that it is warranted to treat particle momentum eigenmodes as having a finite spatial extent. 
Moreover, we assumed that the tensor products of the individual particle momentum eigenmodes can be used as a basis for the multi-particle states.

Given these assumptions and utilizing the properties of the skew-symmetric bilinear forms we described a method to construct polynomials in the state coefficients of the system of spacelike separated Dirac particles. This method is a generalization of the method used in Ref. \cite{spinorent} for the case of two Dirac particles.
The polynomials constructed by this method are invariant under the spinor representations of the local proper orthochronous Lorentz groups. Moreover, each such Lorentz invariant polynomial is identically zero for states where any one of the spinors is in a product state with the other spinors. 

For the case of three and four Dirac spinors polynomials of degree 2 and 4 were constructed and their linear independence tested. For three spinors no non-zero degree 2 polynomials can be constructed, but a set of 67 linearly independent degree 4 polynomials were given. For four spinors 16 linearly independent degree 2 polynomials were derived and a further 26 polynomials of degree 4 were given. A larger number of degree 4 polynomials exist but was not derived due to the complexity of constructing a large number of such polynomials and testing their linear independence. For five spinors no nonzero degree 2 polynomial can be constructed. The degree 4 polynomials that can be constructed for five spinors using the method in this work were described but not computed. For any even number $N$ of Dirac spinors $2^N$ linearly independent polynomials of degree 2 can be constructed and these polynomials were described.

For the case of Weyl particles, i.e., particles with definite chirality, the Lorentz invariant polynomials for three Dirac spinors reduce to either a multiple of the Coffman-Kundu-Wootters 3-tangle \cite{coffman} or alternatively are identically zero. If only two of the particles are Weyl particles some of the polynomials reduce to the $2\times 2\times 4$ tangle described in  Refs. \cite{verstraete,moor}. For the case of four Weyl particles the Lorentz invariant polynomials derived here reduce to linear combinations of the polynomials found by Luque and Thibon \cite{luque}. For the case of an even number $N$ of Weyl particles the degree 2 polynomials described here reduce to multiples of the $N$-tangle introduced by Wong and Christensen \cite{wong}.

In the case of zero particle momenta and zero four-potentials we considered the eigenspaces of the local Dirac Hamiltonians. These eigenspaces are often identified with non-relativistic free spin-$\frac{1}{2}$ particles and antiparticles. On these eigenspaces the polynomials constructed in this work all reduce to linear combinations of the previously known polynomials constructed for non-relativistic spin-$\frac{1}{2}$ particles or are identically zero. For three particles the polynomials either reduce to a multiple of the Coffman-Kundu-Wootters 3-tangle \cite{coffman} or are zero. For four particles the polynomials reduce to linear combinations of the polynomials found by Luque and Thibon \cite{luque} or are zero. For the case of an even number $N$ of particles the degree 2 polynomials constructed here reduce to the $N$-tangle of Wong and Christensen \cite{wong} or are zero. Since a system of Dirac particles can always be described in their respective rest frames, the previously known polynomials constructed for non-relativistic spin-$\frac{1}{2}$ particles can always be used for the case of free particles in eigenstates of the local Dirac Hamiltonians.

We considered evolutions generated by local Hamiltonians that act unitarily on the subspaces defined by fixed momenta, i.e., the subspaces spanned by the spinorial degrees of freedom. All polynomials derived using the method given here are invariant, up to a U(1) phase, under such local unitary evolution generated by zero-mass Dirac Hamiltonians and zero-mass Dirac Hamiltonians with additional terms that are second degree in Dirac gamma matrices such as a Semenoff mass term \cite{semenoff} or a Haldane mass term \cite{haldane}.
Some polynomials are by construction invariant, up to a U(1) phase, also under local unitary evolution generated by arbitrary-mass Dirac Hamiltonians and some are invariant, up to a U(1) phase, under local unitary evolution generated by zero-mass Dirac Hamiltonians with additional terms that are third degree in  Dirac gamma matrices such as a Pauli coupling or a Yukawa pseudo-scalar coupling.

Because of these properties the polynomials constructed by the method in this work are considered potential candidates for describing the entanglement of the spinor degrees of freedom in a system of multiple Dirac particles with either zero or arbitrary mass, or zero-mass with an additional coupling such as a Yukawa pseudo-scalar. Polynomials of this kind can be used to partially characterize qualitatively inequivalent forms of spinor entanglement as described in Ref. \cite{spinorent}. Such a characterization is not complete since only the disjoint orbit-closures of the invariance groups of the polynomials can be distinguished by the polynomials. Therefore there exist inequivalently spinor entangled states that cannot be distinguished by any set of polynomials of this kind. Moreover, as described in Ref. \cite{spinorent}
for any set of polynomials that are invariant under the spinor representation of the proper orthochronous Lorentz group there exist spinor entangled states for which all the polynomials are zero.

The general properties of the algebras of Lorentz invariant polynomials that are invariant, up to a U(1) phase, under evolution generated locally by arbitrary-mass Dirac Hamiltonians, by zero-mass Dirac Hamiltonians, or by zero-mass Dirac Hamiltonians with a Yukawa pseudo-scalar coupling were discussed. Any such algebra is generated by a finite number of polynomials but if such a set of generators can be constructed by the method described in this work is not clear.

In Ref. \cite{spinorent} it was descried how the absolute values of Lorentz invariant polynomials for two Dirac spinors can be extended to functions on the set of states that are incoherent mixtures, i.e., mixed states, through convex roof extensions \cite{lima,wakker,uhlmannn}. Convex roof extensions can be made also for the case of the polynomials constructed for multiple Dirac spinors in this work. Therefore such convex roof extensions can provide a partial characterization of the different types of multi-spinor entanglement of incoherent mixtures. These convex roof extensions are by definition identically zero for all incoherent mixtures of product states, i.e., for all separable states.

The polynomial Lorentz invariants in this work were constructed to describe multi-spinor entanglement for the case of Dirac particles with definite momenta but it is an open question whether similar constructions can be made for the case without definite momenta.
Another open question is the description of entanglement in a scenario that allows for communication between the labs holding the particles. 
In such a scenario one can consider multi-spinor entanglement properties that can be quantified by entanglement measures \cite{entmes}, i.e., multi-spinor entanglement properties that satisfy a condition of non-increase on average under any local operations assisted by classical communication \cite{vidal}.
More broadly one may consider if there are other conceptualizations of spinor entanglement properties that give a description that is complementary to the one given here.

\section*{Declaration of interests}
The authors declare that they have no known competing financial interests or personal relationships that could have appeared to influence the work reported in this paper.

\section*{Acknowledgment}
The author thanks the anonymous referees for constructive comments that improved the work.
The author also thanks the anonymous referees for constructive comments on an earlier version of this work that motivated several major improvements.
Support
from the European Research Council Consolidator Grant QITBOX (Grant Agreement No. 617337), the Spanish MINECO
(Project FOQUS FIS2013-46768-P, Severo Ochoa grant SEV-
2015-0522), Fundaci\'o Privada Cellex, the Generalitat de Catalunya
(SGR 875) and the John Templeton Foundation is acknowledged.

\appendix

\section{The use of real and rational numbers to quantify properties in experiments}\label{opp}
Any measurement that quantifies a property is an operational procedure that eventually concludes and results in an output number that is registered by an experimenter. In any experiment only finitely many such procedures can be performed and any memory storing the output numbers has finite capacity.

Any real number is by definition the limit of a Cauchy sequence of rational numbers \cite{cantor}. Such a sequence terminates only if the limit is a rational number itself. Thus if the limit is an irrational number the sequence does not terminate.
Consequently, the representation of an irrational number in a base-$n$ positional numeral system is a non-terminating and non-repeating sequence of digits for any $n$.

It follows that no memory with finite capacity can store an irrational number, and thus the output numbers from any given experiment is by necessity a finite set of rational numbers. In any finite set of rational numbers the elements are multiples of their greatest common divisor $q$.
Therefore, in any given experiment one cannot distinguish between the continuous spectrum of real numbers $\mathbb{R}$ and a discrete spectrum $nq$, where $n\in \mathbb{Z}$. Moreover, the experiment cannot distinguish a set $S\subset\mathbb{R}$ containing $mq$ for some $m\in \mathbb{Z}$ from $mq$ itself if $S$ does not contain any other multiple of $q$.

In particular, for any set of measured momentum vectors on the form $\bold{k}=(k_1,k_2,k_3)$ a rectangular box with periodic boundary conditions, and whose sides align with the directions corresponding to the components of the $\bold{k}$, can be found such that all the measured momenta are simultaneously allowed by the box dimensions. Alternatively, for any set of measured momenta each of the measured $\bold{k}$ can be matched to a Schwartz function with compact support in momentum space that contains $\bold{k}$ in such a way that no two of the Schwartz functions have overlapping support and no two of the measured momenta are in the support of the same Schwartz function.

\section{The bilinear forms ${\psi^T}C{\varphi}$ and ${\psi^T}C\gamma^5{\varphi}$ undergoing unitary spinor evolution generated by Dirac-like Hamiltonians}\label{hamm}
Here we show how to derive the properties of the bilinear forms ${\psi^T}C{\varphi}$ and ${\psi^T}C\gamma^5{\varphi}$ described in Section \ref{ham}.
We consider a subspace defined by a fixed particle momentum $\bold{k}$, i.e., a subspace spanned by the four basis elements $\phi_je^{i\bold{k}\cdot\bold{x}}$ with the same $\bold{k}$. Furthermore we assume that the evolution acts unitarily on such a subspace and is generated by a Hamiltonian operator $H$. For the subspace to be invariant under the evolution it is required that $(\phi_je^{i\bold{k}\cdot\bold{x}},H\phi_le^{i\bold{k'}\cdot\bold{x}})\propto\delta_{\bold{k},\bold{k'}}$. Therefore, to have such unitary action on the subspace we consider evolution generated by Hamiltonians that do not depend on the spatial coordinate $\bold{x}$.

We again consider the inner product on a subspace of this kind
\begin{eqnarray}
({\psi(t)},{\varphi(t)})_{\bold{k}}=\psi^{\dagger}(t)\varphi(t),
\end{eqnarray}
and assume that the time dependent Hamiltonian $H(s)$ is bounded and strongly continuous, i.e., for all ${\psi}$ and $s$ it holds that $\lim_{t\to s}||H(t){\psi}-H(s){\psi}||=0$ where $||\cdot||$ is the norm induced by the inner product. Then the evolution operator can be expressed as an ordered exponential as described in the following theorem.

\begin{theorem}
Let $t\in\mathbb{R}\to H(t)$ be a strongly continuous map into the bounded Hermitian operators on a Hilbert space $\mathcal{H}$. Then there exists an evolution operator $U(t,s)$ such that for all ${\psi}\in\mathcal{H}$ we have that
${\psi(t)}=U(t,s){\psi(s)}$ and $\partial_{t}U(t,s)=-iH(t)U(t,s)$. This evolution operator can be expressed as an ordered exponential
\begin{eqnarray*}
U(t,r)=&&\mathcal{T}_{\leftarrow}\{e^{-i\int_{r}^tH(s)ds}\}\nonumber\\
\equiv&&\sum_{n=0}^\infty(-i)^n\int_{r}^t\int_{r}^{s_n}\int_{r}^{s_{n-1}}\int_{r}^{s_{2}}H(s_n)\dots H(s_{1})ds_1\dots ds_{n-2} ds_{n-1} ds_{n},\nonumber\\
\end{eqnarray*}
and satisfies $U(t,t)=I$ and $U(r,s)U(s,t)=U(r,t)$.
\end{theorem}
\begin{proof}
See e.g. Ref. \cite{reed}.
\end{proof} 
For an evolution operator on an ordered exponential form we can consider its conjugate transpose, complex conjugate, and transpose in the given basis, 
\begin{eqnarray}
U(t,r)&&=\mathcal{T}_{\leftarrow}\{e^{-i\int_{r}^tH(s)ds}\},\nonumber\\
U(t,r)^\dagger &&=\mathcal{T}_{\rightarrow}\{e^{i\int_{r}^tH(s)ds}\},\nonumber\\
U(t,r)^*&&=\mathcal{T}_{\leftarrow}\{e^{i\int_{r}^t H^T(s)ds}\},\nonumber\\
U(t,r)^T&&=\mathcal{T}_{\rightarrow}\{e^{-i\int_{r}^t H^T(s)ds}\}.
\end{eqnarray}

Now assume that $X$ is a time independent matrix such that for all $s$ it holds that $XH(s)=-H(s)^TX$. Then it follows that $X\mathcal{T}_{\leftarrow}\{e^{-i\int_{0}^tH(s)ds}\}=\mathcal{T}_{\leftarrow}\{e^{i\int_{0}^tH(s)^Tds}\}X$, and the bilinear form $\psi^{T}X\varphi$ is invariant under the evolution generated by $H(s)$
\begin{eqnarray}
{\psi^T} U(t,0)^T  XU(t,0){\varphi}={\psi^T}\mathcal{T}_{\rightarrow}\{e^{-i\int_{0}^t H^T(s)ds}\}
\mathcal{T}_{\leftarrow}\{e^{i\int_{0}^t H^T(s)ds}\}X{\varphi}={\psi^T}X{\varphi}.
\end{eqnarray}

Next we consider the possibilities $X=C$ and $X=C\gamma^5$.
For products of different numbers of distinct gamma matrices we have for $C$ that

\begin{eqnarray}\label{c}
(\gamma^\mu)^TC&=&C\gamma^\mu\nonumber\\
(\gamma^\mu\gamma^\nu)^TC&=&-C\gamma^\mu\gamma^\nu\nonumber\\
(\gamma^\mu\gamma^\nu\gamma^\rho)^TC&=&-C\gamma^\mu\gamma^\nu\gamma^\rho\nonumber\\
(\gamma^\mu\gamma^\nu\gamma^\rho\gamma^\sigma)^TC&=&C\gamma^\mu\gamma^\nu\gamma^\rho\gamma^\sigma,
\end{eqnarray}
and for $C\gamma^5$ that
\begin{eqnarray}\label{c5}
(\gamma^\mu)^TC\gamma^5&=&-C\gamma^5\gamma^\mu\nonumber\\
(\gamma^\mu\gamma^\nu)^TC\gamma^5&=&-C\gamma^5\gamma^\mu\gamma^\nu\nonumber\\
(\gamma^\mu\gamma^\nu\gamma^\rho)^TC\gamma^5&=&C\gamma^5\gamma^\mu\gamma^\nu\gamma^\rho\nonumber\\
(\gamma^\mu\gamma^\nu\gamma^\rho\gamma^\sigma)^TC\gamma^5&=&C\gamma^5\gamma^\mu\gamma^\nu\gamma^\rho\gamma^\sigma.
\end{eqnarray}
From these equations we can see that a Hamiltonian term that is second or third degree in the gamma matrices, i.e., a Hamiltonian term of the form
\begin{eqnarray}
H^{2,3}(t)=\gamma^\mu\gamma^\nu\phi_{\mu\nu}(t)+\gamma^\mu\gamma^\nu\gamma^\rho\kappa_{\mu\nu\rho}(t),
\end{eqnarray}
satisfies $CH^{2,3}(t)=-(H^{2,3}(t))^TC$. Similarly, a Hamiltonian term that is first or second degree in the gamma matrices, i.e., a Hamiltonian term of the form
\begin{eqnarray}
H^{1,2}(t)=\gamma^\mu\eta_{\mu}(t)+\gamma^\mu\gamma^\nu\lambda_{\mu\nu}(t),
\end{eqnarray}
satisfies $C\gamma^5H^{1,2}(t)=-(H^{1,2}(t))^TC\gamma^5$.
Any Hamiltonian term $H^0(t)=f(t)I$ that is zeroth degree in the gamma matrices, i.e., proportional to the identity matrix, is clearly its own transpose and commutes with both $C\gamma^5$ and $C$.

The Dirac Hamiltonian contains a first degree term in the gamma matrices, the mass term $m\gamma^0$, and a second degree term, the generalized canonical momentum term $\sum_{\mu=1,2,3}\gamma^{0}\gamma^{\mu}(i\partial_{\mu}-qA_{\mu}(t))$.
Beyond this, it has a zeroth degree term, the coupling to the scalar potential $qA_0(t)I$. However, any zeroth degree term $f(t)I$ can be removed from the Hamiltonian by a change of variables. If we define a new spinor as
$\psi'= e^{-i\theta(t)}\psi$ the new Hamiltonian $H'$ defined by $i\partial_t\psi'(t)=H'\psi'(t)$ is 
$H'=H+\gamma^0\sum_\mu\gamma^\mu\partial_\mu \theta(t)$. This change of variables amounts to a change of local U(1) gauge (See e.g. Ref. \cite{griffiths}). For the choice $\theta(t)=-\int_{t_0}^tf(s)ds$ we see that since $f(t)-\partial_t \int_{t_0}^tf(s)ds=0$ the term proportional to the identity in $H'$ is identically zero. Except for the zeroth degree term the new Hamiltonian $H'$ in general contains terms with the same degrees in gamma matrices as $H$. It cannot acquire terms with degrees different from those of the terms in $H$. We can thus remove the zeroth degree term $qA_0(t)I$ from the Dirac Hamiltonian by choosing $\theta(t)=-\int_{t_0}^tqA_0(s)ds$. For this choice let $U'(t,t_0)$ be the evolution generated by $H'$. Then we can see that $\psi(t)=e^{-iq\int_{t_0}^tA_0(s)ds}\psi'(t)=e^{-iq\int_{t_0}^tA_0(s)ds}U'(t,t_0)\psi'(t_0)=e^{-iq\int_{t_0}^tA_0(s)ds}U'(t,t_0)\psi(t_0)$.

Therefore, as described in Ref. \cite{spinorent}, for an evolution $U_D(t,0)$ generated by Dirac Hamiltonians it holds for the bilinear form ${\psi^T}C\gamma^5{\varphi}$ that

\begin{eqnarray}\label{ghhh}
&&\psi^{T}U_D(t,0)^TC\gamma^5U_D(t,0)\varphi\nonumber\\
&&=e^{-2iq\int_{0}^tA_0(s)ds}{\psi^T}{U'}_D(t,0)^T C\gamma^5 U'_D(t,0){\varphi}\nonumber\\
&&=e^{-2iq\int_{0}^tA_0(s)ds}{\psi^T}C\gamma^5{\varphi}.
\end{eqnarray}

Similarly, for an evolution $U_W(t,0)$ generated by zero-mass Dirac Hamiltonians it holds for the bilinear form ${\psi^T}C{\varphi}$ that 

\begin{eqnarray}\label{ujjjk}
&&\psi^{T}U_W(t,0)^TU_W(t,0)\varphi \nonumber\\
&&=e^{-2iq\int_{0}^tA_0(s)ds}{\psi^T}{U'}_W(t,0)^TCU'_W(t,0){\varphi}\nonumber\\
&&=e^{-2iq\int_{0}^tA_0(s)ds}{\psi^T}C{\varphi}.
\end{eqnarray}

Thus the bilinear form $\psi^TC\varphi$ is invariant, up to a U(1) phase, under evolutions generated by any Hamiltonians on the form $H^{2,3}(t)+H^{0}(t)$ and the bilinear form $\psi^TC\gamma^5\varphi$ is invariant, up to a U(1) phase, under evolutions generated by any Hamiltonians on the form $H^{1,2}(t)+H^{0}(t)$.

\section{Graph representations of the tensor sandwich contractions that yield degree 4 polynomials for four Dirac spinors}\label{graphs}

Here we give the graph representations of the tensor sandwich contractions in Eq. (\ref{cvn}). The contractions $W_a$, $W_c$, and $W_f$ that are each invariant with respect to permutations of two disjoint pairs of laboratories are given in Fig. \ref{ris3}. The contractions  $W_b$, $W_d$, $W_e$, $W_g$, $W_h$, and $W_i$ that are each invariant with respect to permutations of a single pair of laboratories are given in Fig. \ref{ris2}. Finally, the contractions $W_j$, $W_k$, $W_l$, and $W_m$ that are each invariant with respect to permutations of a triple of laboratories are given in Fig. \ref{ris4}.

\begin{figure}
\centering
\begin{subfigure}{.3\linewidth}
  \centering
  \includegraphics[scale=1]{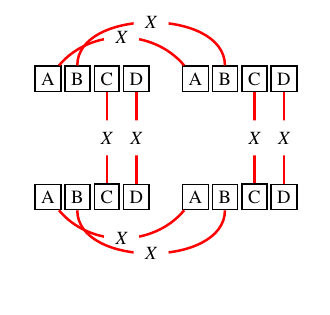}
  \caption{$W_a$}
\end{subfigure}%
\hspace{2cm}%
\begin{subfigure}{.3\linewidth}
  \centering
  \includegraphics[scale=1]{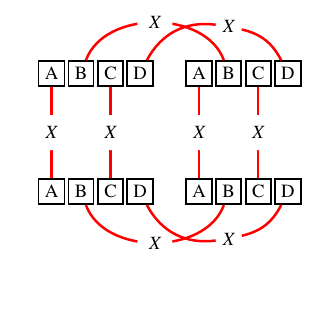}
   \caption{$W_c$}
\end{subfigure}\\
\begin{subfigure}{4cm}
  \centering
  \includegraphics[scale=1]{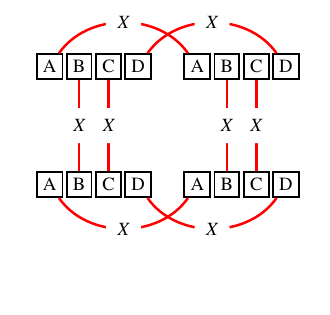}
   \caption{$W_f$}
\end{subfigure}
\caption{Graph representations of the tensor sandwich contractions $W_a$, $W_c$, and $W_f$. Each of the four copies of $\Psi^{ABCD}$ is represented by four boxes corresponding to the four tensor indices. The tensor sandwich contractions are represented by red lines connecting the contracted indices broken by $X$ representing the sandwiched tensor.}
\label{ris3}
\end{figure}

\begin{figure}
\centering
\begin{subfigure}{.3\linewidth}
  \centering
  \includegraphics[scale=1]{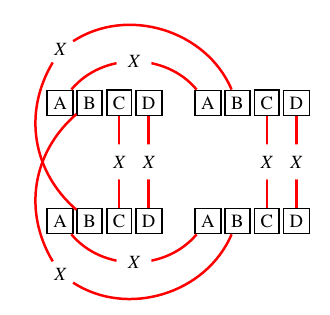}
  \caption{$W_b$}
\end{subfigure}%
\hspace{2cm}%
\begin{subfigure}{.3\linewidth}
  \centering
  \includegraphics[scale=1]{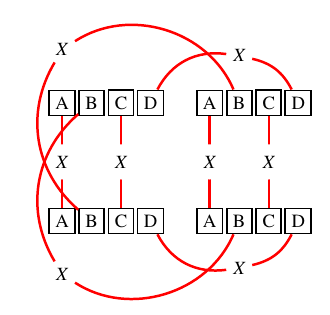}
   \caption{$W_d$}
\end{subfigure}\\
\begin{subfigure}{4cm}
  \centering
  \includegraphics[scale=1]{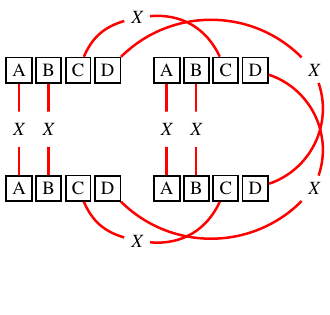}
   \caption{$W_e$}
\end{subfigure}
\hspace{2cm}%
\begin{subfigure}{4cm}
  \centering
  \includegraphics[scale=1]{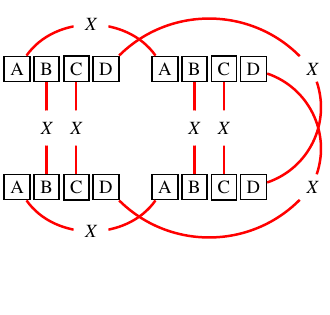}
   \caption{$W_g$}
\end{subfigure}\\
\begin{subfigure}{4cm}
  \centering
  \includegraphics[scale=1]{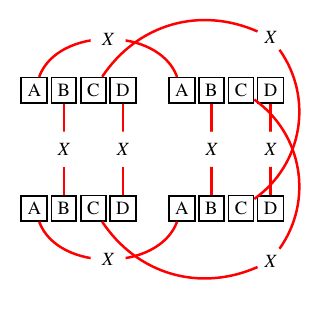}
   \caption{$W_h$}
\end{subfigure}
\hspace{2cm}%
\begin{subfigure}{4cm}
  \centering
  \includegraphics[scale=1]{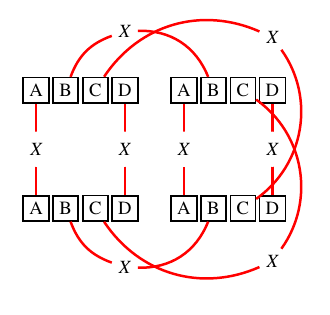}
   \caption{$W_i$}
\end{subfigure}
\caption{Graph representations of the tensor sandwich contractions $W_b$, $W_d$, $W_e$, $W_g$, $W_h$, and $W_i$. Each of the four copies of $\Psi^{ABCD}$ is represented by four boxes corresponding to the four tensor indices. The tensor sandwich contractions are represented by red lines connecting the contracted indices broken by $X$ representing the sandwiched tensor.}
\label{ris2}
\end{figure}

\begin{figure}
\centering
\begin{subfigure}{.3\linewidth}
  \centering
  \includegraphics[scale=1]{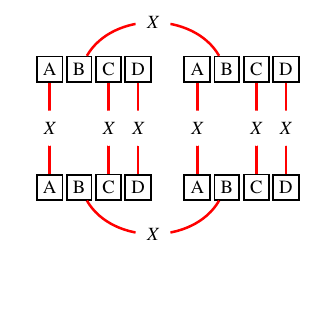}
  \caption{$W_j$}
\end{subfigure}%
\hspace{2cm}%
\begin{subfigure}{.3\linewidth}
  \centering
  \includegraphics[scale=1]{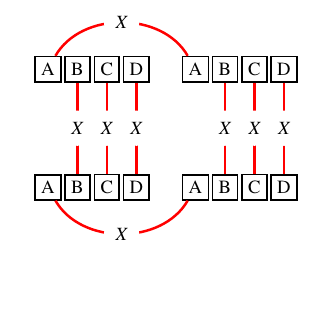}
   \caption{$W_k$}
\end{subfigure}\\
\begin{subfigure}{4cm}
  \centering
  \includegraphics[scale=1]{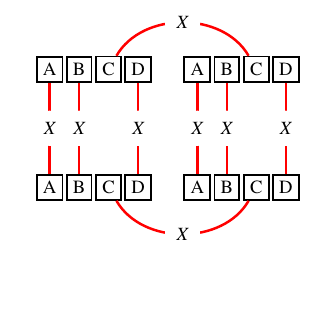}
   \caption{$W_l$}
\end{subfigure}
\hspace{2cm}%
\begin{subfigure}{.3\linewidth}
  \centering
  \includegraphics[scale=1]{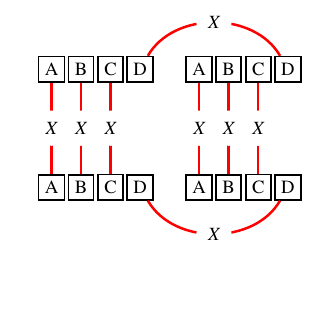}
   \caption{$W_m$}
\end{subfigure}
\caption{Graph representations of the tensor sandwich contractions $W_j$, $W_k$, $W_l$, and $W_m$.
 Each of the four copies of $\Psi^{ABCD}$ is represented by four boxes corresponding to the four tensor indices. The tensor sandwich contractions are represented by red lines connecting the contracted indices broken by $X$ representing the sandwiched tensor.}
\label{ris4}
\end{figure}

\clearpage
\section{Degree 4 polynomials for five spinors}\label{wides}

Here we give the 40 inequivalent ways to pair up the tensor indices of four copies of the $4\times 4\times 4\times 4\times 4$ tensor $\Psi^{ABCDE}$ that do not factorize into two degree 2 polynomials. In writing these sandwich contractions we leave out the summation sign with the understanding that repeated indices are summed over. We also suppress the superscript $ABCDE$ of $\Psi$. The 40 inequivalent ways to contract the indices are

\begin{eqnarray}\label{cvn2}
F_1=&&X_{gl} X_{hm}X_{in}X_{jo}X_{ku}\Psi_{ghijk}\Psi_{lmnop} X_{qv} X_{rw}X_{sx}X_{ty}X_{pz}\Psi_{qrstu}\Psi_{vwxyz},\nonumber\\
F_2=&&X_{gl} X_{hm}X_{in}X_{jt}X_{kp}\Psi_{ghijk}\Psi_{lmnop} X_{qv} X_{rw}X_{sx}X_{oy}X_{uz}\Psi_{qrstu}\Psi_{vwxyz},\nonumber\\
F_3=&&X_{gl} X_{hm}X_{is}X_{jo}X_{kp}\Psi_{ghijk}\Psi_{lmnop} X_{qv} X_{rw}X_{nx}X_{ty}X_{uz}\Psi_{qrstu}\Psi_{vwxyz},\nonumber\\
F_4=&&X_{gl} X_{hr}X_{in}X_{jo}X_{kp}\Psi_{ghijk}\Psi_{lmnop} X_{qv} X_{mw}X_{sx}X_{ty}X_{uz}\Psi_{qrstu}\Psi_{vwxyz},\nonumber\\
F_5=&&X_{gq} X_{hm}X_{in}X_{jo}X_{kp}\Psi_{ghijk}\Psi_{lmnop} X_{lv} X_{rw}X_{sx}X_{ty}X_{uz}\Psi_{qrstu}\Psi_{vwxyz},\nonumber\\
F_6=&&X_{gl} X_{hm}X_{in}X_{jt}X_{ku}\Psi_{ghijk}\Psi_{lmnop} X_{qv} X_{rw}X_{sx}X_{oy}X_{pz}\Psi_{qrstu}\Psi_{vwxyz},\nonumber\\
F_7=&&X_{gl} X_{hm}X_{in}X_{jt}X_{kz}\Psi_{ghijk}\Psi_{lmnop} X_{qv} X_{rw}X_{sx}X_{oy}X_{pu}\Psi_{qrstu}\Psi_{vwxyz},\nonumber\\
F_8=&&X_{gl} X_{hm}X_{is}X_{jt}X_{kp}\Psi_{ghijk}\Psi_{lmnop} X_{qv} X_{rw}X_{nx}X_{oy}X_{uz}\Psi_{qrstu}\Psi_{vwxyz},\nonumber\\
F_9=&&X_{gl} X_{hm}X_{is}X_{jy}X_{kp}\Psi_{ghijk}\Psi_{lmnop} X_{qv} X_{rw}X_{nx}X_{ot}X_{uz}\Psi_{qrstu}\Psi_{vwxyz},\nonumber\\
F_{10}=&&X_{gl} X_{hr}X_{is}X_{jo}X_{kp}\Psi_{ghijk}\Psi_{lmnop} X_{qv} X_{mw}X_{nx}X_{ty}X_{uz}\Psi_{qrstu}\Psi_{vwxyz},\nonumber\\
F_{11}=&&X_{gl} X_{hr}X_{ix}X_{jo}X_{kp}\Psi_{ghijk}\Psi_{lmnop} X_{qv} X_{mw}X_{ns}X_{ty}X_{uz}\Psi_{qrstu}\Psi_{vwxyz},\nonumber\\
F_{12}=&&X_{gq} X_{hr}X_{in}X_{jo}X_{kp}\Psi_{ghijk}\Psi_{lmnop} X_{lv} X_{mw}X_{sx}X_{ty}X_{uz}\Psi_{qrstu}\Psi_{vwxyz},\nonumber\\
F_{13}=&&X_{gq} X_{hw}X_{in}X_{jo}X_{kp}\Psi_{ghijk}\Psi_{lmnop} X_{lv} X_{mr}X_{sx}X_{ty}X_{uz}\Psi_{qrstu}\Psi_{vwxyz},\nonumber\\
F_{14}=&&X_{gq} X_{hm}X_{in}X_{jo}X_{ku}\Psi_{ghijk}\Psi_{lmnop} X_{lv} X_{rw}X_{sx}X_{ty}X_{pz}\Psi_{qrstu}\Psi_{vwxyz},\nonumber\\
F_{15}=&&X_{gq} X_{hm}X_{in}X_{jo}X_{kz}\Psi_{ghijk}\Psi_{lmnop} X_{lv} X_{rw}X_{sx}X_{ty}X_{pu}\Psi_{qrstu}\Psi_{vwxyz},\nonumber\\
F_{16}=&&X_{gq} X_{hm}X_{in}X_{jt}X_{kp}\Psi_{ghijk}\Psi_{lmnop} X_{lv} X_{rw}X_{sx}X_{oy}X_{uz}\Psi_{qrstu}\Psi_{vwxyz},\nonumber\\
F_{17}=&&X_{gq} X_{hm}X_{in}X_{jy}X_{kp}\Psi_{ghijk}\Psi_{lmnop} X_{lv} X_{rw}X_{sx}X_{ot}X_{uz}\Psi_{qrstu}\Psi_{vwxyz},\nonumber\\
F_{18}=&&X_{gl} X_{hm}X_{is}X_{jo}X_{ku}\Psi_{ghijk}\Psi_{lmnop} X_{qv} X_{rw}X_{nx}X_{ty}X_{pz}\Psi_{qrstu}\Psi_{vwxyz},\nonumber\\
F_{19}=&&X_{gl} X_{hm}X_{is}X_{jo}X_{kz}\Psi_{ghijk}\Psi_{lmnop} X_{qv} X_{rw}X_{nx}X_{ty}X_{pu}\Psi_{qrstu}\Psi_{vwxyz},\nonumber\\
F_{20}=&&X_{gl} X_{hr}X_{in}X_{jt}X_{kp}\Psi_{ghijk}\Psi_{lmnop} X_{qv} X_{mw}X_{sx}X_{oy}X_{uz}\Psi_{qrstu}\Psi_{vwxyz},\nonumber\\
F_{21}=&&X_{gl} X_{hr}X_{in}X_{jy}X_{kp}\Psi_{ghijk}\Psi_{lmnop} X_{qv} X_{mw}X_{sx}X_{ot}X_{uz}\Psi_{qrstu}\Psi_{vwxyz},\nonumber\\
F_{22}=&&X_{gq} X_{hm}X_{is}X_{jo}X_{kp}\Psi_{ghijk}\Psi_{lmnop} X_{lv} X_{rw}X_{ix}X_{ty}X_{uz}\Psi_{qrstu}\Psi_{vwxyz},\nonumber\\
F_{23}=&&X_{gq} X_{hm}X_{ix}X_{jo}X_{kp}\Psi_{ghijk}\Psi_{lmnop} X_{lv} X_{rw}X_{is}X_{ty}X_{uz}\Psi_{qrstu}\Psi_{vwxyz},\nonumber\\
F_{24}=&&X_{gl} X_{hr}X_{in}X_{jo}X_{ku}\Psi_{ghijk}\Psi_{lmnop} X_{qv} X_{mw}X_{sx}X_{ty}X_{pz}\Psi_{qrstu}\Psi_{vwxyz},\nonumber\\
F_{25}=&&X_{gl} X_{hr}X_{in}X_{jo}X_{kz}\Psi_{ghijk}\Psi_{lmnop} X_{qv} X_{mw}X_{sx}X_{ty}X_{pu}\Psi_{qrstu}\Psi_{vwxyz},\nonumber\\
F_{26}=&&X_{gq} X_{hm}X_{in}X_{jy}X_{kz}\Psi_{ghijk}\Psi_{lmnop} X_{lv} X_{rw}X_{sx}X_{to}X_{up}\Psi_{qrstu}\Psi_{vwxyz},\nonumber\\
F_{27}=&&X_{gq} X_{hm}X_{ix}X_{jo}X_{kz}\Psi_{ghijk}\Psi_{lmnop} X_{lv} X_{rw}X_{sn}X_{ty}X_{up}\Psi_{qrstu}\Psi_{vwxyz},\nonumber\\
F_{28}=&&X_{gq} X_{hw}X_{in}X_{jo}X_{kz}\Psi_{ghijk}\Psi_{lmnop} X_{lv} X_{rm}X_{sx}X_{ty}X_{up}\Psi_{qrstu}\Psi_{vwxyz},\nonumber\\
F_{29}=&&X_{gl} X_{hr}X_{in}X_{jy}X_{kz}\Psi_{ghijk}\Psi_{lmnop} X_{qv} X_{mw}X_{sx}X_{to}X_{up}\Psi_{qrstu}\Psi_{vwxyz},\nonumber\\
F_{30}=&&X_{gl} X_{hr}X_{ix}X_{jo}X_{kz}\Psi_{ghijk}\Psi_{lmnop} X_{qv} X_{mw}X_{sn}X_{ty}X_{up}\Psi_{qrstu}\Psi_{vwxyz},\nonumber\\
F_{31}=&&X_{gl} X_{hr}X_{ix}X_{jy}X_{kp}\Psi_{ghijk}\Psi_{lmnop} X_{qv} X_{mw}X_{sn}X_{to}X_{uz}\Psi_{qrstu}\Psi_{vwxyz},\nonumber\\
F_{32}=&&X_{gl} X_{hm}X_{is}X_{jy}X_{kz}\Psi_{ghijk}\Psi_{lmnop} X_{qv} X_{rw}X_{nx}X_{ty}X_{up}\Psi_{qrstu}\Psi_{vwxyz},\nonumber\\
F_{33}=&&X_{gl} X_{hw}X_{is}X_{jo}X_{kz}\Psi_{ghijk}\Psi_{lmnop} X_{qv} X_{rm}X_{nx}X_{ty}X_{up}\Psi_{qrstu}\Psi_{vwxyz},\nonumber\\
F_{34}=&&X_{gl} X_{hw}X_{is}X_{jy}X_{kp}\Psi_{ghijk}\Psi_{lmnop} X_{qv} X_{rm}X_{nx}X_{to}X_{uz}\Psi_{qrstu}\Psi_{vwxyz},\nonumber\\
F_{35}=&&X_{gl} X_{hm}X_{ix}X_{jt}X_{kz}\Psi_{ghijk}\Psi_{lmnop} X_{qv} X_{rw}X_{sn}X_{oy}X_{up}\Psi_{qrstu}\Psi_{vwxyz},\nonumber\\
F_{36}=&&X_{gl} X_{hw}X_{in}X_{jt}X_{kz}\Psi_{ghijk}\Psi_{lmnop} X_{qv} X_{rm}X_{sx}X_{oy}X_{up}\Psi_{qrstu}\Psi_{vwxyz},\nonumber\\
F_{37}=&&X_{gl} X_{hw}X_{ix}X_{jt}X_{kp}\Psi_{ghijk}\Psi_{lmnop} X_{qv} X_{rm}X_{sn}X_{oy}X_{uz}\Psi_{qrstu}\Psi_{vwxyz},\nonumber\\
F_{38}=&&X_{gl} X_{hm}X_{ix}X_{jy}X_{ku}\Psi_{ghijk}\Psi_{lmnop} X_{qv} X_{rw}X_{sn}X_{to}X_{pz}\Psi_{qrstu}\Psi_{vwxyz},\nonumber\\
F_{39}=&&X_{gl} X_{hw}X_{in}X_{jy}X_{ku}\Psi_{ghijk}\Psi_{lmnop} X_{qv} X_{rm}X_{sx}X_{to}X_{pz}\Psi_{qrstu}\Psi_{vwxyz},\nonumber\\
F_{40}=&&X_{gl} X_{hw}X_{ix}X_{jo}X_{ku}\Psi_{ghijk}\Psi_{lmnop} X_{qv} X_{rm}X_{sn}X_{ty}X_{pz}\Psi_{qrstu}\Psi_{vwxyz}.\nonumber\\
\end{eqnarray}
For each pairing of tensor indices we can consider $(2^{10}-2^5)/2+2^5=528$ ways to choose the $X$s as either $C$ or $C\gamma^5$. This gives a total of 21120 tensor sandwich contractions that yield degree 4 polynomials for five spinors.

\section{Selected explicit polynomials}\label{polllu}
Here we give a few examples of explicitly written out polynomials. For the case of three Dirac spinors the polynomials $I_{2a},I_{2b},I_{2c}$ are given as examples of polynomials of the types $I_a,I_b$ and $I_c$ that are invariant under $G^{C\gamma^5}$, up to a U(1) phase, in all labs and invariant under P in all labs. The polynomials $I_{3a},I_{3b},I_{3c}$ are given as examples of polynomials of the types $I_a,I_b$ and $I_c$ that are invariant under $G^{C}$, up to a U(1) phase, in all labs and invariant under P in all labs. The polynomial $I_{3d}$ is given as an example of the polynomials of type $I_d$. The polynomial $I_{23a}$ is given as an example of a polynomial that is P invariant only in two of the labs, Bob's and Charlie's. The polynomial $I_{35a}$ is given as an example of a polynomial that is P invariant only in one lab, Alice's. Finally the polynomial $I_{11a}$ is given as an example of a polynomial that is not P invariant in any lab.
For the case of four Dirac spinors the polynomials $H_a,H_b,H_c$, and $H_d$ are given as examples of the degree 2 polynomials, and the polynomials $T_l$ and $Y_l$ are given as examples of the degree 4 polynomials.

We first give the polynomials $I_{2a},I_{2b},I_{2c}$ that are invariant under $G^{C\gamma^5}$, up to a U(1) phase, in all labs and invariant under P in all labs. Explicitly written out they are
\small
\begin{align}
   I_{2a} = -2 (&\psi_{133}\psi_{200} - \psi_{132}\psi_{201} + \psi_{131}\psi_{202} - \psi_{130}\psi_{203} - \psi_{123}\psi_{210} +
       \psi_{122}\psi_{211} - \psi_{121}\psi_{212} + \psi_{120}\psi_{213}\nonumber\\ &+ \psi_{113}\psi_{220} - \psi_{112}\psi_{221} + 
      \psi_{111}\psi_{222} - \psi_{110}\psi_{223} - \psi_{103}\psi_{230} + \psi_{102}\psi_{231} - \psi_{101}\psi_{232} + 
      \psi_{100}\psi_{233})^2\nonumber\\- 
   2 (&\psi_{033}\psi_{300} - \psi_{032}\psi_{301} + \psi_{031}\psi_{302} - \psi_{030}\psi_{303} - \psi_{023}\psi_{310} + 
      \psi_{022}\psi_{311} - \psi_{021}\psi_{312} + \psi_{020}\psi_{313}\nonumber\\ &+ \psi_{013}\psi_{320} - \psi_{012}\psi_{321} + 
      \psi_{011}\psi_{322} - \psi_{010}\psi_{323} - \psi_{003}\psi_{330} + \psi_{002}\psi_{331} - \psi_{001}\psi_{332} + 
      \psi_{000}\psi_{333})^2\nonumber\\ + 
   8 (&\psi_{113}\psi_{120} - \psi_{112}\psi_{121} + \psi_{111}\psi_{122} - \psi_{110}\psi_{123} - \psi_{103}\psi_{130} + 
      \psi_{102}\psi_{131} - \psi_{101}\psi_{132} + \psi_{100}\psi_{133})\nonumber\\\times  (&\psi_{213}\psi_{220} - \psi_{212}\psi_{221} + 
      \psi_{211}\psi_{222} - \psi_{210}\psi_{223} - \psi_{203}\psi_{230} + \psi_{202}\psi_{231} - \psi_{201}\psi_{232} + 
      \psi_{200}\psi_{233})\nonumber\\ + 
   8 (&\psi_{013}\psi_{020} - \psi_{012}\psi_{021} + \psi_{011}\psi_{022} - \psi_{010}\psi_{023} - \psi_{003}\psi_{030} + 
      \psi_{002}\psi_{031} - \psi_{001}\psi_{032} + \psi_{000}\psi_{033}) \nonumber\\\times (&\psi_{313}\psi_{320} - \psi_{312}\psi_{321} + 
      \psi_{311}\psi_{322} - \psi_{310}\psi_{323} - \psi_{303}\psi_{330} + \psi_{302}\psi_{331} - \psi_{301}\psi_{332} + 
      \psi_{300}\psi_{333})\nonumber\\  + 
   4 (&\psi_{033}\psi_{200} - \psi_{032}\psi_{201} + \psi_{031}\psi_{202} - \psi_{030}\psi_{203} - \psi_{023}\psi_{210} + 
      \psi_{022}\psi_{211} - \psi_{021}\psi_{212} + \psi_{020}\psi_{213}\nonumber\\ &+ \psi_{013}\psi_{220} - \psi_{012}\psi_{221} + 
      \psi_{011}\psi_{222} - \psi_{010}\psi_{223} - \psi_{003}\psi_{230} + \psi_{002}\psi_{231} - \psi_{001}\psi_{232} + 
      \psi_{000}\psi_{233})\nonumber\\\times  (&\psi_{133}\psi_{300} - \psi_{132}\psi_{301} + \psi_{131}\psi_{302} - \psi_{130}\psi_{303} - 
      \psi_{123}\psi_{310} + \psi_{122}\psi_{311} - \psi_{121}\psi_{312} + \psi_{120}\psi_{313}\nonumber\\ &+ \psi_{113}\psi_{320} - 
      \psi_{112}\psi_{321} + \psi_{111}\psi_{322} - \psi_{110}\psi_{323} - \psi_{103}\psi_{330} + \psi_{102}\psi_{331} - 
      \psi_{101}\psi_{332} + \psi_{100}\psi_{333})\nonumber\\ - 
   4 (&\psi_{033}\psi_{100} - \psi_{032}\psi_{101} + \psi_{031}\psi_{102} - \psi_{030}\psi_{103} - \psi_{023}\psi_{110} + 
      \psi_{022}\psi_{111} - \psi_{021}\psi_{112} + \psi_{020}\psi_{113}\nonumber\\ &+ \psi_{013}\psi_{120} - \psi_{012}\psi_{121} + 
      \psi_{011}\psi_{122} - \psi_{010}\psi_{123} - \psi_{003}\psi_{130} + \psi_{002}\psi_{131} - \psi_{001}\psi_{132} + 
      \psi_{000}\psi_{133})\nonumber\\\times  (&\psi_{233}\psi_{300} - \psi_{232}\psi_{301} + \psi_{231}\psi_{302} - \psi_{230}\psi_{303} - 
      \psi_{223}\psi_{310} + \psi_{222}\psi_{311} - \psi_{221}\psi_{312} + \psi_{220}\psi_{313}\nonumber\\ &+ \psi_{213}\psi_{320} - 
      \psi_{212}\psi_{321} + \psi_{211}\psi_{322} - \psi_{210}\psi_{323} - \psi_{203}\psi_{330} + \psi_{202}\psi_{331} - 
      \psi_{201}\psi_{332} + \psi_{200}\psi_{333}),
\end{align}
\begin{align}
 I_{2b} = -2 (&\psi_{123}\psi_{210}- \psi_{122}\psi_{211}+ \psi_{121}\psi_{212}- \psi_{120}\psi_{213}+ \psi_{113}\psi_{220}-
      \psi_{112}\psi_{221}+ \psi_{111}\psi_{222}- \psi_{110}\psi_{223}\nonumber\\&- \psi_{023}\psi_{310}+ \psi_{022}\psi_{311}- 
     \psi_{021}\psi_{312}+ \psi_{020}\psi_{313}- \psi_{013}\psi_{320}+ \psi_{012}\psi_{321}- \psi_{011}\psi_{322}+ 
     \psi_{010}\psi_{323})^2\nonumber\\ - 
  2 (&\psi_{133}\psi_{200}- \psi_{132}\psi_{201}+ \psi_{131}\psi_{202}- \psi_{130}\psi_{203}+ \psi_{103}\psi_{230}- 
     \psi_{102}\psi_{231}+ \psi_{101}\psi_{232}- \psi_{100}\psi_{233}\nonumber\\&- \psi_{033}\psi_{300}+ \psi_{032}\psi_{301}- 
     \psi_{031}\psi_{302}+ \psi_{030}\psi_{303}- \psi_{003}\psi_{330}+ \psi_{002}\psi_{331}- \psi_{001}\psi_{332}+ 
     \psi_{000}\psi_{333})^2\nonumber\\+ 
  8 (&\psi_{113}\psi_{210}- \psi_{112}\psi_{211}+ \psi_{111}\psi_{212}- \psi_{110}\psi_{213}- \psi_{013}\psi_{310}+ 
     \psi_{012}\psi_{311}- \psi_{011}\psi_{312}+ \psi_{010}\psi_{313})\nonumber\\\times  (&\psi_{123}\psi_{220}- \psi_{122}\psi_{221}+ 
     \psi_{121}\psi_{222}- \psi_{120}\psi_{223}- \psi_{023}\psi_{320}+ \psi_{022}\psi_{321}- \psi_{021}\psi_{322}+ 
     \psi_{020}\psi_{323})\nonumber\\ + 
  8 (&\psi_{103}\psi_{200}- \psi_{102}\psi_{201}+ \psi_{101}\psi_{202}- \psi_{100}\psi_{203}- \psi_{003}\psi_{300}+ 
     \psi_{002}\psi_{301}- \psi_{001}\psi_{302}+ \psi_{000}\psi_{303})\nonumber\\ \times (&\psi_{133}\psi_{230}- \psi_{132}\psi_{231}+ 
     \psi_{131}\psi_{232}- \psi_{130}\psi_{233}- \psi_{033}\psi_{330}+ \psi_{032}\psi_{331}- \psi_{031}\psi_{332}+ 
     \psi_{030}\psi_{333})\nonumber\\+ 
  4 (&\psi_{123}\psi_{200}- \psi_{122}\psi_{201}+ \psi_{121}\psi_{202}- \psi_{120}\psi_{203}+ \psi_{103}\psi_{220}- 
     \psi_{102}\psi_{221}+ \psi_{101}\psi_{222}- \psi_{100}\psi_{223}\nonumber\\&- \psi_{023}\psi_{300}+ \psi_{022}\psi_{301}- 
     \psi_{021}\psi_{302}+ \psi_{020}\psi_{303}- \psi_{003}\psi_{320}+ \psi_{002}\psi_{321}- \psi_{001}\psi_{322}+ 
     \psi_{000}\psi_{323})\nonumber\\\times (&\psi_{133}\psi_{210}- \psi_{132}\psi_{211}+ \psi_{131}\psi_{212}- \psi_{130}\psi_{213}+ 
     \psi_{113}\psi_{230}- \psi_{112}\psi_{231}+ \psi_{111}\psi_{232}- \psi_{110}\psi_{233}\nonumber\\&- \psi_{033}\psi_{310}+ 
     \psi_{032}\psi_{311}- \psi_{031}\psi_{312}+ \psi_{030}\psi_{313}- \psi_{013}\psi_{330}+ \psi_{012}\psi_{331}- 
     \psi_{011}\psi_{332}+ \psi_{010}\psi_{333})\nonumber\\ - 
  4 (&\psi_{113}\psi_{200}- \psi_{112}\psi_{201}+ \psi_{111}\psi_{202}- \psi_{110}\psi_{203}+ \psi_{103}\psi_{210}- 
     \psi_{102}\psi_{211}+ \psi_{101}\psi_{212}- \psi_{100}\psi_{213}\nonumber\\&- \psi_{013}\psi_{300}+ \psi_{012}\psi_{301}- 
     \psi_{011}\psi_{302}+ \psi_{010}\psi_{303}- \psi_{003}\psi_{310}+ \psi_{002}\psi_{311}- \psi_{001}\psi_{312}+ 
     \psi_{000}\psi_{313}) \nonumber\\\times (&\psi_{133}\psi_{220}- \psi_{132}\psi_{221}+ \psi_{131}\psi_{222}- \psi_{130}\psi_{223}+ 
     \psi_{123}\psi_{230}- \psi_{122}\psi_{231}+ \psi_{121}\psi_{232}- \psi_{120}\psi_{233}\nonumber\\&- \psi_{033}\psi_{320}+ 
     \psi_{032}\psi_{321}- \psi_{031}\psi_{322}+ \psi_{030}\psi_{323}- \psi_{023}\psi_{330}+ \psi_{022}\psi_{331}- 
     \psi_{021}\psi_{332}+ \psi_{020}\psi_{333}),   
\end{align}
\normalsize
and
\small
\begin{align}
 I_{2c} = -2 (&\psi_{132}\psi_{201}+ \psi_{131}\psi_{202}- \psi_{122}\psi_{211}- \psi_{121}\psi_{212}+ \psi_{112}\psi_{221}+
      \psi_{111}\psi_{222}- \psi_{102}\psi_{231}- \psi_{101}\psi_{232}\nonumber\\&- \psi_{032}\psi_{301}- \psi_{031}\psi_{302}+ 
     \psi_{022}\psi_{311}+ \psi_{021}\psi_{312}- \psi_{012}\psi_{321}- \psi_{011}\psi_{322}+ \psi_{002}\psi_{331}+ 
     \psi_{001}\psi_{332})^2\nonumber\\ - 
  2 (&\psi_{133}\psi_{200}+ \psi_{130}\psi_{203}- \psi_{123}\psi_{210}- \psi_{120}\psi_{213}+ \psi_{113}\psi_{220}+ 
     \psi_{110}\psi_{223}- \psi_{103}\psi_{230}- \psi_{100}\psi_{233}\nonumber\\&- \psi_{033}\psi_{300}- \psi_{030}\psi_{303}+ 
     \psi_{023}\psi_{310}+ \psi_{020}\psi_{313}- \psi_{013}\psi_{320}- \psi_{010}\psi_{323}+ \psi_{003}\psi_{330}+ 
     \psi_{000}\psi_{333})^2\nonumber\\+ 
  8 (&\psi_{131}\psi_{201}- \psi_{121}\psi_{211}+ \psi_{111}\psi_{221}- \psi_{101}\psi_{231}- \psi_{031}\psi_{301}+ 
     \psi_{021}\psi_{311}- \psi_{011}\psi_{321}+ \psi_{001}\psi_{331})\nonumber\\\times  (&\psi_{132}\psi_{202}- \psi_{122}\psi_{212}+ 
     \psi_{112}\psi_{222}- \psi_{102}\psi_{232}- \psi_{032}\psi_{302}+ \psi_{022}\psi_{312}- \psi_{012}\psi_{322}+ 
     \psi_{002}\psi_{332})\nonumber\\ + 
  8 (&\psi_{130}\psi_{200}- \psi_{120}\psi_{210}+ \psi_{110}\psi_{220}- \psi_{100}\psi_{230}- \psi_{030}\psi_{300}+ 
     \psi_{020}\psi_{310}- \psi_{010}\psi_{320}+ \psi_{000}\psi_{330})\nonumber\\\times (& \psi_{133}\psi_{203}- \psi_{123}\psi_{213}+ 
     \psi_{113}\psi_{223}- \psi_{103}\psi_{233}- \psi_{033}\psi_{303}+ \psi_{023}\psi_{313}- \psi_{013}\psi_{323}+ 
     \psi_{003}\psi_{333})\nonumber\\+ 
  4 (&\psi_{132}\psi_{200}+ \psi_{130}\psi_{202}- \psi_{122}\psi_{210}- \psi_{120}\psi_{212}+ \psi_{112}\psi_{220}+ 
     \psi_{110}\psi_{222}- \psi_{102}\psi_{230}- \psi_{100}\psi_{232}\nonumber\\&- \psi_{032}\psi_{300}- \psi_{030}\psi_{302}+ 
     \psi_{022}\psi_{310}+ \psi_{020}\psi_{312}- \psi_{012}\psi_{320}- \psi_{010}\psi_{322}+ \psi_{002}\psi_{330}+ 
     \psi_{000}\psi_{332})\nonumber\\\times  (&\psi_{133}\psi_{201}+ \psi_{131}\psi_{203}- \psi_{123}\psi_{211}- \psi_{121}\psi_{213}+ 
     \psi_{113}\psi_{221}+ \psi_{111}\psi_{223}- \psi_{103}\psi_{231}- \psi_{101}\psi_{233}\nonumber\\&- \psi_{033}\psi_{301}- 
     \psi_{031}\psi_{303}+ \psi_{023}\psi_{311}+ \psi_{021}\psi_{313}- \psi_{013}\psi_{321}- \psi_{011}\psi_{323}+ 
     \psi_{003}\psi_{331}+ \psi_{001}\psi_{333})\nonumber\\ - 
  4 (&\psi_{131}\psi_{200}+ \psi_{130}\psi_{201}- \psi_{121}\psi_{210}- \psi_{120}\psi_{211}+ \psi_{111}\psi_{220}+ 
     \psi_{110}\psi_{221}- \psi_{101}\psi_{230}- \psi_{100}\psi_{231}\nonumber\\&- \psi_{031}\psi_{300}- \psi_{030}\psi_{301}+ 
     \psi_{021}\psi_{310}+ \psi_{020}\psi_{311}- \psi_{011}\psi_{320}- \psi_{010}\psi_{321}+ \psi_{001}\psi_{330}+ 
     \psi_{000}\psi_{331}) \nonumber\\\times (&\psi_{133}\psi_{202}+ \psi_{132}\psi_{203}- \psi_{123}\psi_{212}- \psi_{122}\psi_{213}+ 
     \psi_{113}\psi_{222}+ \psi_{112}\psi_{223}- \psi_{103}\psi_{232}- \psi_{102}\psi_{233}\nonumber\\&- \psi_{033}\psi_{302}- 
     \psi_{032}\psi_{303}+ \psi_{023}\psi_{312}+ \psi_{022}\psi_{313}- \psi_{013}\psi_{322}- \psi_{012}\psi_{323}+ 
     \psi_{003}\psi_{332}+ \psi_{002}\psi_{333}). 
\end{align}
\normalsize
The polynomials $I_{3a},I_{3b},I_{3c}$ are invariant under $G^C$, up to a U(1) phase, in all labs and invariant under P in all labs. Explicitly written out they are given by
\small
   \begin{align}
I_{3a} = -2 (&\psi_{011}\psi_{ 100} -\psi_{ 010 }\psi_{ 101} +\psi_{ 013}\psi_{ 102} -\psi_{ 012}\psi_{ 103 }-\psi_{ 001}   \psi_{ 110 }+
     \psi_{ 000}\psi_{ 111} -\psi_{ 003}\psi_{ 112} +\psi_{ 002}\psi_{ 113}\nonumber\\ &+\psi_{ 031}\psi_{ 120} -\psi_{ 030}\psi_{ 121} + 
   \psi_{  033}\psi_{ 122} -\psi_{ 032}\psi_{ 123 }-\psi_{ 021}\psi_{ 130} +\psi_{ 020}\psi_{ 131} -\psi_{ 023}\psi_{ 132} + 
    \psi_{ 022}\psi_{ 133})^2\nonumber\\
  - 
  2 (&\psi_{211} \psi_{300} - \psi_{210} \psi_{301} + \psi_{213} \psi_{302} - \psi_{212} \psi_{303} - \psi_{201} \psi_{310} + 
     \psi_{200} \psi_{311} - \psi_{203} \psi_{312} + \psi_{202} \psi_{313}\nonumber\nonumber\\ &+ \psi_{231} \psi_{320} - \psi_{230} \psi_{321} + 
     \psi_{233} \psi_{322} - \psi_{232 }\psi_{323} - \psi_{221} \psi_{330} + \psi_{220} \psi_{331} - \psi_{223} \psi_{332} + 
     \psi_{222} \psi_{333})^2\nonumber\\  
     + 
  8 (&\psi_{ 001}\psi_{ 010} -\psi_{ 000}\psi_{ 011} +\psi_{ 003}\psi_{ 012} -\psi_{ 002}\psi_{ 013} +\psi_{ 021}\psi_{ 030 }- 
    \psi_{ 020}\psi_{ 031} +\psi_{ 023}\psi_{ 032} -\psi_{ 022}\psi_{ 033})\nonumber\\\times   (&\psi_{ 101}\psi_{ 110} -\psi_{ 100} \psi_{ 111} + 
    \psi_{ 103}\psi_{ 112} -\psi_{ 102}\psi_{ 113} +\psi_{ 121}\psi_{ 130 }-\psi_{ 120 }\psi_{ 131} +\psi_{ 123}\psi_{ 132} - 
    \psi_{ 122}\psi_{ 133})\nonumber\\
      + 
  8 (&\psi_{201} \psi_{210} - \psi_{200} \psi_{211} + \psi_{203} \psi_{212} - \psi_{202} \psi_{213} + \psi_{221 }\psi_{230} - 
     \psi_{220} \psi_{231} + \psi_{223 }\psi_{232 }- \psi_{222} \psi_{233})\nonumber\\\times   (&\psi_{301} \psi_{310} - \psi_{300} \psi_{311} + 
     \psi_{303} \psi_{312} - \psi_{302} \psi_{313 }+ \psi_{321} \psi_{330} - \psi_{320} \psi_{331} + \psi_{323} \psi_{332} - 
     \psi_{322} \psi_{333})\nonumber\\
     - 
  4 (&\psi_{ 111}\psi_{ 200} -\psi_{ 110}\psi_{ 201} + \psi_{113} \psi_{202} - \psi_{112} \psi_{203} - \psi_{101} \psi_{210} + 
     \psi_{100} \psi_{211} -\psi_{ 103}\psi_{ 212} +\psi_{ 102}\psi_{ 213} \nonumber\\&+\psi_{ 131}\psi_{ 220} - \psi_{130} \psi_{221} + 
     \psi_{133} \psi_{222} -\psi_{ 132}\psi_{ 223} -\psi_{ 121}\psi_{ 230} + \psi_{120} \psi_{231 }- \psi_{123} \psi_{232} + 
     \psi_{122} \psi_{233})\nonumber\\\times  (&\psi_{011} \psi_{300} - \psi_{010} \psi_{301} + \psi_{013} \psi_{302} - \psi_{012} \psi_{303} - 
     \psi_{001} \psi_{310} + \psi_{000} \psi_{311} - \psi_{003} \psi_{312} + \psi_{002} \psi_{313}\nonumber\\ &+ \psi_{031} \psi_{320} - 
     \psi_{030} \psi_{321} + \psi_{033} \psi_{322} - \psi_{032} \psi_{323} - \psi_{021} \psi_{330} + \psi_{020} \psi_{331} - 
     \psi_{023} \psi_{332} + \psi_{022} \psi_{333})\nonumber\\ + 
  4 (&\psi_{011 }\psi_{200} - \psi_{010} \psi_{201} + \psi_{013} \psi_{202} - \psi_{012} \psi_{203} - \psi_{001} \psi_{210} + 
     \psi_{000} \psi_{211} - \psi_{003} \psi_{212} + \psi_{002} \psi_{213}\nonumber\\& + \psi_{031} \psi_{220} - \psi_{030} \psi_{221} + 
     \psi_{033} \psi_{222} - \psi_{032} \psi_{223} - \psi_{021} \psi_{230} + \psi_{020 }\psi_{231 }- \psi_{023} \psi_{232} + 
     \psi_{022} \psi_{233})\nonumber\\\times   (&\psi_{111} \psi_{300} - \psi_{110 }\psi_{301} + \psi_{113} \psi_{302} - \psi_{112} \psi_{303} - 
     \psi_{101} \psi_{310} + \psi_{100} \psi_{311} - \psi_{103} \psi_{312} + \psi_{102} \psi_{313}\nonumber\\ &+ \psi_{131} \psi_{320} - 
     \psi_{130} \psi_{321} + \psi_{133} \psi_{322} - \psi_{132} \psi_{323} - \psi_{121} \psi_{330 }+ \psi_{120 }\psi_{331} - 
     \psi_{123} \psi_{332} + \psi_{122} \psi_{333}),
\end{align}
\begin{align}
 I_{3b} = -2 (&\psi_{011}\psi_{100}- \psi_{010}\psi_{101}+ \psi_{013}\psi_{102}- \psi_{012}\psi_{103}+ \psi_{001}\psi_{110}-
       \psi_{000}\psi_{111}+ \psi_{003}\psi_{112}- \psi_{002}\psi_{113}\nonumber\\&+ \psi_{211}\psi_{300}- \psi_{210}\psi_{301}+ 
      \psi_{213}\psi_{302}- \psi_{212}\psi_{303}+ \psi_{201}\psi_{310}- \psi_{200}\psi_{311}+ \psi_{203}\psi_{312}- 
      \psi_{202}\psi_{313})^2\nonumber\\ - 
   2 (&\psi_{031}\psi_{120}- \psi_{030}\psi_{121}+ \psi_{033}\psi_{122}- \psi_{032}\psi_{123}+ \psi_{021}\psi_{130}- 
      \psi_{020}\psi_{131}+ \psi_{023}\psi_{132}- \psi_{022}\psi_{133}\nonumber\\ &+\psi_{231}\psi_{320}- \psi_{230}\psi_{321}+ 
      \psi_{233}\psi_{322}- \psi_{232}\psi_{323}+ \psi_{221}\psi_{330}- \psi_{220}\psi_{331}+ \psi_{223}\psi_{332}- 
      \psi_{222}\psi_{333})^2\nonumber\\ + 
   8 (&\psi_{021}\psi_{120}- \psi_{020}\psi_{121}+ \psi_{023}\psi_{122}- \psi_{022}\psi_{123}+ \psi_{221}\psi_{320}- 
      \psi_{220}\psi_{321}+ \psi_{223}\psi_{322}- \psi_{222}\psi_{323})\nonumber\\\times (&\psi_{031}\psi_{130}- \psi_{030}\psi_{131}+ 
      \psi_{033}\psi_{132}- \psi_{032}\psi_{133}+ \psi_{231}\psi_{330}- \psi_{230}\psi_{331}+ \psi_{233}\psi_{332}- 
      \psi_{232}\psi_{333})\nonumber\\ + 
   8 (&\psi_{001}\psi_{100}- \psi_{000}\psi_{101}+ \psi_{003}\psi_{102}- \psi_{002}\psi_{103}+ \psi_{201}\psi_{300}- 
      \psi_{200}\psi_{301}+ \psi_{203}\psi_{302}- \psi_{202}\psi_{303})\nonumber\\\times (&\psi_{011}\psi_{110}- \psi_{010}\psi_{111}+ 
      \psi_{013}\psi_{112}- \psi_{012}\psi_{113}+ \psi_{211}\psi_{310}- \psi_{210}\psi_{311}+ \psi_{213}\psi_{312}- 
      \psi_{212}\psi_{313})\nonumber\\ - 
   4 (&\psi_{021}\psi_{110}- \psi_{020}\psi_{111}+ \psi_{023}\psi_{112}- \psi_{022}\psi_{113}+ \psi_{011}\psi_{120}- 
      \psi_{010}\psi_{121}+ \psi_{013}\psi_{122}- \psi_{012}\psi_{123}\nonumber\\& + \psi_{221}\psi_{310}- \psi_{220}\psi_{311}+ 
      \psi_{223}\psi_{312}- \psi_{222}\psi_{313}+ \psi_{211}\psi_{320}- \psi_{210}\psi_{321}+ \psi_{213}\psi_{322}- 
      \psi_{212}\psi_{323})\nonumber\\\times  (&\psi_{031}\psi_{100}- \psi_{030}\psi_{101}+ \psi_{033}\psi_{102}- \psi_{032}\psi_{103}+ 
      \psi_{001}\psi_{130}- \psi_{000}\psi_{131}+ \psi_{003}\psi_{132}- \psi_{002}\psi_{133}\nonumber\\ &+ \psi_{231}\psi_{300}- 
      \psi_{230}\psi_{301}+ \psi_{233}\psi_{302}- \psi_{232}\psi_{303}+ \psi_{201}\psi_{330}- \psi_{200}\psi_{331}+ 
      \psi_{203}\psi_{332}- \psi_{202}\psi_{333})\nonumber\\ + 
   4 (&\psi_{021}\psi_{100}- \psi_{020}\psi_{101}+ \psi_{023}\psi_{102}- \psi_{022}\psi_{103}+ \psi_{001}\psi_{120}- 
      \psi_{000}\psi_{121}+ \psi_{003}\psi_{122}- \psi_{002}\psi_{123}\nonumber\\&+ \psi_{221}\psi_{300}- \psi_{220}\psi_{301}+ 
      \psi_{223}\psi_{302}- \psi_{222}\psi_{303}+ \psi_{201}\psi_{320}- \psi_{200}\psi_{321}+ \psi_{203}\psi_{322}- 
      \psi_{202}\psi_{323})\nonumber\\\times (&\psi_{031}\psi_{110}- \psi_{030}\psi_{111}+ \psi_{033}\psi_{112}- \psi_{032}\psi_{113}+ 
      \psi_{011}\psi_{130}- \psi_{010}\psi_{131}+ \psi_{013}\psi_{132}- \psi_{012}\psi_{133}\nonumber\\ &+ \psi_{231}\psi_{310}- 
      \psi_{230}\psi_{311}+ \psi_{233}\psi_{312}- \psi_{232}\psi_{313}+ \psi_{211}\psi_{330}- \psi_{210}\psi_{331}+ 
      \psi_{213}\psi_{332}- \psi_{212}\psi_{333}),     
\end{align}
\normalsize
and
\small
\begin{align}
  I_{3c} = -2 (&\psi_{011}\psi_{100} + \psi_{010}\psi_{101} - \psi_{001}\psi_{110} - \psi_{000}\psi_{111} + \psi_{031}\psi_{120} +
       \psi_{030}\psi_{121} - \psi_{021}\psi_{130} - \psi_{020}\psi_{131}\nonumber\\ &+ \psi_{211}\psi_{300} + \psi_{210}\psi_{301} - 
      \psi_{201}\psi_{310} - \psi_{200}\psi_{311} + \psi_{231}\psi_{320} + \psi_{230}\psi_{321} - \psi_{221}\psi_{330} - 
      \psi_{220}\psi_{331})^2\nonumber\\  - 
   2 (&\psi_{013}\psi_{102} + \psi_{012}\psi_{103} - \psi_{003}\psi_{112} - \psi_{002}\psi_{113} + \psi_{033}\psi_{122} + 
      \psi_{032}\psi_{123} - \psi_{023}\psi_{132} - \psi_{022}\psi_{133}\nonumber\\& + \psi_{213}\psi_{302} + \psi_{212}\psi_{303} - 
      \psi_{203}\psi_{312} - \psi_{202}\psi_{313} + \psi_{233}\psi_{322} + \psi_{232}\psi_{323} - \psi_{223}\psi_{332} - 
      \psi_{222}\psi_{333})^2\nonumber\\ + 
   8 (&\psi_{012}\psi_{102} - \psi_{002}\psi_{112} + \psi_{032}\psi_{122} - \psi_{022}\psi_{132} + \psi_{212}\psi_{302} - 
      \psi_{202}\psi_{312} + \psi_{232}\psi_{322} - \psi_{222}\psi_{332})\nonumber\\\times (&\psi_{013}\psi_{103} - \psi_{003}\psi_{113} + 
      \psi_{033}\psi_{123} - \psi_{023}\psi_{133} + \psi_{213}\psi_{303} - \psi_{203}\psi_{313} + \psi_{233}\psi_{323} - 
      \psi_{223}\psi_{333})\nonumber\\ + 
   8 (&\psi_{010}\psi_{100} - \psi_{000}\psi_{110} + \psi_{030}\psi_{120} - \psi_{020}\psi_{130} + \psi_{210}\psi_{300} - 
      \psi_{200}\psi_{310} + \psi_{230}\psi_{320} - \psi_{220}\psi_{330})\nonumber\\\times (&\psi_{011}\psi_{101} - \psi_{001}\psi_{111} + 
      \psi_{031}\psi_{121} - \psi_{021}\psi_{131} + \psi_{211}\psi_{301} - \psi_{201}\psi_{311} + \psi_{231}\psi_{321} - 
      \psi_{221}\psi_{331})\nonumber\\ - 
   4 (&\psi_{012}\psi_{101} + \psi_{011}\psi_{102} - \psi_{002}\psi_{111} - \psi_{001}\psi_{112} + \psi_{032}\psi_{121} + 
      \psi_{031}\psi_{122} - \psi_{022}\psi_{131} - \psi_{021}\psi_{132}\nonumber\\& + \psi_{212}\psi_{301} + \psi_{211}\psi_{302} - 
      \psi_{202}\psi_{311} - \psi_{201}\psi_{312} + \psi_{232}\psi_{321} + \psi_{231}\psi_{322} - \psi_{222}\psi_{331} - 
      \psi_{221}\psi_{332})\nonumber\\\times (&\psi_{013}\psi_{100} + \psi_{010}\psi_{103} - \psi_{003}\psi_{110} - \psi_{000}\psi_{113} + 
      \psi_{033}\psi_{120} + \psi_{030}\psi_{123} - \psi_{023}\psi_{130} - \psi_{020}\psi_{133}\nonumber\\ &+ \psi_{213}\psi_{300} + 
      \psi_{210}\psi_{303} - \psi_{203}\psi_{310} - \psi_{200}\psi_{313} + \psi_{233}\psi_{320} + \psi_{230}\psi_{323} - 
      \psi_{223}\psi_{330} - \psi_{220}\psi_{333})\nonumber\\ + 
   4 (&\psi_{012}\psi_{100} + \psi_{010}\psi_{102} - \psi_{002}\psi_{110} - \psi_{000}\psi_{112} + \psi_{032}\psi_{120} + 
      \psi_{030}\psi_{122} - \psi_{022}\psi_{130} - \psi_{020}\psi_{132}\nonumber\\ &+ \psi_{212}\psi_{300} + \psi_{210}\psi_{302} - 
      \psi_{202}\psi_{310} - \psi_{200}\psi_{312} + \psi_{232}\psi_{320} + \psi_{230}\psi_{322} - \psi_{222}\psi_{330} - 
      \psi_{220}\psi_{332})\nonumber\\\times (&\psi_{013}\psi_{101} + \psi_{011}\psi_{103} - \psi_{003}\psi_{111} - \psi_{001}\psi_{113} + 
      \psi_{033}\psi_{121} + \psi_{031}\psi_{123} - \psi_{023}\psi_{131} - \psi_{021}\psi_{133}\nonumber\\ &+ \psi_{213}\psi_{301} + 
      \psi_{211}\psi_{303} - \psi_{203}\psi_{311} - \psi_{201}\psi_{313} + \psi_{233}\psi_{321} + \psi_{231}\psi_{323} - 
      \psi_{223}\psi_{331} - \psi_{221}\psi_{333}).    
\end{align}
\normalsize
An example of a polynomial on the form $I_d$ is $I_{3d}$. It is given by $I_{3d}=4(Z_1+Z_2)$ where

\small
\begin{align}
Z_1=&(\psi_{001}\psi_{013}+ \psi_{021}\psi_{033}- \psi_{023}\psi_{031}- \psi_{003}\psi_{011}) (\psi_{102}\psi_{110}- 
      \psi_{100}\psi_{112}+ \psi_{122}\psi_{130}- \psi_{120}\psi_{132})\nonumber\\&
   + (\psi_{022}\psi_{031}+ \psi_{002}\psi_{011}- \psi_{001}\psi_{012}- \psi_{021}\psi_{032}) (\psi_{103}\psi_{110}- 
      \psi_{100}\psi_{113}+ \psi_{123}\psi_{130}- \psi_{120}\psi_{133})\nonumber\\&
   + (\psi_{003}\psi_{012}+ \psi_{020}\psi_{031}- \psi_{021}\psi_{030}- \psi_{002}\psi_{013}) (\psi_{101}\psi_{110}- 
      \psi_{100}\psi_{111}- \psi_{123}\psi_{132}+ \psi_{122}\psi_{133})\nonumber\\&
   + (\psi_{002}\psi_{033}+ \psi_{000}\psi_{031}- \psi_{003}\psi_{032}- \psi_{001}\psi_{030}) (\psi_{111}\psi_{120}- 
      \psi_{110}\psi_{121}+ \psi_{113}\psi_{122}- \psi_{112}\psi_{123})\nonumber\\&
   + (\psi_{000}\psi_{012}+ \psi_{020}\psi_{032}- \psi_{022}\psi_{030}- \psi_{002}\psi_{010}) (\psi_{103}\psi_{111}- 
      \psi_{101}\psi_{113}+ \psi_{123}\psi_{131}- \psi_{121}\psi_{133})\nonumber\\&
   + (\psi_{023}\psi_{030}+ \psi_{003}\psi_{010}- \psi_{020}\psi_{033}- \psi_{000}\psi_{013}) (\psi_{102}\psi_{111}- 
      \psi_{101}\psi_{112}+ \psi_{122}\psi_{131}- \psi_{121}\psi_{132})\nonumber\\&
   + (\psi_{021}\psi_{010}+ \psi_{023}\psi_{012}- \psi_{020}\psi_{011}- \psi_{022}\psi_{013}) (\psi_{101}\psi_{130}- 
      \psi_{100}\psi_{131}+ \psi_{103}\psi_{132}- \psi_{102}\psi_{133})\nonumber\\&
   + (\psi_{001}\psi_{010}+ \psi_{022}\psi_{033}- \psi_{000}\psi_{011}- \psi_{023}\psi_{032}) (\psi_{103}\psi_{112}- 
      \psi_{102}\psi_{113}- \psi_{121}\psi_{130}+ \psi_{120}\psi_{131})\nonumber\\&
   + (\psi_{011}\psi_{030}+ \psi_{013}\psi_{032}- \psi_{010}\psi_{031}- \psi_{012}\psi_{033}) (\psi_{101}\psi_{120}- 
      \psi_{100}\psi_{121}+ \psi_{103}\psi_{122}- \psi_{102}\psi_{123})\nonumber\\&
   + (\psi_{001}\psi_{020}+ \psi_{003}\psi_{022}- \psi_{000}\psi_{021}- \psi_{002}\psi_{023}) (\psi_{111}\psi_{130}- 
      \psi_{110}\psi_{131}+ \psi_{113}\psi_{132}- \psi_{112}\psi_{133})\nonumber\\&
   + (\psi_{121}\psi_{200}- \psi_{120}\psi_{201}+ \psi_{123}\psi_{202}- \psi_{122}\psi_{203}) (\psi_{330}\psi_{011}+ 
      \psi_{332}\psi_{013}- \psi_{333}\psi_{012}- \psi_{010}\psi_{331})\nonumber\\&
   + (\psi_{131}\psi_{200}- \psi_{130}\psi_{201}+ \psi_{133}\psi_{202}- \psi_{132}\psi_{203}) ( 
     \psi_{321}\psi_{010}+ \psi_{323}\psi_{012}- \psi_{320}\psi_{011}- \psi_{322}\psi_{013})\nonumber\\&
   + (\psi_{101}\psi_{200}- \psi_{100}\psi_{201}+ \psi_{103}\psi_{202}- \psi_{102}\psi_{203}) (\psi_{310}\psi_{011}+ 
      \psi_{312}\psi_{013}- \psi_{313}\psi_{012}- \psi_{311}\psi_{010})\nonumber\\&
   + (\psi_{111}\psi_{210}- \psi_{110}\psi_{211}+ \psi_{113}\psi_{212}- \psi_{112}\psi_{213}) (\psi_{302}\psi_{003}+ 
      \psi_{300}\psi_{001}- \psi_{303}\psi_{002}- \psi_{301}\psi_{000})\nonumber\\&
   + (\psi_{111}\psi_{201}- \psi_{101}\psi_{211}+ \psi_{131}\psi_{221}- \psi_{121}\psi_{231}) (\psi_{310}\psi_{000}+ 
      \psi_{330}\psi_{020}- \psi_{320}\psi_{030}- \psi_{300}\psi_{010})\nonumber\\&
   + (\psi_{111}\psi_{202}- \psi_{101}\psi_{212}+ \psi_{131}\psi_{222}- \psi_{121}\psi_{232}) (\psi_{300}\psi_{013}+ 
      \psi_{320}\psi_{033}- \psi_{330}\psi_{023}- \psi_{310}\psi_{003})\nonumber\\&
   + (\psi_{110}\psi_{202}- \psi_{100}\psi_{212}+ \psi_{130}\psi_{222}- \psi_{120}\psi_{232}) (\psi_{311}\psi_{003}+ 
      \psi_{331}\psi_{023}- \psi_{321}\psi_{033}- \psi_{301}\psi_{013})\nonumber\\&
   + (\psi_{121}\psi_{210}- \psi_{120}\psi_{211}+ \psi_{123}\psi_{212}- \psi_{122}\psi_{213}) (\psi_{331}\psi_{000}+ 
      \psi_{333}\psi_{002}- \psi_{330}\psi_{001}- \psi_{332}\psi_{003})\nonumber\\&
   + (\psi_{131}\psi_{210}- \psi_{130}\psi_{211}+ \psi_{133}\psi_{212}- \psi_{132}\psi_{213}) (\psi_{322}\psi_{003}+ 
      \psi_{320}\psi_{001}- \psi_{323}\psi_{002}- \psi_{321}\psi_{000})\nonumber\\&
   + (\psi_{112}\psi_{202}- \psi_{102}\psi_{212}+ \psi_{132}\psi_{222}- \psi_{122}\psi_{232}) (\psi_{333}\psi_{023}+ 
      \psi_{313}\psi_{003}- \psi_{303}\psi_{013}- \psi_{323}\psi_{033})\nonumber\\&
   + (\psi_{110}\psi_{200}- \psi_{100}\psi_{210}+ \psi_{130}\psi_{220}- \psi_{120}\psi_{230}) (\psi_{311}\psi_{001}+ 
      \psi_{331}\psi_{021}- \psi_{321}\psi_{031}- \psi_{301}\psi_{011})\nonumber\\&
   + (\psi_{113}\psi_{201}- \psi_{103}\psi_{211}+ \psi_{133}\psi_{221}- \psi_{123}\psi_{231}) (\psi_{332}\psi_{020}+ 
      \psi_{312}\psi_{000}- \psi_{302}\psi_{010}- \psi_{322}\psi_{030})\nonumber\\&
   + (\psi_{112}\psi_{201}- \psi_{102}\psi_{211}+ \psi_{132}\psi_{221}- \psi_{122}\psi_{231}) (\psi_{323}\psi_{030}+ 
      \psi_{303}\psi_{010}- \psi_{333}\psi_{020}- \psi_{313}\psi_{000})\nonumber\\&
   + (\psi_{113}\psi_{200}- \psi_{103}\psi_{210}+ \psi_{133}\psi_{220}- \psi_{123}\psi_{230}) (\psi_{322}\psi_{031}+ 
      \psi_{302}\psi_{011}- \psi_{312}\psi_{001}- \psi_{332}\psi_{021})\nonumber\\&
   + (\psi_{112}\psi_{200}- \psi_{102}\psi_{210}+ \psi_{132}\psi_{220}- \psi_{122}\psi_{230}) (\psi_{333}\psi_{021}+ 
      \psi_{313}\psi_{001}- \psi_{303}\psi_{011}- \psi_{323}\psi_{031})\nonumber\\&
   + (\psi_{111}\psi_{230}- \psi_{110}\psi_{231}+ \psi_{113}\psi_{232}- \psi_{112}\psi_{233}) ( 
     \psi_{300}\psi_{021}+ \psi_{302}\psi_{023}- \psi_{303}\psi_{022}- \psi_{301}\psi_{020})\nonumber\\&
   + (\psi_{101}\psi_{230}- \psi_{100}\psi_{231}+ \psi_{103}\psi_{232}- \psi_{102}\psi_{233}) (\psi_{313}\psi_{022}+ 
      \psi_{311}\psi_{020}- \psi_{312}\psi_{023}- \psi_{310}\psi_{021})\nonumber\\&
   + (\psi_{131}\psi_{230}- \psi_{130}\psi_{231}+ \psi_{133}\psi_{232}- \psi_{132}\psi_{233}) (\psi_{320}\psi_{021}+ 
      \psi_{322}\psi_{023}- \psi_{321}\psi_{020}- \psi_{323}\psi_{022})\nonumber\\&
   + (\psi_{111}\psi_{220}- \psi_{110}\psi_{221}+ \psi_{113}\psi_{222}- \psi_{112}\psi_{223}) ( 
     \psi_{303}\psi_{032}+ \psi_{301}\psi_{030}- \psi_{300}\psi_{031}- \psi_{302}\psi_{033})\nonumber\\&
    + (\psi_{101}\psi_{220}- \psi_{100}\psi_{221}+ \psi_{103}\psi_{222}- \psi_{102}\psi_{223}) (\psi_{310}\psi_{031}+ 
      \psi_{312}\psi_{033}- \psi_{313}\psi_{032}- \psi_{311}\psi_{030})\nonumber\\&
   + (\psi_{111}\psi_{203}- \psi_{101}\psi_{213}+ \psi_{131}\psi_{223}- \psi_{121}\psi_{233}) (\psi_{310}\psi_{002}+ 
      \psi_{330}\psi_{022}- \psi_{320}\psi_{032}- \psi_{300}\psi_{012})\nonumber\\&
   + (\psi_{110}\psi_{203}- \psi_{100}\psi_{213}+ \psi_{130}\psi_{223}- \psi_{120}\psi_{233}) (\psi_{321}\psi_{032}+ 
      \psi_{301}\psi_{012}- \psi_{311}\psi_{002}- \psi_{331}\psi_{022})\nonumber\\&
   + (\psi_{113}\psi_{203}- \psi_{103}\psi_{213}+ \psi_{133}\psi_{223}- \psi_{123}\psi_{233}) (\psi_{312}\psi_{002}+ 
      \psi_{332}\psi_{022}- \psi_{322}\psi_{032}- \psi_{302}\psi_{012})\nonumber\\&
   + (\psi_{121}\psi_{220}- \psi_{120}\psi_{221}+ \psi_{123}\psi_{222}- \psi_{122}\psi_{223}) (\psi_{332}\psi_{033}+ 
      \psi_{330}\psi_{031}- \psi_{333}\psi_{032}- \psi_{331}\psi_{030})\nonumber\\&
   + (\psi_{203}\psi_{212}+ \psi_{220}\psi_{231}- \psi_{221}\psi_{230}- \psi_{202}\psi_{213}) (\psi_{301}\psi_{310}- 
      \psi_{300}\psi_{311}- \psi_{323}\psi_{332}+ \psi_{322}\psi_{333})\nonumber\\&
   + (\psi_{201}\psi_{213}+ \psi_{221}\psi_{233}- \psi_{203}\psi_{211}- \psi_{223}\psi_{231}) (\psi_{302}\psi_{310}- 
      \psi_{300}\psi_{312}+ \psi_{322}\psi_{330}- \psi_{320}\psi_{332})\nonumber\\&
   + (\psi_{211}\psi_{230}- \psi_{210}\psi_{231}+ \psi_{213}\psi_{232}- \psi_{212}\psi_{233}) (\psi_{301}\psi_{320}- 
      \psi_{300}\psi_{321}+ \psi_{303}\psi_{322}- \psi_{302}\psi_{323})\nonumber\\&
   + (\psi_{223}\psi_{230}+ \psi_{203}\psi_{210}- \psi_{220}\psi_{233}- \psi_{200}\psi_{213}) (\psi_{302}\psi_{311}- 
      \psi_{301}\psi_{312}+ \psi_{322}\psi_{331}- \psi_{321}\psi_{332})\nonumber\\&
   + (\psi_{222}\psi_{231}+ \psi_{202}\psi_{211}- \psi_{221}\psi_{232}- \psi_{201}\psi_{212}) (\psi_{303}\psi_{310}- 
      \psi_{300}\psi_{313}+ \psi_{323}\psi_{330}- \psi_{320}\psi_{333})\nonumber\\&
   + (\psi_{200}\psi_{212}+ \psi_{220}\psi_{232}- \psi_{202}\psi_{210}- \psi_{222}\psi_{230}) (\psi_{303}\psi_{311}- 
      \psi_{301}\psi_{313}+ \psi_{323}\psi_{331}- \psi_{321}\psi_{333})\nonumber\\&
   + (\psi_{222}\psi_{233}+ \psi_{201}\psi_{210}- \psi_{223}\psi_{232}- \psi_{200}\psi_{211}) (\psi_{303}\psi_{312}- 
      \psi_{302}\psi_{313}- \psi_{321}\psi_{330}+ \psi_{320}\psi_{331})\nonumber\\&
   + (\psi_{200}\psi_{231}+ \psi_{202}\psi_{233}- \psi_{201}\psi_{230}- \psi_{203}\psi_{232}) (\psi_{311}\psi_{320}- 
      \psi_{310}\psi_{321}+ \psi_{313}\psi_{322}- \psi_{312}\psi_{323})\nonumber\\&
   + (\psi_{223}\psi_{212}+ \psi_{221}\psi_{210}- \psi_{220}\psi_{211}- \psi_{222}\psi_{213}) ( 
     \psi_{301}\psi_{330}- \psi_{300}\psi_{331}+ \psi_{303}\psi_{332}- \psi_{302}\psi_{333})\nonumber\\&
   + (\psi_{203}\psi_{222}+ \psi_{220}\psi_{201}- \psi_{221}\psi_{200}- \psi_{202}\psi_{223}) ( 
     \psi_{311}\psi_{330}- \psi_{310}\psi_{331}+ \psi_{313}\psi_{332}- \psi_{312}\psi_{333}),
     \end{align}
     \normalsize
     and
     \small
     \begin{align} 
   Z_2=&\psi_{032}\psi_{323}(\psi_{112}\psi_{203}- \psi_{102}\psi_{213}+ \psi_{131}\psi_{220}- \psi_{130}\psi_{221}+ 
      \psi_{133}\psi_{222}- \psi_{122}\psi_{233})\nonumber\\&
   + \psi_{022}\psi_{333}( \psi_{102}\psi_{213}-\psi_{112}\psi_{203}- \psi_{132}\psi_{223}+ \psi_{121}\psi_{230}- 
      \psi_{120}\psi_{231}+ \psi_{123}\psi_{232})\nonumber\\&
   + \psi_{010} \psi_{301}(\psi_{111}\psi_{200}+ \psi_{113}\psi_{202}- \psi_{112}\psi_{203}- \psi_{100}\psi_{211}+ 
      \psi_{130}\psi_{221}- \psi_{120}\psi_{231})\nonumber\\&
   + \psi_{011}\psi_{300}(\psi_{110}\psi_{201}- \psi_{113}\psi_{202}+ \psi_{112}\psi_{203}- \psi_{101}\psi_{210}+ 
      \psi_{131}\psi_{220}- \psi_{121}\psi_{230})\nonumber\\&
   + \psi_{012}\psi_{303}(\psi_{111}\psi_{200}- \psi_{110}\psi_{201}+ \psi_{113}\psi_{202}- \psi_{102}\psi_{213}+ 
      \psi_{132}\psi_{223}- \psi_{122}\psi_{233})\nonumber\\&
   + \psi_{013}\psi_{302}(\psi_{110}\psi_{201}-\psi_{111}\psi_{200}+ \psi_{112}\psi_{203}- \psi_{103}\psi_{212}+ 
      \psi_{133}\psi_{222}- \psi_{123}\psi_{232})\nonumber\\&
   + \psi_{001}\psi_{310}(\psi_{100}\psi_{211}-\psi_{111}\psi_{200}- \psi_{103}\psi_{212}+ \psi_{102}\psi_{213}- 
      \psi_{131}\psi_{220}+ \psi_{121}\psi_{230})\nonumber\\&
   + \psi_{031}\psi_{320}(\psi_{111}\psi_{200}- \psi_{101}\psi_{210}+ \psi_{130}\psi_{221}- \psi_{133}\psi_{222}+ 
      \psi_{132}\psi_{223}- \psi_{121}\psi_{230})\nonumber\\&
   + \psi_{000}\psi_{311}(\psi_{101}\psi_{210}-\psi_{110}\psi_{201}+ \psi_{103}\psi_{212}- \psi_{102}\psi_{213}- 
      \psi_{130}\psi_{221}+ \psi_{120}\psi_{231}) \nonumber\\&
   + \psi_{021}\psi_{330}( \psi_{101}\psi_{210}-\psi_{111}\psi_{200}- \psi_{131}\psi_{220}+ \psi_{120}\psi_{231}- 
      \psi_{123}\psi_{232}+ \psi_{122}\psi_{233})\nonumber\\&
   + \psi_{020}\psi_{331}( \psi_{100}\psi_{211}-\psi_{110}\psi_{201}- \psi_{130}\psi_{221}+ \psi_{121}\psi_{230}+ 
      \psi_{123}\psi_{232}- \psi_{122}\psi_{233})\nonumber\\&
   + \psi_{030}\psi_{321}(\psi_{110}\psi_{201}- \psi_{100}\psi_{211}+ \psi_{131}\psi_{220}+ \psi_{133}\psi_{222}- 
      \psi_{132}\psi_{223}- \psi_{120}\psi_{231})\nonumber\\&
   + \psi_{003}\psi_{312}(\psi_{100}\psi_{211}-\psi_{113}\psi_{202}- \psi_{101}\psi_{210}+ \psi_{102}\psi_{213}- 
      \psi_{133}\psi_{222}+ \psi_{123}\psi_{232})\nonumber\\&
   + \psi_{033}\psi_{322}(\psi_{113}\psi_{202}- \psi_{103}\psi_{212}- \psi_{131}\psi_{220}+ \psi_{130}\psi_{221}+ 
      \psi_{132}\psi_{223}- \psi_{123}\psi_{232}) \nonumber\\&
   + \psi_{002}\psi_{313}(\psi_{101}\psi_{210}-\psi_{112}\psi_{203}- \psi_{100}\psi_{211}+ \psi_{103}\psi_{212}- 
      \psi_{132}\psi_{223}+ \psi_{122}\psi_{233})\nonumber\\& 
   + \psi_{023}\psi_{332}(\psi_{103}\psi_{212}-\psi_{113}\psi_{202}- \psi_{133}\psi_{222}- \psi_{121}\psi_{230}+ 
      \psi_{120}\psi_{231}+ \psi_{122}\psi_{233}). 
\end{align}
\normalsize
An example of a polynomial that is invariant under P only in Bob's and Charlie's labs is $I_{23a}$. Written out it is 
\small   
\begin{align}
   I_{23a} = -4 (&\psi_{101}\psi_{110} - \psi_{100}\psi_{111}+ \psi_{103}\psi_{112}- \psi_{102}\psi_{113}+ 
      \psi_{121}\psi_{130} - \psi_{120}\psi_{131}+ \psi_{123}\psi_{132}- \psi_{122}\psi_{133}\nonumber\\&+\psi_{301}\psi_{310} - \psi_{300}\psi_{311}+ \psi_{303}\psi_{312}- \psi_{302}\psi_{313}+ 
      \psi_{321}\psi_{330} - \psi_{320}\psi_{331}+ \psi_{323}\psi_{332}- \psi_{322}\psi_{333})\nonumber\\\times  (&\psi_{011}\psi_{200} - 
      \psi_{010}\psi_{201}+ \psi_{013}\psi_{202}- \psi_{012}\psi_{203}- \psi_{001}\psi_{210}+ \psi_{000}\psi_{211}- 
      \psi_{003}\psi_{212}+ \psi_{002}\psi_{213}\nonumber\\&+ \psi_{031}\psi_{220} - \psi_{030}\psi_{221}+ \psi_{033}\psi_{222}- 
      \psi_{032}\psi_{223}- \psi_{021}\psi_{230}+ \psi_{020}\psi_{231}- \psi_{023}\psi_{232}+ \psi_{022}\psi_{233})\nonumber\\ - 
   4 (&\psi_{001}\psi_{010} - \psi_{000}\psi_{011}+ \psi_{003}\psi_{012}- \psi_{002}\psi_{013}+ \psi_{021}\psi_{030} - 
      \psi_{020}\psi_{031}+ \psi_{023}\psi_{032}- \psi_{022}\psi_{033}\nonumber\\&+\psi_{201}\psi_{210} - \psi_{200}\psi_{211}+ \psi_{203}\psi_{212}- \psi_{202}\psi_{213}+ \psi_{221}\psi_{230} - 
      \psi_{220}\psi_{231}+ \psi_{223}\psi_{232}- \psi_{222}\psi_{233})\nonumber\\\times  (&\psi_{111}\psi_{300} - \psi_{110}\psi_{301}+ 
      \psi_{113}\psi_{302}- \psi_{112}\psi_{303}- \psi_{101}\psi_{310}+ \psi_{100}\psi_{311}- \psi_{103}\psi_{312}+ 
      \psi_{102}\psi_{313}\nonumber\\&+ \psi_{131}\psi_{320} - \psi_{130}\psi_{321}+ \psi_{133}\psi_{322}- \psi_{132}\psi_{323}- 
      \psi_{121}\psi_{330}+ \psi_{120}\psi_{331}- \psi_{123}\psi_{332}+ \psi_{122}\psi_{333})\nonumber\\ - 
   2 (&\psi_{011}\psi_{100} - \psi_{010}\psi_{101}+ \psi_{013}\psi_{102}- \psi_{012}\psi_{103}- \psi_{001}\psi_{110}+ 
      \psi_{000}\psi_{111}- \psi_{003}\psi_{112}+ \psi_{002}\psi_{113}\nonumber\\&+ \psi_{031}\psi_{120} - \psi_{030}\psi_{121}+ 
      \psi_{033}\psi_{122}- \psi_{032}\psi_{123}- \psi_{021}\psi_{130}+ \psi_{020}\psi_{131}- \psi_{023}\psi_{132}+ 
      \psi_{022}\psi_{133}\nonumber\\&+\psi_{211}\psi_{300} - \psi_{210}\psi_{301}+ \psi_{213}\psi_{302}- \psi_{212}\psi_{303}- 
      \psi_{201}\psi_{310}+ \psi_{200}\psi_{311}- \psi_{203}\psi_{312}+ \psi_{202}\psi_{313}\nonumber\\&+ \psi_{231}\psi_{320} - 
      \psi_{230}\psi_{321}+ \psi_{233}\psi_{322}- \psi_{232}\psi_{323}- \psi_{221}\psi_{330}+ \psi_{220}\psi_{331}- 
      \psi_{223}\psi_{332}+ \psi_{222}\psi_{333}) \nonumber\\ \times (&\psi_{111}\psi_{200} - \psi_{110}\psi_{201}+ \psi_{113}\psi_{202}- \psi_{112}\psi_{203}- 
      \psi_{101}\psi_{210}+ \psi_{100}\psi_{211}- \psi_{103}\psi_{212}+ \psi_{102}\psi_{213}\nonumber\\&+ \psi_{131}\psi_{220} - 
      \psi_{130}\psi_{221}+ \psi_{133}\psi_{222}- \psi_{132}\psi_{223}- \psi_{121}\psi_{230}+ \psi_{120}\psi_{231}- 
      \psi_{123}\psi_{232}+ \psi_{122}\psi_{233}\nonumber\\&+\psi_{011}\psi_{300} - \psi_{010}\psi_{301}+ \psi_{013}\psi_{302}- \psi_{012}\psi_{303}- 
      \psi_{001}\psi_{310}+ \psi_{000}\psi_{311}- \psi_{003}\psi_{312}+ \psi_{002}\psi_{313}\nonumber\\&+ \psi_{031}\psi_{320} - 
      \psi_{030}\psi_{321}+ \psi_{033}\psi_{322}- \psi_{032}\psi_{323}- \psi_{021}\psi_{330}+ \psi_{020}\psi_{331}- 
      \psi_{023}\psi_{332}+ \psi_{022}\psi_{333}). 
      \end{align}
      \normalsize
  An example of a polynomial that is invariant under P only in Alice's lab is $I_{35a}$. Written out it is 
  \footnotesize    
      \begin{align}
         I_{35a} =  - 
   4 (&\psi_{001}\psi_{010}- \psi_{000}\psi_{011}+ \psi_{003}\psi_{012}- \psi_{002}\psi_{013}+ \psi_{021}\psi_{030}- 
      \psi_{020}\psi_{031}+ \psi_{023}\psi_{032}- \psi_{022}\psi_{033})\nonumber\\\times(&\psi_{113}\psi_{120}- \psi_{112}\psi_{121}+ 
      \psi_{111}\psi_{122}- \psi_{110}\psi_{123}- \psi_{103}\psi_{130}+ \psi_{102}\psi_{131}- \psi_{101}\psi_{132}+ 
      \psi_{100}\psi_{133})\nonumber\\ - 
   4 (&\psi_{013}\psi_{020}- \psi_{012}\psi_{021}+ \psi_{011}\psi_{022}- \psi_{010}\psi_{023}- \psi_{003}\psi_{030}+ 
      \psi_{002}\psi_{031}- \psi_{001}\psi_{032}+ \psi_{000}\psi_{033})\nonumber\\\times (&\psi_{101}\psi_{110}- \psi_{100}\psi_{111}+ 
      \psi_{103}\psi_{112}- \psi_{102}\psi_{113}+ \psi_{121}\psi_{130}- \psi_{120}\psi_{131}+ \psi_{123}\psi_{132}- 
      \psi_{122}\psi_{133})\nonumber\\ - 
   4 (&\psi_{201}\psi_{210}- \psi_{200}\psi_{211}+ \psi_{203}\psi_{212}- \psi_{202}\psi_{213}+ \psi_{221}\psi_{230}- 
      \psi_{220}\psi_{231}+ \psi_{223}\psi_{232}- \psi_{222}\psi_{233})\nonumber\\\times (&\psi_{313}\psi_{320}- \psi_{312}\psi_{321}+ 
      \psi_{311}\psi_{322}- \psi_{310}\psi_{323}- \psi_{303}\psi_{330}+ \psi_{302}\psi_{331}- \psi_{301}\psi_{332}+ 
      \psi_{300}\psi_{333})\nonumber\\ - 
   4 (&\psi_{213}\psi_{220}- \psi_{212}\psi_{221}+ \psi_{211}\psi_{222}- \psi_{210}\psi_{223}- \psi_{203}\psi_{230}+ 
      \psi_{202}\psi_{231}- \psi_{201}\psi_{232}+ \psi_{200}\psi_{233})\nonumber\\\times (&\psi_{301}\psi_{310}- \psi_{300}\psi_{311}+ 
      \psi_{303}\psi_{312}- \psi_{302}\psi_{313}+ \psi_{321}\psi_{330}- \psi_{320}\psi_{331}+ \psi_{323}\psi_{332}- 
      \psi_{322}\psi_{333})\nonumber\\-2 (&\psi_{033}\psi_{100}- \psi_{032}\psi_{101}+ \psi_{031}\psi_{102}- \psi_{030}\psi_{103}- 
      \psi_{023}\psi_{110}+ \psi_{022}\psi_{111}- \psi_{021}\psi_{112}+ \psi_{020}\psi_{113}\nonumber\\&+ \psi_{013}\psi_{120}- 
      \psi_{012}\psi_{121}+ \psi_{011}\psi_{122}- \psi_{010}\psi_{123}- \psi_{003}\psi_{130}+ \psi_{002}\psi_{131}- 
      \psi_{001}\psi_{132}+ \psi_{000}\psi_{133})\nonumber\\\times (&\psi_{011}\psi_{100}- \psi_{010}\psi_{101}+ \psi_{013}\psi_{102}- 
      \psi_{012}\psi_{103}- \psi_{001}\psi_{110}+ \psi_{000}\psi_{111}- \psi_{003}\psi_{112}+ \psi_{002}\psi_{113}\nonumber\\&+ 
      \psi_{031}\psi_{120}- \psi_{030}\psi_{121}+ \psi_{033}\psi_{122}- \psi_{032}\psi_{123}- \psi_{021}\psi_{130}+ 
      \psi_{020}\psi_{131}- \psi_{023}\psi_{132}+ \psi_{022}\psi_{133})\nonumber\\ - 
   2 (&\psi_{111}\psi_{200}- \psi_{110}\psi_{201}+ \psi_{113}\psi_{202}- \psi_{112}\psi_{203}- \psi_{101}\psi_{210}+ 
      \psi_{100}\psi_{211}- \psi_{103}\psi_{212}+ \psi_{102}\psi_{213}\nonumber\\&+ \psi_{131}\psi_{220}- \psi_{130}\psi_{221}+ 
      \psi_{133}\psi_{222}- \psi_{132}\psi_{223}- \psi_{121}\psi_{230}+ \psi_{120}\psi_{231}- \psi_{123}\psi_{232}+ 
      \psi_{122}\psi_{233})\nonumber\\\times (&\psi_{033}\psi_{300}- \psi_{032}\psi_{301}+ \psi_{031}\psi_{302}- \psi_{030}\psi_{303}- 
      \psi_{023}\psi_{310}+ \psi_{022}\psi_{311}- \psi_{021}\psi_{312}+ \psi_{020}\psi_{313}\nonumber\\&+ \psi_{013}\psi_{320}- 
      \psi_{012}\psi_{321}+ \psi_{011}\psi_{322}- \psi_{010}\psi_{323}- \psi_{003}\psi_{330}+ \psi_{002}\psi_{331}- 
      \psi_{001}\psi_{332}+ \psi_{000}\psi_{333})\nonumber\\ - 
   2 (&\psi_{133}\psi_{200}- \psi_{132}\psi_{201}+ \psi_{131}\psi_{202}- \psi_{130}\psi_{203}- \psi_{123}\psi_{210}+ 
      \psi_{122}\psi_{211}- \psi_{121}\psi_{212}+ \psi_{120}\psi_{213}\nonumber\\&+ \psi_{113}\psi_{220}- \psi_{112}\psi_{221}+ 
      \psi_{111}\psi_{222}- \psi_{110}\psi_{223}- \psi_{103}\psi_{230}+ \psi_{102}\psi_{231}- \psi_{101}\psi_{232}+ 
      \psi_{100}\psi_{233})\nonumber\\\times (&\psi_{011}\psi_{300}- \psi_{010}\psi_{301}+ \psi_{013}\psi_{302}- \psi_{012}\psi_{303}- 
      \psi_{001}\psi_{310}+ \psi_{000}\psi_{311}- \psi_{003}\psi_{312}+ \psi_{002}\psi_{313}\nonumber\\&+ \psi_{031}\psi_{320}- 
      \psi_{030}\psi_{321}+ \psi_{033}\psi_{322}- \psi_{032}\psi_{323}- \psi_{021}\psi_{330}+ \psi_{020}\psi_{331}- 
      \psi_{023}\psi_{332}+ \psi_{022}\psi_{333})\nonumber\\ + 
   2 (&\psi_{011}\psi_{200}- \psi_{010}\psi_{201}+ \psi_{013}\psi_{202}- \psi_{012}\psi_{203}- \psi_{001}\psi_{210}+ 
      \psi_{000}\psi_{211}- \psi_{003}\psi_{212}+ \psi_{002}\psi_{213}\nonumber\\&+ \psi_{031}\psi_{220}- \psi_{030}\psi_{221}+ 
      \psi_{033}\psi_{222}- \psi_{032}\psi_{223}- \psi_{021}\psi_{230}+ \psi_{020}\psi_{231}- \psi_{023}\psi_{232}+ 
      \psi_{022}\psi_{233})\nonumber\\\times (&\psi_{133}\psi_{300}- \psi_{132}\psi_{301}+ \psi_{131}\psi_{302}- \psi_{130}\psi_{303}- 
      \psi_{123}\psi_{310}+ \psi_{122}\psi_{311}- \psi_{121}\psi_{312}+ \psi_{120}\psi_{313}\nonumber\\&+ \psi_{113}\psi_{320}- 
      \psi_{112}\psi_{321}+ \psi_{111}\psi_{322}- \psi_{110}\psi_{323}- \psi_{103}\psi_{330}+ \psi_{102}\psi_{331}- 
      \psi_{101}\psi_{332}+ \psi_{100}\psi_{333})\nonumber\\ + 
   2 (&\psi_{033}\psi_{200}- \psi_{032}\psi_{201}+ \psi_{031}\psi_{202}- \psi_{030}\psi_{203}- \psi_{023}\psi_{210}+ 
      \psi_{022}\psi_{211}- \psi_{021}\psi_{212}+ \psi_{020}\psi_{213}\nonumber\\&+ \psi_{013}\psi_{220}- \psi_{012}\psi_{221}+ 
      \psi_{011}\psi_{222}- \psi_{010}\psi_{223}- \psi_{003}\psi_{230}+ \psi_{002}\psi_{231}- \psi_{001}\psi_{232}+ 
      \psi_{000}\psi_{233})\nonumber\\\times (&\psi_{111}\psi_{300}- \psi_{110}\psi_{301}+ \psi_{113}\psi_{302}- \psi_{112}\psi_{303}- 
      \psi_{101}\psi_{310}+ \psi_{100}\psi_{311}- \psi_{103}\psi_{312}+ \psi_{102}\psi_{313}\nonumber\\&+ \psi_{131}\psi_{320}- 
      \psi_{130}\psi_{321}+ \psi_{133}\psi_{322}- \psi_{132}\psi_{323}- \psi_{121}\psi_{330}+ \psi_{120}\psi_{331}- 
      \psi_{123}\psi_{332}+ \psi_{122}\psi_{333})\nonumber\\ - 
   2 (&\psi_{233}\psi_{300}- \psi_{232}\psi_{301}+ \psi_{231}\psi_{302}- \psi_{230}\psi_{303}- \psi_{223}\psi_{310}+ 
      \psi_{222}\psi_{311}- \psi_{221}\psi_{312}+ \psi_{220}\psi_{313}\nonumber\\&+ \psi_{213}\psi_{320}- \psi_{212}\psi_{321}+ 
      \psi_{211}\psi_{322}- \psi_{210}\psi_{323}- \psi_{203}\psi_{330}+ \psi_{202}\psi_{331}- \psi_{201}\psi_{332}+ 
      \psi_{200}\psi_{333})\nonumber\\\times (&\psi_{211}\psi_{300}- \psi_{210}\psi_{301}+ \psi_{213}\psi_{302}- \psi_{212}\psi_{303}- 
      \psi_{201}\psi_{310}+ \psi_{200}\psi_{311}- \psi_{203}\psi_{312}+ \psi_{202}\psi_{313}\nonumber\\&+ \psi_{231}\psi_{320}- 
      \psi_{230}\psi_{321}+ \psi_{233}\psi_{322}- \psi_{232}\psi_{323}- \psi_{221}\psi_{330}+ \psi_{220}\psi_{331}- 
      \psi_{223}\psi_{332}+ \psi_{222}\psi_{333}).
      \end{align}
      \normalsize
An example of a polynomial that is not invariant under P in any lab is $I_{11a}$. Written out it is  
\footnotesize
 \begin{align}
      I_{11a} = -2 (&\psi_{101}\psi_{110}- \psi_{100}\psi_{111}+ \psi_{103}\psi_{112}- \psi_{102}\psi_{113}+ 
       \psi_{121}\psi_{130}- \psi_{120}\psi_{131}+ \psi_{123}\psi_{132}- \psi_{122}\psi_{133}\nonumber\\& + \psi_{301}\psi_{310}- 
       \psi_{300}\psi_{311}+ \psi_{303}\psi_{312}- \psi_{302}\psi_{313}+ \psi_{321}\psi_{330}- \psi_{320}\psi_{331}+ 
       \psi_{323}\psi_{332}- \psi_{322}\psi_{333})\nonumber\\\times  (&\psi_{033}\psi_{200}- \psi_{032}\psi_{201}+ \psi_{031}\psi_{202}- 
     \psi_{030}\psi_{203}- \psi_{023}\psi_{210}+ \psi_{022}\psi_{211}- \psi_{021}\psi_{212}+ \psi_{020}\psi_{213}\nonumber\\&+ 
     \psi_{013}\psi_{220}- \psi_{012}\psi_{221}+ \psi_{011}\psi_{222}- \psi_{010}\psi_{223}- \psi_{003}\psi_{230}+ 
     \psi_{002}\psi_{231}- \psi_{001}\psi_{232}+ \psi_{000}\psi_{233})\nonumber\\ + 
  2 (&\psi_{113}\psi_{120}- \psi_{112}\psi_{121}+ \psi_{111}\psi_{122}- \psi_{110}\psi_{123}- \psi_{103}\psi_{130}+ 
       \psi_{102}\psi_{131}- \psi_{101}\psi_{132}+ \psi_{100}\psi_{133}\nonumber\\& +  \psi_{313}\psi_{320}- \psi_{312}\psi_{321}+ 
       \psi_{311}\psi_{322}- \psi_{310}\psi_{323}- \psi_{303}\psi_{330}+ \psi_{302}\psi_{331}- \psi_{301}\psi_{332}+ 
       \psi_{300}\psi_{333})\nonumber\\\times  (&\psi_{011}\psi_{200}- \psi_{010}\psi_{201}+ \psi_{013}\psi_{202}- \psi_{012}\psi_{203}- 
     \psi_{001}\psi_{210}+ \psi_{000}\psi_{211}- \psi_{003}\psi_{212}+ \psi_{002}\psi_{213}\nonumber\\&+ \psi_{031}\psi_{220}- 
     \psi_{030}\psi_{221}+ \psi_{033}\psi_{222}- \psi_{032}\psi_{223}- \psi_{021}\psi_{230}+ \psi_{020}\psi_{231}- 
     \psi_{023}\psi_{232}+ 
     \psi_{022}\psi_{233})\nonumber\\ - 
  2 (& \psi_{001}\psi_{010}- \psi_{000}\psi_{011}+ \psi_{003}\psi_{012}- \psi_{002}\psi_{013}+ \psi_{021}\psi_{030}- 
        \psi_{020}\psi_{031}+ \psi_{023}\psi_{032}- \psi_{022}\psi_{033}\nonumber\\& \psi_{201}\psi_{210}- \psi_{200}\psi_{211}+ 
       \psi_{203}\psi_{212}- \psi_{202}\psi_{213}+ \psi_{221}\psi_{230}- \psi_{220}\psi_{231}+ \psi_{223}\psi_{232}- 
       \psi_{222}\psi_{233} )\nonumber\\\times  (&\psi_{133}\psi_{300}- \psi_{132}\psi_{301}+ \psi_{131}\psi_{302}- \psi_{130}\psi_{303}- 
     \psi_{123}\psi_{310}+ \psi_{122}\psi_{311}- \psi_{121}\psi_{312}+ \psi_{120}\psi_{313}\nonumber\\&+ \psi_{113}\psi_{320}- 
     \psi_{112}\psi_{321}+ \psi_{111}\psi_{322}- \psi_{110}\psi_{323}- \psi_{103}\psi_{330}+ \psi_{102}\psi_{331}- 
     \psi_{101}\psi_{332}+ \psi_{100}\psi_{333})\nonumber\\ + 
  2 (& \psi_{013}\psi_{020}- \psi_{012}\psi_{021}+ \psi_{011}\psi_{022}- \psi_{010}\psi_{023}- \psi_{003}\psi_{030}+ 
       \psi_{002}\psi_{031}- \psi_{001}\psi_{032}+ \psi_{000}\psi_{033}\nonumber\\& + \psi_{213}\psi_{220}- \psi_{212}\psi_{221}+ 
       \psi_{211}\psi_{222}- \psi_{210}\psi_{223}- \psi_{203}\psi_{230}+ \psi_{202}\psi_{231}- \psi_{201}\psi_{232}+ 
       \psi_{200}\psi_{233} )\nonumber\\\times  (&\psi_{111}\psi_{300}- \psi_{110}\psi_{301}+ \psi_{113}\psi_{302}- \psi_{112}\psi_{303}- 
     \psi_{101}\psi_{310}+ \psi_{100}\psi_{311}- \psi_{103}\psi_{312}+ \psi_{102}\psi_{313}\nonumber\\&+ \psi_{131}\psi_{320}- 
     \psi_{130}\psi_{321}+ \psi_{133}\psi_{322}- \psi_{132}\psi_{323}- \psi_{121}\psi_{330}+ \psi_{120}\psi_{331}- 
     \psi_{123}\psi_{332}+ \psi_{122}\psi_{333})\nonumber\\ - (&\psi_{011}\psi_{100}- \psi_{010}\psi_{101}+ \psi_{013}\psi_{102}- \psi_{012}\psi_{103}- 
       \psi_{001}\psi_{110}+ \psi_{000}\psi_{111}- \psi_{003}\psi_{112}+ \psi_{002}\psi_{113}\nonumber\\&+ \psi_{031}\psi_{120}- 
       \psi_{030}\psi_{121}+ \psi_{033}\psi_{122}- \psi_{032}\psi_{123}- \psi_{021}\psi_{130}+ \psi_{020}\psi_{131}- 
       \psi_{023}\psi_{132}+ \psi_{022}\psi_{133}\nonumber\\& + \psi_{211}\psi_{300}- \psi_{210}\psi_{301}+ \psi_{213}\psi_{302}- 
       \psi_{212}\psi_{303}- \psi_{201}\psi_{310}+ \psi_{200}\psi_{311}- \psi_{203}\psi_{312}+ \psi_{202}\psi_{313}\nonumber\\&+ 
       \psi_{231}\psi_{320}- \psi_{230}\psi_{321}+ \psi_{233}\psi_{322}- \psi_{232}\psi_{323}- \psi_{221}\psi_{330}+ 
       \psi_{220}\psi_{331}- \psi_{223}\psi_{332}+ \psi_{222}\psi_{333}) \nonumber\\\times (&\psi_{133}\psi_{200}- \psi_{132}\psi_{201}+ 
       \psi_{131}\psi_{202}- \psi_{130}\psi_{203}- \psi_{123}\psi_{210}+ \psi_{122}\psi_{211}- \psi_{121}\psi_{212}+ 
       \psi_{120}\psi_{213}\nonumber\\&+ \psi_{113}\psi_{220}- \psi_{112}\psi_{221}+ \psi_{111}\psi_{222}- \psi_{110}\psi_{223}- 
       \psi_{103}\psi_{230}+ \psi_{102}\psi_{231}- \psi_{101}\psi_{232}+ \psi_{100}\psi_{233}\nonumber\\& + \psi_{033}\psi_{300}- 
       \psi_{032}\psi_{301}+ \psi_{031}\psi_{302}- \psi_{030}\psi_{303}- \psi_{023}\psi_{310}+ \psi_{022}\psi_{311}- 
       \psi_{021}\psi_{312}+ \psi_{020}\psi_{313}\nonumber\\&+ \psi_{013}\psi_{320}- \psi_{012}\psi_{321}+ \psi_{011}\psi_{322}- 
       \psi_{010}\psi_{323}- \psi_{003}\psi_{330}+ \psi_{002}\psi_{331}- \psi_{001}\psi_{332}+ 
       \psi_{000}\psi_{333})\nonumber\\ - (&\psi_{033}\psi_{100}- \psi_{032}\psi_{101}+ \psi_{031}\psi_{102}- \psi_{030}\psi_{103}-
        \psi_{023}\psi_{110}+ \psi_{022}\psi_{111}- \psi_{021}\psi_{112}+ \psi_{020}\psi_{113}\nonumber\\&+ \psi_{013}\psi_{120}- 
       \psi_{012}\psi_{121}+ \psi_{011}\psi_{122}- \psi_{010}\psi_{123}- \psi_{003}\psi_{130}+ \psi_{002}\psi_{131}- 
       \psi_{001}\psi_{132}+ \psi_{000}\psi_{133}\nonumber\\& + \psi_{233}\psi_{300}- \psi_{232}\psi_{301}+ \psi_{231}\psi_{302}- 
       \psi_{230}\psi_{303}- \psi_{223}\psi_{310}+ \psi_{222}\psi_{311}- \psi_{221}\psi_{312}+ \psi_{220}\psi_{313}\nonumber\\&+ 
       \psi_{213}\psi_{320}- \psi_{212}\psi_{321}+ \psi_{211}\psi_{322}- \psi_{210}\psi_{323}- \psi_{203}\psi_{330}+ 
       \psi_{202}\psi_{331}- \psi_{201}\psi_{332}+ \psi_{200}\psi_{333})\nonumber\\\times ( & \psi_{111}\psi_{200}- \psi_{110}\psi_{201}+ 
       \psi_{113}\psi_{202}- \psi_{112}\psi_{203}- \psi_{101}\psi_{210}+ \psi_{100}\psi_{211}- \psi_{103}\psi_{212}+ 
       \psi_{102}\psi_{213}\nonumber\\&+ \psi_{131}\psi_{220}- \psi_{130}\psi_{221}+ \psi_{133}\psi_{222}- \psi_{132}\psi_{223}- 
       \psi_{121}\psi_{230}+ \psi_{120}\psi_{231}- \psi_{123}\psi_{232}+ \psi_{122}\psi_{233}\nonumber\\& + \psi_{011}\psi_{300}- 
       \psi_{010}\psi_{301}+ \psi_{013}\psi_{302}- \psi_{012}\psi_{303}- \psi_{001}\psi_{310}+ \psi_{000}\psi_{311}- 
       \psi_{003}\psi_{312}+ \psi_{002}\psi_{313}\nonumber\\&+ \psi_{031}\psi_{320}- \psi_{030}\psi_{321}+ \psi_{033}\psi_{322}- 
       \psi_{032}\psi_{323}- \psi_{021}\psi_{330}+ \psi_{020}\psi_{331}- \psi_{023}\psi_{332}+ \psi_{022}\psi_{333}).
\end{align}   
   \normalsize

Examples of the degree 2 polynomials for four Dirac spinors are $H_a,H_b,H_c$ and $H_d$. They are given by
\footnotesize
\begin{align}
 H_a =2(&-\psi_{ 0111}\psi_{ 1000} + \psi_{ 0110}\psi_{ 1001} - \psi_{ 0113}\psi_{ 1002} + 
   \psi_{ 0112}\psi_{ 1003} + \psi_{ 0101}\psi_{ 1010} - \psi_{ 0100}\psi_{ 1011} + \psi_{ 0103}\psi_{ 1012} - 
   \psi_{ 0102}\psi_{ 1013}\nonumber\\& - \psi_{ 0131}\psi_{ 1020 }+ \psi_{ 0130}\psi_{ 1021} - \psi_{ 0133}\psi_{ 1022 }+ 
   \psi_{ 0132}\psi_{ 1023} +  \psi_{0121} \psi_{1030} - \psi_{0120} \psi_{1031} + \psi_{0123} \psi_{1032} - 
    \psi_{0122} \psi_{1033}\nonumber\\& +  \psi_{0011} \psi_{1100} -  \psi_{0010 }\psi_{1101} +  \psi_{0013} \psi_{1102} - 
    \psi_{0012} \psi_{1103} -  \psi_{0001} \psi_{1110 }+  \psi_{0000} \psi_{1111 }-  \psi_{0003} \psi_{1112} + 
   \psi_{ 0002}\psi_{ 1113}\nonumber\\& + \psi_{ 0031}\psi_{ 1120} - \psi_{ 0030 }\psi_{ 1121} + \psi_{ 0033}\psi_{ 1122} - 
   \psi_{ 0032}\psi_{ 1123 }- \psi_{ 0021}\psi_{ 1130} + \psi_{ 0020}\psi_{ 1131} - \psi_{ 0023}\psi_{ 1132} + 
   \psi_{ 0022}\psi_{ 1133}\nonumber\\& -  \psi_{0311} \psi_{1200} +  \psi_{0310 }\psi_{1201} -  \psi_{0313 }\psi_{1202} + 
    \psi_{0312} \psi_{1203} +  \psi_{0301} \psi_{1210} -  \psi_{0300} \psi_{1211} +  \psi_{0303} \psi_{1212} - 
    \psi_{0302} \psi_{1213}\nonumber\\& -  \psi_{0331 }\psi_{1220} +  \psi_{0330} \psi_{1221} -  \psi_{0333} \psi_{1222} + 
    \psi_{0332} \psi_{1223} +  \psi_{0321} \psi_{1230} -  \psi_{0320 }\psi_{1231 }+  \psi_{0323 }\psi_{1232 }- 
    \psi_{0322} \psi_{1233}\nonumber\\& +  \psi_{0211} \psi_{1300} -  \psi_{0210 }\psi_{1301} +  \psi_{0213 }\psi_{1302} - 
    \psi_{0212} \psi_{1303} -  \psi_{0201} \psi_{1310} +  \psi_{0200} \psi_{1311} - \psi_{0203 }\psi_{1312} + 
    \psi_{0202} \psi_{1313}\nonumber\\& +  \psi_{0231} \psi_{1320} -  \psi_{0230} \psi_{1321 }+  \psi_{0233} \psi_{1322} - 
    \psi_{0232} \psi_{1323} -  \psi_{0221} \psi_{1330 }+  \psi_{0220} \psi_{1331 }- \psi_{0223 }\psi_{1332} + 
    \psi_{0222} \psi_{1333}\nonumber\\& -  \psi_{2111} \psi_{3000} +  \psi_{2110} \psi_{3001 }-  \psi_{2113 }\psi_{3002} + 
    \psi_{2112} \psi_{3003} +  \psi_{2101} \psi_{3010} -  \psi_{2100} \psi_{3011} +  \psi_{2103} \psi_{3012} - 
    \psi_{2102} \psi_{3013}\nonumber\\& -  \psi_{2131} \psi_{3020} +  \psi_{2130 }\psi_{3021} -  \psi_{2133} \psi_{3022} + 
    \psi_{2132} \psi_{3023} +  \psi_{2121} \psi_{3030} -  \psi_{2120} \psi_{3031 }+  \psi_{2123} \psi_{3032} - 
    \psi_{2122} \psi_{3033}\nonumber\\& +  \psi_{2011} \psi_{3100} -  \psi_{2010 }\psi_{3101 }+  \psi_{2013 }\psi_{3102} - 
    \psi_{2012} \psi_{3103} -  \psi_{2001} \psi_{3110} +  \psi_{2000 }\psi_{3111} -  \psi_{2003} \psi_{3112} + 
    \psi_{2002} \psi_{3113}\nonumber\\& +  \psi_{2031} \psi_{3120} -  \psi_{2030} \psi_{3121} +  \psi_{2033 }\psi_{3122 }- 
    \psi_{2032} \psi_{3123} -  \psi_{2021} \psi_{3130 }+  \psi_{2020} \psi_{3131 }-  \psi_{2023} \psi_{3132 }+ 
    \psi_{2022} \psi_{3133}\nonumber\\& -  \psi_{2311} \psi_{3200} +  \psi_{2310} \psi_{3201} -  \psi_{2313} \psi_{3202} + 
    \psi_{2312} \psi_{3203} +  \psi_{2301} \psi_{3210 }-  \psi_{2300 }\psi_{3211} +  \psi_{2303} \psi_{3212} - 
    \psi_{2302} \psi_{3213}\nonumber\\& -  \psi_{2331} \psi_{3220} +  \psi_{2330} \psi_{3221} -  \psi_{2333} \psi_{3222} + 
    \psi_{2332} \psi_{3223} +  \psi_{2321} \psi_{3230 }-  \psi_{2320} \psi_{3231} +  \psi_{2323 }\psi_{3232} - 
    \psi_{2322} \psi_{3233}\nonumber\\& +  \psi_{2211} \psi_{3300} - \psi_{2210} \psi_{3301} +  \psi_{2213} \psi_{3302} - 
    \psi_{2212} \psi_{3303} -  \psi_{2201} \psi_{3310} +  \psi_{2200 }\psi_{3311 }-  \psi_{2203 }\psi_{3312} + 
    \psi_{2202} \psi_{3313}\nonumber\\& +  \psi_{2231} \psi_{3320} -  \psi_{2230} \psi_{3321} +  \psi_{2233} \psi_{3322} - 
    \psi_{2232} \psi_{3323} -  \psi_{2221} \psi_{3330} + \psi_{2220 }\psi_{3331} -  \psi_{2223 }\psi_{3332 }+ 
    \psi_{2222 }\psi_{3333}),
\end{align}

\begin{align}
 H_b =2( &\psi_{1333}\psi_{2000}- \psi_{1332}\psi_{2001}+ \psi_{1331}\psi_{2002}- 
   \psi_{1330}\psi_{2003}- \psi_{1323}\psi_{2010}+ \psi_{1322}\psi_{2011}- \psi_{1321}\psi_{2012}+ 
   \psi_{1320}\psi_{2013}\nonumber\\&+ \psi_{1313}\psi_{2020}- \psi_{1312}\psi_{2021}+ \psi_{1311}\psi_{2022}- 
   \psi_{1310}\psi_{2023}- \psi_{1303}\psi_{2030}+ \psi_{1302}\psi_{2031}- \psi_{1301}\psi_{2032}+ 
   \psi_{1300}\psi_{2033}\nonumber\\&- \psi_{1233}\psi_{2100}+ \psi_{1232}\psi_{2101}- \psi_{1231}\psi_{2102}+ 
   \psi_{1230}\psi_{2103}+ \psi_{1223}\psi_{2110}- \psi_{1222}\psi_{2111}+ \psi_{1221}\psi_{2112}- 
   \psi_{1220}\psi_{2113}\nonumber\\&- \psi_{1213}\psi_{2120}+ \psi_{1212}\psi_{2121}- \psi_{1211}\psi_{2122}+ 
   \psi_{1210}\psi_{2123}+ \psi_{1203}\psi_{2130}- \psi_{1202}\psi_{2131}+ \psi_{1201}\psi_{2132}- 
   \psi_{1200}\psi_{2133}\nonumber\\&+ \psi_{1133}\psi_{2200}- \psi_{1132}\psi_{2201}+ \psi_{1131}\psi_{2202}- 
   \psi_{1130}\psi_{2203}- \psi_{1123}\psi_{2210}+ \psi_{1122}\psi_{2211}- \psi_{1121}\psi_{2212}+ 
   \psi_{1120}\psi_{2213}\nonumber\\&+ \psi_{1113}\psi_{2220}- \psi_{1112}\psi_{2221}+ \psi_{1111}\psi_{2222}- 
   \psi_{1110}\psi_{2223}- \psi_{1103}\psi_{2230}+ \psi_{1102}\psi_{2231}- \psi_{1101}\psi_{2232}+ 
   \psi_{1100}\psi_{2233}\nonumber\\&- \psi_{1033}\psi_{2300}+ \psi_{1032}\psi_{2301}- \psi_{1031}\psi_{2302}+ 
   \psi_{1030}\psi_{2303}+ \psi_{1023}\psi_{2310}- \psi_{1022}\psi_{2311}+ \psi_{1021}\psi_{2312}- 
   \psi_{1020}\psi_{2313}\nonumber\\&- \psi_{1013}\psi_{2320}+ \psi_{1012}\psi_{2321}- \psi_{1011}\psi_{2322}+ 
   \psi_{1010}\psi_{2323}+ \psi_{1003}\psi_{2330}- \psi_{1002}\psi_{2331}+ \psi_{1001}\psi_{2332}- 
   \psi_{1000}\psi_{2333}\nonumber\\&- \psi_{0333}\psi_{3000}+ \psi_{0332}\psi_{3001}- \psi_{0331}\psi_{3002}+ 
   \psi_{0330}\psi_{3003}+ \psi_{0323}\psi_{3010}- \psi_{0322}\psi_{3011}+ \psi_{0321}\psi_{3012}- 
   \psi_{0320}\psi_{3013}\nonumber\\&- \psi_{0313}\psi_{3020}+ \psi_{0312}\psi_{3021}- \psi_{0311}\psi_{3022}+ 
   \psi_{0310}\psi_{3023}+ \psi_{0303}\psi_{3030}- \psi_{0302}\psi_{3031}+ \psi_{0301}\psi_{3032}- 
   \psi_{0300}\psi_{3033}\nonumber\\&+ \psi_{0233}\psi_{3100}- \psi_{0232}\psi_{3101}+ \psi_{0231}\psi_{3102}- 
   \psi_{0230}\psi_{3103}- \psi_{0223}\psi_{3110}+ \psi_{0222}\psi_{3111}- \psi_{0221}\psi_{3112}+ 
   \psi_{0220}\psi_{3113}\nonumber\\&+ \psi_{0213}\psi_{3120}- \psi_{0212}\psi_{3121}+ \psi_{0211}\psi_{3122}- 
   \psi_{0210}\psi_{3123}- \psi_{0203}\psi_{3130}+ \psi_{0202}\psi_{3131}- \psi_{0201}\psi_{3132}+ 
   \psi_{0200}\psi_{3133}\nonumber\\&- \psi_{0133}\psi_{3200}+ \psi_{0132}\psi_{3201}- \psi_{0131}\psi_{3202}+ 
   \psi_{0130}\psi_{3203}+ \psi_{0123}\psi_{3210}- \psi_{0122}\psi_{3211}+ \psi_{0121}\psi_{3212}- 
   \psi_{0120}\psi_{3213}\nonumber\\&- \psi_{0113}\psi_{3220}+ \psi_{0112}\psi_{3221}- \psi_{0111}\psi_{3222}+ 
   \psi_{0110}\psi_{3223}+ \psi_{0103}\psi_{3230}- \psi_{0102}\psi_{3231}+ \psi_{0101}\psi_{3232}- 
   \psi_{0100}\psi_{3233}\nonumber\\&+ \psi_{0033}\psi_{3300}- \psi_{0032}\psi_{3301}+ \psi_{0031}\psi_{3302}- 
   \psi_{0030}\psi_{3303}- \psi_{0023}\psi_{3310}+ \psi_{0022}\psi_{3311}- \psi_{0021}\psi_{3312}+ 
   \psi_{0020}\psi_{3313}\nonumber\\&+ \psi_{0013}\psi_{3320}- \psi_{0012}\psi_{3321}+ \psi_{0011}\psi_{3322}- 
   \psi_{0010}\psi_{3323}- \psi_{0003}\psi_{3330}+ \psi_{0002}\psi_{3331}- \psi_{0001}\psi_{3332}+ 
   \psi_{0000}\psi_{3333}),
\end{align}

\begin{align}
  H_c = 2(&-\psi_{0113}\psi_{1000}+ \psi_{0112}\psi_{1001}- \psi_{0111}\psi_{1002}+ 
   \psi_{0110}\psi_{1003}+ \psi_{0103}\psi_{1010}- \psi_{0102}\psi_{1011}+ \psi_{0101}\psi_{1012}- 
   \psi_{0100}\psi_{1013}\nonumber\\&- \psi_{0133}\psi_{1020}+ \psi_{0132}\psi_{1021}- \psi_{0131}\psi_{1022}+ 
   \psi_{0130}\psi_{1023}+ \psi_{0123}\psi_{1030}- \psi_{0122}\psi_{1031}+ \psi_{0121}\psi_{1032}- 
   \psi_{0120}\psi_{1033}\nonumber\\&+ \psi_{0013}\psi_{1100}- \psi_{0012}\psi_{1101}+ \psi_{0011}\psi_{1102}- 
   \psi_{0010}\psi_{1103}- \psi_{0003}\psi_{1110}+ \psi_{0002}\psi_{1111}- \psi_{0001}\psi_{1112}+ 
   \psi_{0000}\psi_{1113}\nonumber\\&+ \psi_{0033}\psi_{1120}- \psi_{0032}\psi_{1121}+ \psi_{0031}\psi_{1122}- 
   \psi_{0030}\psi_{1123}- \psi_{0023}\psi_{1130}+ \psi_{0022}\psi_{1131}- \psi_{0021}\psi_{1132}+ 
   \psi_{0020}\psi_{1133}\nonumber\\&- \psi_{0313}\psi_{1200}+ \psi_{0312}\psi_{1201}- \psi_{0311}\psi_{1202}+ 
   \psi_{0310}\psi_{1203}+ \psi_{0303}\psi_{1210}- \psi_{0302}\psi_{1211}+ \psi_{0301}\psi_{1212}- 
   \psi_{0300}\psi_{1213}\nonumber\\&- \psi_{0333}\psi_{1220}+ \psi_{0332}\psi_{1221}- \psi_{0331}\psi_{1222}+ 
   \psi_{0330}\psi_{1223}+ \psi_{0323}\psi_{1230}- \psi_{0322}\psi_{1231}+ \psi_{0321}\psi_{1232}- 
   \psi_{0320}\psi_{1233}\nonumber\\&+ \psi_{0213}\psi_{1300}- \psi_{0212}\psi_{1301}+ \psi_{0211}\psi_{1302}- 
   \psi_{0210}\psi_{1303}- \psi_{0203}\psi_{1310}+ \psi_{0202}\psi_{1311}- \psi_{0201}\psi_{1312}+ 
   \psi_{0200}\psi_{1313}\nonumber\\&+ \psi_{0233}\psi_{1320}- \psi_{0232}\psi_{1321}+ \psi_{0231}\psi_{1322}- 
   \psi_{0230}\psi_{1323}- \psi_{0223}\psi_{1330}+ \psi_{0222}\psi_{1331}- \psi_{0221}\psi_{1332}+ 
   \psi_{0220}\psi_{1333}\nonumber\\&- \psi_{2113}\psi_{3000}+ \psi_{2112}\psi_{3001}- \psi_{2111}\psi_{3002}+ 
   \psi_{2110}\psi_{3003}+ \psi_{2103}\psi_{3010}- \psi_{2102}\psi_{3011}+ \psi_{2101}\psi_{3012}- 
   \psi_{2100}\psi_{3013}\nonumber\\&- \psi_{2133}\psi_{3020}+ \psi_{2132}\psi_{3021}- \psi_{2131}\psi_{3022}+ 
   \psi_{2130}\psi_{3023}+ \psi_{2123}\psi_{3030}- \psi_{2122}\psi_{3031}+ \psi_{2121}\psi_{3032}- 
   \psi_{2120}\psi_{3033}\nonumber\\&+ \psi_{2013}\psi_{3100}- \psi_{2012}\psi_{3101}+ \psi_{2011}\psi_{3102}- 
   \psi_{2010}\psi_{3103}- \psi_{2003}\psi_{3110}+ \psi_{2002}\psi_{3111}- \psi_{2001}\psi_{3112}+ 
   \psi_{2000}\psi_{3113}\nonumber\\&+ \psi_{2033}\psi_{3120}- \psi_{2032}\psi_{3121}+ \psi_{2031}\psi_{3122}- 
   \psi_{2030}\psi_{3123}- \psi_{2023}\psi_{3130}+ \psi_{2022}\psi_{3131}- \psi_{2021}\psi_{3132}+ 
   \psi_{2020}\psi_{3133}\nonumber\\&- \psi_{2313}\psi_{3200}+ \psi_{2312}\psi_{3201}- \psi_{2311}\psi_{3202}+ 
   \psi_{2310}\psi_{3203}+ \psi_{2303}\psi_{3210}- \psi_{2302}\psi_{3211}+ \psi_{2301}\psi_{3212}- 
   \psi_{2300}\psi_{3213}\nonumber\\&- \psi_{2333}\psi_{3220}+ \psi_{2332}\psi_{3221}- \psi_{2331}\psi_{3222}+ 
   \psi_{2330}\psi_{3223}+ \psi_{2323}\psi_{3230}- \psi_{2322}\psi_{3231}+ \psi_{2321}\psi_{3232}- 
   \psi_{2320}\psi_{3233}\nonumber\\&+ \psi_{2213}\psi_{3300}- \psi_{2212}\psi_{3301}+ \psi_{2211}\psi_{3302}- 
   \psi_{2210}\psi_{3303}- \psi_{2203}\psi_{3310}+ \psi_{2202}\psi_{3311}- \psi_{2201}\psi_{3312}+ 
   \psi_{2200}\psi_{3313}\nonumber\\&+ \psi_{2233}\psi_{3320}- \psi_{2232}\psi_{3321}+ \psi_{2231}\psi_{3322}- 
   \psi_{2230}\psi_{3323}- \psi_{2223}\psi_{3330}+ \psi_{2222}\psi_{3331}- \psi_{2221}\psi_{3332}+ 
   \psi_{2220}\psi_{3333}),
\end{align}
\normalsize
and
\footnotesize
\begin{align}
   H_d = 2(&- \psi_{0131}\psi_{1000}+\psi_{0130}\psi_{1001}-\psi_{0133}\psi_{1002}+ 
  \psi_{0132}\psi_{1003}+\psi_{0121}\psi_{1010}-\psi_{0120}\psi_{1011}+\psi_{0123}\psi_{1012}- 
  \psi_{0122}\psi_{1013}\nonumber\\&-\psi_{0111}\psi_{1020}+\psi_{0110}\psi_{1021}-\psi_{0113}\psi_{1022}+ 
  \psi_{0112}\psi_{1023}+\psi_{0101}\psi_{1030}-\psi_{0100}\psi_{1031}+\psi_{0103}\psi_{1032}- 
  \psi_{0102}\psi_{1033}\nonumber\\&+\psi_{0031}\psi_{1100}-\psi_{0030}\psi_{1101}+\psi_{0033}\psi_{1102}- 
  \psi_{0032}\psi_{1103}-\psi_{0021}\psi_{1110}+\psi_{0020}\psi_{1111}-\psi_{0023}\psi_{1112}+ 
  \psi_{0022}\psi_{1113}\nonumber\\&+\psi_{0011}\psi_{1120}-\psi_{0010}\psi_{1121}+\psi_{0013}\psi_{1122}- 
  \psi_{0012}\psi_{1123}-\psi_{0001}\psi_{1130}+\psi_{0000}\psi_{1131}-\psi_{0003}\psi_{1132}+ 
  \psi_{0002}\psi_{1133}\nonumber\\&-\psi_{0331}\psi_{1200}+\psi_{0330}\psi_{1201}-\psi_{0333}\psi_{1202}+ 
  \psi_{0332}\psi_{1203}+\psi_{0321}\psi_{1210}-\psi_{0320}\psi_{1211}+\psi_{0323}\psi_{1212}- 
  \psi_{0322}\psi_{1213}\nonumber\\&-\psi_{0311}\psi_{1220}+\psi_{0310}\psi_{1221}-\psi_{0313}\psi_{1222}+ 
  \psi_{0312}\psi_{1223}+\psi_{0301}\psi_{1230}-\psi_{0300}\psi_{1231}+\psi_{0303}\psi_{1232}- 
  \psi_{0302}\psi_{1233}\nonumber\\&+\psi_{0231}\psi_{1300}-\psi_{0230}\psi_{1301}+\psi_{0233}\psi_{1302}- 
  \psi_{0232}\psi_{1303}-\psi_{0221}\psi_{1310}+\psi_{0220}\psi_{1311}-\psi_{0223}\psi_{1312}+ 
  \psi_{0222}\psi_{1313}\nonumber\\&+\psi_{0211}\psi_{1320}-\psi_{0210}\psi_{1321}+\psi_{0213}\psi_{1322}- 
  \psi_{0212}\psi_{1323}-\psi_{0201}\psi_{1330}+\psi_{0200}\psi_{1331}-\psi_{0203}\psi_{1332}+ 
  \psi_{0202}\psi_{1333}\nonumber\\&-\psi_{2131}\psi_{3000}+\psi_{2130}\psi_{3001}-\psi_{2133}\psi_{3002}+ 
  \psi_{2132}\psi_{3003}+\psi_{2121}\psi_{3010}-\psi_{2120}\psi_{3011}+\psi_{2123}\psi_{3012}- 
  \psi_{2122}\psi_{3013}\nonumber\\&-\psi_{2111}\psi_{3020}+\psi_{2110}\psi_{3021}-\psi_{2113}\psi_{3022}+ 
  \psi_{2112}\psi_{3023}+\psi_{2101}\psi_{3030}-\psi_{2100}\psi_{3031}+\psi_{2103}\psi_{3032}- 
  \psi_{2102}\psi_{3033}\nonumber\\&+\psi_{2031}\psi_{3100}-\psi_{2030}\psi_{3101}+\psi_{2033}\psi_{3102}- 
  \psi_{2032}\psi_{3103}-\psi_{2021}\psi_{3110}+\psi_{2020}\psi_{3111}-\psi_{2023}\psi_{3112}+ 
  \psi_{2022}\psi_{3113}\nonumber\\&+\psi_{2011}\psi_{3120}-\psi_{2010}\psi_{3121}+\psi_{2013}\psi_{3122}- 
  \psi_{2012}\psi_{3123}-\psi_{2001}\psi_{3130}+\psi_{2000}\psi_{3131}-\psi_{2003}\psi_{3132}+ 
  \psi_{2002}\psi_{3133}\nonumber\\&-\psi_{2331}\psi_{3200}+\psi_{2330}\psi_{3201}-\psi_{2333}\psi_{3202}+ 
  \psi_{2332}\psi_{3203}+\psi_{2321}\psi_{3210}-\psi_{2320}\psi_{3211}+\psi_{2323}\psi_{3212}- 
  \psi_{2322}\psi_{3213}\nonumber\\&-\psi_{2311}\psi_{3220}+\psi_{2310}\psi_{3221}-\psi_{2313}\psi_{3222}+ 
  \psi_{2312}\psi_{3223}+\psi_{2301}\psi_{3230}-\psi_{2300}\psi_{3231}+\psi_{2303}\psi_{3232}- 
  \psi_{2302}\psi_{3233}\nonumber\\&+\psi_{2231}\psi_{3300}-\psi_{2230}\psi_{3301}+\psi_{2233}\psi_{3302}- 
  \psi_{2232}\psi_{3303}-\psi_{2221}\psi_{3310}+\psi_{2220}\psi_{3311}-\psi_{2223}\psi_{3312}+ 
  \psi_{2222}\psi_{3313}\nonumber\\&+\psi_{2211}\psi_{3320}-\psi_{2210}\psi_{3321}+\psi_{2213}\psi_{3322}- 
  \psi_{2212}\psi_{3323}-\psi_{2201}\psi_{3330}+\psi_{2200}\psi_{3331}-\psi_{2203}\psi_{3332}+ 
  \psi_{2202}\psi_{3333}).
\end{align}
\normalsize
Examples of the degree four polynomials for four Dirac spinors are $T_l$ and $Y_l$. These are given by

\footnotesize
\begin{align}
 T_l= 2 ( &\psi_{0111}\psi_{1000}-  \psi_{0110}\psi_{1001}+  \psi_{0113}\psi_{1002}-  \psi_{0112}\psi_{1003}- 
     \psi_{0101}\psi_{1010}+  \psi_{0100}\psi_{1011}-  \psi_{0103}\psi_{1012}+  \psi_{0102}\psi_{1013}\nonumber\\&- 
     \psi_{0011}\psi_{1100}+  \psi_{0010}\psi_{1101}-  \psi_{0013}\psi_{1102}+  \psi_{0012}\psi_{1103}+ 
     \psi_{0001}\psi_{1110}-  \psi_{0000}\psi_{1111}+  \psi_{0003}\psi_{1112}-  \psi_{0002}\psi_{1113}\nonumber\\&+ 
     \psi_{0311}\psi_{1200}-  \psi_{0310}\psi_{1201}+  \psi_{0313}\psi_{1202}-  \psi_{0312}\psi_{1203}- 
     \psi_{0301}\psi_{1210}+  \psi_{0300}\psi_{1211}-  \psi_{0303}\psi_{1212}+  \psi_{0302}\psi_{1213}\nonumber\\&- 
     \psi_{0211}\psi_{1300}+  \psi_{0210}\psi_{1301}-  \psi_{0213}\psi_{1302}+  \psi_{0212}\psi_{1303}+ 
     \psi_{0201}\psi_{1310}-  \psi_{0200}\psi_{1311}+  \psi_{0203}\psi_{1312}-  \psi_{0202}\psi_{1313}\nonumber\\&+ 
     \psi_{2111}\psi_{3000}-  \psi_{2110}\psi_{3001}+  \psi_{2113}\psi_{3002}-  \psi_{2112}\psi_{3003}- 
     \psi_{2101}\psi_{3010}+  \psi_{2100}\psi_{3011}-  \psi_{2103}\psi_{3012}+  \psi_{2102}\psi_{3013}\nonumber\\&- 
     \psi_{2011}\psi_{3100}+  \psi_{2010}\psi_{3101}-  \psi_{2013}\psi_{3102}+  \psi_{2012}\psi_{3103}+ 
     \psi_{2001}\psi_{3110}-  \psi_{2000}\psi_{3111}+  \psi_{2003}\psi_{3112}-  \psi_{2002}\psi_{3113}\nonumber\\&+ 
     \psi_{2311}\psi_{3200}-  \psi_{2310}\psi_{3201}+  \psi_{2313}\psi_{3202}-  \psi_{2312}\psi_{3203}- 
     \psi_{2301}\psi_{3210}+  \psi_{2300}\psi_{3211}-  \psi_{2303}\psi_{3212}+  \psi_{2302}\psi_{3213}\nonumber\\&- 
     \psi_{2211}\psi_{3300}+  \psi_{2210}\psi_{3301}-  \psi_{2213}\psi_{3302}+  \psi_{2212}\psi_{3303}+ 
     \psi_{2201}\psi_{3310}-  \psi_{2200}\psi_{3311}+  \psi_{2203}\psi_{3312}- 
     \psi_{2202}\psi_{3313})^2\nonumber\\ + 
 2 ( &\psi_{0131}\psi_{1020}-  \psi_{0130}\psi_{1021}+  \psi_{0133}\psi_{1022}-  \psi_{0132}\psi_{1023}- 
     \psi_{0121}\psi_{1030}+  \psi_{0120}\psi_{1031}-  \psi_{0123}\psi_{1032}+  \psi_{0122}\psi_{1033}\nonumber\\&- 
     \psi_{0031}\psi_{1120}+  \psi_{0030}\psi_{1121}-  \psi_{0033}\psi_{1122}+  \psi_{0032}\psi_{1123}+ 
     \psi_{0021}\psi_{1130}-  \psi_{0020}\psi_{1131}+  \psi_{0023}\psi_{1132}-  \psi_{0022}\psi_{1133}\nonumber\\&+ 
     \psi_{0331}\psi_{1220}-  \psi_{0330}\psi_{1221}+  \psi_{0333}\psi_{1222}-  \psi_{0332}\psi_{1223}- 
     \psi_{0321}\psi_{1230}+  \psi_{0320}\psi_{1231}-  \psi_{0323}\psi_{1232}+  \psi_{0322}\psi_{1233}\nonumber\\&- 
     \psi_{0231}\psi_{1320}+  \psi_{0230}\psi_{1321}-  \psi_{0233}\psi_{1322}+  \psi_{0232}\psi_{1323}+ 
     \psi_{0221}\psi_{1330}-  \psi_{0220}\psi_{1331}+  \psi_{0223}\psi_{1332}-  \psi_{0222}\psi_{1333}\nonumber\\&+ 
     \psi_{2131}\psi_{3020}-  \psi_{2130}\psi_{3021}+  \psi_{2133}\psi_{3022}-  \psi_{2132}\psi_{3023}- 
     \psi_{2121}\psi_{3030}+  \psi_{2120}\psi_{3031}-  \psi_{2123}\psi_{3032}+  \psi_{2122}\psi_{3033}\nonumber\\&- 
     \psi_{2031}\psi_{3120}+  \psi_{2030}\psi_{3121}-  \psi_{2033}\psi_{3122}+  \psi_{2032}\psi_{3123}+ 
     \psi_{2021}\psi_{3130}-  \psi_{2020}\psi_{3131}+  \psi_{2023}\psi_{3132}-  \psi_{2022}\psi_{3133}\nonumber\\&+ 
     \psi_{2331}\psi_{3220}-  \psi_{2330}\psi_{3221}+  \psi_{2333}\psi_{3222}-  \psi_{2332}\psi_{3223}- 
     \psi_{2321}\psi_{3230}+  \psi_{2320}\psi_{3231}-  \psi_{2323}\psi_{3232}+  \psi_{2322}\psi_{3233}\nonumber\\&- 
     \psi_{2231}\psi_{3320}+  \psi_{2230}\psi_{3321}-  \psi_{2233}\psi_{3322}+  \psi_{2232}\psi_{3323}+ 
     \psi_{2221}\psi_{3330}-  \psi_{2220}\psi_{3331}+  \psi_{2223}\psi_{3332}-  \psi_{2222}\psi_{3333})^2\nonumber\\+ 
 4 (&- \psi_{0121}\psi_{1010}+  \psi_{0120}\psi_{1011}-  \psi_{0123}\psi_{1012}+  \psi_{0122}\psi_{1013}+ 
     \psi_{0111}\psi_{1020}-  \psi_{0110}\psi_{1021}+  \psi_{0113}\psi_{1022}-  \psi_{0112}\psi_{1023}\nonumber\\&+ 
     \psi_{0021}\psi_{1110}-  \psi_{0020}\psi_{1111}+  \psi_{0023}\psi_{1112}-  \psi_{0022}\psi_{1113}- 
     \psi_{0011}\psi_{1120}+  \psi_{0010}\psi_{1121}-  \psi_{0013}\psi_{1122}+  \psi_{0012}\psi_{1123}\nonumber\\&- 
     \psi_{0321}\psi_{1210}+  \psi_{0320}\psi_{1211}-  \psi_{0323}\psi_{1212}+  \psi_{0322}\psi_{1213}+ 
     \psi_{0311}\psi_{1220}-  \psi_{0310}\psi_{1221}+  \psi_{0313}\psi_{1222}-  \psi_{0312}\psi_{1223}\nonumber\\&+ 
     \psi_{0221}\psi_{1310}-  \psi_{0220}\psi_{1311}+  \psi_{0223}\psi_{1312}-  \psi_{0222}\psi_{1313}- 
     \psi_{0211}\psi_{1320}+  \psi_{0210}\psi_{1321}-  \psi_{0213}\psi_{1322}+  \psi_{0212}\psi_{1323}\nonumber\\&- 
     \psi_{2121}\psi_{3010}+  \psi_{2120}\psi_{3011}-  \psi_{2123}\psi_{3012}+  \psi_{2122}\psi_{3013}+ 
     \psi_{2111}\psi_{3020}-  \psi_{2110}\psi_{3021}+  \psi_{2113}\psi_{3022}-  \psi_{2112}\psi_{3023}\nonumber\\&+ 
     \psi_{2021}\psi_{3110}-  \psi_{2020}\psi_{3111}+  \psi_{2023}\psi_{3112}-  \psi_{2022}\psi_{3113}- 
     \psi_{2011}\psi_{3120}+  \psi_{2010}\psi_{3121}-  \psi_{2013}\psi_{3122}+  \psi_{2012}\psi_{3123}\nonumber\\&- 
     \psi_{2321}\psi_{3210}+  \psi_{2320}\psi_{3211}-  \psi_{2323}\psi_{3212}+  \psi_{2322}\psi_{3213}+ 
     \psi_{2311}\psi_{3220}-  \psi_{2310}\psi_{3221}+  \psi_{2313}\psi_{3222}-  \psi_{2312}\psi_{3223}\nonumber\\&+ 
     \psi_{2221}\psi_{3310}-  \psi_{2220}\psi_{3311}+  \psi_{2223}\psi_{3312}-  \psi_{2222}\psi_{3313}- 
     \psi_{2211}\psi_{3320}+  \psi_{2210}\psi_{3321}-  \psi_{2213}\psi_{3322}+ 
     \psi_{2212}\psi_{3323})\nonumber\\\times ( &\psi_{0131}\psi_{1000}-  \psi_{0130}\psi_{1001}+  \psi_{0133}\psi_{1002}- 
     \psi_{0132}\psi_{1003}-  \psi_{0101}\psi_{1030}+  \psi_{0100}\psi_{1031}-  \psi_{0103}\psi_{1032}+ 
     \psi_{0102}\psi_{1033}\nonumber\\&-  \psi_{0031}\psi_{1100}+  \psi_{0030}\psi_{1101}-  \psi_{0033}\psi_{1102}+ 
     \psi_{0032}\psi_{1103}+  \psi_{0001}\psi_{1130}-  \psi_{0000}\psi_{1131}+  \psi_{0003}\psi_{1132}- 
     \psi_{0002}\psi_{1133}\nonumber\\&+  \psi_{0331}\psi_{1200}-  \psi_{0330}\psi_{1201}+  \psi_{0333}\psi_{1202}- 
     \psi_{0332}\psi_{1203}-  \psi_{0301}\psi_{1230}+  \psi_{0300}\psi_{1231}-  \psi_{0303}\psi_{1232}+ 
     \psi_{0302}\psi_{1233}\nonumber\\&-  \psi_{0231}\psi_{1300}+  \psi_{0230}\psi_{1301}-  \psi_{0233}\psi_{1302}+ 
     \psi_{0232}\psi_{1303}+  \psi_{0201}\psi_{1330}-  \psi_{0200}\psi_{1331}+  \psi_{0203}\psi_{1332}- 
     \psi_{0202}\psi_{1333}\nonumber\\&+  \psi_{2131}\psi_{3000}-  \psi_{2130}\psi_{3001}+  \psi_{2133}\psi_{3002}- 
     \psi_{2132}\psi_{3003}-  \psi_{2101}\psi_{3030}+  \psi_{2100}\psi_{3031}-  \psi_{2103}\psi_{3032}+ 
     \psi_{2102}\psi_{3033}\nonumber\\&-  \psi_{2031}\psi_{3100}+  \psi_{2030}\psi_{3101}-  \psi_{2033}\psi_{3102}+ 
     \psi_{2032}\psi_{3103}+  \psi_{2001}\psi_{3130}-  \psi_{2000}\psi_{3131}+  \psi_{2003}\psi_{3132}- 
     \psi_{2002}\psi_{3133}\nonumber\\&+  \psi_{2331}\psi_{3200}-  \psi_{2330}\psi_{3201}+  \psi_{2333}\psi_{3202}- 
     \psi_{2332}\psi_{3203}-  \psi_{2301}\psi_{3230}+  \psi_{2300}\psi_{3231}-  \psi_{2303}\psi_{3232}+ 
     \psi_{2302}\psi_{3233}\nonumber\\&-  \psi_{2231}\psi_{3300}+  \psi_{2230}\psi_{3301}-  \psi_{2233}\psi_{3302}+ 
     \psi_{2232}\psi_{3303}+  \psi_{2201}\psi_{3330}-  \psi_{2200}\psi_{3331}+  \psi_{2203}\psi_{3332}- 
     \psi_{2202}\psi_{3333})\nonumber\\ + 
 4 ( &\psi_{0121}\psi_{1000}-  \psi_{0120}\psi_{1001}+  \psi_{0123}\psi_{1002}-  \psi_{0122}\psi_{1003}- 
     \psi_{0101}\psi_{1020}+  \psi_{0100}\psi_{1021}-  \psi_{0103}\psi_{1022}+  \psi_{0102}\psi_{1023}\nonumber\\&- 
     \psi_{0021}\psi_{1100}+  \psi_{0020}\psi_{1101}-  \psi_{0023}\psi_{1102}+  \psi_{0022}\psi_{1103}+ 
     \psi_{0001}\psi_{1120}-  \psi_{0000}\psi_{1121}+  \psi_{0003}\psi_{1122}-  \psi_{0002}\psi_{1123}\nonumber\\&+ 
     \psi_{0321}\psi_{1200}-  \psi_{0320}\psi_{1201}+  \psi_{0323}\psi_{1202}-  \psi_{0322}\psi_{1203}- 
     \psi_{0301}\psi_{1220}+  \psi_{0300}\psi_{1221}-  \psi_{0303}\psi_{1222}+  \psi_{0302}\psi_{1223}\nonumber\\&- 
     \psi_{0221}\psi_{1300}+  \psi_{0220}\psi_{1301}-  \psi_{0223}\psi_{1302}+  \psi_{0222}\psi_{1303}+ 
     \psi_{0201}\psi_{1320}-  \psi_{0200}\psi_{1321}+  \psi_{0203}\psi_{1322}-  \psi_{0202}\psi_{1323}\nonumber\\&+ 
     \psi_{2121}\psi_{3000}-  \psi_{2120}\psi_{3001}+  \psi_{2123}\psi_{3002}-  \psi_{2122}\psi_{3003}- 
     \psi_{2101}\psi_{3020}+  \psi_{2100}\psi_{3021}-  \psi_{2103}\psi_{3022}+  \psi_{2102}\psi_{3023}\nonumber\\&- 
     \psi_{2021}\psi_{3100}+  \psi_{2020}\psi_{3101}-  \psi_{2023}\psi_{3102}+  \psi_{2022}\psi_{3103}+ 
     \psi_{2001}\psi_{3120}-  \psi_{2000}\psi_{3121}+  \psi_{2003}\psi_{3122}-  \psi_{2002}\psi_{3123}\nonumber\\&+ 
     \psi_{2321}\psi_{3200}-  \psi_{2320}\psi_{3201}+  \psi_{2323}\psi_{3202}-  \psi_{2322}\psi_{3203}- 
     \psi_{2301}\psi_{3220}+  \psi_{2300}\psi_{3221}-  \psi_{2303}\psi_{3222}+  \psi_{2302}\psi_{3223}\nonumber\\&- 
     \psi_{2221}\psi_{3300}+  \psi_{2220}\psi_{3301}-  \psi_{2223}\psi_{3302}+  \psi_{2222}\psi_{3303}+ 
     \psi_{2201}\psi_{3320}-  \psi_{2200}\psi_{3321}+  \psi_{2203}\psi_{3322}- 
     \psi_{2202}\psi_{3323})\nonumber\\\times ( &\psi_{0131}\psi_{1010}-  \psi_{0130}\psi_{1011}+  \psi_{0133}\psi_{1012}- 
     \psi_{0132}\psi_{1013}-  \psi_{0111}\psi_{1030}+  \psi_{0110}\psi_{1031}-  \psi_{0113}\psi_{1032}+ 
     \psi_{0112}\psi_{1033}\nonumber\\&-  \psi_{0031}\psi_{1110}+  \psi_{0030}\psi_{1111}-  \psi_{0033}\psi_{1112}+ 
     \psi_{0032}\psi_{1113}+  \psi_{0011}\psi_{1130}-  \psi_{0010}\psi_{1131}+  \psi_{0013}\psi_{1132}- 
     \psi_{0012}\psi_{1133}\nonumber\\&+  \psi_{0331}\psi_{1210}-  \psi_{0330}\psi_{1211}+  \psi_{0333}\psi_{1212}- 
     \psi_{0332}\psi_{1213}-  \psi_{0311}\psi_{1230}+  \psi_{0310}\psi_{1231}-  \psi_{0313}\psi_{1232}+ 
     \psi_{0312}\psi_{1233}\nonumber\\&-  \psi_{0231}\psi_{1310}+  \psi_{0230}\psi_{1311}-  \psi_{0233}\psi_{1312}+ 
     \psi_{0232}\psi_{1313}+  \psi_{0211}\psi_{1330}-  \psi_{0210}\psi_{1331}+  \psi_{0213}\psi_{1332}- 
     \psi_{0212}\psi_{1333}\nonumber\\&+  \psi_{2131}\psi_{3010}-  \psi_{2130}\psi_{3011}+  \psi_{2133}\psi_{3012}- 
     \psi_{2132}\psi_{3013}-  \psi_{2111}\psi_{3030}+  \psi_{2110}\psi_{3031}-  \psi_{2113}\psi_{3032}+ 
     \psi_{2112}\psi_{3033}\nonumber\\&-  \psi_{2031}\psi_{3110}+  \psi_{2030}\psi_{3111}-  \psi_{2033}\psi_{3112}+ 
     \psi_{2032}\psi_{3113}+  \psi_{2011}\psi_{3130}-  \psi_{2010}\psi_{3131}+  \psi_{2013}\psi_{3132}- 
     \psi_{2012}\psi_{3133}\nonumber\\&+  \psi_{2331}\psi_{3210}-  \psi_{2330}\psi_{3211}+  \psi_{2333}\psi_{3212}- 
     \psi_{2332}\psi_{3213}-  \psi_{2311}\psi_{3230}+  \psi_{2310}\psi_{3231}-  \psi_{2313}\psi_{3232}+ 
     \psi_{2312}\psi_{3233}\nonumber\\&-  \psi_{2231}\psi_{3310}+  \psi_{2230}\psi_{3311}-  \psi_{2233}\psi_{3312}+ 
     \psi_{2232}\psi_{3313}+  \psi_{2211}\psi_{3330}-  \psi_{2210}\psi_{3331}+  \psi_{2213}\psi_{3332}- 
     \psi_{2212}\psi_{3333}),  
\end{align}
\normalsize

and
\footnotesize
\begin{align}
 Y_l=2 (&- \psi_{1323}\psi_{2010}+  \psi_{1322}\psi_{2011}-  \psi_{1321}\psi_{2012}+  \psi_{1320}\psi_{2013}+ 
     \psi_{1313}\psi_{2020}-  \psi_{1312}\psi_{2021}+  \psi_{1311}\psi_{2022}-  \psi_{1310}\psi_{2023}\nonumber\\&+ 
     \psi_{1223}\psi_{2110}-  \psi_{1222}\psi_{2111}+  \psi_{1221}\psi_{2112}-  \psi_{1220}\psi_{2113}- 
     \psi_{1213}\psi_{2120}+  \psi_{1212}\psi_{2121}-  \psi_{1211}\psi_{2122}+  \psi_{1210}\psi_{2123}\nonumber\\&- 
     \psi_{1123}\psi_{2210}+  \psi_{1122}\psi_{2211}-  \psi_{1121}\psi_{2212}+  \psi_{1120}\psi_{2213}+ 
     \psi_{1113}\psi_{2220}-  \psi_{1112}\psi_{2221}+  \psi_{1111}\psi_{2222}-  \psi_{1110}\psi_{2223}\nonumber\\&+ 
     \psi_{1023}\psi_{2310}-  \psi_{1022}\psi_{2311}+  \psi_{1021}\psi_{2312}-  \psi_{1020}\psi_{2313}- 
     \psi_{1013}\psi_{2320}+  \psi_{1012}\psi_{2321}-  \psi_{1011}\psi_{2322}+  \psi_{1010}\psi_{2323}\nonumber\\&+ 
     \psi_{0323}\psi_{3010}-  \psi_{0322}\psi_{3011}+  \psi_{0321}\psi_{3012}-  \psi_{0320}\psi_{3013}- 
     \psi_{0313}\psi_{3020}+  \psi_{0312}\psi_{3021}-  \psi_{0311}\psi_{3022}+  \psi_{0310}\psi_{3023}\nonumber\\&- 
     \psi_{0223}\psi_{3110}+  \psi_{0222}\psi_{3111}-  \psi_{0221}\psi_{3112}+  \psi_{0220}\psi_{3113}+ 
     \psi_{0213}\psi_{3120}-  \psi_{0212}\psi_{3121}+  \psi_{0211}\psi_{3122}-  \psi_{0210}\psi_{3123}\nonumber\\&+ 
     \psi_{0123}\psi_{3210}-  \psi_{0122}\psi_{3211}+  \psi_{0121}\psi_{3212}-  \psi_{0120}\psi_{3213}- 
     \psi_{0113}\psi_{3220}+  \psi_{0112}\psi_{3221}-  \psi_{0111}\psi_{3222}+  \psi_{0110}\psi_{3223}\nonumber\\&- 
     \psi_{0023}\psi_{3310}+  \psi_{0022}\psi_{3311}-  \psi_{0021}\psi_{3312}+  \psi_{0020}\psi_{3313}+ 
     \psi_{0013}\psi_{3320}-  \psi_{0012}\psi_{3321}+  \psi_{0011}\psi_{3322}- 
     \psi_{0010}\psi_{3323})^2\nonumber\\+ 
 2 (&- \psi_{1333}\psi_{2000}+  \psi_{1332}\psi_{2001}-  \psi_{1331}\psi_{2002}+  \psi_{1330}\psi_{2003}+ 
     \psi_{1303}\psi_{2030}-  \psi_{1302}\psi_{2031}+  \psi_{1301}\psi_{2032}-  \psi_{1300}\psi_{2033}\nonumber\\&+ 
     \psi_{1233}\psi_{2100}-  \psi_{1232}\psi_{2101}+  \psi_{1231}\psi_{2102}-  \psi_{1230}\psi_{2103}- 
     \psi_{1203}\psi_{2130}+  \psi_{1202}\psi_{2131}-  \psi_{1201}\psi_{2132}+  \psi_{1200}\psi_{2133}\nonumber\\&- 
     \psi_{1133}\psi_{2200}+  \psi_{1132}\psi_{2201}-  \psi_{1131}\psi_{2202}+  \psi_{1130}\psi_{2203}+ 
     \psi_{1103}\psi_{2230}-  \psi_{1102}\psi_{2231}+  \psi_{1101}\psi_{2232}-  \psi_{1100}\psi_{2233}\nonumber\\&+ 
     \psi_{1033}\psi_{2300}-  \psi_{1032}\psi_{2301}+  \psi_{1031}\psi_{2302}-  \psi_{1030}\psi_{2303}- 
     \psi_{1003}\psi_{2330}+  \psi_{1002}\psi_{2331}-  \psi_{1001}\psi_{2332}+  \psi_{1000}\psi_{2333}\nonumber\\&+ 
     \psi_{0333}\psi_{3000}-  \psi_{0332}\psi_{3001}+  \psi_{0331}\psi_{3002}-  \psi_{0330}\psi_{3003}- 
     \psi_{0303}\psi_{3030}+  \psi_{0302}\psi_{3031}-  \psi_{0301}\psi_{3032}+  \psi_{0300}\psi_{3033}\nonumber\\&- 
     \psi_{0233}\psi_{3100}+  \psi_{0232}\psi_{3101}-  \psi_{0231}\psi_{3102}+  \psi_{0230}\psi_{3103}+ 
     \psi_{0203}\psi_{3130}-  \psi_{0202}\psi_{3131}+  \psi_{0201}\psi_{3132}-  \psi_{0200}\psi_{3133}\nonumber\\&+ 
     \psi_{0133}\psi_{3200}-  \psi_{0132}\psi_{3201}+  \psi_{0131}\psi_{3202}-  \psi_{0130}\psi_{3203}- 
     \psi_{0103}\psi_{3230}+  \psi_{0102}\psi_{3231}-  \psi_{0101}\psi_{3232}+  \psi_{0100}\psi_{3233}\nonumber\\&- 
     \psi_{0033}\psi_{3300}+  \psi_{0032}\psi_{3301}-  \psi_{0031}\psi_{3302}+  \psi_{0030}\psi_{3303}+ 
     \psi_{0003}\psi_{3330}-  \psi_{0002}\psi_{3331}+  \psi_{0001}\psi_{3332}- 
     \psi_{0000}\psi_{3333})^2\nonumber\\+ 
 4 ( &\psi_{1323}\psi_{2000}-  \psi_{1322}\psi_{2001}+  \psi_{1321}\psi_{2002}-  \psi_{1320}\psi_{2003}- 
     \psi_{1303}\psi_{2020}+  \psi_{1302}\psi_{2021}-  \psi_{1301}\psi_{2022}+  \psi_{1300}\psi_{2023}\nonumber\\&- 
     \psi_{1223}\psi_{2100}+  \psi_{1222}\psi_{2101}-  \psi_{1221}\psi_{2102}+  \psi_{1220}\psi_{2103}+ 
     \psi_{1203}\psi_{2120}-  \psi_{1202}\psi_{2121}+  \psi_{1201}\psi_{2122}-  \psi_{1200}\psi_{2123}\nonumber\\&+ 
     \psi_{1123}\psi_{2200}-  \psi_{1122}\psi_{2201}+  \psi_{1121}\psi_{2202}-  \psi_{1120}\psi_{2203}- 
     \psi_{1103}\psi_{2220}+  \psi_{1102}\psi_{2221}-  \psi_{1101}\psi_{2222}+  \psi_{1100}\psi_{2223}\nonumber\\&- 
     \psi_{1023}\psi_{2300}+  \psi_{1022}\psi_{2301}-  \psi_{1021}\psi_{2302}+  \psi_{1020}\psi_{2303}+ 
     \psi_{1003}\psi_{2320}-  \psi_{1002}\psi_{2321}+  \psi_{1001}\psi_{2322}-  \psi_{1000}\psi_{2323}\nonumber\\&- 
     \psi_{0323}\psi_{3000}+  \psi_{0322}\psi_{3001}-  \psi_{0321}\psi_{3002}+  \psi_{0320}\psi_{3003}+ 
     \psi_{0303}\psi_{3020}-  \psi_{0302}\psi_{3021}+  \psi_{0301}\psi_{3022}-  \psi_{0300}\psi_{3023}\nonumber\\&+ 
     \psi_{0223}\psi_{3100}-  \psi_{0222}\psi_{3101}+  \psi_{0221}\psi_{3102}-  \psi_{0220}\psi_{3103}- 
     \psi_{0203}\psi_{3120}+  \psi_{0202}\psi_{3121}-  \psi_{0201}\psi_{3122}+  \psi_{0200}\psi_{3123}\nonumber\\&- 
     \psi_{0123}\psi_{3200}+  \psi_{0122}\psi_{3201}-  \psi_{0121}\psi_{3202}+  \psi_{0120}\psi_{3203}+ 
     \psi_{0103}\psi_{3220}-  \psi_{0102}\psi_{3221}+  \psi_{0101}\psi_{3222}-  \psi_{0100}\psi_{3223}\nonumber\\&+ 
     \psi_{0023}\psi_{3300}-  \psi_{0022}\psi_{3301}+  \psi_{0021}\psi_{3302}-  \psi_{0020}\psi_{3303}- 
     \psi_{0003}\psi_{3320}+  \psi_{0002}\psi_{3321}-  \psi_{0001}\psi_{3322}+ 
     \psi_{0000}\psi_{3323})\nonumber\\\times (&- \psi_{1333}\psi_{2010}+  \psi_{1332}\psi_{2011}-  \psi_{1331}\psi_{2012}+ 
     \psi_{1330}\psi_{2013}+  \psi_{1313}\psi_{2030}-  \psi_{1312}\psi_{2031}+  \psi_{1311}\psi_{2032}- 
     \psi_{1310}\psi_{2033}\nonumber\\&+  \psi_{1233}\psi_{2110}-  \psi_{1232}\psi_{2111}+  \psi_{1231}\psi_{2112}- 
     \psi_{1230}\psi_{2113}-  \psi_{1213}\psi_{2130}+  \psi_{1212}\psi_{2131}-  \psi_{1211}\psi_{2132}+ 
     \psi_{1210}\psi_{2133}\nonumber\\&-  \psi_{1133}\psi_{2210}+  \psi_{1132}\psi_{2211}-  \psi_{1131}\psi_{2212}+ 
     \psi_{1130}\psi_{2213}+  \psi_{1113}\psi_{2230}-  \psi_{1112}\psi_{2231}+  \psi_{1111}\psi_{2232}- 
     \psi_{1110}\psi_{2233}\nonumber\\&+  \psi_{1033}\psi_{2310}-  \psi_{1032}\psi_{2311}+  \psi_{1031}\psi_{2312}- 
     \psi_{1030}\psi_{2313}-  \psi_{1013}\psi_{2330}+  \psi_{1012}\psi_{2331}-  \psi_{1011}\psi_{2332}+ 
     \psi_{1010}\psi_{2333}\nonumber\\&+  \psi_{0333}\psi_{3010}-  \psi_{0332}\psi_{3011}+  \psi_{0331}\psi_{3012}- 
     \psi_{0330}\psi_{3013}-  \psi_{0313}\psi_{3030}+  \psi_{0312}\psi_{3031}-  \psi_{0311}\psi_{3032}+ 
     \psi_{0310}\psi_{3033}\nonumber\\&-  \psi_{0233}\psi_{3110}+  \psi_{0232}\psi_{3111}-  \psi_{0231}\psi_{3112}+ 
     \psi_{0230}\psi_{3113}+  \psi_{0213}\psi_{3130}-  \psi_{0212}\psi_{3131}+  \psi_{0211}\psi_{3132}- 
     \psi_{0210}\psi_{3133}\nonumber\\&+  \psi_{0133}\psi_{3210}-  \psi_{0132}\psi_{3211}+  \psi_{0131}\psi_{3212}- 
     \psi_{0130}\psi_{3213}-  \psi_{0113}\psi_{3230}+  \psi_{0112}\psi_{3231}-  \psi_{0111}\psi_{3232}+ 
     \psi_{0110}\psi_{3233}\nonumber\\&-  \psi_{0033}\psi_{3310}+  \psi_{0032}\psi_{3311}-  \psi_{0031}\psi_{3312}+ 
     \psi_{0030}\psi_{3313}+  \psi_{0013}\psi_{3330}-  \psi_{0012}\psi_{3331}+  \psi_{0011}\psi_{3332}- 
     \psi_{0010}\psi_{3333})\nonumber\\ + 
 4 (&- \psi_{1313}\psi_{2000}+  \psi_{1312}\psi_{2001}-  \psi_{1311}\psi_{2002}+  \psi_{1310}\psi_{2003}+ 
     \psi_{1303}\psi_{2010}-  \psi_{1302}\psi_{2011}+  \psi_{1301}\psi_{2012}-  \psi_{1300}\psi_{2013}\nonumber\\&+ 
     \psi_{1213}\psi_{2100}-  \psi_{1212}\psi_{2101}+  \psi_{1211}\psi_{2102}-  \psi_{1210}\psi_{2103}- 
     \psi_{1203}\psi_{2110}+  \psi_{1202}\psi_{2111}-  \psi_{1201}\psi_{2112}+  \psi_{1200}\psi_{2113}\nonumber\\&- 
     \psi_{1113}\psi_{2200}+  \psi_{1112}\psi_{2201}-  \psi_{1111}\psi_{2202}+  \psi_{1110}\psi_{2203}+ 
     \psi_{1103}\psi_{2210}-  \psi_{1102}\psi_{2211}+  \psi_{1101}\psi_{2212}-  \psi_{1100}\psi_{2213}\nonumber\\&+ 
     \psi_{1013}\psi_{2300}-  \psi_{1012}\psi_{2301}+  \psi_{1011}\psi_{2302}-  \psi_{1010}\psi_{2303}- 
     \psi_{1003}\psi_{2310}+  \psi_{1002}\psi_{2311}-  \psi_{1001}\psi_{2312}+  \psi_{1000}\psi_{2313}\nonumber\\&+ 
     \psi_{0313}\psi_{3000}-  \psi_{0312}\psi_{3001}+  \psi_{0311}\psi_{3002}-  \psi_{0310}\psi_{3003}- 
     \psi_{0303}\psi_{3010}+  \psi_{0302}\psi_{3011}-  \psi_{0301}\psi_{3012}+  \psi_{0300}\psi_{3013}\nonumber\\&- 
     \psi_{0213}\psi_{3100}+  \psi_{0212}\psi_{3101}-  \psi_{0211}\psi_{3102}+  \psi_{0210}\psi_{3103}+ 
     \psi_{0203}\psi_{3110}-  \psi_{0202}\psi_{3111}+  \psi_{0201}\psi_{3112}-  \psi_{0200}\psi_{3113}\nonumber\\&+ 
     \psi_{0113}\psi_{3200}-  \psi_{0112}\psi_{3201}+  \psi_{0111}\psi_{3202}-  \psi_{0110}\psi_{3203}- 
     \psi_{0103}\psi_{3210}+  \psi_{0102}\psi_{3211}-  \psi_{0101}\psi_{3212}+  \psi_{0100}\psi_{3213}\nonumber\\&- 
     \psi_{0013}\psi_{3300}+  \psi_{0012}\psi_{3301}-  \psi_{0011}\psi_{3302}+  \psi_{0010}\psi_{3303}+ 
     \psi_{0003}\psi_{3310}-  \psi_{0002}\psi_{3311}+  \psi_{0001}\psi_{3312}- 
     \psi_{0000}\psi_{3313})\nonumber\\\times (&- \psi_{1333}\psi_{2020}+  \psi_{1332}\psi_{2021}-  \psi_{1331}\psi_{2022}+ 
     \psi_{1330}\psi_{2023}+  \psi_{1323}\psi_{2030}-  \psi_{1322}\psi_{2031}+  \psi_{1321}\psi_{2032}- 
     \psi_{1320}\psi_{2033}\nonumber\\&+  \psi_{1233}\psi_{2120}-  \psi_{1232}\psi_{2121}+  \psi_{1231}\psi_{2122}- 
     \psi_{1230}\psi_{2123}-  \psi_{1223}\psi_{2130}+  \psi_{1222}\psi_{2131}-  \psi_{1221}\psi_{2132}+ 
     \psi_{1220}\psi_{2133}\nonumber\\&-  \psi_{1133}\psi_{2220}+  \psi_{1132}\psi_{2221}-  \psi_{1131}\psi_{2222}+ 
     \psi_{1130}\psi_{2223}+  \psi_{1123}\psi_{2230}-  \psi_{1122}\psi_{2231}+  \psi_{1121}\psi_{2232}- 
     \psi_{1120}\psi_{2233}\nonumber\\&+  \psi_{1033}\psi_{2320}-  \psi_{1032}\psi_{2321}+  \psi_{1031}\psi_{2322}- 
     \psi_{1030}\psi_{2323}-  \psi_{1023}\psi_{2330}+  \psi_{1022}\psi_{2331}-  \psi_{1021}\psi_{2332}+ 
     \psi_{1020}\psi_{2333}\nonumber\\&+  \psi_{0333}\psi_{3020}-  \psi_{0332}\psi_{3021}+  \psi_{0331}\psi_{3022}- 
     \psi_{0330}\psi_{3023}-  \psi_{0323}\psi_{3030}+  \psi_{0322}\psi_{3031}-  \psi_{0321}\psi_{3032}+ 
     \psi_{0320}\psi_{3033}\nonumber\\&-  \psi_{0233}\psi_{3120}+  \psi_{0232}\psi_{3121}-  \psi_{0231}\psi_{3122}+ 
     \psi_{0230}\psi_{3123}+  \psi_{0223}\psi_{3130}-  \psi_{0222}\psi_{3131}+  \psi_{0221}\psi_{3132}- 
     \psi_{0220}\psi_{3133}\nonumber\\&+  \psi_{0133}\psi_{3220}-  \psi_{0132}\psi_{3221}+  \psi_{0131}\psi_{3222}- 
     \psi_{0130}\psi_{3223}-  \psi_{0123}\psi_{3230}+  \psi_{0122}\psi_{3231}-  \psi_{0121}\psi_{3232}+ 
     \psi_{0120}\psi_{3233}\nonumber\\&-  \psi_{0033}\psi_{3320}+  \psi_{0032}\psi_{3321}-  \psi_{0031}\psi_{3322}+ 
     \psi_{0030}\psi_{3323}+  \psi_{0023}\psi_{3330}-  \psi_{0022}\psi_{3331}+  \psi_{0021}\psi_{3332}- 
     \psi_{0020}\psi_{3333}). 
\end{align}
\normalsize

\end{document}